\newtheorem{proposition}{Proposition}[section]
\newtheorem{lemma}[proposition]{Lemma}
\newtheorem{theorem}[proposition]{Theorem}
\newtheorem*{theorem*}{Theorem}
\newtheorem*{proposition*}{Proposition}
\newtheorem*{lemma*}{Lemma}
\newtheorem{corollary}[proposition]{Corollary}
\newtheorem*{corollary*}{Corollary}
\newtheorem*{remark*}{Remark}
\theoremstyle{definition}
\newtheorem{remark}[proposition]{Remark}
\newtheorem{definition}[proposition]{Definition}
\newtheorem*{definition*}{Definition}
\newtheorem{conjecture}[proposition]{Conjecture}
\newtheorem*{acknowledgements*}{Acknowledgements}
\begin{document}
\includepdf{Titlepage}

\newpage
\section*{Acknowledgements}

First and foremost, I would like to thank my supervisor Jonathan Luk for his continuous encouragement,  his limitless enthusiasm for the theory of General Relativity and for innumerable instructive and stimulating discussions over the course of my studies.

I would also like to thank Mihalis Dafermos and Arick Shao for their very helpful comments and insights on the original version of this thesis.

Moreover, I am extremely grateful to the Cambridge Centre for Analysis, the Department of Pure Mathematics and Mathematical Statistics, University of Cambridge, and the Robert Gardiner Memorial Fund  for their financial support.

Finally, I would like to express my gratitude to Daniel Fitzpatrick, to my family and to my friends for their ceaseless support during this time.

\newpage
\section*{Abstract}

This work studies solutions of the scalar wave equation 
\[\Box_g\phi=0\]
on a fixed subextremal Reissner--Nordstr\"{o}m spacetime with non-vanishing charge $q$ and mass $M$. In a recent paper, Luk and Oh established that generic smooth and compactly supported initial data on a Cauchy hypersurface lead to solutions which are singular in the $W^{1,2}_{loc}$ sense near the Cauchy horizon in the black hole interior, and it follows easily that they are also singular in the $W^{1,p}_{loc}$ sense for $p>2$. On the other hand, the work of Franzen shows that such solutions are non-singular near the Cauchy horizon in the $W^{1,1}_{loc}$ sense. Motivated by these results, we investigate the strength of the singularity at the Cauchy horizon.  We identify a sufficient condition on the black hole interior (which holds for the full subextremal parameter range $0<|q|<M$) ensuring  $W^{1,p}_{loc}$ blow up near the Cauchy horizon of solutions arising from generic smooth and compactly supported data for every $1<p<2$. We moreover prove that provided the spacetime parameters satisfy $\frac{2\sqrt e}{e+1}<\frac{|q|}{M}<1$, we in fact have $W^{1,p}_{loc}$ blow up near the Cauchy horizon for such solutions for every $1<p<2$. This shows that the singularity is even stronger than was implied by the work of Luk--Oh for this restricted parameter range.

For the majority of this work, we restrict to the spherically symmetric case, as $W^{1,p}_{loc}$ blow up of a spherically symmetric solution with admissible initial data is sufficient to ensure the generic blow up result. 
Blow up is proved by identifying a condition near null infinity that prevents the solution from belonging to $W^{1,p}_{loc}$ in any neighbourhood of the future Cauchy horizon. This is done by means of a contradiction argument, namely it is shown that regularity of the solution in the black hole interior implies $L^p$-type upper bounds on the solution that contradict lower bounds deduced from the aforementioned condition at null infinity. Establishing the $L^p$-type upper bounds provides the main challenge, and is achieved by first establishing corresponding $L^1$ and $L^2$-type estimates and using the K-method of real interpolation to deduce the $L^p$-type estimates.

\newpage

\tableofcontents
\newpage
\section{Introduction}\label{star9}

In what follows, we study the linear wave equation
\begin{equation}\label{waveequation}
\Box_g\phi=0
\end{equation}
on a subextremal Reissner--Nordstr\"{o}m spacetime $(M,g)$ with non-vanishing charge. As usual, $\Box_g$ denotes the standard covariant wave operator (Laplace--Beltrami operator) associated with the metric $g$. In a local coordinate system, the metric $g$ can be written as
\[ g=-\left(1-\frac{2M}{r}+\frac{q^2}{r^2}\right)dt^2 +\left(1-\frac{2M}{r}+\frac{q^2}{r^2}\right)^{-1}dr^2 +r^2d\sigma_{\mathbb{S}^2},\]
where $d\sigma_{\mathbb{S}^2}$ is the standard metric on the unit $2$-sphere. $q$ and $M$, the charge and mass parameters respectively, are assumed to satisfy
\begin{equation}\label{subextremal}
0<|q|<M,
\end{equation}
namely $(M,g)$ is subextremal with non-vanishing charge. We denote $Q=\frac{|q|}{M}$, so the subextremality assumption \eqref{subextremal} implies
\[0<Q<1.\]

In recent years, much progress has been made in analysing solutions to \eqref{waveequation}, both in the black hole exterior and interior (see for instance \cite{BS}, \cite{BS2}, \cite{DR13}, \cite{DRSR1}, \cite{DRSR2}, \cite{T} and the references therein). In the interior region, both stability and instability results have been obtained (\cite{Franzen}, \cite{H}, \cite{LO} and \cite{LS}). Of particular relevance to this thesis is \cite{LO} where Luk and Oh show that generically $\phi$ is singular in the $W^{1,2}_{loc}$ sense, namely the $L^2$ norm of the derivative of $\phi$ with respect to a regular, Cauchy horizon transversal vector field blows up at the Cauchy horizon. On the other hand, Franzen proved in \cite{Franzen} that the solution is non-singular in the $W^{1,1}_{loc}$ sense, and moreover that the solution is uniformly bounded in the black hole interior, up to and including the Cauchy horizon, to which it can be continuously extended. (See also \cite{H} for a more refined estimate.)

It follows immediately from the $W^{1,2}_{loc}$ blow up result of \cite{LO} that we in fact have generic $W^{1,p}_{loc}$ blow up for every $p>2$. Indeed, given a compact neighbourhood $U$ of any point of the Cauchy horizon, we have $L^p(U) \subseteq L^2(U)$ when $p>2$. From this it follows that if generic $W^{1,p}_{loc}$ blow up of solutions does not hold, then generic $W^{1,2}_{loc}$ blow up also fails, a contradiction with \cite{LO}. 

We thus have $W^{1,1}_{loc}$ stability  and $W^{1,p}_{loc}$ instability of solutions for $p\geq2$, but uncertainty remains over the precise nature of the instability. For physical reasons, it is important to understand and quantify the strength of the singularity. Our first result in this direction is the following \emph{conditional} theorem, which holds for the full subextremal range of parameters with non-vanishing charge, i.e. $0<Q=\frac{|q|}{M}<1$, and identifies a condition on the black hole \emph{interior} which, if violated, ensures generic $W^{1,p}_{loc}$ blow up of solutions for $1<p<2$.

\begin{theorem}[Conditional Theorem, version 1]\label{conditionalinstab}
Let $\Sigma_0$ be a complete $2$-ended asymptotically flat Cauchy hypersurface for the maximal globally hyperbolic development of a subextremal Reissner--Nordstr\"{o}m spacetime with non-vanishing charge. Suppose $1<p<2$. Then generic smooth and compactly supported initial data for \eqref{waveequation} on $\Sigma_0$ give rise to solutions that are not in $W^{1,p}_{loc}$ in a neighbourhood of any point on the future Cauchy horizon $\mathcal{CH}^+$, unless the ``solution map'' in the black hole \emph{interior} is not bounded below in $W^{1,1}$ in an appropriate sense.\footnote{See Section \ref{themaintheorem} and in particular Theorem \ref{condinstab2} and Remark \ref{bdd below} for details.}
\end{theorem}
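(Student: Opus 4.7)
The plan is to argue by contradiction: suppose a solution $\phi$ arising from generic smooth compactly supported data on $\Sigma_0$ lies in $W^{1,p}_{loc}$ in some neighbourhood of a point $p_0\in\mathcal{CH}^+$, and derive an $L^p$-type upper bound which contradicts a generic polynomial lower bound inherited from the asymptotic behaviour of $\phi$ along null infinity. By the reduction indicated in the abstract, it suffices to work in spherical symmetry: generic $W^{1,p}_{loc}$ blow up at $\mathcal{CH}^+$ for spherically symmetric admissible data upgrades to the general generic statement.

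To set up, I would introduce a double null foliation $(u,v)$ of the interior in which $\mathcal{CH}^+$ appears as $v\to\infty$, and fix a regular Cauchy-horizon-transversal null vector field $\partial_v$. The hypothesis $\phi\in W^{1,p}_{loc}$ near $p_0$ supplies an $L^p$ bound on $\partial_v\phi$ on a small null rectangle abutting $\mathcal{CH}^+$. The goal is to propagate such information ``outwards'' across the interior until it becomes incompatible with the behaviour of $\phi$ at $\mathcal{H}^+$, where for generic compactly supported data one has Price's-law type polynomial lower bounds on the null derivative of $\phi$ that rule out integrability faster than the critical rate.

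The propagation is performed via interpolation between two endpoint estimates. At the $L^1$ endpoint, a quantitative refinement of Franzen's $W^{1,1}_{loc}$ stability gives a uniform $L^1$ bound on the appropriate transversal derivative on constant-$u$ null slices, controlled by initial data. At the $L^2$ endpoint, one needs a corresponding $L^2$-type inequality; it is here that the conditional hypothesis enters, since the requirement that the interior solution map be bounded below in $W^{1,1}$ in an appropriate sense is exactly what allows one to pass information across the interior and obtain the $L^2$ comparison between scales at $\mathcal{H}^+$ and scales at $\mathcal{CH}^+$. With both endpoint estimates in place, the $K$-method of real interpolation applied to the pair $(L^1,L^2)$ on the relevant null hypersurfaces produces the desired $L^p$ upper bound for the full range $1<p<2$. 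This upper bound, when traced back to $\mathcal{H}^+$ (or to an outgoing cone in the exterior extending to null infinity), contradicts the generic polynomial lower bound and completes the argument.

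The hard part will be establishing the $L^2$-type upper bound and extracting a usable interpolation inequality from it. The natural $T$-energy estimate fails to be coercive on the Cauchy-horizon-transversal derivative owing to the blueshift, so one cannot directly import a standard $L^2$ estimate; the role of the conditional ``bounded below in $W^{1,1}$'' hypothesis is precisely to convert Franzen-type $W^{1,1}$ information into an $L^2$ comparison that closes the loop. A secondary technical difficulty lies in the interpolation itself: the $K$-functional must be realised on geometrically compatible traces so that the endpoint norms at $L^1$ and $L^2$ refer to the same foliation, and the constants in the interpolation inequality must be tracked carefully enough that the resulting $L^p$ decay beats the generic lower bound at $\mathcal{H}^+$.
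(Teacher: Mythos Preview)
Your overall architecture is right: one assumes $W^{1,p}_{loc}$ regularity near $\mathcal{CH}^+$, propagates an $L^p$-type upper bound from a hypersurface $C_{-1}$ in the interior to $\mathcal{H}^+$, then from $\mathcal{H}^+$ across the exterior to a constant-$r$ curve, and contradicts a lower bound coming from the condition $\mathcal{L}\neq 0$ at null infinity. The $L^p$ propagation estimates are indeed obtained by real interpolation between $L^1$ and $L^2$ endpoint estimates.

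However, you have reversed the roles of the two endpoints, and this is a genuine gap. The $L^2$ endpoint in the interior is \emph{not} the conditional step: it is exactly the Luk--Oh estimate (Proposition~\ref{intl2} in the paper), and it holds for the full subextremal range $0<Q<1$ without any hypothesis on the solution map. The conditional hypothesis enters at the $L^1$ endpoint. The required $L^1$ estimate is
\[
\int_{\mathcal{H}^+\cap\{v\geq1\}}|\partial_v\phi|\ \leq\ C\int_{C_{-1}\cap\{v\geq1\}}|\partial_v\phi|
\]
for solutions with $\partial_u\phi\equiv 0$ on $\underline{C}_1\cap\{u\leq -1\}$, and this is precisely what ``the solution map is bounded below in $W^{1,1}$'' encodes (see Remark~\ref{bdd below}). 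Franzen's $W^{1,1}$ stability does \emph{not} supply this: Franzen controls the transversal derivative near $\mathcal{CH}^+$ in terms of data on $\mathcal{H}^+$, which is the forward direction, whereas the contradiction argument needs the reverse inequality, from $C_{-1}$ back to $\mathcal{H}^+$. The paper proves this reverse $L^1$ estimate directly (Proposition~\ref{intl1}) via a Gr\"onwall-type iteration that closes only when $\log(r_+/r_-)<1$, i.e.\ when $\frac{2\sqrt{e}}{e+1}<Q<1$; outside that range the estimate may fail, and that failure is exactly the exceptional clause in the theorem.

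So your proposal would not go through as written: you would find yourself unable to close the $L^1$ endpoint from Franzen's result, and the ``bounded below in $W^{1,1}$'' hypothesis would not help you at the $L^2$ stage since no help is needed there. Once you swap the roles of the two endpoints, the scheme you describe matches the paper's.
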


Furthermore, we show that for a certain subrange of the parameters the condition of the theorem does not hold (i.e. the ``solution map'' is bounded below in $W^{1,1}$ in the black hole interior), and so establish $W^{1,p}_{loc}$ instability in this parameter subrange for $1<p<2$. This instability result is our main result and is stated directly below.

\begin{theorem}[Main Theorem, version 1]\label{mtv1}
Let $\Sigma_0$ be a complete $2$-ended asymptotically flat Cauchy hypersurface for the maximal globally hyperbolic development of a subextremal Reissner--Nordstr\"{o}m spacetime such that $\frac{2\sqrt e}{e+1}<Q<1$, where $e$ is the Euler number (i.e. $0<\log\frac{r_+}{r_-}<1$ where $r_+>r_->0$ are the roots of $r^2-2Mr+q^2=0$). Then for each $1<p<2$, generic smooth and compactly supported initial data for \eqref{waveequation} on $\Sigma_0$ give rise to solutions that are not in $W^{1,p}_{loc}$ in a neighbourhood of any point on the future Cauchy horizon $\mathcal{CH}^+$. 
\end{theorem}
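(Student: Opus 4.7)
\emph{Strategy.} The plan is to deduce Theorem \ref{mtv1} directly from Theorem \ref{conditionalinstab}. That conditional result asserts generic $W^{1,p}_{loc}$ blow up near $\mathcal{CH}^+$ for $1<p<2$ holds throughout the subextremal range, unless the interior ``solution map'' fails to be bounded below in $W^{1,1}$ (in the sense of Remark \ref{bdd below}). It therefore suffices to verify the $W^{1,1}$ lower bound whenever $\frac{2\sqrt e}{e+1}<Q<1$. Following the reduction noted in the abstract, a single spherically symmetric admissible solution realising this lower bound is enough to imply the result in full generality, so the verification may be performed entirely within the spherically symmetric class.

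\emph{Setup.} I would work in double null coordinates $(u,v)$ covering a neighbourhood of $\mathcal{CH}^+$ in the black hole interior, with $u,v\to\infty$ parametrising the two components of $\mathcal{CH}^+$. In spherical symmetry \eqref{waveequation} reduces to a 1+1 dimensional equation for $r\phi$ on the quotient manifold, and the bifurcate event horizon $\mathcal{H}^+$ serves as characteristic initial data for the interior problem. The quantities to control are the $L^1$ norms of $\phi$ and of its null derivatives $\partial_u \phi, \partial_v \phi$ along characteristic and spacelike slices approaching $\mathcal{CH}^+$.

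\emph{Key estimate and main obstacle.} The core of the proof is a quantitative $L^1$-lower bound: for suitably normalised admissible data, the $L^1$ norm of a null derivative of $\phi$ on a late slice near $\mathcal{CH}^+$ is bounded below by a positive constant times an $L^1$ norm of the corresponding data on $\mathcal{H}^+$. I would establish it by integrating the 1+1 wave equation along characteristics in the interior and carefully tracking the competing exponential weights that arise in the redshift and blueshift regions. The parameter condition $\frac{2\sqrt e}{e+1}<Q<1$ is exactly equivalent to $\log(r_+/r_-)<1$ (using $r_+/r_- = (1+\sqrt{1-Q^2})/(1-\sqrt{1-Q^2})$), and this threshold should emerge as precisely the condition under which an iterated integral estimate, arising from repeated integration of the wave equation across the interior against weights involving $dr/r$, retains a definite positive sign. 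I expect this sign preservation to be the main technical obstacle: lacking both a maximum principle and the coercive quadratic structure available in the $L^2$ energy setting, one must exploit the explicit form of the spherically symmetric wave equation and combine characteristic integration with carefully chosen integration by parts to isolate a non-cancelling leading-order term. Once the $W^{1,1}$ lower bound is established, Theorem \ref{mtv1} follows immediately from Theorem \ref{conditionalinstab}.
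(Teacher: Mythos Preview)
Your approach is correct and the core content is the same as the paper's: the one substantive ingredient you must supply is the interior $L^1$ estimate (Proposition~\ref{intl1}), and your sketch---iterated characteristic integration of $\partial_u(r\partial_v\phi)=-\lambda\partial_u\phi$ and $\partial_v(r\partial_u\phi)=-\nu\partial_v\phi$ producing factors of $\int|dr|/r\le\log(r_+/r_-)$, closing when $(\log(r_+/r_-))^2<1$---matches the paper's proof exactly. Two small cautions: first, the paper's written proof of Theorem~\ref{conditionalinstab} (Section~\ref{condrel2}) invokes Theorem~\ref{mtv2} as a shortcut on the range $\frac{2\sqrt e}{e+1}<Q<1$, so a naive citation is circular, but the argument there makes clear that the Conditional Theorem holds for all $0<Q<1$ independently of the Main Theorem once one notes the $L^1$ estimate is the only parameter-dependent input to the whole machine; second, the actual $L^1$ proof is a direct Gr\"onwall-type contraction on $\int\sup_u r|\partial_v\phi|\,dv$ and $\int\sup_v r|\partial_u\phi|\,du$ with no exponential redshift/blueshift weights or ``sign preservation'' subtleties---it is more elementary than your description anticipates.
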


\begin{figure}[H]
\centering
\includegraphics[scale=0.4]{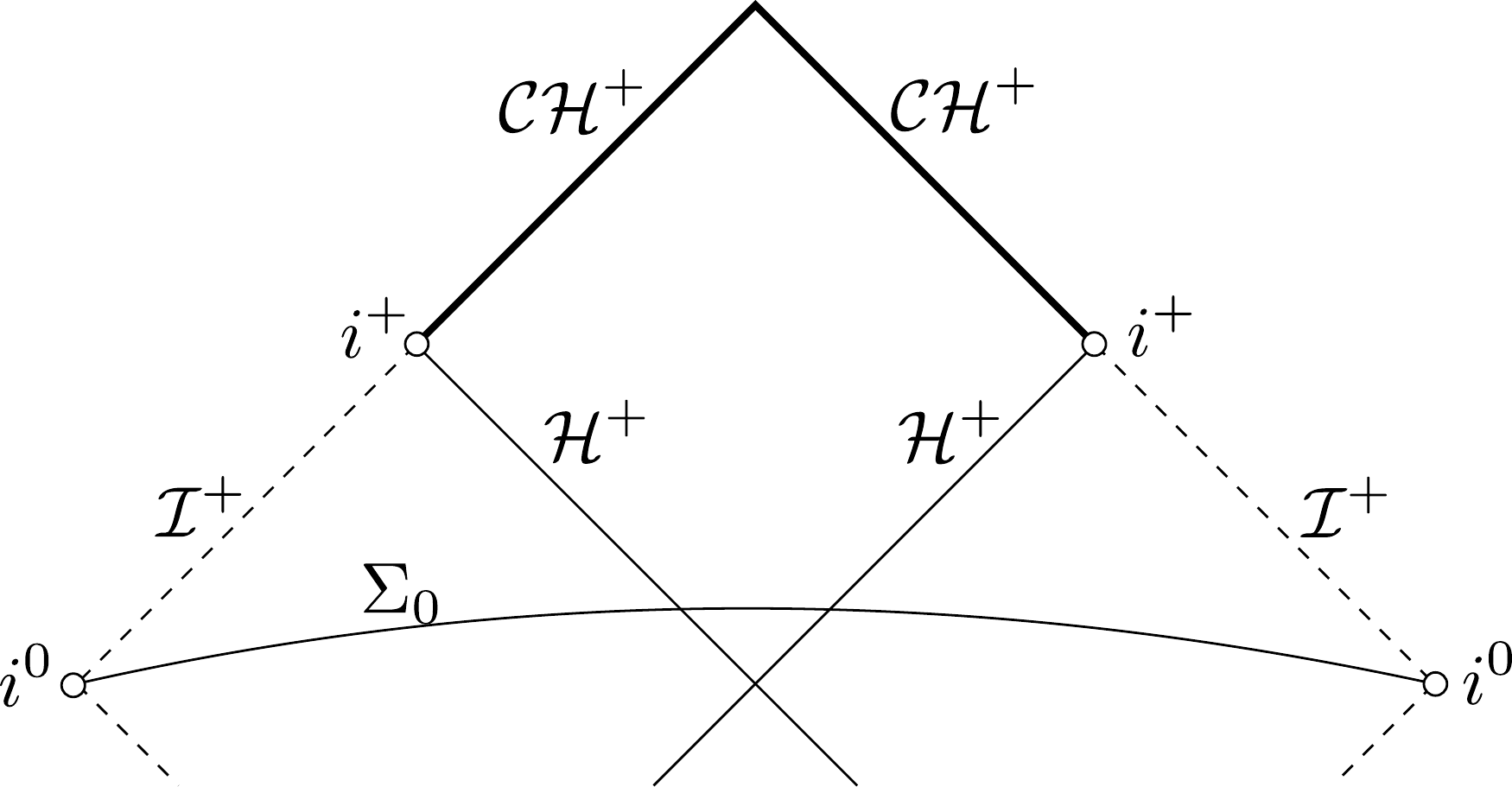}
\end{figure}

\begin{remark*}Recall that if $p\geq2$, $W^{1,p}_{loc}$ instability holds for the whole subextremal range of parameters with non-vanishing charge ($0<Q<1$). Combining this with our main theorem yields $W^{1,p}_{loc}$ instability for every $p>1$ for the parameter range $\frac{2\sqrt{e}}{e+1}<Q<1$. We remark also that Theorem \ref{mtv1} in particular shows the $W^{1,1}_{loc}$ result of Franzen \cite{Franzen} is sharp, at least for the restricted parameter range $\frac{2\sqrt{e}}{e+1}<Q<1$.
\end{remark*}

While for technical reasons arising from the analysis in the black hole interior, we have only managed to prove our instability result, Theorem \ref{mtv1}, for a subrange of the subextremal parameters\footnote{The reason for this parameter restriction is due to the need to  propagate $L^1$-type estimates from a null hypersurface through the black hole interior to the event horizon. See  Theorem \ref{intl1} for details.}, one nonetheless expects that  in fact $W^{1,p}_{loc}$ instability holds for the full subextremal range with non-vanishing charge. Indeed, if one na\"{i}vely extrapolates from results in the \emph{extremal} case $Q=1$ (linear stability of the extremal Cauchy horizon is shown in \cite{G}), then one may even think that the ``far away from extremality'' case is ``more unstable'' than the ``near extremality'' case. Similarly for the cosmological case, heuristic arguments in \cite{Brady} show that we expect the Cauchy horizon of Reissner--Nordstr\"{o}m--de Sitter to be ``more unstable'' ``far away from extremality'' than ``near extremality''. Ironically, however, we succeed in proving instability only ``near extremality'' due to the technicalities in the black hole interior, but remain optimistic that the result can be extended to the full parameter range. We thus make the following conjecture.

\begin{conjecture}
Let $\Sigma_0$ be a complete $2$-ended asymptotically flat Cauchy hypersurface for the maximal globally hyperbolic development of a subextremal Reissner--Nordstr\"{o}m spacetime with non-vanishing charge. Then, for each $1<p<2$, generic smooth and compactly supported initial data on $\Sigma_0$ give rise to solutions that are not in $W^{1,p}_{loc}$ in a neighbourhood of any point on the future Cauchy horizon $\mathcal{CH}^+$. 
\end{conjecture}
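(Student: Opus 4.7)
The plan is to invoke the Conditional Theorem \ref{conditionalinstab}, which already holds for the full subextremal range, and to verify its hypothesis---that the ``solution map'' in the black hole interior is bounded below in $W^{1,1}$---for every $0<Q<1$. In the proof of Theorem \ref{mtv1} this is accomplished by propagating an $L^1$-type lower bound from the event horizon $\mathcal{H}^+$ across the interior to a neighbourhood of the Cauchy horizon $\mathcal{CH}^+$ (the content of Theorem \ref{intl1}), and the parameter restriction $\log(r_+/r_-)<1$ is introduced only at that step. The conjecture thus reduces to upgrading the interior $L^1$ propagation so that it covers the full range $0<|q|<M$, after which the remainder of the argument (the $L^1/L^2$ interpolation via the K-method and the contradiction with the null-infinity lower bound) goes through without change.

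The key step would be to revisit the $L^1$ transport estimates in the black hole interior. Writing the wave equation for $r\phi$ in Eddington--Finkelstein-type double null coordinates $(u,v)$ and integrating along null generators, the $L^1$ losses come with exponential factors controlled by the ratio $\kappa_-/\kappa_+$ of the surface gravities at $\mathcal{CH}^+$ and $\mathcal{H}^+$, and it is precisely balancing these factors against the $L^1$ input at $\mathcal{H}^+$ that forces the current restriction. Two complementary strategies seem natural. First, introduce weighted $L^1$ norms whose $v$-weight is tuned to the blueshift rate $\kappa_-$, so that part of the exponential loss in $v$ is absorbed by the weight while still retaining enough pointwise content to detect the lower bound near $\mathcal{CH}^+$. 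Second, dyadically decompose in $v$ and control each dyadic shell separately using sharp pointwise Price's-law-type lower bounds on $\phi|_{\mathcal{H}^+}$, which should yield a $Q$-uniform lower bound near $\mathcal{CH}^+$ without relying on the exact form of the exponential factors.

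The main obstacle is that $L^1$ estimates are not amenable to the standard energy-method techniques---multipliers, commutators, integration by parts---that drive the $L^2$ theory in the interior, and unknown cancellations in the interior wave dynamics could in principle degrade the $L^1$ lower bound without contradicting any established $L^2$ estimate, which is precisely the reason the result is presently stated only as a conjecture. A more ambitious alternative is to bypass the $L^1$ step entirely and to establish a direct $L^p$-type upper bound in the interior via a $p$-dependent multiplier built from a ``$p$-stress-energy'' object of the form $|\partial\phi|^{p-2}\partial_a\phi\,\partial_b\phi$; the principal difficulty there is to manage the non-smoothness of the multiplier near $\partial\phi=0$ while still exploiting the red-shift at $\mathcal{H}^+$ and surviving the blueshift at $\mathcal{CH}^+$. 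Absent such new ingredients, extending the instability result to the full subextremal range appears to require a genuinely sharper understanding of the interior wave dynamics than is currently available.
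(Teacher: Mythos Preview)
This statement is labelled a \emph{conjecture} in the paper and is explicitly left open there; the paper proves it only for the subrange $\frac{2\sqrt{e}}{e+1}<Q<1$ (Theorem~\ref{mtv1}) and gives no proof for the full range $0<Q<1$. Consequently there is no ``paper's own proof'' to compare against, and your proposal is not a proof either---as you yourself acknowledge in the final paragraph. What you have written is a research outline, not a proof, and it correctly identifies the obstruction: the interior $L^1$ estimate of Proposition~\ref{intl1} is the only place the parameter restriction enters, and removing it would close the argument via Theorem~\ref{condinstab2}.

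A couple of points in your description are slightly off and worth correcting if you develop this further. First, the direction of propagation: the needed estimate bounds $\int_{\mathcal{H}^+}|\partial_v\phi|$ in terms of $\int_{C_{-1}}|\partial_v\phi|$ (plus data on $\underline{C}_1$), i.e.\ it is an \emph{upper} bound on the event horizon controlled by data near the Cauchy horizon, not a lower bound propagated outward from $\mathcal{H}^+$. In the language of Remark~\ref{bdd below}, one needs the operator $T:\partial_v\phi|_{C_{-1}}\mapsto\partial_v\phi|_{\mathcal{H}^+}$ to be $L^1$-bounded. Second, the source of the restriction is not a ratio of surface gravities $\kappa_-/\kappa_+$; in the proof of Proposition~\ref{intl1} the constant that must be $<1$ is $\log(r_+/r_-)$, arising from the elementary bounds $\sup_u\int_1^\infty\frac{|\lambda|}{r}\,dv\le\log\frac{r_+}{r_-}$ and $\sup_v\int_{-\infty}^{-1}\frac{|\nu|}{r}\,du\le\log\frac{r_+}{r_-}$, which are then fed into a Gr\"onwall-type closure. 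Your suggested strategies (weighted $L^1$ norms, dyadic decomposition in $v$, a direct $L^p$-multiplier) are plausible avenues, but none is carried out, and the paper gives no indication that any of them succeeds; the conjecture remains open.
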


\noindent Although this conjecture remains open, we show in Theorem \ref{mtv1} that at the very least it holds for the the subrange $\frac{2\sqrt{e}}{e+1}<Q<1$ of black hole parameters. This contrasts with the expectation in the cosmological case, where a heuristic argument suggests the following conjecture:

\begin{conjecture}
For each non-degenerate Reissner--Nordstr\"{o}m--de Sitter spacetime, there exists $p>1$ such that solutions of the linear wave equation with smooth and compactly supported data on a Cauchy hypersurface are in $W^{1,p}$ near the Cauchy horizon.
\end{conjecture}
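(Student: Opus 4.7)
The plan is to exploit the key structural difference between the asymptotically flat and cosmological settings: in the presence of a cosmological horizon, solutions and their derivatives decay \emph{exponentially} (rather than polynomially) along the event horizon. For any non-degenerate Reissner--Nordstr\"{o}m--de Sitter spacetime, I would first invoke existing decay results for the linear wave equation in the exterior to obtain bounds of the form $|\phi|_{\mathcal{H}^+},\,|\partial_u\phi|_{\mathcal{H}^+} \lesssim e^{-\alpha u}$ for some $\alpha>0$ depending on the spectral gap of the stationary operator and the surface gravities $\kappa_+,\kappa_c$ of the event and cosmological horizons. This is the cosmological analogue of the polynomial decay input used in \cite{LO} and in the present work.

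I would then work in a double-null gauge $(u,v)$ covering the black hole interior up to the future Cauchy horizon $\mathcal{CH}^+$, with $u$ a Killing parameter along $\mathcal{H}^+$ and $v$ chosen so that $\mathcal{CH}^+$ corresponds to $v=+\infty$. The first step is to transfer the exponential decay from $\mathcal{H}^+$ to a transverse hypersurface $\{v=v_0\}$ sitting inside the interior but before the ``blueshift region''. Since this region is compact and no blueshift amplification is triggered, one can integrate the wave equation along ingoing null geodesics and inherit the exponential decay on $\{v=v_0\}$ with only bounded losses.

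The main step is to propagate these estimates from $\{v=v_0\}$ up to $\mathcal{CH}^+$, tracking the competition between the exponential decay inherited from $\mathcal{H}^+$ and the blueshift amplification. In a regular chart $(u,V)$ near $\mathcal{CH}^+$, the Cauchy-horizon-transversal derivative satisfies $\partial_V\phi = e^{\kappa_- v}\partial_v\phi$ modulo bounded factors. Using the wave equation, one writes a transport identity for $\partial_v(r\phi)$ whose source is controlled by $\phi$ and its tangential derivatives; feeding in the exponential decay in $u$ from the previous step and applying H\"older's inequality in $u$, one obtains, for $p>1$ sufficiently close to $1$,
\[
\iint_{\{u\geq u_0,\,V\in(V_0,0)\}} |\partial_V\phi|^p\, du\, dV < \infty.
\]
The admissible $p$ is any $p$ satisfying a condition roughly of the form $p-1 < \alpha/\kappa_-$, which admits solutions with $p>1$ so long as $\alpha>0$. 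Combined with the analogous (and easier) estimate on $\partial_u\phi$ and the uniform boundedness of $\phi$ in the interior (the cosmological analogue of Franzen's result), this yields $\phi\in W^{1,p}_{loc}$ in a neighbourhood of any point of $\mathcal{CH}^+$.

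The principal obstacle is establishing a strictly positive exponential decay rate $\alpha>0$ that is uniform enough across the full non-degenerate parameter range to close the argument, in particular near sub-extremal or Nariai-type degenerations where quasi-normal modes can approach the real axis and the spectral gap can shrink. This requires a careful mode-stability analysis ruling out real resonances and likely invokes the full microlocal scattering framework developed in the cosmological context. A secondary technical issue is obtaining sufficient angular regularity to apply Sobolev embedding on the spheres of symmetry; this should follow from commuting the wave equation with the Killing rotations and iterating the above argument, which preserves the exponential decay structure with only quantitative loss in $\alpha$.
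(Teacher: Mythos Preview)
The statement you are attempting to prove is stated in the paper as a \emph{conjecture}, not a theorem: the paper does not give a proof, nor does it claim one. What the paper offers in lieu of a proof is (i) the heuristic that the singularity strength differs between the asymptotically flat and cosmological cases because of the exponential (rather than polynomial) decay of solutions in the exterior, and (ii) the supporting result of Hintz--Vasy \cite{HV} that solutions lie in $H^{1/2+\beta}$ near the Cauchy horizon for some $\beta>0$, together with the remark that, since the singularity is effectively one-dimensional, $H^{1/2}$ scales like $W^{1,1}$ there. So there is no ``paper's own proof'' to compare against.

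Your sketch is in the spirit of the paper's heuristic and is a plausible outline of an attack, but it is not a proof, and you have correctly flagged the genuine difficulties yourself. The crucial issue is obtaining a strictly positive exponential rate $\alpha>0$ uniformly over the non-degenerate parameter range; this is a nontrivial spectral/mode-stability input (and the condition you write, roughly $p-1<\alpha/\kappa_-$, is exactly the competition one expects between exterior decay and interior blueshift). Establishing that input and carrying out the propagation argument rigorously is precisely what the conjecture is asking for. In short: your proposal is a reasonable strategy consistent with the paper's discussion, but the paper neither proves the statement nor asserts that a proof along these lines is currently available.
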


\noindent This conjecture is supported by \cite{HV}, in which Hintz--Vasy showed that for all non-degenerate Reissner--Nordstr\"{o}m--de Sitter spacetimes, solutions are in $H^{1/2+\beta}$ near the Cauchy horizon for some $\beta>0$.\footnote{Since it was moreover shown that the solution is smooth in the directions tangental to the Cauchy horizon, one can think that the singularity is one-dimensional and that $H^{1/2}$ scales like $W^{1,1}$ near the Cauchy horizon.} (See also \cite{DR07}, \cite{Dya} and \cite{MSBV}, where Schwarzschild--de Sitter, Kerr and Kerr--de Sitter spacetimes are considered.) Thus the strength of the singularity is different for the asymptotically flat and cosmological cases. This is ultimately due to the differing decay properties of solutions in the black hole exterior regions, namely in Reissner--Nordstr\"{o}m solutions decay polynomially in the exterior whereas in Reissner--Nordstr\"{o}m--de Sitter solutions decay exponentially in the exterior.

As with other instability results, Theorems \ref{conditionalinstab} and \ref{mtv1} are motivated by the strong cosmic censorship conjecture. This conjecture, perhaps the most fundamental open problem in mathematical general relativity, is the subject of a  huge body of literature. We will not discuss the conjecture itself here, but refer the reader to Section 1.4 of \cite{LO} (and the references within) where it is discussed in detail. Here, it suffices to say that, if true, the strong cosmic censorship conjecture would imply that small perturbations of the Reissner--Nordstr\"{o}m spacetime for the \emph{nonlinear} Einstein--Maxwell equations give rise to a singular Cauchy horizon. The linear wave equation which we study is regarded as a ``poor man's linearisation'' of the Einstein--Maxwell equations, and so the generic $W^{1,p}_{loc}$ blow-up result presented in this work is evidence in favour of the strong cosmic censorship conjecture. Our instability result strengthens the $W^{1,2}_{loc}$ result of Luk--Oh, implying that the Cauchy horizon is more singular than was implied by \cite{LO}.

\subsection*{Strategy of Proof}

We now turn our attention  to  the proofs of the two theorems above. As Theorem \ref{conditionalinstab} follows almost immediately from the proof of Theorem \ref{mtv1}, in particular from identifying how the proof of Theorem \ref{mtv1} may fail outside the parameter range $\frac{2\sqrt e}{e+1}<Q<1$, our main challenge is in proving Theorem \ref{mtv1}. 

The proof of Theorem \ref{mtv1} proceeds by showing that if $\frac{2\sqrt e}{e+1}<Q<1$, then for each $1<p<2$ there is a spherically symmetric solution arising from smooth and compactly supported initial data which is singular in the $W^{1,p}_{loc}$ sense, from which it follows that the set of smooth, compactly supported data which correspond to regular solutions is of codimension at least $1$.

In analogy to \cite{LO}, the existence of the spherically symmetric solution of the previous paragraph is shown by identifying a condition near null infinity that prevents a spherically symmetric solution from belonging to $W^{1,p}_{loc}$ in any neighbourhood of the future Cauchy horizon - this is shown using a contradiction argument. Indeed, we show that regularity of a spherically symmetric solution in the black hole interior implies upper bounds on the solution in the exterior region that contradict lower bounds that are implied if the condition at null infinity holds. It then remains only to identify a spherically symmetric solution arising from smooth and compactly supported data on the Cauchy surface $\Sigma_0$ which satisfies the condition near null infinity. Note that while the proof of Theorem \ref{mtv1} is philosophically similar to the proof for the case $p=2$ in \cite{LO}, Theorem \ref{mtv1} is not implied by \cite{LO}. We emphasise that the argument in \cite{LO} uses an $L^2$-type upper bound near the Cauchy horizon as the contradictive assumption and from this deduces a $L^2$-type upper bound in the black hole exterior which contradicts a $L^2$-type  lower bound obtained from the condition near null infinity. We, however, assume a $L^p$-type upper bound as our contradictive assumption, and show that it implies a $L^p$-type upper bound in the black hole exterior which contradicts a $L^p$-type lower bound deduced from the condition near null infinity. This conclusion that the $L^p$-type upper bound is false could not be reached assuming the $L^2$ upper bound assumption of \cite{LO}.

A key ingredient of the contradiction argument is the propagation of $L^p$-type upper bounds from the black hole interior to the exterior. This is achieved by establishing a chain of estimates. We show a weighted $L^p$ term involving certain derivatives of a spherically symmetric solution on a constant $r$-hypersurface in the exterior can be controlled by a corresponding weighted $L^p$ term on the event horizon and a data term. Furthermore, the $L^p$ term on the event horizon can in turn be controlled by a corresponding $L^p$ term on a Cauchy horizon transversal null hypersurface in the interior together with a data term. For $1<p<2$, proving  these various $L^p$ type estimates in a direct manner seemed intractable, so we prove them indirectly using real interpolation. Indeed, we establish analogous estimates for the cases $p=1$ and $p=2$ (the endpoint estimates) and interpolate between these estimates to deduce the family of desired intermediate estimates for $1<p<2$.

It is precisely this method of proof which allows us to conclude the conditional instability result Theorem \ref{conditionalinstab}: we obtain the necessary $L^p$-type estimates precisely when the corresponding $L^1$ and $L^2$-type estimates hold. While we succeed in establishing the $L^2$ estimates and the exterior $L^1$ estimate for all subextremal Reissner--Nordstr\"{o}m spacetimes, the same is not true for the interior $L^1$ estimate. In other words, instability may fail precisely when we are unable to estimate the $L^1$ norm of a suitable derivative of the solution on the event horizon by the $L^1$ norm of the same derivative on a Cauchy horizon transversal null hypersurface. This can be interpreted as a statement about the boundedness from below of the ``solution map'' in a $W^{1,1}$ sense, namely the condition given in Theorem \ref{conditionalinstab}.

We now give an outline of the structure of the remainder of this section.  We begin with a brief exposition on the geometry of subextremal Reissner--Nordstr\"{o}m spacetime in Section \ref{geometry}. In Sections \ref{exterior coordinates} and \ref{interior coordinates}, we introduce the coordinate systems in the black hole exterior and interior with which we work. We then discuss the linear wave equation (Section \ref{we2}), the notations and conventions which we adopt (Section \ref{nandc}) and we describe the class of initial data of interest (Section \ref{inda}). In Section \ref{sobspace} we discuss the space $W^{1,p}_{loc}$ on the interior of the Reissner--Nordstr\"{o}m black hole.  Then, armed with these preliminaries, in Section \ref{themaintheorem} we give precise statements of Theorem \ref{conditionalinstab} and Theorem \ref{mtv1} (see Theorem \ref{condinstab2} and Theorem \ref{mtv2}) and a detailed explanation of the strategy of their proofs in Section \ref{pf of mainthm}. The remainder of the thesis is devoted to proving these two results.

\subsection{The Reissner--Nordstr\"{o}m Solution}\label{geometry}

The Reissner--Nordstr\"{o}m spacetimes are a $2$-parameter family of spacetimes: they represent a charged, non-rotating black hole as an isolated system in an asymptotically flat spacetime, and are indexed by the charge $q$ and the mass $M$ of the black hole. They are the unique static and spherically symmetric solutions of the Einstein--Maxwell system. The Penrose diagram of the maximal globally hyperbolic development of subextremal Reissner--Nordstr\"{o}m with non-vanishing charge (i.e, $0<|q|<M\iff 0<Q<1$) data on a complete Cauchy hypersurface $\Sigma_0$ with two asymptotically flat ends is shown below.

\begin{figure}[h]
\includegraphics[width=8.5cm]{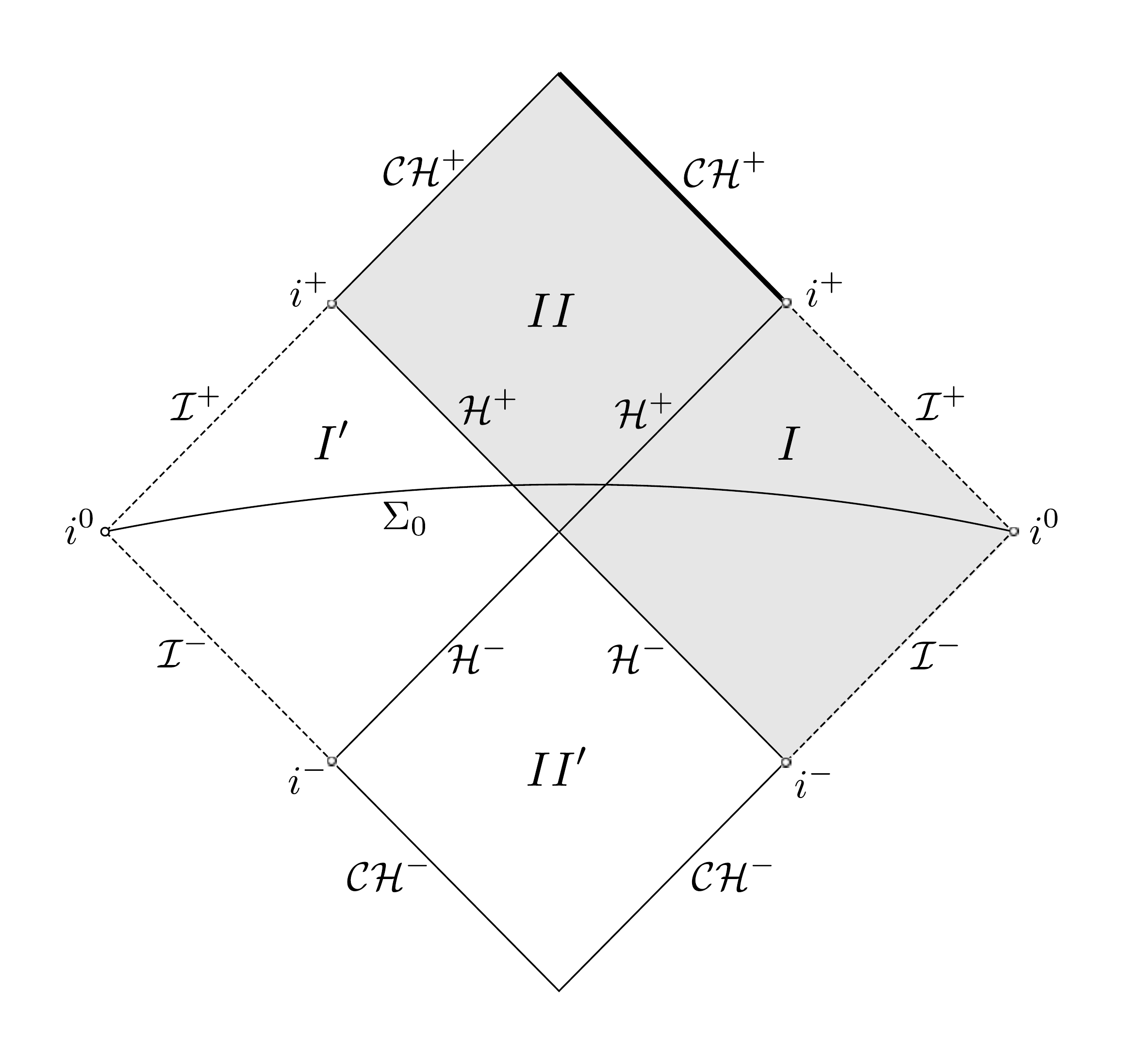}
\centering
\caption{The maximal globally hyperbolic development of Reissner--Nordstr\"{o}m data on a complete two-ended asymptotically flat Cauchy hypersurface $\Sigma_0$.}
\label{MGHD}
\end{figure}

We summarise now the key features of the geometry that will be important for us.
Recall that in a local coordinate chart, the metric $g$ may be written as
\begin{equation}\label{g}
g=-\left(1-\frac{2M}{r}+\frac{q^2}{r^2}\right)dt^2 +\left(1-\frac{2M}{r}+\frac{q^2}{r^2}\right)^{-1}dr^2 +r^2d\sigma_{\mathbb{S}^2}.
\end{equation}
We denote by $r_+$ and $r_-$ the distinct positive roots of the quadratic $r^2-2Mr+q^2$ and assume that $r_+>r_-$. 

The black hole region (region $II$ in Figure \ref{MGHD}) is the complement of the causal past of future null infinity $\mathcal{I}^+$, in other words no signal from the black hole region can reach future null infinity. The black hole region is separated from the exterior regions ($I$ and $I'$ in the figure) by its boundary, the bifurcate null hypersurface referred to as the future event horizon and given by $\mathcal{H}^+=\{r=r_+\}\backslash I^-(\mathcal{I}^+)$. The white hole region (region $II'$), past null infinity $\mathcal{I}^-$ and the past event horizon $\mathcal{H}^-$ are defined by time reversal. The expression for the metric given in \eqref{g} has a coordinate singularity at $r=r_+$ but is valid everywhere else in $\{r_-<r\}$.

 The null hypersurface $\{r=r_-\}$ is a smooth Cauchy horizon. The component in the future of the black hole region is denoted $\mathcal{CH}^+$ and called the future Cauchy horizon. The past Cauchy horizon $\mathcal{CH}^-$ is defined similarly by time reversal. The presence of the smooth Cauchy horizon means that the maximal globally hyperbolic development may be extended smoothly and non-uniquely as a solution of the Einstein--Maxwell equations. The strong cosmic censorship conjecture asserts that this property is non-generic.
 
 Due to the symmetry of the asymptotically flat ends, it will suffice only to consider part  of the maximal globally hyperbolic development, namely the region shaded in Figure \ref{MGHD}. Furthermore, it will be sufficient to consider only the incoming component of $\mathcal{CH}^+$, that is the component on the right of the shaded region shown in bold. Once we have the blow-up result for this component of the Cauchy horizon, the result for the other component follows by analogy.
 
 The shaded region is composed of the black hole interior and an exterior region. We now describe suitable coordinates for these regions.

 \subsubsection{Coordinates for the Black Hole Exterior}\label{exterior coordinates}
 We define null coordinates $u$ and $v$ in the black hole exterior $\{r>r_+\}$ as follows. Set 
 \[\Omega^2=1-\frac{2M}{r}+\frac{q^2}{r^2} = \frac{(r-r_+)(r-r_-)}{r^2} >0,\]
 and define 
 \[r^*= r+(M+\frac{2M^2-q^2}{2\sqrt{M^2-q^2}})\log(r-r_+) +(M-\frac{2M^2-q^2}{2\sqrt{M^2-q^2}})\log(r-r_-), \]
 so that
 \[\frac{dr^*}{dr}=\frac{1}{\Omega^2}.\]
Note $\frac{dr^*}{dr}>0$, so $r^*$ is a strictly increasing function of $r$ in $\{r>r_+\}$.
 Set
 \[v=\frac{1}{2}(t+r^*),\hspace{5mm} u=\frac{1}{2}(t-r^*).\]
 Then
 \[\frac{\partial}{\partial v}=\frac{\partial}{\partial t}+ \frac{\partial}{\partial r^*}, \hspace{5mm} \frac{\partial}{\partial u}=\frac{\partial}{\partial t}- \frac{\partial}{\partial r^*}.\]
 Let $(\theta,\varphi)$ be a spherical coordinate system on $\mathbb{S}^2$ and $d\sigma_{\mathbb{S}^2}=d\theta^2+\sin^2\theta\, d\varphi^2$ the standard metrc on $\mathbb{S}^2$. Then, with respect to the $(u,v,\theta,\varphi)$ coordinates, the Reissner--Nordstr\"{o}m metric is
 \[g=-4\Omega^2dudv+r^2d\sigma_{\mathbb{S}^2}.\]
 Furthermore, we have
 
\begin{equation}\label{ln1}
\lambda:=\partial_vr=\Omega^2, \hspace{5mm} \nu:=\partial_ur=-\Omega^2.
\end{equation}
In this coordinate system,
the limit $\{v=\infty\}$ corresponds to future null infinity $\mathcal{I}^+$, while $\{u=\infty\}$ corresponds to the future event horizon $\mathcal{H}^+$. This is shown in Figure \ref{extcoords} below.

\begin{figure}[H]
\centering
\includegraphics[scale=1]{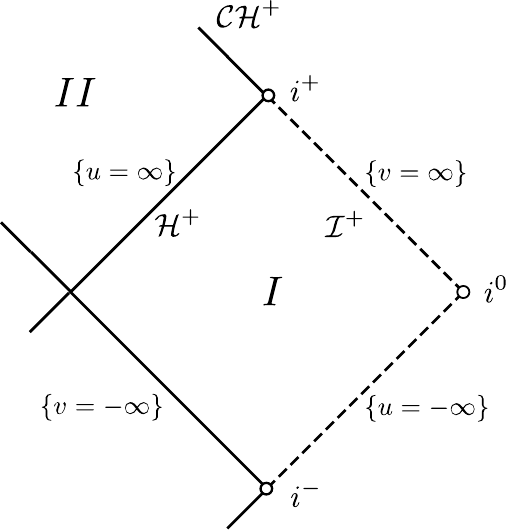}
\caption{Coordinates in the black hole exterior.}
\label{extcoords}
\end{figure}

 \subsubsection{Coordinates for the Black Hole Interior}\label{interior coordinates}
 We define null coordinates $u$ and $v$ in the black hole interior $\{r_-<r<r_+\}$ as follows. Set 
 \[\Omega^2=-(1-\frac{2M}{r}+\frac{q^2}{r^2})= \frac{(r_+-r)(r-r_-)}{r^2}>0,\]
 and define 
 \[r^*= r+(M+\frac{2M^2-q^2}{2\sqrt{M^2-q^2}})\log(r_+-r) +(M-\frac{2M^2-q^2}{2\sqrt{M^2-q^2}})\log(r-r_-), \]
 so that
 \[\frac{dr^*}{dr}=-\frac{1}{\Omega^2}.\]
Thus $\frac{dr^*}{dr}<0$, so $r^*$ is a strictly decreasing function of $r$ in $\{r_-<r<r_+\}$.
 Set
 \[v=\frac{1}{2}(r^*+t),\hspace{5mm} u=\frac{1}{2}(r^*-t).\]
 Then,
 \[\frac{\partial}{\partial v}=\frac{\partial}{\partial r^*}+ \frac{\partial}{\partial t}, \hspace{5mm} \frac{\partial}{\partial u}=\frac{\partial}{\partial r^*}- \frac{\partial}{\partial t}.\]
As before, we let $(\theta,\varphi)$ be a spherical coordinate system on $\mathbb{S}^2$ and $d\sigma_{\mathbb{S}^2}=d\theta^2+\sin^2\theta\, d\varphi^2$ the standard metrc on $\mathbb{S}^2$. Then, with respect to the $(u,v,\theta,\varphi)$ coordinates, the Reissner--Nordstr\"{o}m metric is
 \[g=-4\Omega^2dudv+r^2d\sigma_{\mathbb{S}^2}.\]
 Furthermore, in this region we have
\begin{equation}\label{ln2}
\lambda:=\partial_vr=-\Omega^2, \hspace{5mm} \nu:=\partial_ur=-\Omega^2.
\end{equation}
In this coordinate system,
the limit $\{v=\infty\}$ corresponds to the incoming part of the future Cauchy horizon $\mathcal{CH}^+$, while $\{u=-\infty\}$ corresponds to the future event horizon $\mathcal{H}^+$, as shown in Figure \ref{intcoords} below.

\begin{figure}[h]
\centering
\includegraphics[scale=1]{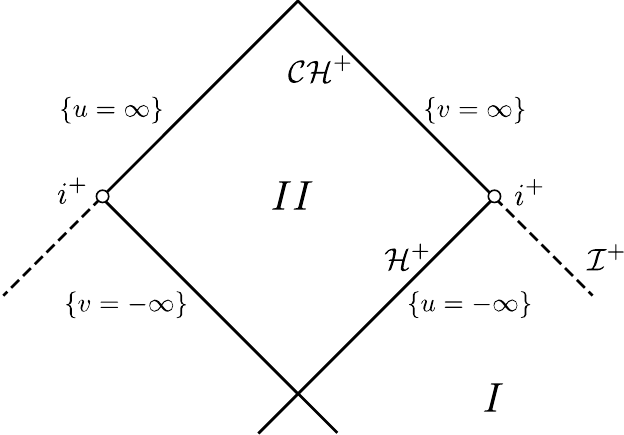}
\caption{Coordinates in the black hole interior.}
\label{intcoords}
\end{figure}

\subsubsection{The Wave Equation}\label{we2}

In both the interior and exterior coordinate systems, the wave equation \eqref{waveequation} takes the form
\[\partial_u\partial_v\phi=-\frac{\partial_vr\partial_u\phi}{r}-\frac{\partial_ur\partial_v\phi}{r}+\frac{\Omega^2\slashed\Delta\phi}{r^2},\]
where $\slashed\Delta$ is the Laplace--Beltrami operator on the standard unit $2$-sphere. In spherical symmetry, the equation takes the form
\begin{equation}\label{extwe}
\partial_u\partial_v\phi=-\frac{\partial_vr\partial_u\phi}{r}-\frac{\partial_ur\partial_v\phi}{r}=-\frac{\lambda\partial_u\phi}{r}-\frac{\nu\partial_v\phi}{r}.
\end{equation}
 So by \eqref{ln1} and \eqref{ln2}, the wave equation in spherical symmetry takes the form
\begin{equation}\label{exteriorwave}
\partial_u\partial_v\phi=\frac{\Omega^2}{r}(\partial_v\phi-\partial_u\phi)
\end{equation}
in the exterior region and
\begin{equation}\label{interiorwave}
\partial_u\partial_v\phi=\frac{\Omega^2}{r}(\partial_v\phi+\partial_u\phi)
\end{equation}
in the interior region.

\subsubsection{Notation and Conventions}\label{nandc}
We adopt the notation and conventions used in \cite{LO}. We recall these here for the reader's convenience.
\begin{itemize}
\item
$C_u$ and $\underline{C}_v$ will denote constant $u$ and $v$ hypersurfaces respectively. When considering a constant $v$ hypersurface that crosses the event horizon, we will denote the interior and exterior components of the hypersurface by $\underline{C}_v^{int}$ and $\underline{C}_v^{ext}$. Similarly, to eliminate ambiguity over whether a constant $u$-hypersurface is in the black hole interior or exterior, we will sometimes write $C_u^{int}$ and $C_u^{ext}$. We omit these superscripts when it is clear from context.
\item
On the constant $u$-hypersurfaces $C_u$, integration is always with respect to the measure $dv$, and on constant $v$-hypersurfaces $\underline{C}_v$ integration is always with respect to the measure $du$. 
\item
Constant $r$-hypersurfaces will be denoted $\gamma_R=\{r=R\}$, and similarly (with slight abuse of notation), constant $r^*$-hypersurfaces will be denoted $\gamma_{R^*}=\{r^*=R^*\}$, where $R^*=r^*(R)$.
\item
Unless otherwise stated, constant $r$-hypersurfaces $\gamma_r$ and constant $r^*$-hypersufaces $\gamma_{r^*}$ are parameterised by the $v$ coordinate and integration is respect to the measure $dv$.
\item
In spacetime regions, we integrate with respect to $du\,dv$. We emphasise that this is \emph{not} integration with respect to the volume form induced from the metric.
\item
We shall use the notation $v_{R^*}(u)$ to denote the unique value of $v$ such that $r^*(u,v)=R^*$. So $r^*(u,v_{R^*}(u))=R^*$. We define $u_{R^*}(v)$ similarly. Again with slight abuse of notaion, we shall sometimes write $v_R(u)$ and $u_R(v)$ instead of $v_{R^*}(u)$ and $u_{R^*}(v)$ respectively.
\end{itemize}

\subsubsection{Initial Data}\label{inda}

We now describe the initial data of interest to us.\footnote{Note that here the term initial data does not refer to data prescribed on the Cauchy hypersurface $\Sigma_0$. Instead we are prescribing initial data for solutions defined in the shaded region of Figure \ref{initialsurface}, as for most of this thesis these solutions are the solutions with which we work. In Theorem \ref{mtv2} we show that one particular solution $\phi_{sing}$ with initial data as prescribed in this section can be used to construct a solution on the entire spacetime with smooth and compactly supported data on $\Sigma_0$.} The data shall be prescribed on two transversal null hypersurfaces, $C^{ext}_{-U_0}$ for some $U_0>0$ large enough and $\underline{C}_1$. In fact, we only prescribe the data on a portion of $\underline{C}_1$, namely $(\underline{C}_1^{ext}\cap\{u\geq -U_0\})\cup(\{\underline{C}_1^{int}\cap\{u\leq -1\})$.  So we prescribe data on the surface
\[S:=(\underline{C}_1^{ext}\cap\{u\geq -U_0\})\cup(\{\underline{C}_1^{int}\cap\{u\leq -1\})\cup(C^{ext}_{-U_0}\cap\{v\geq 1\}).\] These surfaces are shown in bold in Figure \ref{initialsurface} below.

\begin{figure}[h]
\centering
\includegraphics[width=0.5\textwidth]{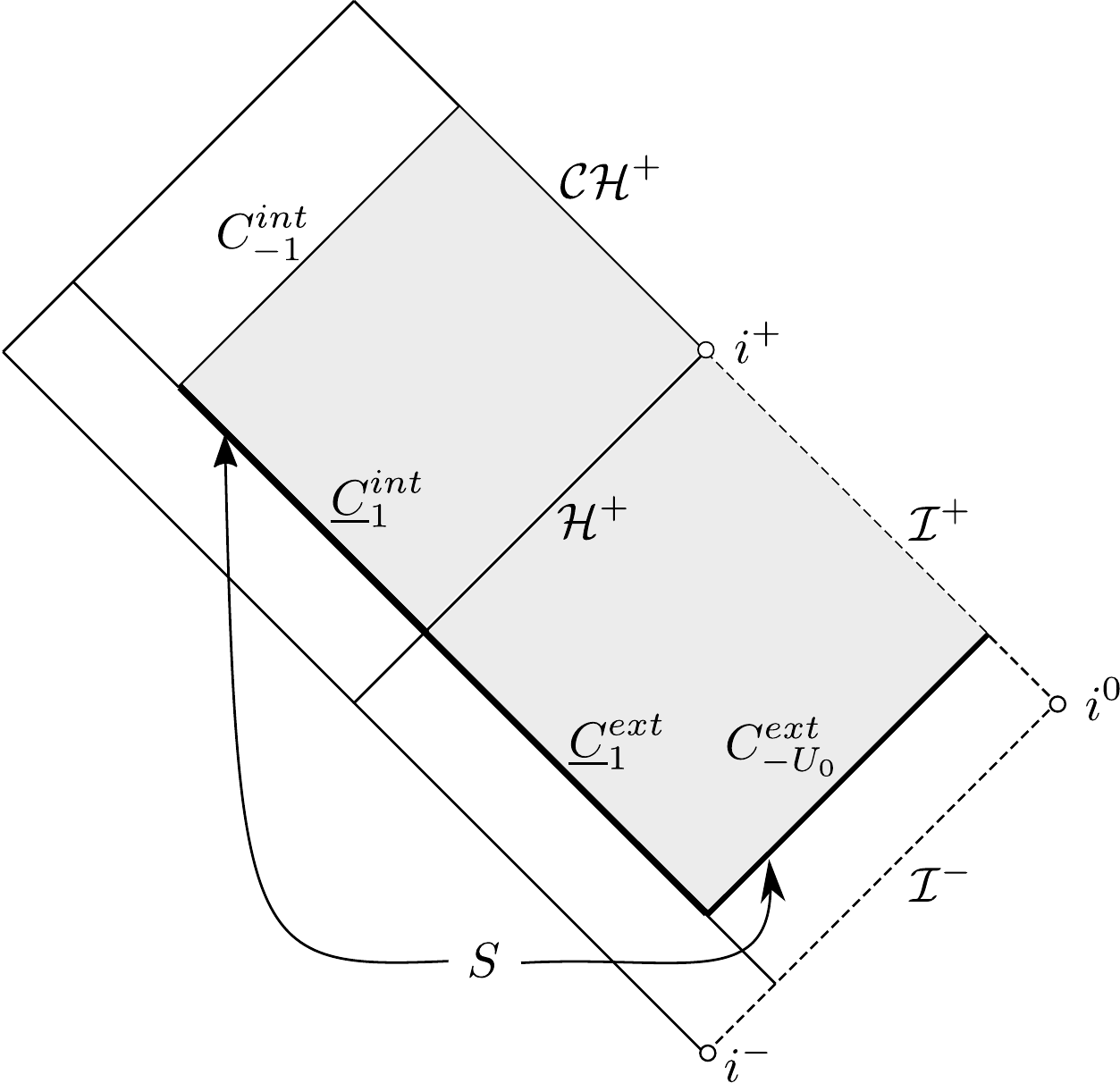}
\caption{The initial data are prescribed on the dark surface $S$. }
\label{initialsurface}
\end{figure}

We shall be concerned with smooth solutions $\phi$ of the wave equation such that there exists some constant $D>0$ such that
\begin{equation}\label{ic1}
\sup_{C_{-U_0}\cap\{v\geq 1\}} r^2\big|\frac{\partial_v\phi}{\partial_vr}\big|\leq D, \hspace{5mm} \sup_{C_{-U_0}\cap\{v\geq 1\}} r^3\big|\frac{\partial_v(r\phi)}{\partial_vr}\big|\leq D
\end{equation}
and
\begin{equation}\label{ic2}
\sup_{(\underline{C}_1^{ext}\cap\{u\geq -U_0\})\cup (\underline{C}_1^{int}\cap\{u\leq -1\})} \big |\frac{\partial_u\phi}{\partial_ur}\big|\leq D, \hspace{2mm} \sup_{(\underline{C}_1^{ext}\cap\{u\geq -U_0\})\cup (\underline{C}_1^{int}\cap\{u\leq -1\})} \big |\frac{\partial_u(r\phi)}{\partial_ur}\big|\leq D.
\end{equation}

In particular, we will be interested in solutions $\phi$ for which
\begin{equation}\label{limit}
\lim_{v\to\infty}r^3\partial_v(r\phi)(-U_0,v)\mbox{ exists}.
\end{equation}
For $\phi$ with smooth, spherically symmetric data on $S$ satisfying \eqref{ic1}, \eqref{ic2} and \eqref{limit}, we set
\[\mathcal{L}=\mathcal{L}_\phi:=\lim_{v\to\infty}r^3\partial_v(r\phi)(-U_0,v)-\int_{-U_0}^\infty2M\Phi(u)\,du,\]
where $\Phi(u)=\lim_{v\to\infty}r\phi(u,v)$. Note that it follows from the results in \cite{DR1} that this limit exists.

\subsubsection{The $W^{1,p}_{loc}$ Space on the Black Hole Interior}\label{sobspace}
For $p>1$ we now formally define the $W^{1,p}_{loc}$ space on the Reissner--Nordstr\"{o}m black hole, to which we have previously referred. We consider the region $\{r_-\leq r<r_+\}$ as the manifold with boundary
\[\{\mbox{Interior of Reissner--Nordstr\"{o}m black hole}\}\cup\mathcal{CH}^+,\]
and recall first the definition of the space $W^{1,p}_{loc}(\mathcal{M})$ for a general $n$-dimensional manifold with boundary $\mathcal{M}$. 

Indeed, let $\partial\mathcal{M}$ denote the boundary of an $n$-dimensional manifold with boundary $\mathcal{M}$ and suppose $(U_\alpha,\varphi_\alpha)$ is an atlas for $\mathcal{M}$. Given a function $f:\mathcal{M}\to\mathbb{R}$ and an open set $\mathcal{U}\subseteq U_\alpha$ for some $\alpha$, the $W^{1,p}$ norm of $f$ on $\mathcal{U}$, $\|f\|_{W^{1,p}(\mathcal{U})}$, is defined as the sum of the $L^p$ norms of $f\circ\varphi_\alpha^{-1}: \mathbb{R}^n\to \mathbb{R}$ (or $f\circ\varphi_\alpha^{-1}: \mathbb{R}^{n-1}\times \mathbb{R}_+\to \mathbb{R}$ if $U_\alpha\cap\partial\mathcal{M}\not=\emptyset$) and its coordinate derivatives on $\varphi_\alpha(\mathcal{U})$. The space $W^{1,p}_{loc}(\mathcal{M})$ is defined to be the space of functions $f:\mathcal{M}\to\mathbb{R}$ such that the $W^{1,p}$ norm of $f$ on $\mathcal{U}$ is finite for every open $\mathcal{U}$ with compact closure such that $\mathcal{U}\subset \overline{\mathcal{U}}\subset U_\alpha$. In particular, a smooth change of coordinate system results in an equivalent space.

We now take $\mathcal{M}$ to be $\{r_-<r<r_+\} \cup\mathcal{CH}^+$. In order to consider $\mathcal{M}$ as a manifold with boundary $\partial\mathcal{M}=\mathcal{CH}^+$, we must find coordinates which are regular at the Cauchy horizon. Following \cite{LO}, we specify such coordinates below.  Indeed, for $v$ sufficiently large, define $V=V(v)$ to be the solution of the equation $\frac{dV}{dv}=e^{-2|\kappa_-|v}$ such that $V\to1$ as $v\to\infty$ (where $\kappa_-:=\frac{r_--r_+}{2r_-^2}<0$ is the surface gravity of the Cauchy horizon). Also, for $u$ sufficiently large, define $U=U(u)$ to be the solution of $\frac{dU}{du}=e^{-2|\kappa_-|u}$ with $U\to1$ as $u\to\infty$. The Cauchy horizon is then given by
\[\mathcal{CH}^+=\{U=1\}\cup\{V=1\},\]
and we may attach this boundary to the interior of the black hole to form a manifold with boundary. One can easily verify that the metric can be smoothly extended to the boundary $\mathcal{CH}^+$.

Let $\mathcal{U}$ be a small neighbourhood of some point on the incoming part $\{V=1\}$ of the boundary $\mathcal{CH}^+$ of $\mathcal{M}$ such that $\mathcal{U}$ has compact closure. The $W^{1,p}$ norm of a smooth, spherically symmetric\footnote{It is sufficient for us to consider explicitly the $W^{1,p}$ norms of spherically symmetric solutions, as to show generic blow up of solutions, it is enough to find one solution which blows up and the solution which we specify is a spherically symmetric.} solution $\phi$ to \eqref{waveequation} is equivalent to
\begin{equation}\label{first1p}
\left(\int_{\mathcal{U}}\left(|\partial_V\phi|^p+|\partial_u\phi|^p+|\phi|^p\right)(u,V)\,du\,dV\right)^{1/p}
\end{equation}
in the $(u,V)$ coordinates. However, since  $e^{-2|\kappa_-|v}\sim e^{-2|\kappa_-|r^*}\sim\Omega^2$ in $\mathcal{U}$ (because $\overline{\mathcal{U}}$ is compact), changing to $(u,v)$ coordinates we have $\operatorname{det}J=e^{-2|\kappa_-|v}\sim\Omega^2$ and $\partial_V\phi=e^{2|\kappa_-|v}\partial_v\phi\sim\Omega^{-2}\partial_v\phi$ in $\mathcal{U}$. Thus,  in $(u,v)$ coordinates \eqref{first1p} is equivalent to the expression
\begin{equation}\label{second1p}
\left( \int_{\mathcal{U}}  \left(\frac{1}{\Omega^{2p-2}}|\partial_v\phi|^p +\Omega^2(|\partial_u\phi|^p+|\phi|^p)\right)(u,v)\,du\,dv\right)^{1/p}.
\end{equation}
Thus, for the smooth, spherically symmetric solution to satisfy $\phi\in W^{1,p}_{loc}(\mathcal{M})$, \eqref{second1p} must be finite for every such $\mathcal{U}$ (as well as a similar statement for the outgoing portion of the Cauchy horizon). On the other hand, to show a smooth, spherically symmetric solution $\phi\not\in W^{1,p}_{loc}(\mathcal{M})$, it is sufficient to show that 
\begin{equation}\label{amountsto}
\int_{C_u\cap\{v\geq1\}} \left(\frac{1}{\Omega}\right)^{2p-2}|\partial_v\phi|^p\,dv
\end{equation}
blows up for all $u$ in some subset of $(-\infty,\infty)$ with positive measure. However, we actually prove a stronger result, namely that \eqref{amountsto} blows up for every $u\in(-\infty,\infty)$ for smooth, spherically symmetric solutions $\phi$ satisfying \eqref{ic1}, \eqref{ic2}, \eqref{limit} and $\mathcal{L}_\phi\not=0$, provided $\frac{2\sqrt e}{e+1}<Q<1$.

\begin{remark}\label{remark}
For $p>1$, $2p-2>0$ and it is easy to see (using l'H\^{o}pital's rule) that 
\[\lim_{x\to\infty}\frac{(\log x)^{a_0}}{x^{2p-2}}=0\]
for any $a_0>0$, or in other words $x^{2p-2}$ grows faster that $(\log x)^{a_0}$. But in the black hole interior, $\Omega^2(u,v)\to0$ as $v\to\infty$ for any $u\in(-\infty,\infty)$, so $1/\Omega(u,v)\to \infty$ as $v\to\infty$ (where $\Omega:=\sqrt{\Omega^2}$). It follows that if 
\[\int_{C_u\cap\{v\geq1\}} \log^{a_0}\left(\frac{1}{\Omega}\right)|\partial_v\phi|^p\,dv\]
blows up then so does \eqref{amountsto}. So, in order to show a smooth, spherically symmetric solution satisfies $\phi\not\in W^{1,p}_{loc}(\mathcal{M})$, it is sufficient to show that for some $a_0>0$
\[\int_{C_u\cap\{v\geq1\}} \log^{a_0}\left(\frac{1}{\Omega}\right)|\partial_v\phi|^p\,dv=\infty \]
for all $u$ in some positive measure subset of $(-\infty,\infty)$. We in fact show that for $\frac{2\sqrt e}{e+1}<Q<1$, this statement holds true for every $u\in(-\infty,\infty)$ for smooth, spherically symmetric solutions $\phi$ satisfying \eqref{ic1}, \eqref{ic2}, \eqref{limit} and $\mathcal{L}_\phi\not=0$, so that $\phi\not\in W^{1,p}$ in any neighbourhood of the incoming future Cauchy horizon (or, abusing notation slightly, $\phi\not\in W^{1,p}_{loc}$ in any neighbourhood of the incoming future Cauchy horizon).
\end{remark}

\subsection{The Main Theorems}\label{themaintheorem}
In this section, we give precise statements of our two key results, the conditional instability result (Theorem \ref{conditionalinstab}) and the instability result for the ``near extremal'' subrange of subextremal black hole parameters (Theorem \ref{mtv1}). We begin with the conditional theorem.

\begin{theorem}[Conditional Theorem, version 2]\label{condinstab2}
Let $\Sigma_0$ be a complete $2$-ended asymptotically flat Cauchy hypersurface for a subextremal Reissner--Nordstr\"{o}m spacetime with non-vanishing charge. Suppose $1<p<2$. Then the set of smooth and compactly supported initial data on $\Sigma_0$ giving rise to solutions in $W^{1,p}_{loc}$ near the future Cauchy horizon $\mathcal{CH}^+$ has codimension at least $1$, \emph{unless} there is a sequence of smooth, spherically symmetric solutions $\phi_n$ of \eqref{waveequation} such that in the black hole interior
\begin{enumerate}
\item
$\partial_u\phi_n=0$ on $\underline{C}_1\cap\{u\leq -1\}$,
\item
$\|\partial_v\phi_n\|_{L^1(\mathcal{H}^+\cap\{v\geq1\})}=\int_{\mathcal{H}^+\cap\{v\geq1\}}|\partial_v\phi_n|=1$ and
\item
$\|\partial_v\phi_n\|_{L^1(C_{-1}\cap\{v\geq1\})}=\int_{C_{-1}\cap\{v\geq1\}}|\partial_v\phi_n|\searrow 0$.
\end{enumerate}
\begin{figure}[H]
\centering
\includegraphics[scale=0.5]{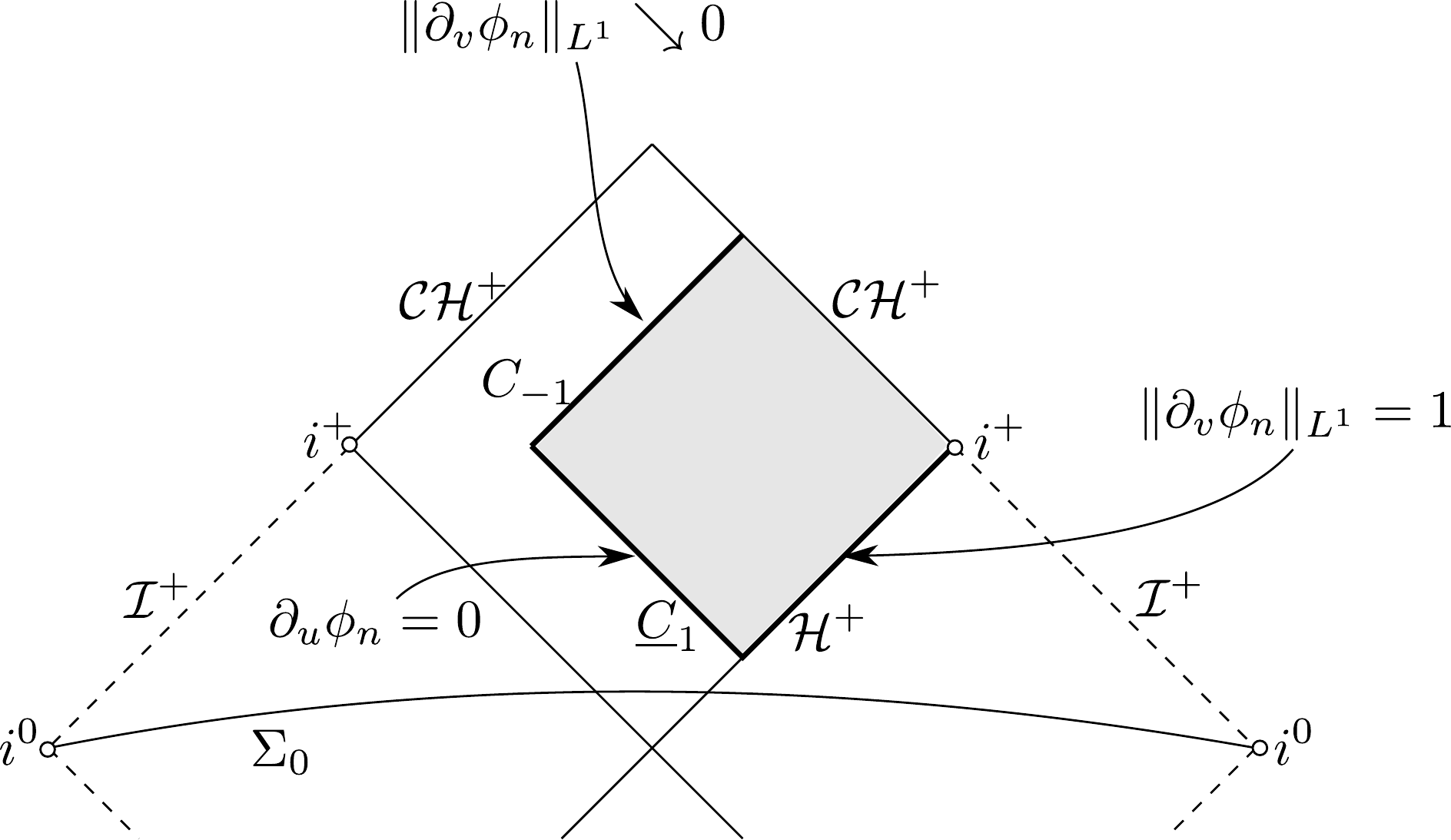}
\end{figure}
\end{theorem}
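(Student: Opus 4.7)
The plan is to deduce this conditional instability as a by-product of the strategy used to prove Theorem \ref{mtv1}, by isolating the single step in that strategy which may fail outside the near-extremal parameter range $\frac{2\sqrt e}{e+1}<Q<1$. Recall that $W^{1,p}_{loc}$ blow-up for a spherically symmetric solution with $\mathcal{L}_\phi\neq 0$ is to be shown by contradiction: assuming $\phi\in W^{1,p}_{loc}$ near $\mathcal{CH}^+$ forces an $L^p$-weighted bound in the black hole interior, which propagates through a chain of inequalities to the event horizon and then into the exterior, contradicting an $L^p$ lower bound coming from $\mathcal{L}_\phi\neq 0$ at null infinity. Each link of this chain is obtained by interpolation between $L^1$ and $L^2$ endpoint inequalities, and in the thesis all these endpoint inequalities are established unconditionally except one: the interior $L^1$ estimate, schematically
\[ \|\partial_v\phi\|_{L^1(\mathcal{H}^+\cap\{v\geq 1\})} \lesssim \|\partial_v\phi\|_{L^1(C_{-1}\cap\{v\geq 1\})} + \mathcal{D}(\phi),\]
where $\mathcal{D}(\phi)$ depends $L^1$-linearly on the $\partial_u$-data of $\phi$ on $\underline{C}_1\cap\{u\leq -1\}$.

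The argument then splits into two cases. On the one hand, if the interior $L^1$ estimate above holds with some finite constant, then combined with the unconditional $L^2$ and exterior $L^1$ estimates, real interpolation yields the $L^p$ links of the chain for every $1<p<2$, and the proof of Theorem \ref{mtv1} applies verbatim to produce a spherically symmetric solution $\phi_{sing}$ arising from smooth, compactly supported data on $\Sigma_0$, satisfying $\mathcal{L}_{\phi_{sing}}\neq 0$ and failing to lie in $W^{1,p}_{loc}$ near $\mathcal{CH}^+$. Because the set of data whose solutions do lie in $W^{1,p}_{loc}$ is a linear subspace of the (linear) data space, and $\phi_{sing}$ lies outside it, the subspace has codimension at least $1$: any translate of the good set by $\phi_{sing}$ is disjoint from it.

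On the other hand, if no constant $C$ works in the interior $L^1$ estimate, I would invoke linearity of the wave equation to reduce to solutions with $\mathcal{D}(\phi)=0$, i.e.\ $\partial_u\phi=0$ on $\underline{C}_1\cap\{u\leq -1\}$, which gives condition (1) for the restricted operator sending $\partial_v$-data on $C_{-1}$ to $\partial_v$-values on $\mathcal{H}^+$. Failure of this bounded linear map $L^1(C_{-1}\cap\{v\geq 1\})\to L^1(\mathcal{H}^+\cap\{v\geq 1\})$ produces a sequence $\phi_n$ with $\|\partial_v\phi_n\|_{L^1(\mathcal{H}^+)}/\|\partial_v\phi_n\|_{L^1(C_{-1})}\to\infty$; normalising so that the numerator equals $1$ yields conditions (2) and (3). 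The principal obstacle I anticipate is pinning down the interior $L^1$ estimate and its negation in exactly the form the interpolation scheme of Theorem \ref{mtv1} requires, and checking that the reduction to $\partial_u\phi_n=0$ on $\underline{C}_1$ fully absorbs the data term so that the failure is witnessed by a sequence with the clean normalisation in the statement; this is essentially a careful accounting of which weighted norms and which linear operators appear in the K-method step, and is the reason the hypothesis on the black hole interior in the theorem is phrased precisely in terms of boundedness below of the solution map rather than as an abstract failure of propagation.
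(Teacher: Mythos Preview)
Your proposal is essentially correct and follows the same route as the paper: trace the proof of Theorem \ref{mtv2} back to the single conditional ingredient, the interior $L^1$ estimate, and observe that its failure for solutions with vanishing $\partial_u$-data on $\underline{C}_1\cap\{u\leq-1\}$ is exactly the existence of the sequence $(\phi_n)$. One point worth making explicit, since you flag it as your ``principal obstacle'': the dichotomy is on the \emph{restricted} $L^1$ estimate (for solutions with $\partial_u\phi\equiv0$ on $\underline{C}_1$), not the full one with the data term $\mathcal{D}(\phi)$. The reason this suffices is that in the proof of Theorem \ref{interior reduction} the solution is split as $\phi=\phi_1+\phi_2+c$ with $\partial_u\phi_1=0$ on $\underline{C}_1$ and $\partial_v\phi_2=0$ on $C_{-1}$, and the estimate \eqref{phi2} for $\phi_2$ is obtained using only the $L^2$ endpoint (Proposition \ref{intl2}) together with \eqref{ic2}, with no appeal to the $L^1$ estimate at all; thus only \eqref{phi1}, and hence only the restricted $L^1$ bound entering the interpolation for $\phi_1$, is at stake. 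With that clarification your two-case argument goes through exactly as in the paper.
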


\begin{remark}\label{bdd below}
The condition in Theorem \ref{condinstab2} can be viewed as a statement about the boundedness of the solution map. Indeed, given smooth, spherically symmetric  data $(\partial_v\phi\vert_{C_{-1}\cap\{v\geq1\}},\partial_u\phi\vert_{\underline{C}_1\cap\{u\leq -1\}})$, then $\partial_v\phi\vert_{\mathcal{H}^+\cap\{v\geq1\}}$ is uniquely determined. Define
\[
T: f \mapsto Tf
\]
where $Tf:=\partial_v\phi\vert_{\mathcal{H}^+\cap\{v\geq1\}}$ for the smooth, spherically symmetric solution $\phi$ to the wave equation \eqref{waveequation} such that $\partial_v\phi\vert_{C_{-1}\cap\{v\geq1\}}=f$ and $\partial_u\phi\vert_{\underline{C}_1\cap\{u\leq -1\}}=0$. Then $T$ is a well-defined linear operator and  Theorem \ref{condinstab2} says that generically solutions blow up in $W^{1,p}_{loc}$ unless the operator $T$ is unbounded in the $L^1$ sense, or in other words unless the solution map (the inverse of $T$) is not bounded below.
\end{remark}

We now give a precise statement of our main theorem, the instability result ``near extremality''.

\begin{theorem}[Main Theorem, version 2]\label{mtv2}
Let $\Sigma_0$ be a complete $2$-ended asymptotically flat Cauchy hypersurface for a subextremal Reissner--Nordstr\"{o}m spacetime such that $\frac{2\sqrt e}{e+1}<Q<1$.  For $p>1$, the set of smooth and compactly supported initial data on $\Sigma_0$ giving rise to solutions in $W^{1,p}_{loc}$ near the future Cauchy horizon $\mathcal{CH}^+$ has codimension at least $1$. 
\end{theorem}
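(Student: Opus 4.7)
The plan is to deduce Theorem \ref{mtv2} as a direct consequence of the conditional Theorem \ref{condinstab2} by ruling out the exceptional alternative therein in the parameter range $\frac{2\sqrt{e}}{e+1} < Q < 1$. Observe first that for $p \geq 2$ the $W^{1,p}_{loc}$ blow-up holds in the full subextremal range by \cite{LO} together with the trivial inclusion $L^p(U) \subseteq L^2(U)$ on compact sets $U$, so only the range $1 < p < 2$ is new content. For this range, Theorem \ref{condinstab2} reduces matters to showing that no sequence of smooth, spherically symmetric solutions $\phi_n$ of \eqref{waveequation} defined in the black hole interior can satisfy $\partial_u \phi_n \equiv 0$ on $\underline{C}_1 \cap \{u \leq -1\}$ together with $\|\partial_v \phi_n\|_{L^1(\mathcal{H}^+ \cap \{v \geq 1\})} = 1$ and $\|\partial_v \phi_n\|_{L^1(C_{-1} \cap \{v \geq 1\})} \to 0$. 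In the language of Remark \ref{bdd below}, this amounts to establishing that the operator $T$ is bounded on $L^1$:
\begin{equation*}
\int_{\mathcal{H}^+ \cap \{v \geq 1\}} |\partial_v \phi| \, dv \, \leq \, C(M,q) \int_{C_{-1} \cap \{v \geq 1\}} |\partial_v \phi| \, dv
\end{equation*}
for every spherically symmetric solution $\phi$ of \eqref{waveequation} in the interior with $\partial_u \phi \equiv 0$ on $\underline{C}_1 \cap \{u \leq -1\}$.

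To prove this $L^1$-propagation bound, I would use the transport reformulation of the interior wave equation \eqref{interiorwave}. Since $\partial_u r = \partial_v r = -\Omega^2$ in $\{r_- < r < r_+\}$, multiplication of \eqref{interiorwave} by $r$ and rearrangement yield the coupled system
\begin{equation*}
\partial_u(r\, \partial_v \phi) = \Omega^2 \, \partial_u \phi, \qquad \partial_v(r\, \partial_u \phi) = \Omega^2 \, \partial_v \phi.
\end{equation*}
Integrating the first equation in $u$ from $C_{-1}$ toward $\mathcal{H}^+$ and the second equation in $v$ from $\underline{C}_1$, and using the vanishing of $\partial_u \phi$ on $\underline{C}_1 \cap \{u \leq -1\}$ to eliminate the boundary term there, one obtains a pointwise Volterra-type inequality for $r(u,v) |\partial_v \phi(u,v)|$ involving an integral of $|\partial_v \phi|$ over earlier $(u', v'')$. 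Integrating this inequality in $v$ over $[1,\infty)$ and applying Fubini, the inner $v$-integral reduces via the identity $\Omega^2 \, dv = -dr$ to $\log(r(u',v'')/r_-)$, which is bounded uniformly by $\log(r_+/r_-)$. The resulting integral inequality closes in $L^1$ precisely when this coupling constant is strictly less than $1$, i.e.\ when $\log(r_+/r_-) < 1$; a short computation shows that this inequality is equivalent to $Q > \frac{2\sqrt{e}}{e+1}$.

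The parameter restriction is therefore forced by the closure of the $L^1$ estimate: the $L^2$ analogue is obtainable by energy methods with the redshift vector field for the full subextremal range (which is what underlies the $W^{1,2}_{loc}$ blow-up of \cite{LO}), but $L^1$ is not accessible to energy techniques, and the coupled transport approach produces the sharp numerical threshold $\log(r_+/r_-) = 1$. I expect the main technical obstacle to be executing this closure with a constant no worse than $\log(r_+/r_-)$, which requires delicate bookkeeping of the coupling between $\partial_u \phi$ and $\partial_v \phi$, careful use of the boundary condition $\partial_u \phi \equiv 0$ on $\underline{C}_1 \cap \{u \leq -1\}$, and exact evaluation of the weighted integrals via the change of variables to the area-radius coordinate $r$. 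Once the $L^1$ propagation estimate is established, the sequence in Theorem \ref{condinstab2} is excluded and Theorem \ref{mtv2} follows immediately.
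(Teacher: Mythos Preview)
Your approach is valid in spirit but inverts the paper's logical dependence, and you should be aware of a circularity hazard. In the paper, Theorem~\ref{condinstab2} is \emph{derived from} the machinery of Theorem~\ref{mtv2}, not the other way around: the proof of Theorem~\ref{condinstab2} in Section~\ref{condrel2} literally invokes Theorem~\ref{mtv2} for the range $\frac{2\sqrt{e}}{e+1}<Q<1$ and only argues separately for $0<Q\le \frac{2\sqrt{e}}{e+1}$. So if you treat Theorem~\ref{condinstab2} as a black box to deduce Theorem~\ref{mtv2}, you are, as the paper is written, arguing in a circle. This is fixable---the argument given for $0<Q\le \frac{2\sqrt{e}}{e+1}$ in Section~\ref{condrel2} in fact works verbatim for the full range $0<Q<1$, so Theorem~\ref{condinstab2} can be established independently of Theorem~\ref{mtv2}---but you must say so explicitly.

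Beyond that organizational point, your proof is correct and the key technical content is identical to the paper's. Your $L^1$ propagation estimate, with closure constant $\log(r_+/r_-)$ obtained from the coupled transport equations $\partial_u(r\partial_v\phi)=\Omega^2\partial_u\phi$, $\partial_v(r\partial_u\phi)=\Omega^2\partial_v\phi$ and the change of variables $\Omega^2\,dv=-dr$, is exactly Proposition~\ref{intl1} (there proved for general data, not just $\partial_u\phi|_{\underline{C}_1}=0$, but the argument is the same). The paper's direct route constructs an explicit blow-up solution $\phi=\phi_0+\phi_0'$ from the Luk--Oh solution $\phi_{sing}$ with $\mathcal{L}_{\phi_{sing}}\ne 0$ and then applies Theorem~\ref{mainthm}; your route packages all of that (the exterior $L^p$ estimates, the contradiction with the lower bound at $\gamma_{R_0}$, and the existence of $\phi_{sing}$) into Theorem~\ref{condinstab2} and supplies only the missing interior $L^1$ ingredient. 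Both approaches rest on the same estimates; yours is a cleaner logical reduction once the circularity is disarmed, while the paper's makes the role of $\phi_{sing}$ and the full chain of $L^p$ propagation explicit.
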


We now discuss the proof of Theorem \ref{mtv2}. We defer discussing the proof of Theorem \ref{condinstab2} to the end of this Section (as it relies on the proof of Theorm \ref{mtv2}).

\subsubsection*{Proof of Theorem \ref{mtv2}}

The bulk of the work in proving Theorem \ref{mtv2} goes into proving the following theorem, which roughly states that for $1<p<2$ and $\frac{2\sqrt e}{e+1}<Q<1$, a smooth, spherically symmetric solution $\phi$ of the wave equation is singular in the $W^{1,p}_{loc}$ sense, provided the condition at null infinity $\mathcal{L}_\phi\not=0$ is satisfied.

\begin{theorem}\label{mainthm}
Assume $\frac{2\sqrt e}{e+1}<Q<1$. Suppose $1<p<2$ and let $\phi$ be a solution to the wave equation \eqref{waveequation} with smooth, spherically symmetic data on $S$ satisfying \eqref{ic1}, \eqref{ic2} and \eqref{limit} and such that $\mathcal{L}_\phi\not=0$. Then, in the black hole interior, $\phi$ satisfies
\[
\int_1^\infty\log^{\alpha(p)}\left(\frac{1}{\Omega}\right)|\partial_v\phi|^p(u,v)\,dv=+\infty
\]
for every $u\in (-\infty,\infty)$ and every $\alpha(p)\geq 4p+4$. 
\end{theorem}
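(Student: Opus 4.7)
I will argue by contradiction. Fix $u_0\in(-\infty,\infty)$ and suppose
\[
\int_1^\infty\log^{\alpha(p)}\!\Bigl(\tfrac{1}{\Omega}\Bigr)|\partial_v\phi|^p(u_0,v)\,dv<\infty
\]
for some admissible $\alpha(p)\geq 4p+4$. The plan is to convert this assumed $L^p$ integrability of $\partial_v\phi$ on the Cauchy-horizon-transversal hypersurface $C_{u_0}^{int}$ into an upper bound on $\partial_v(r\phi)$ near future null infinity $\mathcal{I}^+$, and then show that this bound is inconsistent with the nonvanishing of $\mathcal{L}_\phi$.

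The first step is to propagate the assumed weighted $L^p$ control from $C_{u_0}^{int}$ backwards through the black hole interior to the event horizon $\mathcal{H}^+$. Integrating the spherically symmetric wave equation \eqref{interiorwave} in $u$ between $\mathcal{H}^+$ and $C_{u_0}^{int}$ should allow one to estimate a suitably weighted $L^p$ norm of $\partial_v\phi$ on $\underline{C}_v^{int}\cap\mathcal{H}^+$ (i.e.\ as $u\to -\infty$) in terms of the corresponding norm on $C_{u_0}^{int}$, plus initial data on $\underline{C}_1^{int}$. Direct pointwise propagation for $p\in(1,2)$ looks intractable, so following the strategy in the introduction I would establish \emph{endpoint} estimates at $p=1$ and $p=2$ and then invoke the K-method of real interpolation to obtain the entire family of intermediate bounds. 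The $p=2$ endpoint should follow from a standard energy/multiplier argument. The $p=1$ endpoint in the interior is where the parameter restriction $\frac{2\sqrt e}{e+1}<Q<1$ (equivalently $0<\log(r_+/r_-)<1$) enters: only under this condition does the interior $L^1$ propagation close, because the exponential distortion between $u$ and the Cauchy-horizon-regular coordinate $U$ competes with the measure factors on $\mathcal{H}^+$ vs.\ $C_u$, and the ratio $\log(r_+/r_-)$ measures precisely this competition.

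The second step is to propagate the resulting weighted $L^p$ bound on $\partial_v\phi|_{\mathcal{H}^+}$ from the event horizon into the exterior. Here one moves from $\mathcal{H}^+$ out to a constant-$r$ hypersurface $\gamma_R$ with $R>r_+$ by integrating \eqref{exteriorwave} in $u$, again splitting the estimate into its $L^1$ and $L^2$ endpoints and interpolating. In the exterior both endpoint estimates are expected to hold for the whole subextremal range. From $\gamma_R$ one then integrates outwards towards $\mathcal{I}^+$ along $C_{-U_0}^{ext}$ to obtain upper bounds on weighted $L^p$ norms of $\partial_v(r\phi)$ on $C_{-U_0}^{ext}\cap\{v\geq1\}$, using the data assumptions \eqref{ic1}--\eqref{ic2} to handle boundary terms from $\underline{C}_1$.

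The final step is the contradiction. The condition $\mathcal{L}_\phi\neq 0$, combined with the known polynomial late-time decay of the radiation field $\Phi(u)=\lim_{v\to\infty}r\phi(u,v)$ on Reissner--Nordstr\"om, forces a definite nondegeneracy of $r^3\partial_v(r\phi)$ as $v\to\infty$ along $C_{-U_0}^{ext}$. Translating this into a weighted $L^p$ lower bound near $\mathcal{I}^+$ and comparing with the upper bound produced by the interior-to-exterior propagation chain should yield a direct numerical contradiction, completing the argument. The main obstacle I anticipate is the $L^1$ interior endpoint estimate and the accompanying bookkeeping of the logarithmic weight $\log^{\alpha(p)}(1/\Omega)$: one must choose the weight powers so that the K-functional computation interpolates cleanly between the two endpoints while the weight remains equivalent, up to constants, to a power of $\Omega$ that is compatible with both the Cauchy-horizon-regular measure in \eqref{second1p} and the exterior decay. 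The threshold $\alpha(p)\geq 4p+4$ should emerge from this bookkeeping.
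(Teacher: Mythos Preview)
Your overall strategy---contradiction, interior propagation from a Cauchy-horizon-transversal hypersurface to $\mathcal{H}^+$ via $L^1$/$L^2$ interpolation (with the $L^1$ endpoint supplying the parameter restriction), exterior propagation from $\mathcal{H}^+$ to a constant-$r$ hypersurface $\gamma_R$ by the same mechanism---matches the paper's. Two points deserve comment.

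First, the paper does \emph{not} push the upper bound from $\gamma_R$ out to $C_{-U_0}^{ext}$ and seek the contradiction near $\mathcal{I}^+$ as you propose. Instead, the contradiction is obtained \emph{on $\gamma_{R_0}$ itself}: the paper invokes a black-box result of Luk--Oh (Theorem~\ref{contrat}) which, under the hypothesis $\mathcal{L}_\phi\neq0$ together with a bound $\sup_{\gamma_{R_1}}u^3|\phi|\leq A'$, produces a pointwise lower bound $|\partial_v(r\phi)|\geq\tfrac{|\mathcal{L}|}{8}v^{-3}$ on $\gamma_{R_0}\cap\{u\geq U\}$ for some large $R_0$. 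This contradicts the propagated upper bound $\int_{\gamma_{R_0}}v^{3p-1}|\partial_v(r\phi)|^p<\infty$. Your proposed route through the large-$r$ region to $C_{-U_0}$ would require additional propagation estimates that the paper does not prove, and converting $\mathcal{L}_\phi\neq0$ directly into a usable lower bound on $C_{-U_0}$ is not immediate (recall $\mathcal{L}_\phi$ involves the limit of $r^3\partial_v(r\phi)$ \emph{minus} an integral of the radiation field). The paper instead establishes the auxiliary bound on $\sup_{\gamma_{R_1}}v^3|\phi|$ (Propositions~\ref{alt2} and~\ref{alt1}) needed to feed Theorem~\ref{contrat}.

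Second, two smaller corrections. The logarithmic weight is not carried through the interpolation; the paper's very first move is to observe that $\log(1/\Omega)(u,v)\sim v$ in the interior, reducing immediately to the polynomial weight $v^{\alpha(p)}$ (Theorem~\ref{bow}), after which all interpolation is done with $v$-power weights. And the mechanism behind the parameter restriction in the interior $L^1$ estimate is not a competition between $u$ and the regular coordinate $U$: it arises from a Gr\"onwall-type closure in which one bounds $\int_1^\infty\sup_u r|\partial_v\phi|\,dv$ and $\int_{-\infty}^{-1}\sup_v r|\partial_u\phi|\,du$ in terms of each other, each step picking up a factor $\sup\int\frac{|\partial r|}{r}=\log\tfrac{r_+}{r_-}$; the argument closes precisely when $(\log\tfrac{r_+}{r_-})^2<1$.
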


In order to use this result, we rely on a theorem of Luk--Oh, proved in \cite{LO}, which asserts that there is in fact a smooth, spherically symmetric solution satisfying $\mathcal{L}_\phi\not=0$.

\begin{theorem}[Luk--Oh \cite{LO}]\label{LukOhexist}
For $U_0>0$ sufficiently large, there exists a spherically symmetric solution $\phi_{sing}$ to \eqref{waveequation} (defined on the domain of dependence of $\underline{C}_1\cup C_{-U_0}$) with smooth and compactly supported initial data on $\underline{C}_1$ and zero data on $C_{-U_0}$ such that
\[\mathcal{L}_{\phi_{sing}}\not=0.\]
In fact, the support of the initial data $\phi_{sing}\vert_{\underline{C}_1}$ is contained in $\underline{C}_1^{ext}\cap\{-U_0\leq u\leq -U_0+1\}$.
\end{theorem}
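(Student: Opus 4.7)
The plan is to reduce $\mathcal{L}_{\phi_{sing}}$ to an explicit functional of the data and exhibit data making it nonzero. First, because the data on $C_{-U_0}$ vanish, we have $\phi\equiv 0$ on $C_{-U_0}$ and hence $\partial_v(r\phi)\vert_{C_{-U_0}}\equiv 0$. Thus the Newman--Penrose-type limit in the definition of $\mathcal{L}_\phi$ vanishes and
\[
\mathcal{L}_{\phi_{sing}} = -2M\int_{-U_0}^\infty \Phi(u)\,du,\qquad \Phi(u):=\lim_{v\to\infty}(r\phi)(u,v).
\]
It therefore suffices to produce smooth, compactly supported, spherically symmetric data $h$ on $\underline{C}_1^{ext}\cap\{-U_0\leq u\leq -U_0+1\}$ whose forward evolution has radiation field with $\int_{-U_0}^\infty\Phi\,du\neq 0$.

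Next I would set $\psi := r\phi$, so that the spherically symmetric wave equation \eqref{exteriorwave} becomes
\[
\partial_u\partial_v\psi = -V(r)\,\psi,\qquad V(r) = \frac{2\Omega^2}{r^3}\Bigl(M-\frac{q^2}{r}\Bigr),
\]
where $V>0$ throughout the subextremal exterior and is short range ($V=O(r^{-3})$ at infinity). Integrating from the corner $(u,v)=(-U_0,1)$ using $\psi\vert_{C_{-U_0}}\equiv 0$ and $\psi\vert_{\underline{C}_1} = f := (r\vert_{\underline{C}_1})h$, one obtains the Volterra integral equation
\[
\psi(u,v) = f(u) - \int_{-U_0}^u\!\int_1^v V(r(u',v'))\,\psi(u',v')\,dv'\,du'.
\]
Standard forward scattering theory and Price-law type decay on the subextremal Reissner--Nordstr\"om exterior (see for instance \cite{DR13}, \cite{DR1}) guarantee that $\Phi$ exists, is smooth, and decays sufficiently at $u\to\infty$ that $\int_{-U_0}^\infty\Phi\,du$ converges absolutely, so the functional $h\mapsto\int\Phi_h\,du$ is well-defined and continuous on the relevant space of data.

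The remaining step is to exhibit $h$ for which $\int\Phi\,du\neq 0$. The cleanest route is via positivity: choose $h\geq 0$ smooth with $\mathrm{supp}(h)\subset(-U_0,-U_0+1)$ and $h\not\equiv 0$. Since $V\geq 0$, iterating the Volterra equation produces a convergent Neumann series of alternating sign whose zeroth term is $\psi^{(0)}(u,v)=f(u)$ and hence $\Phi^{(0)}(u)=f(u)=(r\vert_{\underline{C}_1})(u)\,h(u)\geq 0$. A standard maximum-principle/comparison argument for the equation $\partial_u\partial_v\psi=-V\psi$ with $V\geq 0$ gives $0\leq \psi(u,v)\leq f(u)$ in the domain, so $0\leq \Phi(u)\leq f(u)$; moreover strict positivity of $\Phi$ on a set of positive measure follows from the fact that the potential contribution has size $O(V)\ll 1$ near the initial data, so $\Phi$ cannot be annihilated completely. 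Integrating yields $0<\int\Phi\,du\leq\int f\,du$, and one concludes $\mathcal{L}_{\phi_{sing}}\neq 0$.

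The principal obstacle is making the ``strict positivity'' step rigorous: the maximum principle alone only gives $\Phi\geq 0$, and to rule out $\Phi\equiv 0$ one must appeal to the nontriviality of the scattering map from $\underline{C}_1^{ext}$ to $\mathcal{I}^+$. An alternative, which avoids the positivity argument altogether, is a duality/contradiction approach: if $\int\Phi_h\,du=0$ for all admissible $h$, then the adjoint pairing identifies this linear functional with the backward evolution of the constant function $1$ on $\mathcal{I}^+$ restricted to $\underline{C}_1^{ext}\cap\{-U_0\leq u\leq -U_0+1\}$, which is nontrivial by uniqueness for the backward wave equation on the exterior, yielding a contradiction. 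Either route requires the quantitative late-time decay of $\Phi$ from \cite{DR1}, which is the real analytic input.
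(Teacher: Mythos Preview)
The paper does not contain a proof of this theorem; it is quoted as a result of Luk--Oh \cite{LO} (their Section~5) and used as a black box input to the proof of Theorem~\ref{mtv2}. So there is no proof in the present paper to compare against.

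That said, let me comment on your sketch. The reduction in your first paragraph is correct: with vanishing data on $C_{-U_0}$ the first term in $\mathcal{L}$ drops out and it suffices to arrange $\int_{-U_0}^\infty\Phi(u)\,du\neq 0$. Your rewriting as $\partial_u\partial_v\psi=-V\psi$ with $V=\frac{2\Omega^2}{r^3}(M-\tfrac{q^2}{r})>0$ in the exterior is also correct.

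The positivity argument, however, contains a genuine error rather than just the gap you flag. The claimed bound $0\leq\psi\leq f(u)$ is in fact self-contradictory whenever $f\not\equiv 0$. Suppose $\psi\geq 0$ throughout; the Volterra representation then gives $\psi(u,v)\leq f(u)$, so for $u>-U_0+1$ (where $f=0$) one gets $\psi(u,v)=0$. Feeding this back into the integral equation at $u$ just beyond $-U_0+1$ forces
\[
\int_{-U_0}^{-U_0+1}\!\int_1^v V\,\psi\,dv'\,du'=0,
\]
which is impossible since $V>0$ and $\psi>0$ near $\{v=1\}$ on the support of $f$. Hence $\psi$ \emph{must} change sign: the condition $V\geq 0$ is precisely the oscillatory sign for $\partial_u\partial_v\psi=-V\psi$, and no maximum principle of the type you invoke is available. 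So the obstacle is not merely ruling out $\Phi\equiv 0$; already $\Phi\geq 0$ fails.

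Your alternative duality route is closer in spirit to how one actually argues. A more direct repair of the sign approach is perturbative: for $U_0$ large the data sit at $r\sim U_0$ where $\int_1^\infty V\,dv=O(U_0^{-2})$ uniformly in $u$ on the support of $f$, so the Neumann series gives $\Phi(u)=f(u)+O(U_0^{-2})\|f\|$ for $u\in[-U_0,-U_0+1]$, and then $\int\Phi\,du$ is close to $\int f\,du>0$. Making the tail contribution from $u>-U_0+1$ small uses the decay of $\Phi$ from \cite{DR1}, which is indeed the real analytic input.
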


Combining Theorems \ref{mainthm} and \ref{LukOhexist} and Remark \ref{remark}, we arrive at the conclusion that there are smooth, spherically symmetric solutions to \eqref{waveequation} on the domain of dependence of $\underline{C}_1\cup C_{-U_0}$ which are not in $W^{1,p}$ in a neighbourhood of any point of  the ``incoming'' future Cauchy horizon, and hence are not in $W^{1,p}_{loc}(\mathcal{M})$ (for $1<p<2$ and $\frac{2\sqrt e}{e+1}<Q<1$). However, we can actually use them to get a stronger result, namely our instability result, Theorem \ref{mtv2}, which we prove below.

\begin{proof}[Proof of Theorem \ref{mtv2}.]
For $p=2$, this is the main result of \cite{LO}, while for $p>2$, this follows from the result for $p=2$ and the fact that $L^p(U)\subseteq L^2(U)$ for every compact set $U$.  We emphasise that the restriction $\frac{2\sqrt e}{e+1}<Q$ is not necessary for the case $p\geq 2$.

We thus assume $1<p<2$. Our goal is to show that the quotient of the space of smooth and  compactly supported initial data on $\Sigma_0$ by the space of  smooth and compactly supported initial data on $\Sigma_0$ leading to solutions in $W^{1,p}_{loc}$ near $\mathcal{CH}^+$ has dimension at least $1$, or equivalently, that the quotient space has a non-trivial element. Thus it suffices to show that there exist smooth and compactly supported initial data on $\Sigma_0$ leading to a solution with infinite $W^{1,p}$ norm on a neighbourhood of some point of $\mathcal{CH}^+$ (with compact closure). We show this by specifying a solution $\phi:M\to\mathbb{R}$ such that the initial data $(\phi,n_{\Sigma_0}\phi)\vert_{\Sigma_0}$ is smooth and compactly supported and such that $\|\phi\|_{W^{1,p}}$ blows up on a neighbourhood of \emph{every} point on $\mathcal{CH}^+$. 

We use Theorem \ref{LukOhexist} to construct this solution. Indeed, as in \cite{LO}, we first show that for sufficiently large $U_0$, there is a smooth solution $\phi_0$ of the wave equation \eqref{waveequation} in the whole spacetime $M$ with smooth and compactly supported data on $\Sigma_0$ such that $\phi_0=\phi_{sing}$ in the exterior region restricted to the future of $\Sigma_0$, which we denote by $\mathcal{F}$. Here $\phi_{sing}$ is the solution from Theorem \ref{LukOhexist}. (This is shown in the proof of Corollary 1.6 of \cite{LO}, but we repeat the argument here for completeness.) By finite speed of propagation, it's enough to prove this property for a particular Cauchy hypersurface. It is convenient to consider $\Sigma_0$ spherically symmetric and asymptotic to the $\{t=0\}$ hypersurface near each end.  We assume furthermore that $\Sigma_0$ intersects $\underline{C}_2$ (and hence also $\underline{C}_1$) in the black hole interior. We choose $U_0$ large enough so that the segment $\underline{C}_1^{ext}\cap \{-U_0\leq u \leq -U_0+1\}$ lies in the past of $\Sigma_0$. Such a hypersurface $\Sigma_0$ is illustrated in Figure \ref{sig0setup} below. Let $\chi:\mathbb{R}\to\mathbb{R}$ be a smooth, positive cutoff function such that
\begin{equation*}
\chi(x)=\left\{\begin{array}{ll} 1&\mbox{if }x\geq 2,\\0&\mbox{if }x\leq1.\end{array}\right.
\end{equation*}
  We denote by $(f,g)$ the data on $\Sigma_0$ defined by
\[(f,g)(p)=\left\{\begin{array}{ll} (\phi_{sing},n_{\Sigma_0}\phi_{sing})(p) & \mbox{if } p \mbox{ is in the domain of dependence of }\underline{C}_2\cup C_{-U_0}\\
\chi(v(p))\cdot(\phi_{sing},n_{\Sigma_0}\phi_{sing})(p)& \mbox{if } 1\leq v(p)\leq 2\\ (0,0) & \mbox{otherwise}\end{array}\right.\]
for $p\in\Sigma_0$. Note that $v(p)$ denotes the $v$-value of the point $p$.
The Cauchy hypersurface $\Sigma_0$ necessarily exits the domain of dependence of $\underline{C}_1\cup C_{-U_0}$, and it follows that $(f,g)$ is compactly supported on $\Sigma_0$. Moreover, $(f,g)$ is smooth.\footnote{The cutoff function $\chi$ is introduced so as to ensure smoothness of $g$ at $v=1$, as $n_{\Sigma_0}\phi_{sing}$ need not vanish there. Strictly speaking we should also use a cutoff  to ensure smoothness of $g$ at $u=-U_0$. However, this is not actually necessary due to the construction of $\phi_{sing}$ (see Section 5 of \cite{LO}). It is easily seen (by choosing the initial data for $\phi_{sing}$ appropriately) that $\phi_{sing}$ may be chosen such that $\phi_{sing}\equiv 0$ on $[-U_0,-U_0+\varepsilon]\times \{v\geq1\}$, so that $g$ is indeed smooth at $u=-U_0$.}

\begin{figure}[h]
\centering
\includegraphics[scale=0.6]{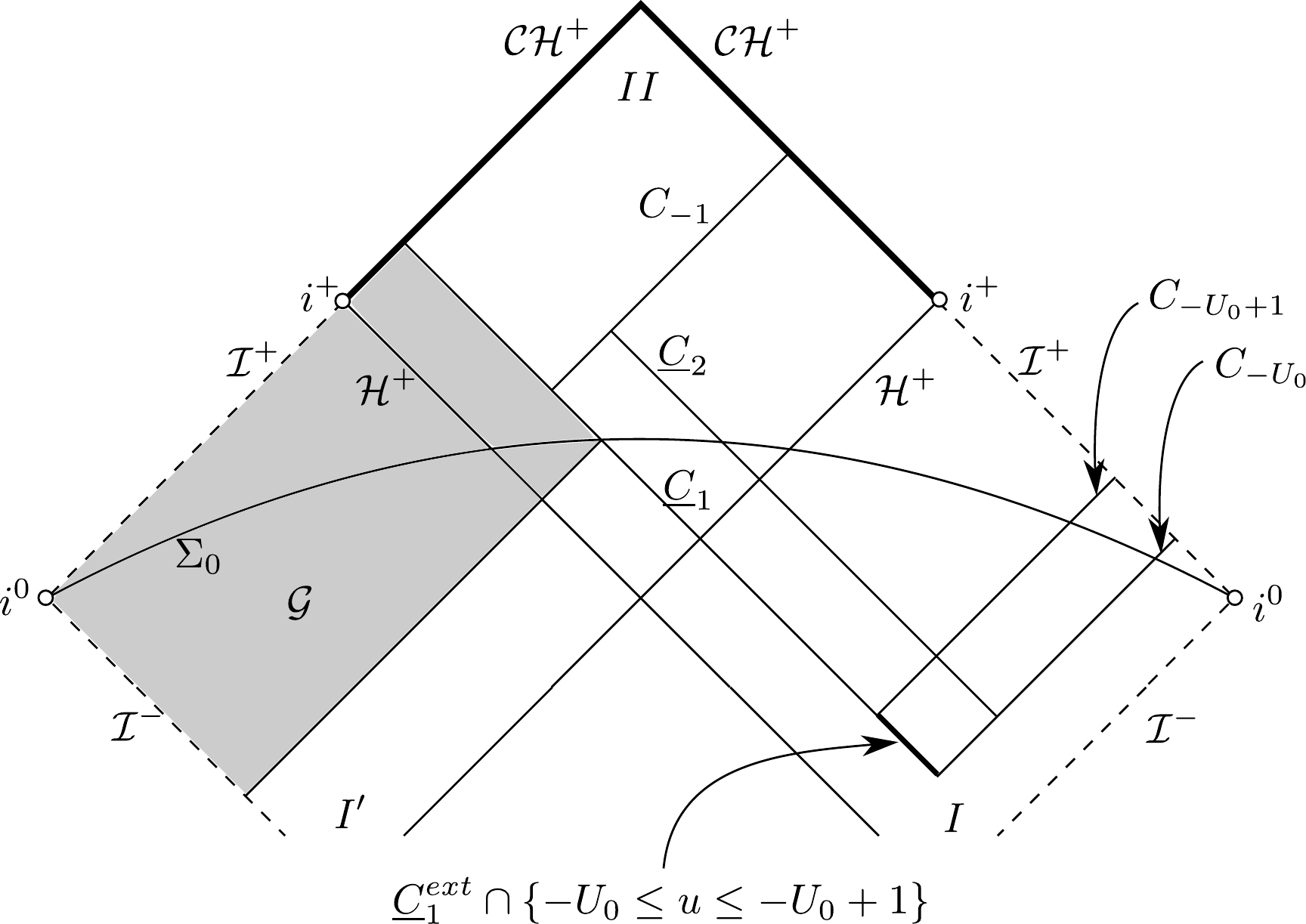}
\caption{The hypersurface $\Sigma_0$.}
\label{sig0setup}
\end{figure}

Let $\phi_0:M\to \mathbb{R}$ be the solution to the linear wave equation \eqref{waveequation} with data $(\phi_0,n_{\Sigma_0}\phi_0)=(f,g)$ on $\Sigma_0$. By finite speed of propagation together with the support proprties of the initial data for $\phi_{sing}$ on $\underline{C}_1\cup C_{-U_0}$, we see that $\phi_0$ agrees with $\phi_{sing}$ in $\mathcal{F}$, and so $\phi_0$ is the desired solution. Moreover, $\phi_0$ is spherically symmetric and $\phi_0\equiv0$ in $\mathcal{G}$, where $\mathcal{G}$ is defined to be the domain of dependence of $(\Sigma_0\cap I')\cup (\Sigma_0\cap II\cap\{v\leq1\})$. $\mathcal{G}$
is the shaded region  indicated in Figure \ref{sig0setup}.

Now $\phi_0=\phi_{sing}$ in $\mathcal{F}$ and furthermore, for any spherically symmetric solution $\tilde{\phi}$,  $\mathcal{L}_{\tilde{\phi}}$ depends only on the values of $\tilde{\phi}$ in the exterior region $\mathcal{F}$. Hence
\[\mathcal{L}_{\phi_0}=\mathcal{L}_{\phi_{sing}}\not=0.\]
 Also,  defining $\mathcal{L}'$ to be the analogous quantity to $\mathcal{L}$ defined in the exterior region $I'$, we have $\mathcal{L}_{\phi_0}'=0$ because $\phi_0\equiv0$ on the region $\mathcal{G}$. 

Moreover, by symmetry, or arguing analogously for the region $I'\cup II$, we have that there is a spherically symmetric solution $\phi_0':M\to\mathbb{R}$ with smooth and compactly supported initial data on $\Sigma_0$ such that $\mathcal{L}_{\phi_0'}'\not=0$ and $\phi_0'\equiv0$ in the region $\mathcal{G}'$ (defined analogously to $\mathcal{G}$), so that $\mathcal{L}_{\phi_0'}=0$.

Let
\[\phi=\phi_0+\phi_0'.\]
Then $\phi$  is a spherically symmetric solution of \eqref{waveequation} with smooth and compactly supported initial data on $\Sigma_0$ and satisfies
\begin{equation}\label{l}
\mathcal{L}_\phi=\mathcal{L}_{\phi_0}+\mathcal{L}_{\phi_0'}=\mathcal{L}_{\phi_0}+0\not=0
\end{equation}
and
\begin{equation}\label{l'}
\mathcal{L}_\phi'=\mathcal{L}_{\phi_0}'+\mathcal{L}_{\phi_0'}'=0+\mathcal{L}_{\phi_0'}'\not=0.
\end{equation}
Due to \eqref{l}, $\phi$ satisfies the hypothesis of Theorem \ref{mainthm}. Hence
\[\int_1^\infty\log^{\alpha(p)}\left(\frac{1}{\Omega}\right)|\partial_v\phi|^p(u,v)\,dv=+\infty\]
for every $u\in (-\infty,\infty)$ and every $\alpha(p)\geq 4p+4$. Hence, by  Remark \ref{remark}, $\|\phi\|_{W^{1,p}}$ blows up on a neighbourhood of every point of the ``incoming'' future Cauchy horizon.

Also, by \eqref{l'} and by construction of $\phi$, $\phi$ also satisfies the analogous result to Theorem \ref{mainthm} for region $I'$. It follows that $\|\phi\|_{W^{1,p}}$ blows up on a neighbourhood of every point of the ``outgoing'' future Cauchy horizon. In particular, $\|\phi\|_{W^{1,p}}$ blows up near every point of the bifurcate future Cauchy horizon, so $\phi\not\in W^{1,p}_{loc}$ near the future Cauchy horizon, as required.
\end{proof}

The proof of our main result, Theorem \ref{mtv2}, will thus be complete once we prove Theorem \ref{mainthm}. In the next Section, Section \ref{pf of mainthm}, we  reduce the task of proving this theorem to proving  another result (Theorem \ref{reduction}) using a contradiction argument. Then, in Section \ref{Outline of proof}, we state the two main ingredients in the proof of  Theorem \ref{reduction} and, assuming these two key results, we prove Theorem \ref{reduction}. Sections \ref{Interior} and \ref{Exterior} are devoted to proving the two stated key results, so as to close the proof of Theorem \ref{reduction}. Their proofs are quite involved and require the use of interpolation theory. We give some more details about the interpolation in Section \ref{interpintro} but defer a thorough discussion of interpolation and the  proofs of the results from Interpolation Theory which we shall need to Appendix \ref{Interpolation Theory}. We discuss the proof of the conditional result Theorem \ref{condinstab2} in Section \ref{condrel}, and give the proof in detail at the end of Section \ref{Interior}.

\subsection{Proof of Theorem \ref{mainthm}}\label{pf of mainthm}
We now set about reducing the task of proving Theorem \ref{mainthm} to proving a more tractable result.  In the black hole interior, for each $u\in(-\infty,\infty)$, $\lim_{v\to\infty}\frac{\log(1/\Omega)(u,v)}{v}$ has a finite positive value, so proving Theorem \ref{mainthm} is equivalent to proving the following theorem.

\begin{theorem}\label{bow}
Assume $\frac{2\sqrt e}{e+1}<Q<1$. Suppose $1<p<2$ and suppose $\phi$ is a solution of wave equation \eqref{waveequation} with smooth, spherically symmetric data on $S$ satisfying \eqref{ic1}, \eqref{ic2} and \eqref{limit} and such that $\mathcal{L}_\phi\not=0$. Then, in the black hole interior, $\phi$ satisfies
\begin{equation*}
\int_1^\infty v^{\alpha(p)}|\partial_v\phi|^p(u,v)\,dv=+\infty
\end{equation*}
for every $u\in (-\infty,\infty)$ and every $\alpha(p)\geq 4p+4$. 
\end{theorem}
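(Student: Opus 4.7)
The plan is to argue by contradiction. Fix $u_0 \in (-\infty,\infty)$ and $\alpha(p) \geq 4p+4$, and suppose for contradiction that the weighted $L^p$ integral
\[
I_0 := \int_1^\infty v^{\alpha(p)} |\partial_v \phi|^p(u_0, v)\, dv
\]
is finite. Following the philosophy announced in the outline, I would propagate this $L^p$ control from the Cauchy-horizon-transversal null hypersurface $C_{u_0}^{int}$ backward through the black hole interior to the event horizon $\mathcal{H}^+$, then outward along the exterior to a constant-$r$ hypersurface $\gamma_R$ with $R$ large, where it must conflict with a lower bound forced by $\mathcal{L}_\phi \neq 0$.

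The first step is an \emph{interior $L^p$ estimate}: assuming $I_0 < \infty$ together with the data bounds \eqref{ic1}--\eqref{ic2}, control a weighted $L^p$ norm of $\partial_v\phi$ on $\mathcal{H}^+ \cap \{v\geq 1\}$ by $I_0$ plus a data term. One integrates the interior spherically symmetric wave equation \eqref{interiorwave} along $\underline{C}_v^{int}$; the exponential decay of $\Omega^2$ in $v$ produces a Grönwall-type factor competing with the blue-shift toward $\mathcal{H}^+$. A direct $L^p$ argument for $1<p<2$ appears intractable, so I would instead establish the endpoint estimates $p=1$ and $p=2$ separately and obtain the intermediate range by applying the $K$-method of real interpolation to the associated solution operator $T$ of Remark \ref{bdd below}. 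The $L^1$ endpoint is precisely where the parameter restriction $\frac{2\sqrt e}{e+1}<Q<1$, equivalently $\log(r_+/r_-)<1$, enters: the rate of blue-shift at $\mathcal{H}^+$ must be overcome by the $\Omega^2$ gain from the integration, and this balance holds exactly in this subrange. The second step is an analogous \emph{exterior $L^p$ estimate}: from a weighted $L^p$ bound on $\partial_v\phi$ on $\mathcal{H}^+$ together with data on $C_{-U_0}^{ext}$, derive a corresponding weighted $L^p$ bound for $\partial_v(r\phi)/\lambda$ on $\gamma_R$. Here the $L^1$ and $L^2$ endpoints both hold for the full subextremal range, using the red-shift at $\mathcal{H}^+$ and integration of \eqref{exteriorwave}, and one again interpolates.

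For the final step, the hypothesis $\mathcal{L}_\phi \neq 0$ together with the Price-law decay of $\Phi(u) = \lim_{v\to\infty} r\phi(u,v)$ from \cite{DR1} forces a pointwise lower bound $|r^3 \partial_v(r\phi)(u,v)| \gtrsim 1$ for $u$ in a large region near null infinity and $v$ sufficiently large. Restricting to $\gamma_R$ with $R$ large and integrating produces a weighted $L^p$ \emph{lower} bound on $\gamma_R$ of a size that grows with $R$ faster than the upper bound coming from the exterior estimate permits. The weight $\alpha(p) \geq 4p+4$ is dictated by tracking the powers of $v$ (equivalently of $r^*$) that appear when propagating through the interior and exterior estimates and matching them against the Price-law lower bound: the given threshold is the smallest value for which the contradiction closes uniformly in $u_0$.

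The principal obstacle is the interior $L^1$ endpoint estimate. Unlike $L^2$, where energy currents with multipliers adapted to both horizons are natural, an $L^1$ propagation must track cancellation between the exponential blue-shift at the Cauchy horizon (governed by $|\kappa_-|$) and the blue-shift at the event horizon (governed by $|\kappa_+|$), and it is exactly this quantitative comparison that forces $r_+/r_- < e$. Once both endpoint estimates are in hand, the real interpolation step is conceptually clean but requires verifying that the relevant solution maps genuinely fit into a compatible Banach couple with matching weights; this will be handled via the $K$-functional framework to be developed in the interpolation appendix referenced in Section \ref{interpintro}.
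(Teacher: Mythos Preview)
Your overall architecture matches the paper's: contradiction, interior $L^p$ propagation to $\mathcal{H}^+$ via interpolation between $L^1$ and $L^2$ endpoints (with the parameter restriction entering at the interior $L^1$ step), exterior $L^p$ propagation to $\gamma_R$, and a lower bound forced by $\mathcal{L}_\phi\neq 0$. Two points, however, are off and would cause trouble if you tried to execute them as written.

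First, the contradiction mechanism is not that the lower bound on $\gamma_R$ ``grows with $R$ faster than the upper bound permits.'' The paper fixes a single $R=R_0$ (produced by Theorem~\ref{contrat}) and observes that the pointwise lower bound $|\partial_v(r\phi)|\gtrsim v^{-3}$ on $\gamma_{R_0}$ makes $\int_{\gamma_{R_0}} v^{3p-1}|\partial_v(r\phi)|^p\,dv\gtrsim\int v^{-1}\,dv=+\infty$, directly contradicting the finite upper bound. The exponent $3p-1$ against the decay $v^{-3p}$ is exactly borderline divergent; there is no comparison across different values of $R$. Relatedly, you omit that Theorem~\ref{contrat} has a hypothesis, namely $\sup_{\gamma_{R_1}} u^3|\phi|<\infty$, which the paper must separately deduce from the contradictive assumption (this is Proposition~\ref{alt1}, via Propositions~\ref{alt2} and the exterior $L^p$ estimate); without it the lower bound is not available.

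Second, your heuristic for the interior $L^1$ endpoint in terms of a competition between the surface gravities $|\kappa_\pm|$ is not how the paper's argument runs. Proposition~\ref{intl1} is a direct two-step iteration on $\int\sup_u r|\partial_v\phi|\,dv$ and $\int\sup_v r|\partial_u\phi|\,du$, obtained by integrating the wave equation in the two null directions; each step loses a factor of $\sup\int|\partial r|/r=\log(r_+/r_-)$, so the bootstrap closes precisely when $(\log(r_+/r_-))^2<1$. No exponential weights or blue-shift rates appear.
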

\noindent The same proof works for any particular value of $u\in (-\infty,\infty)$. Consequently we prove the above theorem only for the case $u=-1$ and the general statement follows immediately.

The proof of Theorem \ref{bow} is by contradiction. Indeed, given $\phi$ satisfying the hypothesis of Theorem \ref{bow}, we shall assume
\begin{equation}\label{ca}
\int_1^\infty v^{4p+4}|\partial_v\phi|^p(-1,v)\,dv<+\infty.
\end{equation}
We can then propagate this estimate through the black hole region to the event horizon, and from the event horizon to the hypersurface $r=R$ in the exterior region (for $R>r_+$ sufficiently large). More precisely, we use \eqref{ca} to deduce
\begin{equation}\label{contra}
\int_{\gamma_R\cap\{v\geq 1\}}v^{3p-1}|\partial_v(r\phi)|^p <+\infty\hspace{3mm}\mbox{ for any } R>r_+\mbox{ sufficiently large.}
\end{equation}
However, in \cite{LO} the authors prove the following theorem.

\begin{theorem}[Luk--Oh \cite{LO}]\label{contrat}
Suppose $\phi$ is a solution of wave equation \eqref{waveequation} with smooth, spherically symmetric data on $S$ satisfying \eqref{ic1}, \eqref{ic2} and \eqref{limit} and such that $\mathcal{L}\not=0$. Then there exists a large constant $R_1=R_1(M)>r_+$ such 
that if 
\begin{equation}\label{contra3}
\sup_{\gamma_{R_1}\cap\{u\geq1\}}u^3|\phi|\leq A'
\end{equation}
for some $A'>0$, then there exist $R_0=R_0(\mathcal{L},A',D,U_0,R_1)>R_1$ and $U=U(\mathcal{L},A',D,U_0,\partial_v(r\phi)\vert_{C_{-U_0}})$ sufficiently large such that
\begin{equation}\label{contra2}
|\partial_v(r\phi)|(u,v)\vert_{\gamma_{R_0}\cap \{u\geq U\}}|\geq\frac{\mathcal{|L|}}{8}v^{-3}.
\end{equation}
\end{theorem}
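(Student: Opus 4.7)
My approach would be to compare $r^{3}\partial_v(r\phi)$ at a point of $\gamma_{R_0}$ with the conserved null-infinity quantity $\mathcal{L}$, up to errors that can be made small by taking $R_0$ and $U$ large. Writing $\psi := r\phi$, the spherically symmetric wave equation \eqref{exteriorwave} together with $\partial_v r = \Omega^{2} = -\partial_u r$ rearranges to
\[\partial_u \partial_v \psi = -\Bigl(\tfrac{2M}{r^{2}} - \tfrac{2q^{2}}{r^{3}}\Bigr)\Omega^{2} \phi,\]
whence, by the product rule,
\[\partial_u \bigl(r^{3} \partial_v \psi\bigr) = -3 r^{2} \Omega^{2} \partial_v \psi - 2M r \Omega^{2} \phi + 2q^{2} \Omega^{2} \phi.\]
Integrating this identity in $u$ from $-U_0$ at fixed $v$ is the central computation of the proof.

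First I would pass to the limit $v \to \infty$ in the integrated identity. Defining $L(u) := \lim_{v\to\infty} r^{3} \partial_v \psi(u,v)$ (this exists at $u = -U_0$ by \eqref{limit}), one has $\Omega^{2} \to 1$ while $r^{2} \Omega^{2} \partial_v \psi, \Omega^{2} \phi \to 0$ and $r \Omega^{2} \phi \to \Phi(u)$. Justifying dominated convergence via the decay estimates of the next paragraph would then yield
\[L(u) = L(-U_0) - \int_{-U_0}^{u} 2M\, \Phi(u')\, du',\]
so that $\lim_{u\to\infty} L(u) = \mathcal{L}$. Choosing $U$ large enough gives $|L(u) - \mathcal{L}| \leq |\mathcal{L}|/4$ on $\{u \geq U\}$.

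To make the previous step rigorous, and to control the error in approximating $r^{3}\partial_v\psi(u,v)$ by $L(u)$ at finite $v$, I would propagate the hypothesis \eqref{contra3} away from $\gamma_{R_1}$. Specifically, combining the bound $u^3 |\phi| \leq A'$ on $\gamma_{R_1}$ with the initial data bounds \eqref{ic1} on $C_{-U_0}$ and a bootstrap in the spirit of the $r^p$-weighted estimates of Dafermos--Rodnianski adapted to spherical symmetry, I expect to obtain
\[|r\phi|(u,v) \leq C/u^{2}, \qquad |\Phi(u)| \leq C/u^{2}\]
on $\{r \geq R_1\} \cap \{u \geq 1\}$, with $C = C(\mathcal{L}, A', D, U_0, R_1)$. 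This ensures absolute convergence of the $\Phi$-integral above and gives the pointwise control needed in the last step.

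The final step is to decompose, at a point $(u,v) \in \gamma_{R_0} \cap \{u \geq U\}$,
\[r^{3} \partial_v \psi(u,v) = L(u) + \bigl(r^{3} \partial_v \psi(u,v) - L(u)\bigr),\]
and to estimate the error by the tail (from $v$ to $\infty$) of the integrated identity at fixed $u$, using the bounds just obtained. A direct estimate shows the error is bounded by a constant times $1/R_0 + 1/U$, so by choosing $R_0$ and $U$ large enough the error is at most $|\mathcal{L}|/4$. Combined with the estimate on $L(u)$, this gives $|r^{3} \partial_v \psi| \geq |\mathcal{L}|/2$ on $\gamma_{R_0} \cap \{u \geq U\}$. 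Since $r = R_0$ and $v = u + r^{*}(R_0)$ on $\gamma_{R_0}$, enlarging $U$ further so that $R_0^{3} \leq 4 v^{3}$ in this region converts the bound into $|\partial_v(r\phi)| \geq \tfrac{|\mathcal{L}|}{8} v^{-3}$. The main obstacle is the decay propagation step: extending the single-surface bound on $\gamma_{R_1}$ to uniform decay throughout the far exterior, at a rate strong enough to make the $\Phi$-integral converge and the error-bounding tails vanish as $R_0 \to \infty$, is the real technical content and is where one expects to exploit the specific algebraic structure of the equation for $\psi = r\phi$ together with the exterior geometry.
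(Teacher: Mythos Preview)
The paper does not prove this theorem: it is quoted verbatim from Luk--Oh \cite{LO} and used as a black box in the contradiction argument of Section~\ref{pf of mainthm}. There is therefore no ``paper's own proof'' to compare against.

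That said, your sketch is essentially the strategy of \cite{LO}. The identity
\[
\partial_u\bigl(r^{3}\partial_v\psi\bigr)=-3r^{2}\Omega^{2}\partial_v\psi-2Mr\Omega^{2}\phi+2q^{2}\Omega^{2}\phi
\]
is correct, and integrating in $u$ then sending $v\to\infty$ does give $L(u)=L(-U_0)-\int_{-U_0}^{u}2M\Phi(u')\,du'$, so that $L(u)\to\mathcal{L}$. You have also correctly identified the crux: propagating the single-hypersurface bound \eqref{contra3} on $\gamma_{R_1}$ into uniform decay $|r\phi|\lesssim u^{-2}$ throughout $\{r\geq R_1\}$ is the substantive step, and it is precisely what \cite{LO} carries out (via integration of the $\psi$-equation along characteristics together with the data bounds \eqref{ic1}, rather than a full $r^{p}$ hierarchy). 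One small inaccuracy: the error $r^{3}\partial_v\psi(u,v)-L(u)$ is not simply $O(1/R_0+1/U)$; the actual tail estimate in \cite{LO} uses the $u^{-2}$ decay of $r\phi$ and an $r^{-1}$ gain from the $3r^{2}\Omega^{2}\partial_v\psi$ term separately, and the dependence of $R_0$ on the constants is what forces $R_0=R_0(\mathcal{L},A',D,U_0,R_1)$ as stated.
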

\noindent Note in particular that we may (and do) assume $R_1$ is large enough that $R_1^*\geq 1$.

 Thus, provided we show that \eqref{contra3} follows from \eqref{ca}, it follows from \eqref{contra2} that
\begin{equation*}
\int_{\gamma_{R_0}\cap\{v\geq v_{R_0}(U)\}}v^{3p-1}|\partial_v(r\phi)|^p\geq\left(\frac{|\mathcal{L}|}{8}\right)^p \int_{\gamma_{R_0}\cap\{v\geq v_{R_0}(U)\}}v^{-1}=+\infty,
\end{equation*}
contradicting \eqref{contra} with $R=R_0>R_1$.
Thus the proof of Theorem \ref{bow} (and hence of Theorem \ref{mtv2}) is complete once we prove the following theorem.

\begin{theorem}\label{reduction}
Assume $\frac{2\sqrt e}{e+1}<Q<1$.  Let $1<p<2$ and suppose $\phi$ is a solution of wave equation \eqref{waveequation} with smooth, spherically symmetric data on $S$ satisfying \eqref{ic1} and \eqref{ic2}. Assume in the interior of the black hole $\phi$ satisfies 
\begin{equation}\label{bounda}
\int_1^\infty v^{4p+4}|\partial_v\phi|^p(-1,v)\,dv=A<+\infty
\end{equation}
for some $A>0$. Then there exist $C_1=C_1(A,R_1,D,p,\phi)>0$ and $C_2=C_2(A,R,D,p,\phi)>0$ such that
\begin{equation}\label{v4}
\sup_{\gamma_{R_1}\cap\{v\geq1\}}v^3|\phi|\leq C_1
\end{equation}
and for every $R>R_1$
\begin{equation}\label{eg} \int_{\gamma_R\cap\{v\geq 1\}}v^{3p-1}|\partial_v(r\phi)|^p \leq C_2.
\end{equation}
\end{theorem}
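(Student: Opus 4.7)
The plan is to propagate the hypothesised weighted $L^p$ bound \eqref{bounda} from $C_{-1}^{int}$ outward: first across the black hole interior to the event horizon $\mathcal{H}^+$, and then through the exterior out to the constant-$r$ hypersurfaces $\gamma_R$. The pointwise estimate \eqref{v4} will be obtained as a consequence of the resulting integral bounds. The natural quantity to propagate is $\partial_v(r\phi)$, because in spherical symmetry the wave equation \eqref{interiorwave}--\eqref{exteriorwave} reduces, after a short computation, to the identity
\[
\partial_u\partial_v(r\phi) = (\partial_u\lambda)\,\phi,
\]
valid in both interior and exterior. This turns propagation of $\partial_v(r\phi)$ across $u$-slices into an integral equation with the lower-order source $(\partial_u\lambda)\phi$.

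Concretely, the bulk of the work is to establish two propagation estimates of the schematic form
\[
\bigl\|v^{a}\partial_v(r\phi)\bigr\|_{L^p(\mathcal{H}^+\cap\{v\geq 1\})} \lesssim \bigl\|v^{b}\partial_v(r\phi)\bigr\|_{L^p(C_{-1}\cap\{v\geq 1\})} + (\text{data})
\]
in the interior, and
\[
\bigl\|v^{c}\partial_v(r\phi)\bigr\|_{L^p(\gamma_R\cap\{v\geq 1\})} \lesssim \bigl\|v^{a}\partial_v(r\phi)\bigr\|_{L^p(\mathcal{H}^+\cap\{v\geq 1\})} + (\text{data})
\]
in the exterior, for suitable exponents $a,b,c$ and every $1<p<2$, with $c$ chosen so that $cp=3p-1$ recovers the weight appearing in \eqref{eg}. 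A direct $L^p$ analysis in this range is unwieldy because the weights, the geometry and the source term do not cooperate in a clean way; the strategy is therefore to prove each of these estimates only at the endpoints $p=1$ and $p=2$, and then to invoke the K-method of real interpolation to recover the intermediate family $1<p<2$.

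The $p=2$ endpoints are obtained by adapting the multiplier/energy arguments of \cite{LO}. The exterior $p=1$ endpoint holds for the entire subextremal range and uses the identity above together with integration by parts in $u$ and the known polynomial decay of $\phi$ in the exterior. Once both endpoint estimates are established on each side and interpolated, composing them with \eqref{bounda} yields \eqref{eg}. The pointwise bound \eqref{v4} then follows by combining the resulting weighted $L^p$ control of $\partial_v(r\phi)$ near $\gamma_{R_1}$ with the initial data bounds \eqref{ic1}--\eqref{ic2} via H\"older's inequality in $v$: the weight exponent $3p-1$ and its H\"older conjugate are matched so that the resulting inequality yields uniform control of $v^3|\phi|$ on $\gamma_{R_1}\cap\{v\geq 1\}$.

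The main obstacle --- and the unique source of the parameter restriction --- is the interior $L^1$ propagation estimate from $C_{-1}^{int}$ to $\mathcal{H}^+$. Controlling the source $(\partial_u\lambda)\phi$ in $L^1$ as one integrates in $u$ across the interior forces the exponential ratio $r_+/r_-$ to be not too large: one needs precisely $\log(r_+/r_-)<1$, which is equivalent to $\frac{2\sqrt{e}}{e+1}<Q<1$. This restriction appears nowhere else in the proof and is exactly what gives rise to the conditional formulation of Theorem \ref{condinstab2}.
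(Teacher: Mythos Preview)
Your high-level strategy is exactly that of the paper: propagate the weighted $L^p$ bound from $C_{-1}^{int}$ to $\mathcal{H}^+$ and then from $\mathcal{H}^+$ to $\gamma_R$, each step obtained by real interpolation between $L^1$ and $L^2$ endpoint estimates, with the parameter restriction arising solely from the interior $L^1$ endpoint. Your identification of that restriction, and of the fact that it appears nowhere else, is correct.

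The mechanics differ in a few places, though. The paper does \emph{not} propagate $\partial_v(r\phi)$; it propagates $\partial_v\phi$ (and, in the exterior, also $\partial_u\phi$), assembling $\partial_v(r\phi)$ only at the very end via the product rule together with a separate bound on $\int_{\gamma_R} v^{3p-1}|\phi|^p$ (Proposition~\ref{alt2}). Accordingly, the interior $L^1$ endpoint is obtained from the \emph{coupled} pair $\partial_u(r\partial_v\phi)=-\lambda\partial_u\phi$ and $\partial_v(r\partial_u\phi)=-\nu\partial_v\phi$: iterating these two estimates against each other produces a contraction precisely when $(\log(r_+/r_-))^2<1$. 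Your version, with zeroth-order source $(\partial_u\lambda)\phi$, would instead require controlling $\phi$ itself along the iteration; this may be feasible but does not obviously reproduce the \emph{same} numerical threshold, so the claim that it yields ``precisely $\log(r_+/r_-)<1$'' would need a separate argument.

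For the exterior $L^1$ endpoint, the paper does not invoke polynomial decay of $\phi$. Instead it partitions $\{r_+\leq r\leq R\}$ into shells of $r$-width less than $r_+/2$ and closes each shell by the same coupled-supremum argument, yielding a constant depending only on $R$. Using decay of $\phi$ here would inject solution-dependent constants into what is, in the paper, a purely linear estimate with fixed operator norm --- and it is exactly that operator-norm structure that the interpolation step needs. The place where the paper does use decay of $\phi$ (from \cite{DR1}) is in the proof of \eqref{v4} via H\"older (Lemma~\ref{Cauchy}/Proposition~\ref{alt1}), which is essentially what you describe for that step.
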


\subsubsection{Outline of Proof of Theorem \ref{reduction}}\label{Outline of proof}

The key ingredients in the proof of Theorem \ref{reduction} are the two theorems stated directly below. The first of these two theorems is an estimate in the black hole \emph{interior} and is proved in Section \ref{Interior}.

\begin{theorem}\label{interior reduction}
Assume $\frac{2\sqrt e}{e+1}<Q<1$, $1<p<2$ and $0<\theta<1$ are such that
\[\frac{1}{p}=\frac{1-\theta}{1}+\frac{\theta}{2}.\]
Let $\alpha>1$ be an integer\footnote{We may only consider positive integers $\alpha$  due to the fact that the analogous result for the case $p=2$ (\cite{LO}), upon which the proof of Theorem \ref{interior reduction} relies, is only known to be valid for such $\alpha$, due to the fact that its proof is by induction. See Proposition \ref{intl2} for details.} and $0<\varepsilon<1$. Then, given  a solution $\phi$ of the wave equation \eqref{waveequation} with smooth, spherically symmetric data on $S$ satisfying \eqref{ic2}, there exists a constant $C=C(\alpha,\varepsilon,p,D,\phi)>0$\footnote{We remark that it is possible to prove the estimate with a constant that is independent of both $D$ and $\phi$, as in Theorem \ref{lp exterior estimate}. The argument we use to prove Theorem \ref{interior reduction}, however, is crucial to proving the condition in Theorem \ref{condinstab2}, albeit at the expense of the constant depending on $D$ and $\phi$. See Section \ref{condrel2} and Remark \ref{rem} in Section \ref{remlab}  for details.}  such that in the black hole interior
\[\int_{\mathcal{H}^+\cap\{v\geq1\}} v^{(\alpha-\varepsilon)\frac{p\theta}{2}}|\partial_v\phi|^p \leq C\left(\int_1^\infty v^{\frac{\alpha p\theta}{2}}|\partial_v\phi|^p(-1,v)\,dv+\int_{\underline{C}_1^{int}\cap\{u\leq -1\}}\Omega^{-p\theta}|\partial_u\phi|^p\right).\]
\end{theorem}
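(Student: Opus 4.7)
The plan is to obtain the stated $L^p$ estimate by real interpolation (Stein--Weiss style, for weighted $L^p$ spaces) between two endpoint estimates at $p=1$ and $p=2$. The relation $\frac{1}{p} = \frac{1-\theta}{1} + \frac{\theta}{2}$ is exactly the interpolation condition between $L^1$ and $L^2$, and the three weights in the statement are precisely the $\theta$-powers of the corresponding $L^2$-endpoint weights (the $L^1$ endpoint being unweighted): on the event horizon, $v^{(\alpha-\varepsilon) p\theta/2} = (v^{\alpha-\varepsilon})^{p\theta/2}$; on $C_{-1}$, $v^{\alpha p\theta/2} = (v^{\alpha})^{p\theta/2}$; on $\underline{C}_1^{int}$, $\Omega^{-p\theta} = (\Omega^{-2})^{p\theta/2}$. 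This tells me exactly which endpoint estimates I must establish.

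The first endpoint is the $L^1$ estimate
\[\int_{\mathcal{H}^+ \cap \{v \geq 1\}} |\partial_v\phi|\,dv \;\leq\; C\left(\int_1^\infty |\partial_v\phi|(-1,v)\,dv + \int_{\underline{C}_1^{int}\cap\{u \leq -1\}} |\partial_u\phi|\,du\right),\]
and the second is the corresponding $L^2$ estimate
\[\int_{\mathcal{H}^+ \cap \{v \geq 1\}} v^{\alpha-\varepsilon} |\partial_v\phi|^2\,dv \;\leq\; C\left(\int_1^\infty v^{\alpha} |\partial_v\phi|^2(-1,v)\,dv + \int_{\underline{C}_1^{int}\cap\{u \leq -1\}} \Omega^{-2}|\partial_u\phi|^2\,du\right).\]
The $L^2$ estimate is a refinement of the $L^2$ argument of \cite{LO}, proved inductively on the integer $\alpha > 1$ by energy/multiplier methods (explaining the integer hypothesis). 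The $L^1$ estimate is the genuinely delicate ingredient; to prove it, I would commute the interior wave equation \eqref{interiorwave} into a transport equation for $\partial_v\phi$ along the outgoing null direction, integrate from $C_{-1}$ (at $u=-1$) to $\mathcal{H}^+$ (at $u=-\infty$), and then integrate over $v \in [1,\infty)$. The bulk contribution coming from the coefficient $\Omega^2/r$ in \eqref{interiorwave} is handled by a Gronwall-type argument; since $\partial_u r = -\Omega^2$, one has $\int \frac{\Omega^2}{r}\,du = \log(r_+/r_-)$ across the interior, so the Gronwall factor is bounded by $e^{\log(r_+/r_-)} = r_+/r_-$, which stays below $e$ exactly when $Q > \frac{2\sqrt{e}}{e+1}$.

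With both endpoint estimates in hand, I would regard the solution map sending data $(f,g) = (\partial_v\phi|_{C_{-1}}, \partial_u\phi|_{\underline{C}_1^{int}})$ to $\partial_v\phi|_{\mathcal{H}^+}$ as a linear operator on the appropriate product space of data; splitting it as $T_1 f + T_2 g$ reduces the problem to interpolating each of the two linear pieces individually between weighted $L^1$ and weighted $L^2$. Applying the K-method with Stein--Weiss type weights (the abstract machinery for which is collected in Appendix \ref{Interpolation Theory}) then yields the $L^p$ bound with precisely the three weights claimed. The loss $\varepsilon>0$ on the target-side weight is inherited directly from the corresponding loss at the $L^2$ endpoint, and provides exactly the slack required to make the interpolation close.

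The main obstacle is unquestionably the $L^1$ endpoint. In the $L^2$ setting an energy identity tames both borderline weights and bulk terms, but at $L^1$ there is no such identity, and one is forced into a direct pointwise/Gronwall argument whose success hinges on the sharp geometric inequality $\log(r_+/r_-) < 1$. This is the essential content of the restriction $Q > \frac{2\sqrt{e}}{e+1}$: outside this subrange the proof of the $L^1$ endpoint is unavailable, which is precisely why Theorem \ref{condinstab2} must remain conditional in the full subextremal range and why its \emph{condition} is stated as a failure of boundedness from below of the $L^1$ solution map. A secondary (but routine) technicality is book-keeping constants carefully so that the final $C$ depends only on $\alpha$, $\varepsilon$, $p$, $D$ and $\phi$, as asserted.
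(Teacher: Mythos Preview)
Your overall architecture is correct and matches the paper: establish an $L^1$ endpoint (Proposition~\ref{intl1}) and an $L^2$ endpoint (Proposition~\ref{intl2}), then use the K-method of real interpolation on weighted $L^p$ spaces to obtain the intermediate estimate. Your identification of the weights as $\theta$-powers of the $L^2$ weights is exactly right.

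There is, however, a genuine gap in your description of the $L^1$ endpoint. You propose a Gr\"onwall argument on the transport equation for $\partial_v\phi$, compute a Gr\"onwall factor $e^{\log(r_+/r_-)}=r_+/r_-$, and then assert that this ``stays below $e$ exactly when $Q>\tfrac{2\sqrt e}{e+1}$''. But a Gr\"onwall factor of $r_+/r_-$ is finite for \emph{every} subextremal parameter, so nothing in your argument singles out the threshold. The point you are missing is that the transport equation $\partial_u(r\partial_v\phi)=-\lambda\partial_u\phi$ has $\partial_u\phi$, not $\partial_v\phi$, on the right-hand side; there is no self-coupling to Gr\"onwall away. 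One is forced into a \emph{coupled} estimate: bounding $\int_1^\infty\sup_u r|\partial_v\phi|\,dv$ by data plus $\log(r_+/r_-)\cdot\int_{-\infty}^{-1}\sup_v r|\partial_u\phi|\,du$, and symmetrically for the other quantity. Substituting one into the other produces a factor $(\log(r_+/r_-))^2$ on the left-hand side, and the argument closes only when this is strictly less than $1$, i.e.\ $\log(r_+/r_-)<1$. \emph{That} is where the parameter restriction comes from.

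On a smaller point: your plan to interpolate both pieces $T_1f$ and $T_2g$ is a legitimate route (indeed the paper uses exactly that two-variable interpolation in the exterior, Theorem~\ref{lp exterior estimate}, and notes in Remark~\ref{rem} that it would work here too, giving a constant independent of $D$ and $\phi$). The paper, however, deliberately handles the $\phi_2$-piece differently: it applies the $L^2$ estimate directly to $\phi_2$ and then uses H\"older together with the pointwise bound \eqref{ic2} to pass to $L^p$. This is why the constant in the statement depends on $D$ and $\phi$, and is done so that only the $\phi_1$-type solutions (those with $\partial_u\phi=0$ on $\underline{C}_1^{int}$) need the $L^1$ endpoint---precisely the class appearing in the condition of Theorem~\ref{condinstab2}. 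Your approach proves the theorem but does not isolate that condition.
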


The second theorem we shall need relates to the black hole \emph{exterior} and is proved in Section \ref{Exterior}.

\begin{theorem}\label{lp exterior estimate}
Let $\alpha,\varepsilon>0$ be such that $\alpha-(1+2\varepsilon)>0$ and suppose $\phi$ is a solution of wave equation \eqref{waveequation} with smooth, spherically symmetric data on $S$. Assume $1<p<2$ and $0<\theta<1$ with
\[\frac{1}{p}=\frac{1-\theta}{1}+\frac{\theta}{2}.\]
If $R>R_1$, then there exists $C=C(R,\alpha,\varepsilon,p)>0$ such that
\[\int_{\gamma_R\cap\{v\geq1\}} v^{(\alpha-(1+2\varepsilon))\frac{p\theta}{2}}|\partial_v\phi|^p \leq C \left(\int_{\mathcal{H}^+\cap\{v\geq1\}} v^{\frac{\alpha p\theta}{2}}|\partial_v\phi|^p+\int_{\underline{C}_1^{ext}\cap\{u\geq u_R(1)\}} \Omega^{-p\theta}|\partial_u\phi|^p\right) \]
and
\[\int_{\gamma_R\cap\{v\geq1\}} v^{(\alpha-(1+2\varepsilon))\frac{p\theta}{2}}|\partial_u\phi|^p \leq C \left(\int_{\mathcal{H}^+\cap\{v\geq1\}} v^{\frac{\alpha p\theta}{2}}|\partial_v\phi|^p+\int_{\underline{C}_1^{ext}\cap\{u\geq u_R(1)\}} \Omega^{-p\theta}|\partial_u\phi|^p\right).\]
\end{theorem}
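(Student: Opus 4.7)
The exponents on both sides of the claimed inequality depend on $p$ and $\theta$ exactly in the way one expects from real interpolation between a weighted $L^1$ estimate and a weighted $L^2$ estimate: when $\theta=0$ the inequality reduces to an unweighted $L^1$ bound for $\partial_v\phi$ (respectively $\partial_u\phi$) on $\gamma_R$ in terms of $L^1$ norms on $\mathcal{H}^+$ and $\underline{C}_1^{ext}$, while when $\theta=1$ it becomes the weighted $L^2$ estimate already established in \cite{LO} in the course of proving the analogous $p=2$ result. The plan is therefore to establish the two endpoint estimates separately and then deduce the whole family of $L^p$ estimates at once by the K-method of real interpolation, exactly in the manner foreshadowed in the introduction. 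The linear operator to be interpolated is
\[
T_R\colon\bigl(\partial_v\phi|_{\mathcal{H}^+\cap\{v\geq1\}},\,\partial_u\phi|_{\underline{C}_1^{ext}\cap\{u\geq u_R(1)\}}\bigr)\longmapsto\bigl(\partial_v\phi|_{\gamma_R\cap\{v\geq1\}},\,\partial_u\phi|_{\gamma_R\cap\{v\geq1\}}\bigr),
\]
which is well-defined because a spherically symmetric solution of \eqref{exteriorwave} is determined in the region of interest by its data on these two null hypersurfaces.

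For the $L^1$ endpoint I would argue directly using the spherically symmetric wave equation \eqref{exteriorwave}. Writing $\partial_u(\partial_v\phi)=\frac{\Omega^2}{r}(\partial_v\phi-\partial_u\phi)$ and integrating along the constant-$v$ null ray from $\gamma_R$ outward to $\mathcal{H}^+$ (where $u\to\infty$) gives a pointwise representation formula for $\partial_v\phi|_{\gamma_R}$ in terms of $\partial_v\phi|_{\mathcal{H}^+}$ plus a spacetime integral of $\Omega^2 r^{-1}(|\partial_v\phi|+|\partial_u\phi|)$ over the exterior strip between $\gamma_R$ and $\mathcal{H}^+$. The symmetric procedure applied to $\partial_v(\partial_u\phi)=\frac{\Omega^2}{r}(\partial_v\phi-\partial_u\phi)$, integrated now along constant-$u$ from $\underline{C}_1^{ext}$, expresses $\partial_u\phi|_{\gamma_R}$ analogously. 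Integrating these representations in $v$ along $\gamma_R$, one controls the resulting spacetime double integrals by Fubini together with the exponential decay of $\Omega^2$ as a function of $r^*$ (so in $u$) in the region $\{r_+<r\leq R\}$: the factor $\Omega^2\sim e^{2\kappa_+(v-u)}$ for $r$ near $r_+$ produces a summable factor that lets one absorb the spacetime bulk contributions, via a Gr\"onwall-type argument, back into the boundary data and close a clean $L^1$ bound of the form required. The $L^2$ endpoint, with the $(1+2\varepsilon)$ loss, is essentially the weighted energy estimate of \cite{LO}: multiply the wave equation by a $v$-weighted multiplier $v^\alpha\partial_v\phi$ and integrate by parts over the region bounded by $\gamma_R$, $\mathcal{H}^+$ and $\underline{C}_1^{ext}$, absorbing the $\alpha v^{\alpha-1}$ bulk term that prevents one from taking $\varepsilon=0$.

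With both endpoints in hand, the final step is to apply the abstract real interpolation theorem for sums of weighted $L^p$ spaces that the paper will develop in its appendix: since $\frac{1}{p}=\frac{1-\theta}{1}+\frac{\theta}{2}$, and since the target weight $v^{(\alpha-(1+2\varepsilon))p\theta/2}$ is precisely $(1)^{(1-\theta)p}\cdot(v^{\alpha-(1+2\varepsilon)})^{p\theta/2}$, and similarly for the two source weights $v^{\alpha p\theta/2}$ and $\Omega^{-p\theta}$, the K-method identifies the intermediate interpolation spaces with exactly the weighted $L^p$ spaces appearing in the conclusion. Applying this to the bounded linear map $T_R$ yields the desired $L^p$ estimate with a constant depending only on $R$, $\alpha$, $\varepsilon$ and $p$. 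I expect the main obstacle to be the $L^1$ endpoint: the $L^2$ estimate is a small modification of what is already in \cite{LO}, and the interpolation step is essentially bookkeeping once the correct functional framework is in place, but the $L^1$ bound has to be extracted from the wave equation without recourse to any coercive energy quantity, and closing it requires the careful treatment of the bulk term just outlined, using the decay of $\Omega^2$ near the horizon to offset the absence of any $v$-weight.
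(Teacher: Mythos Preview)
Your overall strategy matches the paper exactly: prove the $L^1$ and $L^2$ endpoint estimates separately and then interpolate via the K-method, with the intermediate weighted $L^p$ spaces identified just as you describe. The shape of the operator and of the interpolation are correct.

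The gap is in your $L^1$ endpoint. You propose to absorb the spacetime bulk term using the exponential decay of $\Omega^2$ near the horizon, but this does not close directly over the full strip $\{r_+\leq r\leq R\}$: integrating $\Omega^2$ in $u$ (or $v$) across this region simply gives $R-r_+$, which is bounded but not small for $R>R_1$, so a one-shot absorption fails. A naive Gr\"onwall in $r$ (or $r^*$) also runs into trouble, because differentiating $\int_{\gamma_\rho}|\partial_u\phi|\,dv$ in $\rho$ along the obvious parameterisation produces a $\partial_u^2\phi$ term that the wave equation does not directly control. The paper instead partitions $[r_+,R]$ into $N(R)$ thin strips of $r$-width less than $r_+/2$; within each strip it proves a \emph{coupled} pair of sup-estimates of the form
\[
\int \sup_u |\partial_v\phi|\,dv \;\leq\; (\text{data}) + \tfrac{r_{j+1}-r_j}{r_+}\int \sup_v |\partial_u\phi|\,du,\qquad
\int \sup_v |\partial_u\phi|\,du \;\leq\; (\text{data}) + \tfrac{r_{j+1}-r_j}{r_+}\int \sup_u |\partial_v\phi|\,dv,
\]
and since the coupling constant is below $1/2$, substitution closes within the strip. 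Iterating across the strips gives the $L^1$ bound with a constant $C(R)$ growing with $R-r_+$. Your Gr\"onwall intuition is morally the continuous limit of this iteration, but the coupled sup structure (using the integrated forms $\partial_u(r\partial_v\phi)=-\lambda\partial_u\phi$ and $\partial_v(r\partial_u\phi)=-\nu\partial_v\phi$) is the actual device that avoids second derivatives.

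Your $L^2$ sketch is also somewhat compressed relative to the paper, which first proves the weighted estimate on a curve $\gamma_{R_2}$ close to the horizon using the red-shift (positivity of $\partial_v\Omega/\Omega$ at $r_+$) and then propagates it outward to $\gamma_R$ by a separate dyadic-in-$v$ plus Gr\"onwall argument; the paper also stresses that the \cite{LO} estimates must be reworked so that the right-hand side is an honest weighted $L^2$ norm of the data, since otherwise the interpolation step cannot be phrased as boundedness of a linear operator. But this part is close to what you describe and is not where the real difficulty lies.
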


Proving these two theorems is the main challenge faced in proving Theorem \ref{reduction} and hence our main result, Theorem \ref{mtv2}. Indeed, their proofs form the bulk of the remainder of this thesis. Both proofs make use of interpolation theory to deduce the desired estimates. We begin by proving ``endpoint estimates'', namely analogous estimates to those of the theorems for the cases $p=1$ and $p=2$, directly by analysis of the wave equation \eqref{waveequation} in spherical symmetry in the interior and exterior regions separately. We then appeal to the K-method of real interpolation to allow us to deduce the desired ``intermediate estimates'' for $1<p<2$. More details are given in Section \ref{interpintro}.

In addition to the theorems above, we need two more results about the black hole exterior to complete the proof of Theorem \ref{reduction}, though, unlike the theorems stated above, their proofs are straightforward and can be found in Section \ref{easyres}. We state these results directly below.

\begin{proposition}\label{alt2}
Assume $p>1$.
Suppose $\phi$ is a solution of wave equation \eqref{waveequation} with smooth, spherically symmetric data on $S$ satisfying \eqref{ic1} and \eqref{ic2} and let $R>r_+$ be such that $R^*\geq1$. Then there exist $C=C(R,\phi,p)>0$ and $\tilde{C}=\tilde{C}(R,p)$ such that
\begin{eqnarray*}
\left(\int_{\gamma_R\cap\{v\geq1\}} v^{3p-1}|\phi|^p\right)^{1/p}\leq C+\tilde{C}\left( \left(\int_{\gamma_R\cap\{v\geq1\}}v^{4p}|\partial_u\phi|^p\right)^{1/p}+ \left(\int_{\gamma_R\cap\{v\geq1\}}v^{4p}|\partial_v\phi|^p\right)^{1/p}\right).
\end{eqnarray*}
\end{proposition}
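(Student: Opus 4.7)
The plan is to parametrize $\gamma_R$ by $v$ and reduce the claim to a one-dimensional Hardy-type inequality along this curve. Since $r^* = v - u$ is constant on $\gamma_R$, one has $u = v - R^*$ there, so setting $F(v) := \phi(v - R^*, v)$ yields $F'(v) = (\partial_u\phi + \partial_v\phi)(v - R^*, v)$. Once we establish
\[\left(\int_1^\infty v^{3p-1}|F(v)|^p \, dv\right)^{1/p} \leq C + \tilde C \left(\int_1^\infty v^{4p}|F'(v)|^p \, dv\right)^{1/p},\]
the triangle inequality $|F'|\leq|\partial_u\phi|+|\partial_v\phi|$ together with Minkowski's inequality separates the right-hand side into the two claimed contributions and yields $\tilde C = \tilde C(R,p)$.

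\textbf{The Hardy-type estimate.}
For the one-dimensional inequality, I would first argue (see the next paragraph) that $F(v) \to 0$ as $v\to\infty$, so that $F(v) = -\int_v^\infty F'(v')\, dv'$. Applying H\"{o}lder with conjugate exponents $(p,p')$ and weights $(v')^{\mp 4}$,
\[|F(v)| \leq \left(\int_v^\infty (v')^{-4p'}dv'\right)^{1/p'}\!\!\left(\int_v^\infty (v')^{4p}|F'|^p dv'\right)^{1/p} = (4p'-1)^{-1/p'}\, v^{-3-1/p}\left(\int_v^\infty (v')^{4p}|F'|^p dv'\right)^{1/p}.\]
Raising to the $p$-th power, multiplying by $v^{3p-1}$ (the prefactor becomes $v^{-2}$), integrating over $v\in[1,\infty)$ and applying Fubini gives
\[\int_1^\infty v^{3p-1}|F|^p\, dv \leq C_p \int_1^\infty (v')^{4p}|F'(v')|^p\left(\int_1^{v'}v^{-2}dv\right)dv' \leq C_p \int_1^\infty (v')^{4p}|F'|^p dv',\]
for an explicit $p$-dependent constant $C_p$. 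This is the whole arithmetic of the proof once decay at infinity is in hand.

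\textbf{Main obstacle: decay of $F$ at infinity.}
The delicate point is justifying $F(v)\to 0$ on $\gamma_R$ as $v\to\infty$. Finiteness of $\int v^{4p}|F'|^p dv$ yields (via H\"{o}lder) that $F'\in L^1([1,\infty))$, hence $F$ has a limit at infinity, but not that it vanishes; indeed if $F(\infty)\neq 0$ the left-hand side of the proposition is infinite, so some genuine decay must be invoked. The decay follows from the initial data and the wave equation: \eqref{ic1} gives $|\partial_v(r\phi)|\lesssim \Omega^2/r^3$ along $C_{-U_0}$, which is integrable in $v$, so $\Phi(-U_0):=\lim_{v\to\infty}r\phi(-U_0,v)$ exists; propagation of this asymptotic behaviour in the exterior, via the theory of Dafermos--Rodnianski \cite{DR1} for the linear wave equation on subextremal Reissner--Nordstr\"{o}m, implies that $\Phi(u):=\lim_{v\to\infty}r\phi(u,v)$ exists for every $u\geq -U_0$ and $\Phi(u)\to 0$ as $u\to\infty$. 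Since along $\gamma_R$ one has $u = v-R^*\to\infty$ as $v\to\infty$ while $r\equiv R$, it follows that $F(v)=\phi(v-R^*,v)\to 0$. The $\phi$-dependent quantities appearing in this decay argument are absorbed into $C=C(R,\phi,p)$, while $\tilde C$ comes only from the Hardy estimate and hence depends only on $R$ and $p$.
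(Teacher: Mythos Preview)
Your argument is essentially the same as the paper's: both use the decay $\phi|_{\gamma_R}\to 0$ as $v\to\infty$ (cited from \cite{DR1}), the fundamental theorem of calculus, and H\"older with the weight $v^{\pm 4}$ to obtain the pointwise bound $|\phi|\lesssim I_{p,R}(\phi)\,v^{-3-1/p}$, then integrate against $v^{3p-1}$. The paper packages the pointwise bound as a separate lemma stated in the $t$-variable and valid only for $v_0\geq R^*$, which forces it to split the integral at $R^*$ and absorb the piece on $[1,R^*]$ into $C(R,\phi,p)$; your version, working directly in $v$ from $v=1$, avoids this split.

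Two minor remarks. First, your proof actually establishes the stronger inequality with $C=0$: the decay $F(v)\to 0$ is a purely qualitative input enabling the representation $F(v)=-\int_v^\infty F'$, and no $\phi$-dependent constant enters the subsequent estimate. Second, your justification of the decay via $\Phi(u)=\lim_{v\to\infty} r\phi(u,v)$ is a detour that doesn't quite close as written (the limit defining $\Phi$ is along outgoing null rays, not along $\gamma_R$); the direct statement you need, $\phi(t,R^*)\to 0$ as $t\to\infty$, is exactly what the paper cites from \cite{DR1}.
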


\begin{proposition}\label{alt1}
Assume $p>1$.
Suppose $R>r_+$ is such that $R^*\geq1$. Then, given a solution $\phi$ of the wave equation \eqref{waveequation} with smooth, spherically symmetric data on $S$ satisfying \eqref{ic1} and \eqref{ic2}, there exists $C=C(R,p)>0$ such that
\begin{eqnarray*}
\sup_{\gamma_R\cap\{v\geq R^*\}}v^3|\phi|\leq C\left( \left(\int_{\gamma_{R}\cap\{v\geq1\}}v^{4p}|\partial_u\phi|^p\right)^{1/p}+ \left(\int_{\gamma_{R}\cap\{v\geq1\}}v^{4p}|\partial_v\phi|^p\right)^{1/p}\right).\,\,\,\,
\end{eqnarray*}
\end{proposition}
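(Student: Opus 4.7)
The plan is to reduce the estimate to a weighted Sobolev-type embedding for functions of one variable. In the exterior, $\gamma_R = \{r = R\}$ is characterised by $r^* = R^*$, i.e.\ $v - u = R^*$, so the restriction of $\phi$ to $\gamma_R$ is a function of $v$ alone: set $f(v) := \phi(v - R^*, v)$. Differentiating along the curve gives $f'(v) = (\partial_u\phi + \partial_v\phi)(v - R^*, v)$, and the condition $v \geq R^*$ corresponds to $u \geq 0$, keeping us to the future of the bifurcation sphere. The right-hand side of the proposition is a one-dimensional weighted $L^p$ norm of $f'$, and the goal is a pointwise bound on $v^3 |f(v)|$.

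The first step is to show that $f(v) \to 0$ as $v \to \infty$. This is a consequence of exterior decay for solutions of \eqref{waveequation} on subextremal Reissner--Nordstr\"{o}m, which can be established from the initial-data bounds \eqref{ic1} and \eqref{ic2} by propagating the pointwise decay of $\partial_v\phi$ along $C_{-U_0}$ through the domain of dependence (compare the discussion around \eqref{limit} and the results of \cite{DR1}). Granted this decay, I represent
\[ f(v) = -\int_v^\infty f'(s)\,ds \qquad \text{for every } v \geq R^*. \]

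The rest is a H\"older argument with the conjugate exponent $q = p/(p-1)$. Inserting the factor $s^{-4}\cdot s^4$ and applying H\"older,
\[ |f(v)| \leq \Bigl(\int_v^\infty s^{-4q}\,ds\Bigr)^{1/q}\Bigl(\int_v^\infty s^{4p}|f'(s)|^p\,ds\Bigr)^{1/p}. \]
Since $q > 1$ we have $4q > 1$, so the first integral equals $v^{1-4q}/(4q-1)$, contributing a factor $v^{(1-4q)/q}/(4q-1)^{1/q} = v^{-3-1/p}/(4q-1)^{1/q}$. Multiplying through by $v^3$ leaves the $v$-dependence $v^{-1/p}$, bounded uniformly by $(R^*)^{-1/p}$ on $v \geq R^*$. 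Using $|f'| \leq |\partial_u \phi| + |\partial_v \phi|$ with the triangle inequality in $L^p$, and enlarging the domain of integration from $[R^*,\infty)$ to $\gamma_R \cap \{v \geq 1\}$, delivers the claimed bound with $C = C(R,p)$.

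The main (essentially only) obstacle is justifying $f(v) \to 0$ as $v \to \infty$, which relies on the decay theory for $\phi$ in the exterior of subextremal Reissner--Nordstr\"{o}m under the initial-data assumptions; everything else is a direct weighted Sobolev/H\"older calculation on the one-dimensional curve $\gamma_R$.
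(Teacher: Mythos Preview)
Your proof is correct and follows essentially the same approach as the paper: both use that $\phi\to 0$ along $\gamma_R$ (citing \cite{DR1}), write $\phi$ as the integral of its tangential derivative, and apply H\"older with weight $v^4$ to obtain the $v^{-3-1/p}$ decay. The only cosmetic difference is that the paper parametrises $\gamma_R$ by $t$ (with $\partial_t=\tfrac12(\partial_u+\partial_v)$) in a preliminary lemma before specialising to $a=4$, whereas you work directly in $v$; the computations are otherwise identical.
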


We now give the proof of Theorem \ref{reduction} making use of Theorem \ref{interior reduction}, Theorem \ref{lp exterior estimate}, Proposition \ref{alt2} and Proposition \ref{alt1}. Then the remainder of this work is devoted to proving these results, so as to close the proof of Theorem \ref{reduction} and hence Theorem \ref{mtv2}.

\begin{proof}[Proof of Theorem \ref{reduction}.]
We begin by showing \eqref{eg}. Let $R>R_1$ so $R^*\geq R_1^*\geq 1$ (since $r^*$ is increasing in $r$ on the exterior region). For brevity, set $\tilde{\gamma}_R=\gamma_R\cap\{v\geq1\}$. Then, setting $\Omega^2_R=\Omega^2(r=R)$, by the product rule we have
\begin{eqnarray}
\left(\int_{\gamma_R\cap\{v\geq1\}}v^{3p-1}|\partial_v(r\phi)|^p\,dv\right)^{1/p} &\leq&2 \left(\int_{\tilde{\gamma}_R} v^{3p-1}(r^p|\partial_v\phi|^p+|\partial_vr|^p|\phi|^p)\,dv\right)^{1/p}\notag\\
&\leq&2R\left(\int_{\tilde{\gamma}_R}v^{3p-1}|\partial_v\phi|^p\,dv\right)^{1/p}+2\Omega^2_R\left(\int_{\tilde{\gamma}_R}v^{3p-1}|\phi|^p\,dv\right)^{1/p}\hspace{5mm}\notag\\
&\leq& 2R\left(\int_{\tilde{\gamma}_R}v^{4p}|\partial_v\phi|^p\,dv\right)^{1/p}+2\Omega^2_R\left(\int_{\tilde{\gamma}_R}v^{3p-1}|\phi|^p\,dv\right)^{1/p}.\hspace{5mm}\notag
\end{eqnarray}
Using Proposition \ref{alt2} to estimate the last term on the right hand side yields
\begin{eqnarray}
\left(\int_{\tilde{\gamma}_R}v^{3p-1}|\partial_v(r\phi)|^p\,dv\right)^{1/p} &\leq& C+\tilde{C}\left( \left(\int_{\tilde{\gamma}_{R}}v^{4p}|\partial_u\phi|^p\right)^{1/p}+ \left(\int_{\tilde{\gamma}_{R}}v^{4p}|\partial_v\phi|^p\right)^{1/p}\right),\hspace{8mm}\label{esst18}
\end{eqnarray}
where $C=C(R,\phi,p)$ and $\tilde{C}=\tilde{C}(R,p)$. We need to control the right hand side.

To this end, fix $\varepsilon_0<1$. Then, choosing $\alpha$ and $\varepsilon$ such that $\varepsilon=\varepsilon_0/2$ and $(\alpha-(1+2\varepsilon))\frac{p\theta}{2}=4p$, the first inequality of Theorem \ref{lp exterior estimate} yields
\begin{eqnarray}
\int_{\tilde{\gamma}_R}v^{4p}|\partial_v\phi|^p &\leq& C\left(\int_{\mathcal{H}^+\cap\{v\geq1\}}v^{4p+(1+\varepsilon_0)\frac{p\theta}{2}}|\partial_v\phi|^p+\int_{\underline{C}_1^{ext}\cap\{u\geq u_R(1)\}}\Omega^{-p\theta}|\partial_u\phi|^p\right)\notag\\
&\leq& C\left(\int_{\mathcal{H}^+\cap\{v\geq1\}}v^{4p+2}|\partial_v\phi|^p+\int_{\underline{C}_1^{ext}\cap\{u\geq u_R(1)\}}\Omega^{-p\theta}|\partial_u\phi|^p\right), \label{esst19}
\end{eqnarray}
where $C=C(R,\alpha,\varepsilon,p)=C(R,p)$. Note that we have used that $4p+(1+\varepsilon_0)\frac{p\theta}{2}<4p+2\frac{p\theta}{2}<4p+2$.
Similarly, the second inequality of Theorem \ref{lp exterior estimate} gives
\begin{eqnarray}
\int_{\tilde{\gamma}_R}v^{4p}|\partial_u\phi|^p\leq C\left(\int_{\mathcal{H}^+\cap\{v\geq1\}}v^{4p+2}|\partial_v\phi|^p+\int_{\underline{C}_1^{ext}\cap\{u\geq u_R(1)\}}\Omega^{-p\theta}|\partial_u\phi|^p\right), \hspace{7mm}\label{esst19'}
\end{eqnarray}
where $C=C(R,p)$. 

 Set
\[\gamma_0:=\left\{\begin{array}{ll}4p+2&\mbox{if }(4p+2)\frac{2}{p\theta}\not\in\mathbb{Z}\\4p+2+\frac{p\theta}{3} &\mbox{if }(4p+2)\frac{2}{p\theta}\in\mathbb{Z}\end{array}\right.,\]
so that $\gamma_0\geq 4p+2$ and $\frac{2}{p\theta}\gamma_0\not\in\mathbb{Z}$.
Then combining \eqref{esst18} with \eqref{esst19} and \eqref{esst19'} we have
\begin{equation}\label{esst20}
\left(\int_{\tilde{\gamma}_R}v^{3p-1}|\partial_v(r\phi)|^p\right)^{1/p}\leq C+C'\left(\int_{\mathcal{H}^+\cap\{v\geq1\}} v^{\gamma_0}|\partial_v\phi|^p+\int_{\underline{C}_1^{ext}\cap\{u\geq u_R(1)\}}\Omega^{-p\theta}|\partial_u\phi|^p\right)^{1/p},\hspace{3mm}
\end{equation}
for $C=C(R,\phi,p)$ and $C'=C'(R,p)$.

Again we need to control the right hand side. Due to the assumption \eqref{ic2} on the initial data, the $\underline{C}_1^{ext}$ integral is bounded, so it remains to bound the $\mathcal{H}^+$ integral, and this is achieved using Theorem \ref{interior reduction}. Set $\alpha =\lceil \frac{2}{p\theta}\gamma_0\rceil$ and $\varepsilon=\lceil \frac{2}{p\theta}\gamma_0\rceil- \frac{2}{p\theta}\gamma_0$. Then $\alpha\in\mathbb{Z}$, $\alpha\geq \gamma_0\geq 4p+2>1$ since $p>1$. Moreover, since $\frac{2}{p\theta}\gamma_0\not\in\mathbb{Z}$, it follows that $0<\varepsilon<1$. Note $\alpha-\varepsilon=\frac{2}{p\theta}\gamma_0\Longrightarrow (\alpha-\varepsilon)\frac{p\theta}{2}=\gamma_0$, and so by Theorem \ref{interior reduction} we have
\begin{eqnarray*}
\int_{\mathcal{H}^+\cap\{v\geq1\}} v^{\gamma_0}|\partial_v\phi|^p&\leq& C\left(\int_1^\infty v^{\gamma_0 +\varepsilon\frac{p\theta}{2}}|\partial_v\phi|^p(-1,v)\,dv+\int_{\underline{C}_1^{int}\cap\{u\leq -1\}}\frac{|\partial_u\phi|^p}{\Omega^{p\theta}}\right)\\
&\leq&C\left(\int_1^\infty v^{\gamma_0 +1}|\partial_v\phi|^p(-1,v)\,dv+\int_{\underline{C}_1^{int}\cap\{u\leq -1\}}\Omega^{-p\theta}|\partial_u\phi|^p\right)\\
&\leq&C\left(\int_1^\infty v^{4p+4}|\partial_v\phi|^p(-1,v)\,dv+\int_{\underline{C}_1^{int}\cap\{u\leq -1\}}\Omega^{-p\theta}|\partial_u\phi|^p\right)
\end{eqnarray*}
for some $C(\alpha,\varepsilon,p,D,\phi)=C(p,D,\phi)>0$, since $\gamma_0<4p+3$ by definition.

Thus, substituting the last inequality into \eqref{esst20}, we have
\begin{eqnarray*}
\left(\int_{\tilde{\gamma}_R}v^{3p-1}|\partial_v(r\phi)|^p\right)^{1/p}&\leq&C+ C\left(\int_1^\infty v^{4p+4}|\partial_v\phi|^p(-1,v)\,dv+\int_{\underline{C}_1^{int}\cap\{u\leq -1\}}\Omega^{-p\theta}|\partial_u\phi|^p\right.\\
& &+\left.\int_{\underline{C}_1^{ext}\cap\{u\geq u_R(1)\}}\Omega^{-p\theta}|\partial_u\phi|^p\right)^{1/p}\\
&\leq& C(A,R,D,p,\phi)
\end{eqnarray*}
by \eqref{bounda} and \eqref{ic2}. Indeed, as $p=\frac{2}{2-\theta}$, we have
\begin{eqnarray*}
\int_{\underline{C}_1^{ext}\cap\{u\geq u_R(1)\}}\Omega^{-p\theta}|\partial_u\phi|^p=\int_{\underline{C}_1^{ext}\cap\{u\geq u_R(1)\}}\left|\frac{\partial_u\phi}{\Omega^2}\right|^p\Omega^{p(2-\theta)} &\leq& D^p\int_{\underline{C}_1^{ext}\cap\{u\geq u_R(1)\}}\Omega^2 \\
&=&D^p\int_{u_R(1)}^\infty(-\partial_ur)(u,1)\,du\\
&\leq&D^p(R-r_+),
\end{eqnarray*}
and $\int_{\underline{C}_1^{int}\cap\{u\leq -1\}}\Omega^{-p\theta}|\partial_u\phi|^p$ can be estimated similarly.

It remains now to prove \eqref{v4}. We have
\begin{eqnarray}
\sup_{\gamma_{R_1}\cap\{v\geq1\}}v^3|\phi|&\leq& \sup_{\gamma_{R_1}\cap\{1\leq v\leq R_1^*\}}v^3|\phi|+\sup_{\gamma_{R_1}\cap\{v\geq R_1^*\}}v^3|\phi|\notag\\
&\leq& C(R_1,\phi)+\sup_{\gamma_{R_1}\cap\{v\geq R_1^*\}}v^3|\phi|,\label{vlih}
\end{eqnarray}
and we use Proposition \ref{alt1} (with $R=R_1$) to estimate the second term on the right hand side. This yields 
\begin{eqnarray}
\sup_{\gamma_{R_1}\cap\{v\geq R_1^*\}}v^3|\phi|\leq C\left( \left(\int_{\gamma_{R_1}\cap\{v\geq1\}}v^{4p}|\partial_u\phi|^p\right)^{1/p}+ \left(\int_{\gamma_{R_1}\cap\{v\geq1\}}v^{4p}|\partial_v\phi|^p\right)^{1/p}\right),\,\,\,\,\label{supest}
\end{eqnarray}
where $C=C(R_1,p)$.
 Now, arguing exactly as before (for the proof of \eqref{eg}), we can estimate the right hand side of \eqref{supest} by
\begin{eqnarray*}
C(R_1,p,D,\phi)\left(\int_1^\infty v^{4p+4}|\partial_v\phi|^p(-1,v)\,dv+\int_{\underline{C}_1^{int}\cap\{u\leq -1\}}\Omega^{-p\theta}|\partial_u\phi|^p+\int_{\underline{C}_1^{ext}\cap\{u\geq u_R(1)\}}\Omega^{-p\theta}|\partial_u\phi|^p\right)^{1/p}\\
\leq C(A,R_1,D,p,\phi)\hspace{110mm}
\end{eqnarray*}
for some $C(A,R_1,D,p,\phi)>0$.
Thus by \eqref{vlih}
\begin{eqnarray*}
\sup_{\gamma_{R_1}\cap\{v\geq1\}}v^3|\phi|
&\leq& C(R_1,\phi)+C(A,R_1,D,p,\phi),
\end{eqnarray*}
which proves \eqref{v4}.
\end{proof}

The above proof will be complete once we prove  Theorem \ref{interior reduction}, Theorem \ref{lp exterior estimate}, Proposition \ref{alt2} and Proposition \ref{alt1}. The proofs of the two propositions are straightforward, but to prove Theorem \ref{interior reduction} and Theorem \ref{lp exterior estimate} we will need to interpolate $L^1$ and $L^2$-type estimates for the derivative of $\phi$ in order to deduce $L^p$-type estimates. For this, we will need some tools from interpolation theory.

\begin{remark}
We remark that the only place in the proof of Theorem \ref{reduction} where we needed the parameter restriction $\frac{2\sqrt e}{e+1}<Q<1$ was  Theorem \ref{interior reduction}. The rest of the above proof is valid for $0<Q<1$, namely on any subextremal Reissner--Nordstr\"{o}m spacetime with non-vanishing charge.
\end{remark}

\subsection{Interpolation}\label{interpintro}
As noted above, proving Theorems  \ref{interior reduction} and \ref{lp exterior estimate} requires us to prove inequalities of the form
\begin{equation}\label{exampleest}
\int_{\gamma_R\cap\{v\geq1\}} v^{(\alpha-(1+2\varepsilon))\frac{p\theta}{2}}|\partial_v\phi|^p \leq C \left(\int_ {\mathcal{H}^+\cap\{v\geq1\}} v^{\frac{\alpha p\theta}{2}}|\partial_v\phi|^p+\int_{\underline{C}_1^{ext}\cap\{u\geq u_R(1)\}} \Omega^{-p\theta}|\partial_u\phi|^p\right) 
\end{equation}
and
\[\int_{\gamma_R\cap\{v\geq1\}} v^{(\alpha-(1+2\varepsilon))\frac{p\theta}{2}}|\partial_u\phi|^p \leq C \left(\int_{\mathcal{H}^+\cap\{v\geq1\}} v^{\frac{\alpha p\theta}{2}}|\partial_v\phi|^p+\int_{\underline{C}_1^{ext}\cap\{u\geq u_R(1)\}} \Omega^{-p\theta}|\partial_u\phi|^p\right) \]
in the black hole exterior for $R>R_1$, and 
\[\int_{\mathcal{H}^+\cap\{v\geq1\}} v^{(\alpha-\varepsilon)\frac{p\theta}{2}}|\partial_v\phi|^p \leq C\left(\int_1^\infty v^{\frac{\alpha p\theta}{2}}|\partial_v\phi|^p(-1,v)\,dv+\int_{\underline{C}_1^{int}\cap\{u\leq-1\}}\Omega^{-p\theta}|\partial_u\phi|^p\right)\]
in the black hole interior (so long as $\frac{2\sqrt e}{e+1}<Q<1$), where $\alpha\in\mathbb{N}$, $\alpha>1$, $1<p<2$, $0<\theta<1$ and 
\[\frac{1}{p}=\frac{1-\theta}{1}+\frac{\theta}{2}.\]

For the case $p=2$ ($\theta=1$), similar inequalities are proved directly in \cite{LO}, but the arguments given there do not generalise to the case $1<p<2$. The reason is chiefly because the exponent $p=2$ allows one to deduce positivity of tems where it occurs, and hence deduce desirable inequalities. However, this is of course no longer true when $p\not=2$, and so a different strategy is needed to deduce the more general $L^p$-type estimates above.

Our strategy is to use real interpolation to establish these estimates. Namely, we prove endpoint estimates, that is estimates for the cases $p=1$ and $p=2$ (some of which are already established in \cite{LO}, some of which we prove), and appeal to real interpolation to deduce the intermediate estimates with $1<p<2$.

For instance, in order to prove \eqref{exampleest}, we first prove the corresponding estimates for $p=1$ (and $\theta=0$) and for $p=2$ (and $\theta=1$), that is we prove
\[\int_{\gamma_R\cap\{v\geq1\}} |\partial_v\phi| \leq C \left(\int_{\mathcal{H}^+\cap\{v\geq1\}} |\partial_v\phi|+\int_{\underline{C}_1^{ext}\cap\{u\geq u_R(1)\}} |\partial_u\phi|\right) \]
and 
\[\int_{\gamma_R\cap\{v\geq1\}} v^{\alpha-(1+2\varepsilon)}(\partial_v\phi)^2 \leq C \left(\int_{\mathcal{H}^+\cap\{v\geq1\}} v^\alpha(\partial_v\phi)^2+\int_{\underline{C}_1^{ext}\cap\{u\geq u_R(1)\}} \left(\frac{\partial_u\phi}{\Omega}\right)^2\right).\]
As we will show later, these two estimates amount to showing that an operator $T$ mappring 
\[T:(\partial_v\phi\vert_{\mathcal{H}^+\cap\{v\geq1\}},\partial_u\phi\vert_{\underline{C}_1^{ext}\cap\{u\geq u_R(1)\}})\mapsto \partial_v\phi\vert_{\gamma_R\cap\{v\geq1\}}\] 
is bounded in both a $L^1$ and a weighted $L^2$ sense. It then follows from the theory of real interpolation that this operator is bounded in a weighted $L^p$ sense, namely \eqref{exampleest}.

Background information on the K-method of real interpolation and proofs of the results we shall use are included in Appendix \ref{Interpolation Theory}. Note that standard terminology, notation and conventions from Interpolation Theory introduced there will be used throughout the remainder of this work. 

\subsubsection{Remarks on the Proof of Theorem \ref{condinstab2}}\label{condrel}
Recall that Theorem \ref{condinstab2} asserts generic  $W^{1,p}_{loc}$ blow up  of solutions to \eqref{waveequation} for the full subextremal range of black hole parameters with not vanishing charge \emph{unless} the solution map is not bounded below. On the other hand, Theorem \ref{mtv2} guarantees the generic $W^{1,p}_{loc}$ blow up of solutions for the parameter range $\frac{2\sqrt e}{e+1}<Q<1$. Thus,
in order to prove the conditional instability result Theorem \ref{condinstab2}, we examine the proof of Theorem \ref{mtv2} to identify where the proof does not work for the full parameter range and this yields the desired condition which prevents us from deducing instability for $0< Q\leq\frac{2\sqrt e}{e+1}$.

Notice that the only part of the proof of Theorem \ref{mtv2} that does not hold for the full range $0<Q<1$ is Theorem \ref{mainthm}, and in turn the only part of the proof of Theorem \ref{mainthm} that does not hold for the full range is Theorem \ref{interior reduction}, namely an estimate in the black hole interior. As mentioned above, Theorem \ref{interior reduction} is proved using interpolation theory: we establish analogous estimates to those in Theorem \ref{interior reduction} for the cases $p=1$ and $p=2$ in Section \ref{Interior}. While the estimate for $p=2$ holds for all subextremal parameters with non-vanishing charge $0<Q<1$, the estimate for $p=1$ does not. We succeed in  establishing this black hole interior $L^1$-type estimate only for the parameter range $\frac{2\sqrt e}{e+1}<Q<1$. We note that if we could extend the $L^1$-type estimate to the case $0<Q\leq \frac{2\sqrt e}{e+1}$, then we could deduce $W^{1,p}_{loc}$ instability for this range exactly as for the case $\frac{2\sqrt e}{e+1}<Q<1$. If the $L^1$ estimates do not hold for this range however, then our method of proof does not guarantee instability. We therefore conclude $W^{1,p}_{loc}$ instability unless  the $L^1$ estimate is false, precisely the condition given in Theorem \ref{condinstab2}. We give the proof of Theorem \ref{condinstab2} in detail in Section \ref{condrel2}, after the proof of the interior $L^p$ estimate for the case $\frac{2\sqrt e}{e+1}<Q<1$, namely Theorem \ref{interior reduction}. 

\subsection*{Outline}
In the next Section, Section \ref{Interior}, we consider the black hole interior and prove the necessary $L^1$ and $L^2$-type estimates for this region. We then use real interpolation to conclude Theorem \ref{interior reduction}. We also give the proof of the conditional instability result, Theorem \ref{condinstab2}.

In Section \ref{Exterior}, the focus is the black hole exterior. Here we prove the relevant $L^1$ and $L^2$ estimates for the  exterior region and then interpolate between them to deduce Theorem \ref{lp exterior estimate}. We also prove Propositions \ref{alt2} and \ref{alt1} and thus close the proof of Theorem \ref{reduction}.

Appendix \ref{Interpolation Theory} is devoted to the K-method of real interpolation, and discusses all the background material needed to employ it in the proofs of Theorems \ref{interior reduction} and \ref{lp exterior estimate}. Finally, Appendix \ref{Switch} contains the proof of a technical lemma needed for the proof of Theorem \ref{interior reduction}.

\newpage
\section{Estimates in the Black Hole Interior}\label{Interior}
Recall that in order to close the proof of Theorem \ref{reduction}, we need to prove Theorem \ref{interior reduction}, Theorem \ref{lp exterior estimate}, Proposition \ref{alt2} and Proposition \ref{alt1}. The latter three are results pertaining to the black hole exterior, and we defer their proofs to the next section, and now focus on proving Theorem \ref{interior reduction}, which relates to the black hole interior.

\subsection{Proof of Theorem \ref{interior reduction}}

In this section, we prove Theorem \ref{interior reduction},  which we restate here for convenience. 

\begin{theorem*}
Assume $\frac{2\sqrt e}{e+1}<Q<1$, $1<p<2$ and $0<\theta<1$ are such that
\[\frac{1}{p}=\frac{1-\theta}{1}+\frac{\theta}{2}.\]
Let $\alpha>1$ be an integer and $0<\varepsilon<1$. Then, given  a solution $\phi$ of the wave equation \eqref{waveequation} with smooth, spherically symmetric data on $S$ satisfying \eqref{ic2}, there exists a constant $C=C(\alpha,\varepsilon,p,D,\phi)>0$ such that in the black hole interior
\[\int_{\mathcal{H}^+\cap\{v\geq1\}} v^{(\alpha-\varepsilon)\frac{p\theta}{2}}|\partial_v\phi|^p \leq C\left(\int_1^\infty v^{\frac{\alpha p\theta}{2}}|\partial_v\phi|^p(-1,v)\,dv+\int_{\underline{C}_1^{int}\cap\{u\leq -1\}}\frac{|\partial_u\phi|^p}{\Omega^{p\theta}}\right).\]
\end{theorem*}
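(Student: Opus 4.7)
The approach is via the K-method of real interpolation (Appendix~\ref{Interpolation Theory}), applied to the datum-to-horizon linear operator
\[
T \colon \bigl(\partial_v\phi|_{C_{-1}\cap\{v\geq 1\}},\; \partial_u\phi|_{\underline{C}_1^{\mathrm{int}}\cap\{u\leq -1\}}\bigr) \longmapsto \partial_v\phi|_{\mathcal{H}^+\cap\{v\geq 1\}}.
\]
I will establish boundedness of $T$ at two endpoints: unweighted $L^1\oplus L^1\to L^1$ (the case $\theta=0$, i.e.\ $p=1$), and $L^2(v^{\alpha}dv)\oplus L^2(\Omega^{-2}du)\to L^2(v^{\alpha-\varepsilon}dv)$ (the case $\theta=1$, i.e.\ $p=2$). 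Stein--Weiss-style interpolation of the endpoint weights produces precisely the weights $v^{\alpha p\theta/2}$, $v^{(\alpha-\varepsilon)p\theta/2}$, and $\Omega^{-p\theta}$ demanded by Theorem~\ref{interior reduction}, with the exponent relation $\tfrac1p=\tfrac{1-\theta}{1}+\tfrac{\theta}{2}$ enforced by real interpolation.

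The $L^2$ endpoint is, up to minor bookkeeping, the inductive energy estimate established in \cite{LO} by a vector-field multiplier of the form $v^{\alpha}\partial_v$; it is valid on the full subextremal range $0<Q<1$. The inductive nature of that argument accounts for the hypothesis that $\alpha$ be a positive integer, and the $\varepsilon$-loss in the weight on the left-hand side of Theorem~\ref{interior reduction} is inherited directly from the corresponding loss at this endpoint. The $L^1$ endpoint, by contrast, is the main obstacle and is exactly where the condition $\tfrac{2\sqrt e}{e+1}<Q<1$, equivalently $\ell:=\log(r_+/r_-)<1$, enters. Setting $X:=r\,\partial_v\phi$ and $Y:=r\,\partial_u\phi$ and using $\partial_u r=\partial_v r=-\Omega^2$ together with \eqref{interiorwave}, one derives the first-order system
\[
\partial_u X = \frac{\Omega^2}{r}\, Y, \qquad \partial_v Y = \frac{\Omega^2}{r}\, X.
\]
The key observation is that $\Omega^2/r=-\partial_u(\log r)=-\partial_v(\log r)$, yielding the uniform bounds
\[
\int_{-\infty}^{-1}\frac{\Omega^2(u',v)}{r(u',v)}\,du'\leq \ell,\qquad \int_{1}^{\infty}\frac{\Omega^2(u,v')}{r(u,v')}\,dv'\leq \ell.
\]
Integrating the system along constant-$v$ and constant-$u$ rays, substituting each bound into the other, and swapping the order of integration by Tonelli yields a coupled Volterra-type integral inequality for the $L^1$ norms of $X$ and $Y$ whose kernels are controlled by $\ell$; because $\ell<1$, the associated Neumann series converges and the endpoint estimate
\[
\int_{\mathcal{H}^+\cap\{v\geq 1\}}|\partial_v\phi|\;\leq\; C\left(\int_1^{\infty}|\partial_v\phi|(-1,v)\,dv+\int_{\underline{C}_1^{\mathrm{int}}\cap\{u\leq -1\}}|\partial_u\phi|\right)
\]
follows. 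The principal difficulty lies here: the cross terms in which $\Omega^2|\partial_u\phi|$ must be integrated along an arbitrary constant-$u$ hypersurface require a second application of the system to reduce to data norms, and only the smallness $\ell<1$ prevents these iterated contributions from accumulating.

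Once both endpoints are in place, Theorem~\ref{interior reduction} follows by invoking the K-method of real interpolation to upgrade the boundedness of $T$ to the intermediate weighted $L^p$ spaces. The parameter restriction $\tfrac{2\sqrt e}{e+1}<Q<1$ enters only through the $L^1$ endpoint; if it fails, one cannot close that endpoint estimate and the interpolation argument breaks, which is precisely the phenomenon underlying the conditional Theorem~\ref{condinstab2}.
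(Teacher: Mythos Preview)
Your proposal is correct, but it takes a genuinely different route from the paper's proof of this particular theorem---interestingly, it is essentially the route the paper uses for the \emph{exterior} analogue (Theorem~\ref{lp exterior estimate}) rather than for Theorem~\ref{interior reduction} itself.

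The paper does not interpolate the two-variable operator $T\colon(\partial_v\phi|_{C_{-1}},\partial_u\phi|_{\underline{C}_1})\mapsto\partial_v\phi|_{\mathcal{H}^+}$ directly. Instead it splits $\phi=\phi_1+\phi_2+c$, where $\phi_1$ carries the $\partial_v$-data and has vanishing $\partial_u$-data on $\underline{C}_1^{int}$, and $\phi_2$ carries the $\partial_u$-data and has vanishing $\partial_v$-data on $C_{-1}$. The estimate for $\phi_2$ is obtained \emph{without} interpolation: one applies the $L^2$ estimate (Proposition~\ref{intl2}) and then uses H\"older together with the data bound~\eqref{ic2} to convert the resulting $L^2$-type right-hand side into the desired $L^p$-type term; this is precisely where the dependence of $C$ on $D$ and $\phi$ enters. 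Only for $\phi_1$ does the paper interpolate, and there the operator is the one-variable map $f\mapsto\partial_v\varphi|_{\mathcal{H}^+}$ with $\partial_u\varphi|_{\underline{C}_1}=0$, so the interpolation couples are simple weighted $L^p$ spaces on $[1,\infty)$ rather than product spaces.

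Your approach in fact yields a slightly stronger conclusion: the constant would depend only on $(\alpha,\varepsilon,p)$, not on $D$ or $\phi$, and the hypothesis~\eqref{ic2} would be unnecessary. The paper explicitly acknowledges this in Remark~\ref{rem}. What the paper's splitting buys is transparency for the conditional Theorem~\ref{condinstab2}: by isolating the interpolation to solutions with $\partial_u\phi|_{\underline{C}_1}=0$, the failure mode of the argument is pinned down to exactly the $L^1$ endpoint for that restricted class, which is the condition stated in Theorem~\ref{condinstab2}. Your product-space interpolation would still work for the instability result, but the link to the conditional statement would be slightly less direct.
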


\noindent Notice that this is a result about the black hole interior, and so for the rest of this section we shall be working in the coordinate system for the interior region introduced in Section \ref{interior coordinates}.

In \cite{LO}, Luk--Oh proved an analogous result for the case $p=2$. As mentioned above, the direct argument they used to establish the $L^2$-type estimate does not go through to the case $1<p<2$, and so to prove the $L^p$-type estimate, we instead prove an $L^1$-type estimate and interpolate between this and the $L^2$-type estimate of \cite{LO}.

 In the next subsection we establish the $L^1$-type estimate. Then we give the statement of the $L^2$-type estimate deduced in \cite{LO} and provide a proof  (as the structure of the estimate was not explicitly stated in \cite{LO} as it was only necessary to establish finiteness there). Finally, we interpolate between the $L^1$ and $L^2$ estimates to deduce the required $L^p$-type estimate above.

\subsubsection{$L^1$ Estimates}\label{interior l1 estimates}

The $L^1$-type estimates that we need are proved using ideas from the proof of Proposition 13.1 of \cite{D}.

\begin{proposition}\label{intl1}
 Suppose $\frac{2\sqrt{e}}{e+1}<Q<1$ and let $\phi$ be a solution of the wave equation \eqref{waveequation} with smooth, spherically symmetric data on $S$. Then there exists $C>0$ (depending only on the spacetime parameters) such that
\begin{equation}\label{star}
\int_{\mathcal{H}^+\cap\{v\geq1\}}|\partial_v\phi|\leq C\left(\int_1^\infty|\partial_v\phi|(-1,v)\,dv+\int_{\underline{C}_1^{int}\cap\{u\leq-1\}}|\partial_u\phi|\right).
\end{equation}
\end{proposition}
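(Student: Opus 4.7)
The strategy will be to exploit the transport structure of the wave equation for the renormalized quantities $F := r\partial_v\phi$ and $G := r\partial_u\phi$. Using $\partial_u r = \partial_v r = -\Omega^2$ in the black hole interior together with the spherically symmetric wave equation \eqref{interiorwave}, a direct calculation gives
\[
\partial_u F = \frac{\Omega^2}{r}\,G, \qquad \partial_v G = \frac{\Omega^2}{r}\,F.
\]
Integrating these identities from $u=-1$ and $v=1$ respectively, and passing to absolute values, yields the pointwise transport bounds
\[
|F|(u,v) \leq |F|(-1,v) + \int_u^{-1}\frac{\Omega^2}{r}|G|(u',v)\,du', \qquad |G|(u,v) \leq |G|(u,1) + \int_1^v \frac{\Omega^2}{r}|F|(u,v')\,dv'.
\]
The crucial geometric observation is that $\Omega^2/r = -\partial_u\log r = -\partial_v\log r$ in the interior, so that the integral of $\Omega^2/r$ along any constant-$v$ (or constant-$u$) null segment telescopes and is uniformly bounded by $L := \log(r_+/r_-)$, since $r$ takes values in $[r_-,r_+]$. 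An elementary computation shows that the hypothesis $\frac{2\sqrt e}{e+1} < Q < 1$ is exactly equivalent to $L < 1$, and this inequality is precisely the quantitative input that will make the forthcoming bootstrap close.

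Equipped with the transport bounds, the plan is to introduce the coupled pair of nonnegative spacetime integrals
\[
\tilde A := \int_{-\infty}^{-1}\!\int_1^\infty \frac{\Omega^2}{r}|F|\,dv\,du, \qquad \tilde B := \int_{-\infty}^{-1}\!\int_1^\infty \frac{\Omega^2}{r}|G|\,dv\,du,
\]
noting that $\tilde B = \int_{-\infty}^{-1}\int_1^\infty \Omega^2|\partial_u\phi|\,dv\,du$ is, modulo boundary factors of $r$, exactly the spacetime quantity that needs to be controlled. Substituting the pointwise transport bounds into the integrands defining $\tilde A$ and $\tilde B$, interchanging the order of integration by Fubini (legal since all integrands are nonnegative), and applying the telescoping bound $\int \Omega^2/r \leq L$ along both characteristic directions produces the coupled system
\[
\tilde A \leq L\,D_F + L\,\tilde B, \qquad \tilde B \leq L\,D_G + L\,\tilde A,
\]
where $D_F := \int_1^\infty |F|(-1,v)\,dv$ and $D_G := \int_{-\infty}^{-1}|G|(u,1)\,du$ are each controlled by $r_+$ times the corresponding data term on the right-hand side of \eqref{star}. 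Eliminating $\tilde B$ gives $(1-L^2)\tilde A \leq L\,D_F + L^2 D_G$, which inverts to produce finite bounds on $\tilde A$ and $\tilde B$ precisely because $L<1$. This inversion is the main technical obstacle and is the sole source of the parameter restriction $\frac{2\sqrt e}{e+1} < Q < 1$.

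To conclude, one takes $u \to -\infty$ in the pointwise bound for $|F|$, a limit justified by the smooth extension of $\phi$ across the event horizon for smooth data, and integrates over $v \geq 1$ to obtain
\[
r_+\int_{\mathcal{H}^+\cap\{v\geq 1\}}|\partial_v\phi|\,dv \leq D_F + \tilde B,
\]
which, combined with the closed bound on $\tilde B$, delivers \eqref{star} with a constant depending only on the spacetime parameters $r_+$ and $r_-$.
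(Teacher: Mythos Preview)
Your proposal is correct and follows essentially the same approach as the paper: both use the transport structure $\partial_u(r\partial_v\phi)=\Omega^2\partial_u\phi$, $\partial_v(r\partial_u\phi)=\Omega^2\partial_v\phi$, the key geometric bound $\int \Omega^2/r \leq \log(r_+/r_-)=:L$ along characteristics, and a coupled bootstrap that closes precisely when $L<1$. The only cosmetic difference is the choice of bootstrap quantity: the paper couples $\int_1^\infty \sup_u r|\partial_v\phi|\,dv$ with $\int_{-\infty}^{-1}\sup_v r|\partial_u\phi|\,du$, whereas you couple the weighted spacetime integrals $\tilde A,\tilde B$; both yield the same $L^2<1$ closing condition and the same final estimate, and both tacitly rely on a routine truncation to justify the subtraction step.
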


\begin{proof}
Recall from \eqref{ln2} that in the black hole interior we have $\lambda=\partial_vr=-\Omega^2\leq 0$ and $\nu=\partial_ur=-\Omega^2\leq 0$. Note that we are in spherical symmetry and may rewrite the wave equation \eqref{extwe} as
\[\partial_u(r\partial_v\phi)=-\lambda\partial_u\phi.\]
For each $v\geq 1$, we integrate over $[u,-1]\times\{v\}$ to deduce
\[r|\partial_v\phi|(u,v)\leq r|\partial_v\phi|(-1,v)+\int_u^{-1}|\lambda\partial_u\phi|\,du,\]
and hence
\begin{eqnarray*}
\sup_{u\in[-\infty,-1]}r|\partial_v\phi|(u,v) \leq r|\partial_v\phi|(-1,v)+\int_{-\infty}^{-1}\frac{|\lambda|}{r} r|\partial_u\phi|\,du.
\end{eqnarray*}
Integrating over $v\in[1,\infty)$ gives
\begin{eqnarray}
\int_1^\infty\sup_{u\in[-\infty,-1]}r|\partial_v\phi|(u,v)\,dv &\leq&\int_1^\infty r|\partial_v\phi|(-1,v)\,dv+\int_1^\infty \int_{-\infty}^{-1}\frac{|\lambda|}{r} r|\partial_u\phi|\,du\,dv.\notag
\end{eqnarray}
But
\begin{eqnarray}
\int_1^\infty \int_{-\infty}^{-1}\frac{|\lambda|}{r} r|\partial_u\phi|\,du\,dv&\leq& \sup_{u\in[-\infty,-1]}\int_1^\infty\frac{|\lambda|}{r}\,dv \int_{-\infty}^{-1}\sup_{v\in[1,\infty)}r|\partial_u\phi|\,du\notag\\
&=&\sup_{u\in[-\infty,-1]}\int_1^\infty \frac{-\partial_vr}{r}\,dv\int_{-\infty}^{-1}\sup_{v\in[1,\infty)}r|\partial_u\phi|\,du\notag\\
&\leq&\log\frac{r_+}{r_-}\int_{-\infty}^{-1}\sup_{v\in[1,\infty)}r|\partial_u\phi|\,du,\notag
\end{eqnarray}
and hence
\begin{equation}\label{supu}
\int_1^\infty\sup_{u\in[-\infty,-1]}r|\partial_v\phi|(u,v)\,dv \leq\int_1^\infty r|\partial_v\phi|(-1,v)\,dv + \log\frac{r_+}{r_-}\int_{-\infty}^{-1}\sup_{v\in[1,\infty)}r|\partial_u\phi|\,du.
\end{equation}

Similarly, however, we may write the wave equation \eqref{extwe} as
\[\partial_v(r\partial_u\phi)=-\nu\partial_v\phi.\]
For each $u\leq -1$, we integrate over $\{u\}\times[1,v']$ to get
\[r|\partial_u\phi|(u,v')\leq r|\partial_u\phi|(u,1)+\int_1^{v'}|\nu\partial_v\phi|\,dv.\]
Hence, taking the supremum over $v'\geq1$ and integrating with respect to $u$ yields
\[\int_{-\infty}^{-1}\sup_{v\in[1,\infty)}r|\partial_u\phi|(u,v)\,du \leq \int_{-\infty}^{-1}r|\partial_u\phi|(u,1)\,du+\int_{-\infty}^{-1}\int_1^{\infty}|\nu\partial_v\phi|\,dv\,du.\]
Now,
\begin{eqnarray*}
\int_{-\infty}^{-1}\int_1^{\infty}|\nu\partial_v\phi|\,dv\,du&\leq&\sup_{v\in[1,\infty]}\int_{-\infty}^{-1}\frac{|\nu|}{r}\,du\int_1^\infty\sup_{u\in[-\infty,-1]}r|\partial_v\phi|\,dv\\
&\leq&\sup_{v\in[1,\infty]}\int_{-\infty}^{-1}\frac{-\partial_ur}{r}\,du\int_1^\infty\sup_{u\in[-\infty,-1]}r|\partial_v\phi|\,dv\\
&\leq&\log\frac{r_+}{r_-}\int_1^\infty\sup_{u\in[-\infty,-1]}r|\partial_v\phi|\,dv,
\end{eqnarray*}
so 
\[\int_{-\infty}^{-1}\sup_{v\in[1,\infty)}r|\partial_u\phi|(u,v)\,du \leq \int_{-\infty}^{-1}r|\partial_u\phi|(u,1)\,du+\log\frac{r_+}{r_-}\int_1^\infty\sup_{u\in[-\infty,-1]}r|\partial_v\phi|\,dv.\]
Combining this with \eqref{supu} yields
\begin{eqnarray*}
\int_1^\infty\sup_{u\in[-\infty,-1]}r|\partial_v\phi|(u,v)\,dv &\leq&\int_1^\infty r|\partial_v\phi|(-1,v)\,dv + \log\frac{r_+}{r_-}\int_{-\infty}^{-1}r|\partial_u\phi|(u,1)\,du\\
& & +\left( \log\frac{r_+}{r_-}\right)^2\int_1^\infty\sup_{u\in[-\infty,-1]}r|\partial_v\phi|\,dv.
\end{eqnarray*}
Thus, provided $ (\log\frac{r_+}{r_-})^2<1$ (or equivalently $\log\frac{r_+}{r_-}<1$, as $r_+>r_-$), we have
\[\int_1^\infty\sup_{u\in[-\infty,-1]}r|\partial_v\phi|(u,v)\,dv \leq C\left(\int_1^\infty r|\partial_v\phi|(-1,v)\,dv + \int_{-\infty}^{-1}r|\partial_u\phi|(u,1)\,du\right)\]
for some $C>0$ depending only on $r_+,r_-$ (and hence only on the spacetime parameters). Thus
\[\int_1^\infty r|\partial_v\phi|(-\infty,v)\,dv \leq C\left(\int_1^\infty r|\partial_v\phi|(-1,v)\,dv + \int_{-\infty}^{-1}r|\partial_u\phi|(u,1)\,du\right),\]
which gives the desired inequality as $r_-\leq r(u,v)\leq r_+$ in the black hole interior.

Now, recalling that $Q:=\frac{|q|}{M}$, we see that
\begin{eqnarray*}
\log\frac{r_+}{r_-}<1 \iff\frac{M+\sqrt{M^2-q^2}}{M-\sqrt{M^2-q^2}}<e
&\iff&\frac{(M+\sqrt{M^2-q^2})^2}{M^2-(M^2-q^2)}<e\\
&\iff& \frac{M+\sqrt{M^2-q^2}}{|q|}<\sqrt e\\
&\iff&\frac{1}{Q}+\sqrt{\frac{1}{Q^2}-1}<\sqrt e\\
&\iff&Q>\frac{2\sqrt e}{e+1},
\end{eqnarray*}
so \eqref{star} holds for $\frac{2\sqrt{e}}{e+1}<Q<1$, as desired.
\end{proof}

\begin{remark}
We emphasise that this is the only point in the entire proof of Theorem \ref{mtv2} that we need the assumption $\log\frac{r_+}{r_-}<1$ or equivalently $\frac{2\sqrt e}{e+1}<Q<1$.
\end{remark}

\subsubsection{$L^2$ Estimates}\label{star8}

We now give the $L^2$ type estimate which we shall need. 

\begin{proposition}\label{intl2}
Let $\alpha>0$ be an integer and $0<\varepsilon<1$. Then, given a solution $\phi$ of the wave equation \eqref{waveequation} with smooth, spherically symmetric data on $S$, there exists $C=C(\alpha,\varepsilon)>0$ such that
\begin{eqnarray}
\int_{\mathcal{H}^+\{v\geq1\}} v^{\alpha-\varepsilon}(\partial_v\phi)^2\leq C\left(\int_1^\infty v^\alpha(\partial_v\phi)^2(-1,v)\,dv+\int_{\underline{C}_1^{int}\cap\{u\leq -1\}}  \big|\frac{\partial_u\phi}{\Omega}\big|^2\right). \hspace{2mm}\label{el2}
\end{eqnarray}
\end{proposition}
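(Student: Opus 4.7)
The plan is to prove Proposition \ref{intl2} by induction on the positive integer $\alpha$, using a weighted energy estimate derived from multiplying the wave equation in the form $\partial_u(r\partial_v\phi)=\Omega^2\partial_u\phi$ by $2v^{\alpha-\varepsilon}r\partial_v\phi$, which yields
\[\partial_u\bigl[v^{\alpha-\varepsilon}(r\partial_v\phi)^2\bigr]=2v^{\alpha-\varepsilon}\Omega^2 r\partial_u\phi\,\partial_v\phi.\]
Integrating over the rectangle $[-U,-1]\times[1,V]$ in the interior and sending $U,V\to\infty$ produces the desired left-hand side $\int_1^\infty v^{\alpha-\varepsilon}(r\partial_v\phi)^2(-\infty,v)\,dv$, a data boundary term at $u=-1$ (bounded directly by $\int_1^\infty v^\alpha(\partial_v\phi)^2(-1,v)\,dv$ since $r\le r_+$ and $v^{\alpha-\varepsilon}\le v^\alpha$ for $v\ge 1$), and a spacetime bulk term $\iint v^{\alpha-\varepsilon}\Omega^2 r\partial_u\phi\,\partial_v\phi\,du\,dv$ that must be absorbed.

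For the base case $\alpha=1$, I would apply a Cauchy--Schwarz splitting
\[2|\Omega^2 r\partial_u\phi\,\partial_v\phi|\le \delta\,\Omega^2 r^2(\partial_v\phi)^2 + \delta^{-1}\Omega^2(\partial_u\phi)^2\]
to the bulk term. The first piece, after noting that $\int_{-\infty}^{-1}\Omega^2\,du = r_+-r(-1,v)\le r_+-r_-$ in the interior, is handled by a Gronwall argument in $u$ along each $\underline{C}_v^{int}$, closing an estimate for $(r\partial_v\phi)^2(-\infty,v)$ in terms of $(r\partial_v\phi)^2(-1,v)$ and a $u$-integral of $\Omega^2(\partial_u\phi)^2$. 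The remaining spacetime $(\partial_u\phi)^2$ integral is in turn controlled by the data term $\int_{\underline{C}_1^{int}}|\partial_u\phi/\Omega|^2\,du$ by propagating the dual form $\partial_v(r\partial_u\phi)=\Omega^2\partial_v\phi$ of the wave equation in $v$ from $\underline{C}_1^{int}$; absorption of the $\delta$-piece back into the left-hand side closes the estimate.

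For the inductive step $\alpha-1\to\alpha$, the new ingredient is handling the additional power of $v$. After applying Cauchy--Schwarz as above, integration by parts in $v$ against $v^{\alpha-\varepsilon}$ in the $(\partial_v\phi)^2$ piece (using that the leading exponent $\alpha$ is a positive integer) produces boundary contributions together with bulk terms at the reduced weight $v^{\alpha-1-\varepsilon}$, which are controlled by the inductive hypothesis. Strict positivity of $\varepsilon$ is essential at each step to prevent logarithmic losses and to ensure the induction constants remain finite, explaining the $\varepsilon$-loss in the final LHS weight.

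The main obstacle is the $(\partial_u\phi)^2$ piece produced by the Cauchy--Schwarz splitting, which must ultimately be traded for the initial-data quantity $\int_{\underline{C}_1^{int}}|\partial_u\phi/\Omega|^2\,du$ via a coupled estimate running in the opposite null direction. In contrast to the $L^1$ case of Proposition \ref{intl1}, where closing this coupling required the parameter restriction $\log(r_+/r_-)<1$, at the $L^2$ level the Cauchy--Schwarz symmetrisation supplies an extra factor of $\Omega^2$ which is integrable in $u$ uniformly over the interior without any smallness assumption; this is precisely why the $L^2$ estimate holds on the full subextremal range $0<Q<1$, whereas the $L^1$ estimate does not.
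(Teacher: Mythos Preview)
Your multiplier identity is correct, but the argument does not close. The difficulty is precisely the step where you claim to control the spacetime term $\iint v^{\alpha-\varepsilon}\Omega^2(\partial_u\phi)^2\,du\,dv$ by the $\underline{C}_1^{int}$ data via the dual equation. Gronwall in $v$ from $v=1$ gives
\[(r\partial_u\phi)^2(u,v)\le C\Bigl[(r\partial_u\phi)^2(u,1)+\int_1^{v}\Omega^2(\partial_v\phi)^2(u,v')\,dv'\Bigr],\]
and after multiplying by $v^{\alpha-\varepsilon}\Omega^2$ and integrating you pick up, besides the data contribution, a feedback spacetime $(\partial_v\phi)^2$-term. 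This feedback cannot be absorbed: even in the unweighted case the product of the two iterated Gronwall constants is of order $e^{C(r_+-r_-)}(r_+-r_-)^2/r_-^2$, which is \emph{not} small for general $0<Q<1$; the extra $\Omega^2$ you point to supplies integrability, not a small constant, so the circularity is the same obstruction as in the $L^1$ case. With positive $v$-weights the situation is strictly worse, since $\int_1^\infty v^{\alpha-\varepsilon}\Omega^2(u,v)\,dv\sim|u|^{\alpha-\varepsilon}$ as $u\to-\infty$ (the $\Omega^2$-mass in $v$ concentrates near $v\sim|u|$), so the feedback term is not even comparable to the bulk you started from. Your proposed integration by parts in $v$ for the inductive step also does not lower the weight on $(\partial_v\phi)^2$ without producing $\partial_v^2\phi$.

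The paper's proof is conceptually different. It invokes Proposition~\ref{takenwholesale} (Proposition~4.5 of \cite{LO}), an energy estimate built from genuine redshift multipliers: a weight $\chi_2(r)\Omega^{-2}$ on $(\partial_u\phi)^2$ near the event horizon and a weight $\chi_1(r)\log^\beta(1/\Omega)$ on $(\partial_v\phi)^2$ near the Cauchy horizon. The coercive bulk term $\iint\chi_2\Omega^{-2}(\partial_u\phi)^2$ is what breaks the circularity and holds for every $0<Q<1$. The polynomial weight $v^\alpha$ on $\mathcal{H}^+$ is then obtained not by integration by parts but by a pigeonhole--dyadic induction: one proves $\tau^n\int_{\mathcal{H}^+(\tau,\infty)}(\partial_v\phi)^2\le\mathcal{I}_n$ for $n=0,\dots,\alpha$, at each step pigeonholing a good slice $\tau_k\in[2^k,2^{k+1}]$ on which the spacetime term with weight $\log^{\alpha-n}(1/\Omega)$ is small, re-applying Proposition~\ref{takenwholesale} with $\beta=\alpha-n$, and summing dyadically. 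The $\varepsilon$-loss comes from the dyadic sum $\sum_k k^{-2}$ (equivalently a factor $\log^{-2}(1+v)$), not from any boundary term in an integration by parts.
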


This result is proved in \cite{LO}, though it is not formulated in the form we have stated it above: in \cite{LO} the aim was to show that the left hand side of \eqref{el2} can be bounded under suitable conditions, so the form of the right hand side was not important. For us however, the form of the right hand side of the estimate is of crucial importance (as we intend to interpolate between it and the $L^1$ estimate of the previous section). Consequently, we provide the proof below so as to emphasise the form of the estimate, but we stress that the proof is taken from \cite{LO} (in particular Propositions 4.5 and 4.6 of \cite{LO}). Before giving the proof however we introduce some notation and definitions from \cite{LO} that will be needed.

Given $\tau\geq1$, we set
\[\Gamma_\tau^{(1)}=\{(-\tau,v):v\geq\tau\}\]
and
\[\Gamma_\tau^{(2)}=\{(u,\tau):u\leq-\tau\}.\]
For $1\leq \tau\leq\tau'$, we define
\[\mathcal{H}^+(\tau,\tau')=\mathcal{H}^+\cap\{\tau\leq v\leq\tau'\}\]
and 
\[\mathcal{CH}^+(\tau,\tau')=\mathcal{CH}^+\cap\{-\tau'\leq u\leq -\tau\}.\] 
We denote by $\mathcal{D}(\tau,\tau')$ the spacetime region enclosed by $\Gamma_\tau^{(1)}$, $\Gamma_\tau^{(2)}$,  $\mathcal{H}^+(\tau,\tau')$, $\mathcal{CH}^+(\tau,\tau')$, $\Gamma_{\tau'}^{(1)}$ and $\Gamma_{\tau'}^{(2)}$. These objects are illustrated in the Penrose diagram below. 
\begin{figure}[h]
\centering
\includegraphics[scale=0.45]{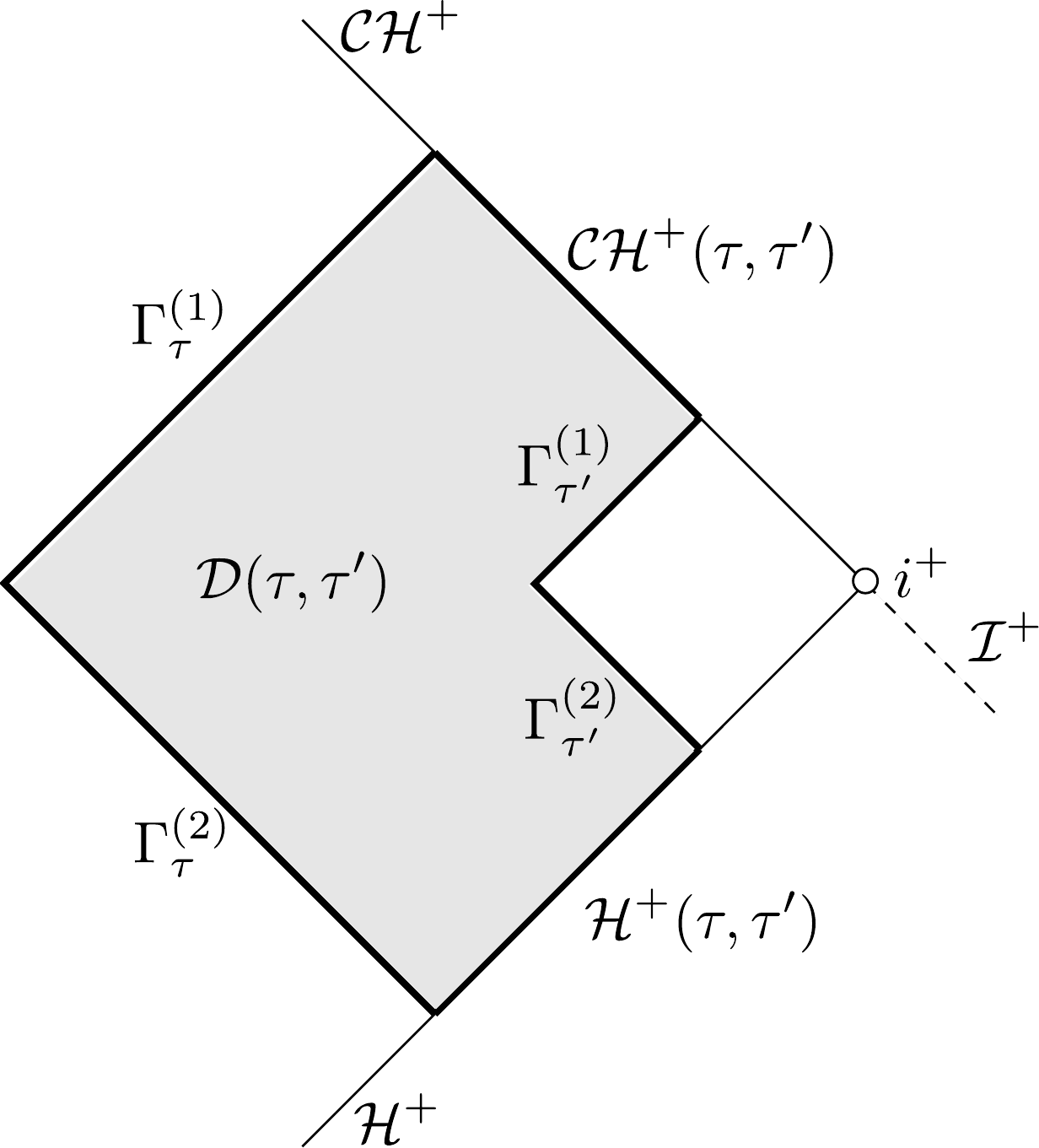}
\caption{The region $\mathcal{D}(\tau,\tau')$.}
\label{newnotation}
\end{figure}

Finally, we let $\chi_1$ and $\chi_2$ be smooth cutoff functions near the Cauchy horizon and the event horizon respectively, given by
\[\chi_1(r):=\left\{\begin{array}{ll}1&\mbox{if }r_-\leq r\leq r^{(1)}\\0&\mbox{if }r\geq r^{(1)}+(r^{(1)}-r_-)\end{array}\right. \hspace{5mm}\mbox{for }r^{(1)}>r_-\]
and 
\[\chi_2(r):=\left\{\begin{array}{ll}1&\mbox{if }r^{(2)}\leq r\leq r_+\\0&\mbox{if }r\leq r^{(2)}-(r_+-r^{(2)})\end{array}\right. \hspace{5mm}\mbox{for }r^{(2)}<r_+.\]
 We state Proposition 4.5 of \cite{LO} directly below as the proof of Propostion \ref{intl2} relies on it.
 
 \begin{proposition}\label{takenwholesale}
Let $1\leq \tau\leq\tau'$ and let $\phi$ be a smooth, spherically symmetric solution of the wave equation \eqref{waveequation}. Then for every $\beta\geq0$, there exists a constant $C=C(\beta)>0$ such that if  $r^{(1)}$ and $r^{(2)}$ are sufficiently close to $r_-$ and $r_+$ (independently of $\beta$), then
 \begin{eqnarray}
 \int_{\Gamma_{\tau'}^{(1)}}\left(1+\chi_1(r)\log^\beta(\frac{1}{\Omega})\right)(\partial_v\phi)^2 + \int_{\Gamma_{\tau'}^{(2)}}\Omega^{-2}(\partial_u\phi)^2 + \int_{\mathcal{CH}^+(\tau,\tau')}(\partial_u\phi)^2  +\int_{\mathcal{H}^+(\tau,\tau')}(\partial_v\phi)^2\notag\\
+ \iint_{\mathcal{D}(\tau,\tau')}\left(\Omega^2+\beta\chi_1(r)\log^{\beta-1}(\frac{1}{\Omega})\right)(\partial_v\phi)^2 
 +\left(\Omega^2+\chi_2(r)\Omega^{-2}\right)(\partial_u\phi)^2\notag\hspace{15mm}\\
 \leq C\left(\int_{\Gamma_\tau^{(1)}}\left(1+\chi_1(r)\log^\beta(\frac{1}{\Omega})\right)(\partial_v\phi)^2 + \int_{\Gamma_{\tau}^{(2)}}\Omega^{-2}(\partial_u\phi)^2\right).\label{twe}
 \end{eqnarray}
 \end{proposition}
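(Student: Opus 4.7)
The plan is to apply the divergence theorem on $\mathcal D(\tau,\tau')$ with two spherically symmetric multipliers, $w_v\,\partial_v\phi$ and $w_u\,\partial_u\phi$, where
\[
w_v:=1+\chi_1(r)\log^\beta(1/\Omega),\qquad w_u:=1+\chi_2(r)\Omega^{-2}.
\]
These weights are designed so that $w_v\equiv 1$ near $\mathcal{H}^+$ (since $\chi_1\equiv 0$ for $r$ close to $r_+$) and $w_u\equiv 1$ near $\mathcal{CH}^+$ (since $\chi_2\equiv 0$ for $r$ close to $r_-$); this makes the boundary integrands on these horizons come out unweighted, matching the corresponding terms on the left-hand side of \eqref{twe}.

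Starting from the two equivalent first-order forms of the spherically symmetric wave equation, $\partial_u(r\partial_v\phi)=\Omega^2\partial_u\phi$ and $\partial_v(r\partial_u\phi)=\Omega^2\partial_v\phi$, and using $\partial_u r=\partial_v r=-\Omega^2$, a direct manipulation yields
\[
\tfrac12\partial_u\bigl(rw_v(\partial_v\phi)^2\bigr)=\bigl(\tfrac r2\partial_u w_v+\tfrac12\Omega^2 w_v\bigr)(\partial_v\phi)^2+w_v\Omega^2\,\partial_u\phi\,\partial_v\phi,
\]
\[
\tfrac12\partial_v\bigl(rw_u(\partial_u\phi)^2\bigr)=\bigl(\tfrac r2\partial_v w_u+\tfrac12\Omega^2 w_u\bigr)(\partial_u\phi)^2+w_u\Omega^2\,\partial_u\phi\,\partial_v\phi.
\]
Using $\partial_u\log(1/\Omega)\to|\kappa_-|$ near $\mathcal{CH}^+$ together with $\chi_1'\le 0$ and $\partial_u r=-\Omega^2$, a direct computation shows the first bulk coefficient is bounded below by $c\bigl(\Omega^2+\beta\chi_1\log^{\beta-1}(1/\Omega)\bigr)$; using $d\Omega^2/dr|_{r=r_+}=-2\kappa_+$, an analogous computation shows that after moving to the left-hand side the second bulk coefficient supplies the coercive contribution $c\bigl(\Omega^2+\chi_2\Omega^{-2}\bigr)(\partial_u\phi)^2$. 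Integrating the two identities over the hexagonal region $\mathcal D(\tau,\tau')$ and applying Fubini together with the fundamental theorem of calculus in $u$ (for the first identity) and in $v$ (for the second) converts the total-derivative terms into integrals on the six boundary components of $\mathcal D(\tau,\tau')$, with the $\Gamma^{(1)}_\tau$- and $\Gamma^{(2)}_\tau$-pieces constituting the right-hand side of \eqref{twe}.

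Summing the two identities, the cross terms $w_v\Omega^2\partial_u\phi\,\partial_v\phi$ and $w_u\Omega^2\partial_u\phi\,\partial_v\phi$ are handled by Cauchy--Schwarz and absorbed into the coercive bulk contributions just identified. The main obstacle is verifying that the smallness of $r^{(1)}-r_-$ and $r_+-r^{(2)}$ needed to close this absorption may be chosen \emph{independently of $\beta$}: the dangerous piece of the cross term is $\chi_1\log^\beta(1/\Omega)\Omega^2\,\partial_u\phi\,\partial_v\phi$, which must be split between the coercive bulk $\beta\chi_1\log^{\beta-1}(1/\Omega)(\partial_v\phi)^2$ and the lesser bulk $\Omega^2(\partial_u\phi)^2$ available on the support of $\chi_1$. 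The crucial elementary inequality $\Omega^2\log(1/\Omega)\le(2e)^{-1}$ for all $\Omega\in(0,1]$ shows that the $(\partial_v\phi)^2$-half of the cross term is pointwise a fraction $(2e\beta)^{-1}$ of the good bulk, so its absorption requires no geometric smallness at all; the $(\partial_u\phi)^2$-half is absorbed using a $\beta$-dependent Cauchy--Schwarz splitting whose penalty constant is collected into $C(\beta)$, and the remaining geometric smallness needed to control the $\Omega^2 w_v$ and $\Omega^2 w_u$ contributions uses only that $\Omega^2$ is small on the supports of $\chi_1,\chi_2$, which is independent of $\beta$. An entirely parallel mechanism handles the cross term near $\mathcal{H}^+$, completing the proof of \eqref{twe}.
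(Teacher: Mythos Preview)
The paper does not actually prove this proposition: it is quoted verbatim as Proposition~4.5 of \cite{LO} and used as a black box in the proof of Proposition~\ref{intl2}. Your multiplier scheme with $w_v\,\partial_v\phi$ and $w_u\,\partial_u\phi$ is indeed the approach taken in \cite{LO}, and your identities, the sign computations for $\partial_u w_v$ and $\partial_v w_u$ near the two horizons, and in particular the observation that $\Omega^2\log(1/\Omega)\le(2e)^{-1}$ allows the $\chi_1\log^\beta(1/\Omega)\Omega^2$ portion of the cross term to be absorbed into $\beta\chi_1\log^{\beta-1}(1/\Omega)(\partial_v\phi)^2$ with no constraint on $r^{(1)}$, are all correct and capture the heart of the argument.

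One point deserves a word of caution. In the ``middle'' region where $\chi_1=\chi_2=0$, your two identities reduce to $w_v=w_u=1$, and summing them gives a bulk $\tfrac12\Omega^2\bigl((\partial_v\phi)^2+(\partial_u\phi)^2\bigr)$ against a cross term $2\Omega^2\partial_u\phi\,\partial_v\phi$; naive Cauchy--Schwarz does not close there. This is not a fatal gap---in \cite{LO} it is handled by pairing the two multipliers into a single vector-field current (so the cross terms combine with a better structure), and alternatively one can use that $\Omega^2$ is bounded above and below on the middle region and close by Gr\"onwall in the $r^*$ variable---but your sketch glosses over it by saying the cross terms are ``handled by Cauchy--Schwarz and absorbed''. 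A sentence acknowledging that the compact-$r$ region is treated by a standard non-degenerate energy/Gr\"onwall argument would make the outline watertight.
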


We may now give the proof of Proposition \ref{intl2}.

\begin{proof}[Proof of Proposition \ref{intl2}]
First,  note that there exists $V=V(\varepsilon)>1$ such that $v^\varepsilon\geq \log^2(1+v)\,\forall v\geq V$, and so
\begin{eqnarray*}
\int_{\mathcal{H}^+\{v\geq1\}} v^{\alpha-\varepsilon}(\partial_v\phi)^2&=&\int_1^Vv^{\alpha-\varepsilon}(\partial_v\phi)^2(-\infty,v)\,dv + \int_V^\infty v^{\alpha-\varepsilon}(\partial_v\phi)^2(-\infty,v)\,dv\\
&\leq&\max_{v\in[1,V]} \left(\frac{\log^2(1+v)}{v^\epsilon}\right)\int_1^V \frac{v^\alpha}{\log^2(1+v)}(\partial_v\phi)^2(-\infty,v)\,dv \\
& &+\int_V^\infty \frac{v^\alpha}{\log^2(1+v)}(\partial_v\phi)^2(-\infty,v)\,dv \\
&\leq& C(V,\varepsilon) \int_{\mathcal{H}^+\cap\{v\geq1\}}\frac{v^\alpha}{\log^2(1+v)}(\partial_v\phi)^2.
\end{eqnarray*}
Thus it will be sufficitent for us to show that we can control
\[\int_{\mathcal{H}^+\cap\{v\geq1\}}\frac{v^\alpha}{\log^2(1+v)}(\partial_v\phi)^2\]
by the right hand side of \eqref{el2}

We show this by first showing that 
\begin{equation}\label{preest}
\tau^{\alpha}\int_{\mathcal{H}^+(\tau,\infty)}(\partial_v\phi)^2\leq C\left(\int_1^\infty \left(1+\chi_1(r)\log^\alpha(\frac{1}{\Omega})\right)(\partial_v\phi)^2(-1,v)\,dv+\int_{\underline{C}_1^{int}\cap\{u\leq -1\}}  \big|\frac{\partial_u\phi}{\Omega}\big|^2\right)
\end{equation}
for $\tau\geq 1$, where $C=C(\alpha)>0$. To see this, we prove the statement
\begin{eqnarray}
\tau^n\int_{\mathcal{H}^+(\tau,\infty)}(\partial_v\phi)^2 +\sum_{j=0}^n(\alpha-j)\tau^j\iint_{\mathcal{D}(\tau,\infty)}\left(\chi_1(r)\log^{\alpha-j-1}(\frac{1}{\Omega})\right)(\partial_v\phi)^2\hspace{10mm}\notag\\
+\tau^n\iint_{\mathcal{D}(\tau,\infty)}\chi_2(r)\Omega^{-2}(\partial_u\phi)^2 +\tau^n\iint_{\mathcal{D}(\tau,\infty)}\Omega^2\left((\partial_v\phi)^2+(\partial_u\phi)^2\right) 
\leq \mathcal{I}_n\label{indarg}
\end{eqnarray}
for $n=\alpha$ and $\tau\geq1$, where
\[\mathcal{I}_n=C_n(\alpha)\left(\int_1^\infty \left(1+\chi_1(r)\log^\alpha(\frac{1}{\Omega})\right)(\partial_v\phi)^2(-1,v)\,dv+\int_{\underline{C}_1^{int}\cap\{u\leq -1\}}  \big|\frac{\partial_u\phi}{\Omega}\big|^2\right)\]
 by induction on $n$. Note that \eqref{preest} follows immediately  from \eqref{indarg} with $n=\alpha$.
 
 For $n=0$, this follows from Proposition \ref{takenwholesale}. Indeed, when $n=0$, the left hand side of \eqref{indarg} reduces to
 \[\int_{\mathcal{H}^+(\tau,\infty)}(\partial_v\phi)^2+ \iint_{\mathcal{D}(\tau,\infty)}\left(\Omega^2+\alpha\chi_1(r)\log^{\alpha-1}(\frac{1}{\Omega})\right)(\partial_v\phi)^2+\iint_{\mathcal{D}(\tau,\infty)}\left(\Omega^2+\chi_2(r)\Omega^{-2}\right)(\partial_u\phi)^2,\]
 which by Proposition \ref{takenwholesale} with $\beta=\alpha$ (and sending $\tau'\to \infty$) is bounded  by
 \[ C\left(\int_{\Gamma_\tau^{(1)}}\left(1+\chi_1(r)\log^\alpha(\frac{1}{\Omega})\right)(\partial_v\phi)^2 + \int_{\Gamma_{\tau}^{(2)}}\Omega^{-2}(\partial_u\phi)^2\right),\]
 where $C=C(\alpha)>0$. But by Proposition \ref{takenwholesale} (with $\tau=1$ and $\tau'=\tau$), this in turn is bounded by 
 \begin{eqnarray*}
\mathcal{I}_0&=& \tilde{C}(\alpha)\left(\int_{\Gamma_1^{(1)}}\left(1+\chi_1(r)\log^\alpha(\frac{1}{\Omega})\right)(\partial_v\phi)^2 + \int_{\Gamma_{1}^{(2)}}\Omega^{-2}(\partial_u\phi)^2\right)\\
 &=& C_0(\alpha)\left(\int_1^\infty \left(1+\chi_1(r)\log^\alpha(\frac{1}{\Omega})\right)(\partial_v\phi)^2(-1,v)\,dv+\int_{\underline{C}_1^{int}\cap\{u\leq -1\}}  \big|\frac{\partial_u\phi}{\Omega}\big|^2\right),
 \end{eqnarray*}
 as desired, so \eqref{indarg} holds for $n=0$.
 
 Now, for the sake of induction, assume \eqref{indarg} holds for $n=0,1,\ldots,n_0-1$ for some positive integer $n_0\leq \alpha$.  Then by the pigeonhole principle, for every $k\in\mathbb{N}\cup\{0\}$ there exists $\tau_k\in[2^k,2^{k+1}]$ such that
 \begin{eqnarray}
 \int_{\Gamma_{\tau_k}^{(1)}}(\left(\Omega^2+(\alpha-n_0+1)\chi_1(r)\log^{\alpha-n_0}(\frac{1}{\Omega})\right)(\partial_v\phi)^2 +\Omega^2(\partial_u\phi)^2)& &\notag\\
 +\int_{\Gamma_{\tau_k}^{(2)}}(\left(\Omega^2+\chi_2(r)\Omega^{-2}\right)(\partial_u\phi)^2+\Omega^2(\partial_v\phi)^2)\hspace{5mm} &\leq&\hspace{5mm} C\mathcal{I}_{n_0-1}\tau_k^{-n_0},\label{pighole}
 \end{eqnarray}
 for some absolute constant $C>0$. Now, the right hand side of \eqref{twe} with $\beta=\alpha-n_0$ and $\tau=\tau_k$ is bounded by a constant times the left hand side of \eqref{pighole} (where the constant may depend on $n_0$ but is independent of $\tau_k$). Thus by Proposition \ref{takenwholesale} with $\beta=\alpha-n_0$, $\tau=\tau_k$ and $\tau'=4\tau_k$, we have
 \begin{eqnarray}
 \int_{\mathcal{H}^+(\tau_k,4\tau_k)}(\partial_v\phi)^2+\iint_{\mathcal{D}(\tau_k,4\tau_k)}(\alpha-n_0)\chi_1(r)\log^{\alpha-n_0-1}(\frac{1}{\Omega})(\partial_v\phi)^2 & &\notag\\+\iint_{\mathcal{D}(\tau_k,4\tau_k)}\chi_2(r)\Omega^{-2}(\partial_u\phi)^2+\iint_{\mathcal{D}(\tau_k,4\tau_k)}\Omega^2((\partial_v\phi)^2+(\partial_u\phi)^2) 
 &\leq& C_{n_0}\cdot C\mathcal{I}_{n_0-1}\tau_k^{-n_0}.\label{s1}
 \end{eqnarray}
 Also, by the inductive hypothesis, namely \eqref{indarg} with $n=n_0-1$ and $\tau=\tau_k$, we have
 \begin{eqnarray}
 \sum_{j=0}^{n_0-1}(\alpha-j)\tau_k^j\iint_{\mathcal{D}(\tau_k,4\tau_k)}\chi_1(r)\log^{\alpha-j-1}(\frac{1}{\Omega})(\partial_v\phi)^2\hspace{25mm}\notag\\
 \leq \sum_{j=0}^{n_0-1}(\alpha-j)\tau_k^j\iint_{\mathcal{D}(\tau_k,\infty)}\chi_1(r)\log^{\alpha-j-1}(\frac{1}{\Omega})(\partial_v\phi)^2 \leq \mathcal{I}_{n_0-1}.\label{s2}
 \end{eqnarray}
 Multiplying \eqref{s1} by $\tau_k^{n_0}$ and summing the result with \eqref{s2} yields
 \begin{eqnarray}
 \tau_k^{n_0}\int_{\mathcal{H}^+(\tau_k,4\tau_k)}(\partial_v\phi)^2+\sum_{j=0}^{n_0}(\alpha-j)\tau_k^j\iint_{\mathcal{D}(\tau_k,4\tau_k)}\chi_1(r)\log^{\alpha-j-1}(\frac{1}{\Omega})(\partial_v\phi)^2\notag\\
 +\tau_k^{n_0}\iint_{\mathcal{D}(\tau_k,4\tau_k)}\chi_2(r)\Omega^{-2}(\partial_u\phi)^2+\tau_k^{n_0}\iint_{\mathcal{D}(\tau_k,4\tau_k)}\Omega^2\left((\partial_v\phi)^2+(\partial_u\phi)^2\right)\notag\\
 \leq C_{n_0}\cdot C\mathcal{I}_{n_0-1}+\mathcal{I}_{n_0-1}\notag\\
 =\tilde{C}_{n_0}\mathcal{I}_{n_0-1}.\hspace{17.5mm}\notag
 \end{eqnarray}

Now, $\tau_k\in[2^k,2^{k+1}]$ for each integer $k\geq0$ inplies $[2^{k+1},2^{k+2}]\subseteq [\tau_k,4\tau_k]$, and hence
 \begin{eqnarray}
 \tau_k^{n_0}\int_{\mathcal{H}^+(2^{k+1},2^{k+2})}(\partial_v\phi)^2+\sum_{j=0}^{n_0}(\alpha-j)\tau_k^j\iint_{\mathcal{D}(2^{k+1},2^{k+2})}\chi_1(r)\log^{\alpha-j-1}(\frac{1}{\Omega})(\partial_v\phi)^2\notag\\
 +\tau_k^{n_0}\iint_{\mathcal{D}(2^{k+1},2^{k+2})}\chi_2(r)\Omega^{-2}(\partial_u\phi)^2+\tau_k^{n_0}\iint_{\mathcal{D}(2^{k+1},2^{k+2})}\Omega^2\left((\partial_v\phi)^2+(\partial_u\phi)^2\right)\notag\\
 \leq\tilde{C}_{n_0}\mathcal{I}_{n_0-1}.\label{forg}
\end{eqnarray}

Summing up these bounds for $k\geq0$ and using Proposition \ref{takenwholesale} and the inductive hypothesis for the interval $\tau\in[1,2]$ yields the desired estimate \eqref{indarg} with $n=n_0$. To see this, consider for instance
\[\tau^{n_0}\int_{\mathcal{H}^+(\tau,\infty)}(\partial_v\phi)^2, \hspace{3mm}\tau\geq 1.\]
There exists a unique integer $l\geq 1$ such that $\tau\in[2^{l-1},2^l)$. If $l\geq 2$ $(\tau\geq2)$, then by \eqref{forg} 
\begin{eqnarray*}
\tau^{n_0}\int_{\mathcal{H}^+(\tau,\infty)}(\partial_v\phi)^2 &=&\tau^{n_0}\int_{\mathcal{H}^+(\tau,2^l)}(\partial_v\phi)^2 + \tau^{n_0}\sum_{m=l-1}^\infty \int_{\mathcal{H}^+(2^{m+1},2^{m+2})}(\partial_v\phi)^2\\
&\leq& \left(\frac{\tau}{\tau_{l-2}}\right)^{n_0}\tau_{l-2}^{n_0} \int_{\mathcal{H}^+(2^{l-1},2^{l})}(\partial_v\phi)^2 + \sum_{m=l-1}^\infty \left(\frac{\tau}{\tau_m}\right)^{n_0}\tau_m^{n_0}\int_{\mathcal{H}^+(2^{m+1},2^{m+2})}(\partial_v\phi)^2\\
&\leq& 4^{n_0}\tau_{l-2}^{n_0} \int_{\mathcal{H}^+(2^{l-1},2^{l})}(\partial_v\phi)^2+      \sum_{m=l-1}^\infty \left(\frac{2^l}{2^m}\right)^{n_0}\tau_m^{n_0}\int_{\mathcal{H}^+(2^{m+1},2^{m+2})}(\partial_v\phi)^2\\
&=&4^{n_0}\tau_{l-2}^{n_0} \int_{\mathcal{H}^+(2^{l-1},2^{l})}(\partial_v\phi)^2+
\sum_{m=l-1}^\infty \left(\frac{1}{2^{n_0}}\right)^{m-l}\tau_m^{n_0}\int_{\mathcal{H}^+(2^{m+1},2^{m+2})}(\partial_v\phi)^2\\
&\leq&4^{n_0}\tilde{C}_{n_0}\mathcal{I}_{n_0-1}+\sum_{m=l-1}^\infty \left(\frac{1}{2^{n_0}}\right)^{m-l}\tilde{C}_{n_0}\mathcal{I}_{n_0-1}\\
&=&C_{n_0}\mathcal{I}_{n_0-1}\\
 &=&C'_{n_0}(\alpha)\left(\int_1^\infty \left(1+\chi_1(r)\log^\alpha(\frac{1}{\Omega})\right)(\partial_v\phi)^2(-1,v)\,dv+\int_{\underline{C}_1^{int}\cap\{u\leq -1\}}  \big|\frac{\partial_u\phi}{\Omega}\big|^2\right),
\end{eqnarray*}
since (by the induction hypothesis)
\begin{equation*}
\mathcal{I}_{n_0-1}=C_{n_0-1}(\alpha)\left(\int_1^\infty \left(1+\chi_1(r)\log^\alpha(\frac{1}{\Omega})\right)(\partial_v\phi)^2(-1,v)\,dv+\int_{\underline{C}_1^{int}\cap\{u\leq -1\}}  \big|\frac{\partial_u\phi}{\Omega}\big|^2\right).
\end{equation*}
On the other hand, if $l=1$ $(\tau\in[1,2))$, by the previous case we have
\begin{eqnarray*}
\tau^{n_0}\int_{\mathcal{H}^+(\tau,\infty)}(\partial_v\phi)^2 &\leq&\tau^{n_0}\int_{\mathcal{H}^+(\tau,2)}(\partial_v\phi)^2 +2^{n_0}\int_{\mathcal{H}^+(2,\infty)}(\partial_v\phi)^2\\
&\leq&\tau^{n_0}\int_{\mathcal{H}^+(1,2)}(\partial_v\phi)^2 +2^{n_0}\int_{\mathcal{H}^+(2,\infty)}(\partial_v\phi)^2\\
&\leq &2^{n_0}\left(\int_1^\infty \left(1+\chi_1(r)\log^\alpha(\frac{1}{\Omega})\right)(\partial_v\phi)^2(-1,v)\,dv+\int_{\underline{C}_1^{int}\cap\{u\leq -1\}}  \big|\frac{\partial_u\phi}{\Omega}\big|^2\right)\\
& &+ \tilde{C}_{n_0}\mathcal{I}_{n_0-1}\\
 &=&C'_{n_0}(\alpha)\left(\int_1^\infty \left(1+\chi_1(r)\log^\alpha(\frac{1}{\Omega})\right)(\partial_v\phi)^2(-1,v)\,dv+\int_{\underline{C}_1^{int}\cap\{u\leq -1\}}  \big|\frac{\partial_u\phi}{\Omega}\big|^2\right),
\end{eqnarray*}
 where we estimate $\int_{\mathcal{H}^+(1,2)}(\partial_v\phi)^2$ using Proposition \ref{takenwholesale} (with $\tau=1, \tau'=2$ and $\beta=\alpha$).

We conclude that for any $\tau\geq1$ we have
\begin{eqnarray*}
\tau^{n_0}\int_{\mathcal{H}^+(\tau,\infty)}(\partial_v\phi)^2&\leq& C_{n_0}(\alpha)\left(\int_1^\infty \left(1+\chi_1(r)\log^\alpha(\frac{1}{\Omega})\right)(\partial_v\phi)^2(-1,v)\,dv+\int_{\underline{C}_1^{int}\cap\{u\leq -1\}}  \big|\frac{\partial_u\phi}{\Omega}\big|^2\right)\\
&:=&\mathcal{I}_{n_0}
\end{eqnarray*}
and we can treat the other terms on the left hand side of \eqref{indarg} (with $n=n_0$) in the same way. Thus we conclude that \eqref{indarg} holds with $n=n_0$. So by induction \eqref{indarg} holds for every positive integer $n\leq \alpha$. We have thus established  \eqref{preest}, as it is a direct consequence of \eqref{indarg} with $n=\alpha$.

In particular, it follows from \eqref{preest} that for any $\tau\geq1$
\begin{equation}\label{tauk}
\int_{\mathcal{H}^+(\tau,2\tau)}\tau^\alpha(\partial_v\phi)^2\leq C(\alpha)\left(\int_1^\infty \left(1+\chi_1(r)\log^\alpha(\frac{1}{\Omega})\right)(\partial_v\phi)^2(-1,v)\,dv+\int_{\underline{C}_1^{int}\cap\{u\leq -1\}}  \big|\frac{\partial_u\phi}{\Omega}\big|^2\right).
\end{equation}
Now, for $v\in[2^k,2^{k+1}]$, $\log(1+v)\geq\log(2^k)=Ck$, so
\begin{eqnarray*}
\int_{\mathcal{H}^+\cap\{v\geq1\}}\frac{v^\alpha}{\log^2(1+v)}(\partial_v\phi)^2&=&\int_{\mathcal{H}^+(1,2)}\frac{v^\alpha}{\log^2(1+v)}(\partial_v\phi)^2+\sum_{k=1}^\infty\int_{\mathcal{H}^+(2^k,2^{k+1})}\frac{v^\alpha}{\log^2(1+v)}(\partial_v\phi)^2\\
&\leq&c(\alpha) \int_{\mathcal{H}^+(1,2)} 1^\alpha (\partial_v\phi)^2+ C^{-2}\sum_{k=1}^\infty\int_{\mathcal{H}^+(2^k,2^{k+1})}\frac{v^\alpha}{k^2}(\partial_v\phi)^2\\
&\leq&c(\alpha) \int_{\mathcal{H}^+(1,2)} 1^\alpha (\partial_v\phi)^2+2^\alpha C^{-2}\sum_{k=1}^\infty\int_{\mathcal{H}^+(2^k,2^{k+1})}\frac{({2^k})^\alpha}{k^2}(\partial_v\phi)^2.
\end{eqnarray*}
But setting $\tau=2^k$ in \eqref{tauk} for each $k\geq0$, we deduce that 
\[\int_{\mathcal{H}^+\cap\{v\geq1\}}\frac{v^\alpha}{\log^2(1+v)}(\partial_v\phi)^2\]
is bounded by
\begin{eqnarray*}
\left(c(\alpha) +2^\alpha C^{-2}\sum_{k=1}^\infty\frac{1}{k^2} \right)\cdot C(\alpha)\left(\int_1^\infty \left(1+\chi_1(r)\log^\alpha(\frac{1}{\Omega})\right)(\partial_v\phi)^2(-1,v)\,dv+\int_{\underline{C}_1^{int}\cap\{u\leq -1\}}  \big|\frac{\partial_u\phi}{\Omega}\big|^2\right)\\
=C'(\alpha)\left(\int_1^\infty \left(1+\chi_1(r)\log^\alpha(\frac{1}{\Omega})\right)(\partial_v\phi)^2(-1,v)\,dv+\int_{\underline{C}_1^{int}\cap\{u\leq -1\}}  \big|\frac{\partial_u\phi}{\Omega}\big|^2\right).
\end{eqnarray*}
Finally, we note that
\[\int_1^\infty \left(1+\chi_1(r)\log^\alpha(\frac{1}{\Omega})\right)(\partial_v\phi)^2(-1,v)\,dv\leq C\int_1^\infty v^\alpha(\partial_v\phi)^2(-1,v)\,dv\]
because $\displaystyle{ \lim_{v\to\infty}\frac{v}{\log\frac{1}{\Omega(-1,v)}}>0}$, and so
\[\int_{\mathcal{H}^+\cap\{v\geq1\}}\frac{v^\alpha}{\log^2(1+v)}(\partial_v\phi)^2\leq C(\alpha)\left(\int_1^\infty v^\alpha(\partial_v\phi)^2(-1,v)\,dv+\int_{\underline{C}_1^{int}\cap\{u\leq -1\}}  \big|\frac{\partial_u\phi}{\Omega}\big|^2\right),\]
as desired.
\end{proof}

\subsubsection{$L^p$ Estimates}\label{star10}

We are now ready to prove Theorem \ref{interior reduction}. The proof considers an operator which maps smooth, spherically symmetric  ``data'' for the wave equation on the Cauchy horizon transversal null hypersurface  $C^{int}_{-1}\cap\{v\geq 1\}$ to a coresponding solution of the wave equation on the event horizon $\mathcal{H}^+\cap\{v\geq1\}$. The $L^1$ and $L^2$-type estimates established in the previous sections mean precisely that this operator is bounded in an appropriate $L^1$ and $L^2$ sense, and we can use real interpolation to deduce that the operator must also be bounded in an $L^p$ sense, giving the $L^p$-type estimate we desire. We make use of arguments from Section 5.5.1 of \cite{BL} to calculate the interpolation spaces needed. Note that we will use the material about the K-method of real interpolation from Appendix \ref{Interpolation Theory} extensively in the coming proof, and so refer the unfamiliar reader to that Appendix now.

\begin{proof}[Proof of Theorem \ref{interior reduction}]
Let $\phi$ be a  solution to \eqref{waveequation} with smooth, spherically symmetric data on $S$ satisfying \eqref{ic2}. Let $\phi_1$ and $\phi_2$ be smooth solutions to the wave equation in spherical symmetry \eqref{extwe} such that 
\begin{equation}\label{tphi}
\partial_u\phi_1\vert_{\underline{C}_1^{int}\cap\{u\leq-1\}}=0, \hspace{5mm} \partial_v\phi_1\vert_{C_{-1}^{int}\cap\{v\geq1\}}=\partial_v\phi\vert_{C_{-1}^{int}\cap\{v\geq1\}}
\end{equation}
and
\begin{equation*}\partial_u\phi_2\vert_{\underline{C}_1^{int}\cap\{u\leq-1\}}=\partial_u\phi\vert_{\underline{C}_1^{int}\cap\{u\leq-1\}}, \hspace{5mm} \partial_v\phi_2\vert_{C_{-1}^{int}\cap\{v\geq1\}}=0.\end{equation*}
Note that $\phi_1$ and $\phi_2$ are determined up to a constant on $(\underline{C}_1\cap\{u\leq-1\})\cup(C_{-1}\cap\{v\geq1\})$ and hence $\phi_1$ and $\phi_2$ are determined up to a constant on $\{v\geq1\}\cap\{-\infty\leq u\leq-1\}$. Consequently $\partial_v\phi_1$ and $\partial_v\phi_2$ are uniquely determined on $\mathcal{H}^+\cap\{v\geq1\}$. Furthermore, by linearity and uniqueness, we have $\phi=\phi_1+\phi_2+c$ on $\{v\geq1\}\cap\{-\infty\leq u\leq-1\}$ and hence,
\[\partial_v\phi=\partial_v\phi_1+\partial_v\phi_2  \mbox{ on }\{v\geq1\}\cap\{-\infty\leq u\leq-1\}.\]

We claim that there exist $C_1=C_1(\alpha,\varepsilon,p)>0$ and $C_2=C_2(\alpha,\varepsilon,p,D,\phi)>0$ such that
\begin{equation}\label{phi1}
\int_{\mathcal{H}^+\cap\{v\geq1\}} v^{(\alpha-\varepsilon)\frac{p\theta}{2}}|\partial_v\phi_1|^p \leq C_1\int_1^\infty v^{\frac{\alpha p\theta}{2}}|\partial_v\phi|^p(-1,v)\,dv
\end{equation}
and 
\begin{equation}\label{phi2}
\int_{\mathcal{H}^+\cap\{v\geq1\}} v^{(\alpha-\varepsilon)\frac{p\theta}{2}}|\partial_v\phi_2|^p \leq C_2\int_{\underline{C}_1^{int}\cap\{u\leq -1\}}\Omega^{-p\theta}|\partial_u\phi|^p.
\end{equation}
Once we have proved these two estimates, the result follows immediately because
\begin{eqnarray*}
\int_{\mathcal{H}^+\cap\{v\geq1\}} v^{(\alpha-\varepsilon)\frac{p\theta}{2}}|\partial_v\phi|^p &\leq& 2^P\left(\int_{\mathcal{H}^+\cap\{v\geq1\}} v^{(\alpha-\varepsilon)\frac{p\theta}{2}}|\partial_v\phi_1|^p +\int_{\mathcal{H}^+\cap\{v\geq1\}} v^{(\alpha-\varepsilon)\frac{p\theta}{2}}|\partial_v\phi_2|^p \right)\\
&\leq& 2^pC_1\int_1^\infty v^{\frac{\alpha p\theta}{2}}|\partial_v\phi|^p(-1,v)\,dv+2^pC_2\int_{\underline{C}_1^{int}\cap\{u\leq -1\}}\Omega^{-p\theta}|\partial_u\phi|^p\\
&\leq& C  \left(\int_1^\infty v^{\frac{\alpha p\theta}{2}}|\partial_v\phi|^p(-1,v)\,dv+\int_{\underline{C}_1^{int}\cap\{u\leq -1\}}\Omega^{-p\theta}|\partial_u\phi|^p\right),
\end{eqnarray*}
with $C=C(\alpha,\varepsilon,p,D,\phi)$, as desired.

In order to prove the claim, we first prove \eqref{phi2}. Let $q=2/p>1$ and suppose $1/q+1/q'=1$. Set $\beta=(\alpha-\varepsilon)\frac{p\theta}{2}$. Then
\begin{eqnarray}
\int_{\mathcal{H}^+\cap\{v\geq1\}}v^{(\alpha-\varepsilon)\frac{p\theta}{2}}|\partial_v\phi_2|^p &=&\int_{\mathcal{H}^+\cap\{v\geq1\}}v^{\beta}|\partial_v\phi_2|^p\notag\\
&\leq&\left(\int_{\mathcal{H}^+\cap\{v\geq1\}}v^{(\beta+1+\varepsilon_0)q}|\partial_v\phi_2|^{pq}\right)^{1/q} \left(\int_{\mathcal{H}^+\cap\{v\geq1\}}v^{-(1+\varepsilon_0)q'}\right)^{1/q'}\notag\\
&=&C(p) \left(\int_{\mathcal{H}^+\cap\{v\geq1\}}v^{(\beta+1+\varepsilon_0)q}|\partial_v\phi_2|^{2}\right)^{p/2}.\notag
\end{eqnarray}
Now there exists a positive integer $\gamma$ such that $(\gamma-\varepsilon)> (\beta+1+\varepsilon_0)q$, and so by the $L^2$ estimate (Proposition \ref{intl2}) applied to $\phi_2$ we get
\begin{eqnarray}
\int_{\mathcal{H}^+\cap\{v\geq1\}}v^{(\alpha-\varepsilon)\frac{p\theta}{2}}|\partial_v\phi_2|^p &\leq&C(p)\left(\int_{\mathcal{H}^+\cap\{v\geq1\}}v^{(\gamma-\varepsilon)} |\partial_v\phi_2|^{2}\right)^{p/2}\notag\\
&\leq&C'\left(\int_{C_{-1}\cap\{v\geq1\}}v^{\gamma }|\partial_v\phi_2|^2 +\int_{\underline{C}_1\cap\{u\leq-1\}}\big|\frac{\partial_u\phi_2}{\Omega}\big|^2\right)^{p/2}\notag\\
&=&C'\left(\int_{\underline{C}_1\cap\{u\leq-1\}}\big|\frac{\partial_u\phi}{\Omega}\big|^2\right)^{p/2},\label{one}
\end{eqnarray}
where $C'=C'(\alpha,\varepsilon,p)$.
But 
\begin{eqnarray}
\left(\int_{\underline{C}_1\cap\{u\leq-1\}}\big|\frac{\partial_u\phi}{\Omega}\big|^2\right)^{p/2}&=&\left(\int_{\underline{C}_1\cap\{u\leq-1\}}\frac{|\partial_u\phi|^p}{\Omega^{p\theta}}\big|\frac{\partial_u\phi}{\Omega^2}\big|^{2-p} \Omega^{2+p(\theta-2)}\right)^{p/2}\notag\\
&\leq&D^{(2-p)p/2}\left(\int_{\underline{C}_1\cap\{u\leq-1\}}\frac{|\partial_u\phi|^p}{\Omega^{p\theta}} \Omega^{2+p(\theta-2)}\right)^{p/2},\notag
\end{eqnarray}
where we have used \eqref{ic2}. Also,
\[\frac{1}{p}=\frac{1-\theta}{1}+\frac{\theta}{2}\]
and hence $p=\frac{2}{2-\theta}$. Thus $\Omega^{2+p(\theta-2)}=\Omega^0=1$, and so
\begin{eqnarray}
\left(\int_{\underline{C}_1\cap\{u\leq-1\}}\big|\frac{\partial_u\phi}{\Omega}\big|^2\right)^{p/2}&\leq&
C(p,D)\left(\int_{\underline{C}_1\cap\{u\leq-1\}}\frac{|\partial_u\phi|^p}{\Omega^{p\theta}} \right)^{p/2}.\notag
\end{eqnarray}
So by \eqref{one}
\begin{eqnarray}
\int_{\mathcal{H}^+\cap\{v\geq1\}}v^{(\alpha-\varepsilon)\frac{p\theta}{2}}|\partial_v\phi_2|^p &\leq&
C(\alpha,\varepsilon,p,D)\left(\int_{\underline{C}_1\cap\{u\leq-1\}}\frac{|\partial_u\phi|^p}{\Omega^{p\theta}} \right)^{p/2},\label{three}
\end{eqnarray}
We note that $\int_{\underline{C}_1\cap\{u\leq-1\}}\frac{|\partial_u\phi|^p}{\Omega^{p\theta}}<+\infty$ because $p=\frac{2}{2-\theta}$ so
\begin{eqnarray*}
\int_{\underline{C}_1\cap\{u\leq-1\}}\frac{|\partial_u\phi|^p}{\Omega^{p\theta}}=\int_{\underline{C}_1\cap\{u\leq-1\}} \left| \frac{\partial_u\phi}{\Omega^2}\right|^p\Omega^{p(2-\theta)}
&\leq& D^p\int_{\underline{C}_1\cap\{u\leq-1\}}\Omega^{(2-\theta)p}\\
&=&D^p\int_{-\infty}^{-1}(-\partial_ur)(u,1)\,du\\ &<&+\infty.
\end{eqnarray*}
If $\int_{\underline{C}_1\cap\{u\leq-1\}}\frac{|\partial_u\phi|^p}{\Omega^{p\theta}} =0$, then \eqref{phi2} is true by \eqref{three}. Otherwise, we have
\begin{eqnarray*}
\int_{\mathcal{H}^+\cap\{v\geq1\}}v^{(\alpha-\varepsilon)\frac{p\theta}{2}}|\partial_v\phi_2|^p &\leq&
\frac{C(\alpha,\varepsilon,p,D)}{\left(\int_{\underline{C}_1\cap\{u\leq-1\}}\frac{|\partial_u\phi|^p}{\Omega^{p\theta}} \right)^{1-p/2}  }\int_{\underline{C}_1\cap\{u\leq-1\}}\frac{|\partial_u\phi|^p}{\Omega^{p\theta}} \\
&=&C(\alpha,\varepsilon,p,D,\phi)\int_{\underline{C}_1\cap\{u\leq-1\}}\frac{|\partial_u\phi|^p}{\Omega^{p\theta}}, 
\end{eqnarray*}
so \eqref{phi2} holds in this case also.

We now turn our attention to proving \eqref{phi1}. The idea is to use the K-method of real interpolation (see Appendix \ref{Interpolation Theory}) to interpolate between the bounds achieved in Proposition \ref{intl1} and Proposition \ref{intl2} for solutions of the form $\phi_1$ (i.e, for solutions with $\partial_u\phi_1=0$ on $\underline{C}_1\cap\{u\leq-1\}$) to deduce the desired estimate.  For clarity, set $p_0=1$ and $p_1=2$, so that
\[\frac{1}{p}=\frac{1-\theta}{p_0}+\frac{\theta}{p_1}.\]
We split the argument into several steps.

\noindent\textbf{Step 1:} First of all, we need to define the compatible couples we wish to interpolate between. Given a postitve, measurable function $w:[1,\infty)\to(0,\infty)$ and $q\geq1$, we set 
\[A_q(w):=L^q_w([1,\infty)).\]
In particular, 
\[\|f\|_{A_q(w)}=\left(\int_1^\infty w(v)|f|^{q}(v)\,dv\right)^{1/q}.\]
Then, given positive, measurable functions $w_0,w_1:[1,\infty)\to(0,\infty)$,  define
\begin{eqnarray*}
A_0=A_{p_0}(w_0)\hspace{7mm}  \mbox{ and }\hspace{7mm}  A_1=A_{p_1}(w_1)
\end{eqnarray*}
so
\begin{eqnarray*}
\|f\|_{A_j}&=&\left(\int_1^\infty w_j(v)|f|^{p_j}(v)\,dv\right)^{1/p_j}, \hspace{3mm} j=0,1.
\end{eqnarray*}

Similarly for a postitve, measurable function $\omega:[1,\infty)\to(0,\infty)$ and $q\geq1$, we set 
\[B_q(\omega):=L^q_\omega([1,\infty)),\]
so
\[\|f\|_{B_q(\omega)}=\left(\int_1^\infty \omega(v)|f|^{q}(v)\,dv\right)^{1/q}.\]
Then, given  positive, measurable functions $\omega_0,\omega_1:[1,\infty)\to(0,\infty)$,  define
\begin{eqnarray*}
B_0=B_{p_0}(\omega_0) \hspace{7mm}\mbox{ and } \hspace{7mm} B_1=B_{p_1}(\omega_1)
\end{eqnarray*}
so
\begin{eqnarray*}
\|f\|_{B_j}&=&\left(\int_1^\infty \omega_j(v)|f|^{p_j}(v)\,dv\right)^{1/p_j}, \hspace{3mm} j=0,1.
\end{eqnarray*}
Finally, let $\overline{A}$ and $\overline{B}$ denote the compatible couples $\overline{A}=(A_0,A_1)$ and $\overline{B}=(B_0,B_1)$.

\noindent \textbf{Step 2:} 
Suppose $S:\overline{A}\to\overline{B}$ is a linear operator. Recall from Section \ref{INS} this means (with slight abuse of notation) that, $S:A_0+A_1\to B_0+B_1$ and moreover, $S:A_j\to B_j$ and is bounded, with norm $\|S\|_j$ say, for $j=0,1$.
Recalling that
\[\frac{1}{p}=\frac{1-\theta}{p_0}+\frac{\theta}{p_1}=\frac{1-\theta}{1}+\frac{\theta}{2},\]
it follows from Theorem \ref{kmethod} (with $q=p$ and $\theta=\theta$)  that
\[S:\overline{A}_{\theta,p}\to\overline{B}_{\theta,p} \hspace{3mm}\mbox{ is bounded with norm }\hspace{3mm}\|S\|_{\theta,p}\leq\|S\|_0^{1-\theta}\|S\|_1^{\theta}.\]
In order to make use of this result, we will need to identify appropriate weights $w_0,w_1,\omega_0$ and $\omega_1$ and a bounded operator $T: A_j\to B_j$. This is where the $L^1$ and $L^2$ estimates proved in the previous sections will come in. We will also need to understand the spaces $\overline{A}_{\theta,p}$ and $\overline{B}_{\theta,p}$ and this is our next task.

\noindent\textbf{Step 3:} 
We show that
\[\overline{A}_{\theta,p}  :=(A_0,A_1)_{\theta,p}=(A_{p_0}(w_0), A_{p_1}(w_1))_{\theta,p}= A_p(w),\]
with equivalent norms, where
\[w=w_0^{p\frac{1-\theta}{p_0}}w_1^{p\frac{\theta}{p_1}}.\]

The proof of this fact is sketched in Theorem 5.5.1 of \cite{BL}, though we present the proof in full detail. Indeed, set $\rho_0=p_0$, $\rho_1=p_1$, $\eta=\frac{\theta p}{p_1}\in(0,1)$ and $r=1$. Now $1-\eta=1-\frac{\theta p}{p_1}=(\frac{1}{p}-\frac{\theta}{p_1})p =\frac{1-\theta}{p_0}p$, and it follows that $\rho:=(1-\eta)\rho_0+\eta\rho_1=p$ and $q:=\rho r=p$. Then by the Power Theorem (Theorem \ref{power}) we have
\begin{equation}\label{power2}
\left((A_0,A_1)_{\theta,p}\right)^{p}=(A_0^{p_0},A_1^{p_1})_{\eta,1}
\end{equation}
with equivalent quasinorms.
For brevity we set $X=(A_0^{p_0},A_1^{p_1})_{\eta,1}$. For $f\in A_0^{p_0}+A_1^{p_1}$, we have

\begin{eqnarray}
\|f\|_X &=&\int_0^\infty\left(t^{-\eta}K(t,f;A_0^{p_0},A_1^{p_1}   )\right)^1\,\frac{dt}{t}\notag\\
&=&\int_0^\infty t^{-\eta} (\inf_{\substack{f=f_0+f_1\\f_i\in L^{p_i}_{w_i}([1,\infty))}}\left[\|f_0\|_{A_0^{p_0}}+t\|f_1\|_{A_1^{p_1}}\right])\frac{dt}{t} \notag\\
&=&\int_0^\infty t^{-\eta} \inf_{\substack{f=f_0+f_1\\f_i\in L^{p_i}_{w_i}([1,\infty))}}\left[\int_1^\infty w_0|f_0|^{p_0}(v)\,dv +t\left(  \int_1^\infty w_1|f_1|^{p_1}(v)\,dv \right)\right]\,\frac{dt}{t}\notag\\
&=&\int_1^\infty\left(\int_0^\infty t^{-\eta}\inf_{\substack{f=f_0+f_1\\f_i\in L^{p_i}_{w_i}}}(w_0|f_0|^{p_0}(v)+tw_1|f_1|^{p_1}(v))\,\frac{dt}{t}\right)\,dv.\label{switch}
\end{eqnarray}
The justification for \eqref{switch} is quite technical and is given in Lemma \ref{interplem} in Appendix \ref{Switch}.

We may write the  right hand side of \eqref{switch} as 
\begin{eqnarray}
\int_1^\infty\left(\int_0^\infty t^{-\eta}\inf_{\substack{f=f_0+f_1\\f_i\in L^{p_i}_{w_i}}}(w_0|f_0|^{p_0}(v)+tw_1|f_1|^{p_1}(v))\,\frac{dt}{t}\right)\,dv\hspace{30mm}\notag\\
=\int_{\{f(v)\not=0\}}\left(\int_0^\infty t^{-\eta}\inf_{\substack{f=f_0+f_1\\f_i\in L^{p_i}_{w_i}}}(w_0|f_0|^{p_0}(v)+tw_1|f_1|^{p_1}(v))\,\frac{dt}{t}\right)\,dv\notag\\
+\int_{\{f(v)=0\}}\left(\int_0^\infty t^{-\eta}\inf_{\substack{f=f_0+f_1\\f_i\in L^{p_i}_{w_i}}}(w_0|f_0|^{p_0}(v)+tw_1|f_1|^{p_1}(v))\,\frac{dt}{t}\right)\,dv.\label{split'}
\end{eqnarray}
But for $v$ such that $f(v)=0$, 
\[\inf_{\substack{f=f_0+f_1\\f_i\in L^{p_i}_{w_i}}}(w_0|f_0|^{p_0}(v)+tw_1|f_1|^{p_1}(v))=0,\]
this infimum being attained for any $f_0\in L^{p_0}_{w_0}$ and $f_1\in L^{p_1}_{w_1}$ with $f_0+f_1=f$ and $f_0(v)= f_1(v)= 0$. Now set 
\begin{equation}\label{F}
F(x)=\inf_{y_0+y_1=1}(|y_0|^{p_0}+x|y_1|^{p_1}).
\end{equation}
Then for $v$ such that $f(v)\not=0$,
\begin{eqnarray*}
\inf_{\substack{f=f_0+f_1\\f_i\in L^{p_i}_{w_i}}}(w_0|f_0|^{p_0}(v)+tw_1|f_1|^{p_1}(v))&=&\inf_{\substack{1=\frac{f_0}{f}+\frac{f_1}{f}\\f_i\in L^{p_i}_{w_i}}}(w_0|f_0|^{p_0}(v)+tw_1|f_1|^{p_1}(v))\\
&=&\inf_{\substack{1=\frac{f_0}{f}+\frac{f_1}{f}\\f_i\in L^{p_i}_{w_i}}}w_0|f|^{p_0}\left(\big|\frac{f_0}{f}\big|^{p_0}+t\frac{w_1}{w_0}|f|^{p_1-p_0}\big|\frac{f_1}{f}\big|^{p_1}\right)(v)\\
&=&w_0|f|^{p_0}(v)\cdot F\left(t\frac{w_1}{w_0}|f|^{p_1-p_0}(v)\right).
\end{eqnarray*}
Thus
\begin{eqnarray*}
\int_{\{f(v)\not=0\}}\left(\int_0^\infty t^{-\eta}\inf_{\substack{f=f_0+f_1\\f_i\in L^{p_i}_{w_i}}}(w_0|f_0|^{p_0}(v)+tw_1|f_1|^{p_1}(v))\,\frac{dt}{t}\right)\,dv\hspace{34mm}\\
=\int_{\{f(v)\not=0\}}w_0|f|^{p_0}(v)\int_0^\infty t^{-\eta}F\left(t\frac{w_1}{w_0}|f|^{p_1-p_0}(v)\right)\,\frac{dt}{t}\,dv\hspace{14mm}\\
=\int_{\{f(v)\not=0\}}w_0|f|^{p_0}(v)\int_0^\infty s^{-\eta}F(s)\left( \frac{w_0}{w_1}|f|^{p_0-p_1}(v)\right)^{-\eta}\,\frac{ds}{s}\,dv\hspace{6mm}\\
=\int_0^\infty s^{-\eta}F(s)\,\frac{ds}{s}\cdot \int_{\{f(v)\not=0\}}|f|^{p_0(1-\eta)+p_1\eta}w_0^{1-\eta}w_1^\eta(v)\,dv\hspace{14mm}\\
= \int_0^\infty s^{-\eta}F(s)\,\frac{ds}{s}\cdot \int_{\{f(v)\not=0\}}|f|^{p} w_0^{p\frac{(1-\theta)}{p_0}}w_1^{p\frac{\theta}{p_1}}\,dv\hspace{27mm}\\
=C\int_{\{f(v)\not=0\}}|f|^{p}w(v)\,dv\hspace{62.5mm}\\
=C\int_1^\infty w|f|^{p}(v)\,dv.\hspace{69mm}
\end{eqnarray*}
Note that $\int_0^\infty s^{-\eta}F(s)\,\frac{ds}{s}$ converges because $F(s)\leq 1$.
Thus, by \eqref{split'}
\[\int_1^\infty\left(\int_0^\infty t^{-\eta}\inf_{\substack{f=f_0+f_1\\f_i\in L^{p_i}_{w_i}}}(w_0|f_0|^{p_0}(v)+tw_1|f_1|^{p_1}(v))\,\frac{dt}{t}\right)\,dv=C\int_1^\infty w|f|^{p}(v)\,dv,\]
and so by \eqref{switch}
\[\|f\|_X=C\left(\int_1^\infty w|f|^{p}(v)\,dv\right)=\|f\|_{A_p(w)}^{p}.\]
In particular, $\|f\|_X<+\infty \Longleftrightarrow \|f\|_{A_p(w)^{p}}<+\infty$, so $X=A_p(w)^{p}$ and they have equivalent quasinorms. Thus, by \eqref{power2} 
\[\left((A_0,A_1)_{\theta,p}\right)^{p}=A_p(w)^{p},\]
with equivalent quasinorms, and so
\[(A_0,A_1)_{\theta,p}=A_p(w)\]
with equivalent norms, as desired.

\noindent\textbf{Step 4:} 
Using an argument identical to that in the previous step, we deduce
\[\overline{B}_{\theta,p}:=(B_0,B_1)_{\theta,p}=(B_{p_0}(\omega_0),B_{p_1}(\omega_1))_{\theta,p}=B_p(\omega)\]
with equivalent norms, where 
\[\omega=\omega_0^{p\frac{1-\theta}{p_0}}\omega_1^{p\frac{\theta}{p_1}}.\]
To avoid repetition, we omit the details.

\noindent \textbf{Step 5:} We now fix the weights by setting $w_0(v)=1$, $w_1(v)=v^\alpha$, $\omega_0(v)=1$ and $\omega_1(v)=v^{\alpha-\varepsilon}$, where $\alpha>1$ is a positive integer and $0<\varepsilon<1$. So 
\[A_0=L^1([1,\infty)), \hspace{7mm} A_1=L^2_{v^\alpha}([1,\infty))\]
and
\[B_0=L^1([1,\infty)), \hspace{7mm} B_1=L^2_{v^{\alpha-\varepsilon}}([1,\infty)).\]

 In order to construct a linear operator $T:\overline{A}\to\overline{B}$, we first construct two bounded linear operators
\[T_0:A_0\to B_0 \hspace{5mm}\mbox{ and } \hspace{5mm} T_1:A_1\to B_1.\]

Before defining these operators, we note that given smooth and  spherically symmetric data\\ $(\partial_v\phi\vert_{C_{-1}^{int}\cap\{v\geq1\}},\partial_u\phi\vert_{\underline{C}_1^{int}\cap\{u\leq-1\}})$ for the wave equation \eqref{waveequation}, the solution $\phi$ is determined up to a constant in $\{v\geq1\}\cap\{u\leq-1\}$ (as it is determined up to a constant on $C_{-1}^{int}\cap\{v\geq1\}$ and $\underline{C}_1^{int}\cap\{u\leq-1\}$), and hence $\partial_v\phi$ is uniquely determined on $\mathcal{H}^+\cap\{v\geq1\}$.

\noindent \textbf{Step 5a:} We define the operator $T_0:A_0\to B_0$. We first define $T_0$ for smooth functions $f\in A_0$ and then use a limiting process to extend $T_0$ to the entire space $A_0$. 

Indeed, given a smooth $f\in A_0$ we set $T_0f=\partial_v\varphi\vert_{\mathcal{H}^+\cap\{v\geq1\}}$ where $\varphi$ is a solution of the (spherically symmetric) wave equation with $\partial_v\varphi\vert_{C_{-1}\cap\{v\geq1\}}=f$ and $\partial_u\varphi\vert_{\underline{C}_1\cap\{u\leq-1\}}=0$. By the remarks directly above $T_0f$ is uniquely determined and by the $L^1$ estimates (Proposition \ref{intl1}  \footnote{We emphasize that we must assume the parameter restriction $\frac{2\sqrt e}{e+1}<Q<1$ in order to make use of the interior $L^1$ estimates, and this is the (only) reason for the parameter restriction in the interior $L^P$ estimates.})  applied to  $\varphi$,   $T_0f\in B_0$. Indeed,
\begin{eqnarray*}
\|T_0f\|_{B_0}=\int_1^\infty |T_0f|(v)\,dv=\int_{\mathcal{H}^+\cap\{v\geq1\}} |\partial_v\varphi|
&\leq&C\left(\int_1^\infty|\partial_v\varphi|(-1,v)\,dv+\int_{\underline{C}_1^{int}\cap\{u\leq-1\}}|\partial_u\varphi|\right).\\
&=&C\int_1^\infty |f(v)|\,dv =C\|f\|_{A_0}<\infty.
\end{eqnarray*}
Moreover, by uniqueness and linearity of solutions to the wave equation, $T_0$ acts linearly on the smooth functions in $A_0$. See Figure \ref{action} below for a diagramatic representation of the action of $T_0$ on smooth functions of $A_0$.

Now suppose $f\in A_0$ is not smooth. By density of smooth functions in $A_0=L^1$, there is a sequence of smooth functions $\{f_n\}$ in $A_0$ such that $f_n\to f$ in $A_0$. Thus $\{f_n\}$ is a Cauchy sequence in $A_0$, $T_0f_n\in B_0$ is defined for each $n$ by above and by the $L^1$ estimates (Proposition \ref{intl1}) $\{T_0f_n\}$ is a Cauchy sequence in $B_0$. But $B_0=L^1$ is complete and hence $\{T_0f_n\}$ has a limit in $B_0$. We define 
\[T_0f:=\lim_{n\to\infty}T_0f_n \hspace{2mm}\mbox{ in } B_0.\]

It is easy to check that $T_0f$ is well-defined. Indeed, if $\{f_n\}$ and $\{g_n\}$ are both sequences of smooth functions in $A_0$ which converge to $f$ in $A_0$, then by Proposition \ref{intl1}
\[
\|T_0f_n-T_0g_n\|_{B_0}=\|T_0(f_n-g_n)\|_{B_0}\leq C\|f_n-g_n\|_{A_0}\to0,
\]
so it follows that $T_0f$ is uniquely determined. So $T_0$ is well-defined. It is trivial to check that $T_0$ is linear. Furthermore, $T_0$ is bounded: if $f\in A_0$ is smooth, it follows immediately from Proposition \ref{intl1} that $\|T_0f\|_{B_0}\leq C\|f\|_{A_0}$, and if $f\in A_0$ is not smooth, then for any sequence of smooth functions $\{f_n\}$ in $A_0$ converging to $f$ in $A_0$,
\[
\|T_0f\|_{B_0}=\lim_{n\to\infty}\|T_0f_n\|_{B_0}\leq \lim_{n\to\infty} C\|f_n\|_{A_0} =C\|f\|_{A_0}.\]
So we have constructed the desired bounded linear operator $T_0:A_0\to B_0$ and moreover $\|T_0\|\leq C$, where $C$ is the constant from Proposition \ref{intl1}.

\begin{figure}[h]
\centering
\includegraphics[scale=0.5]{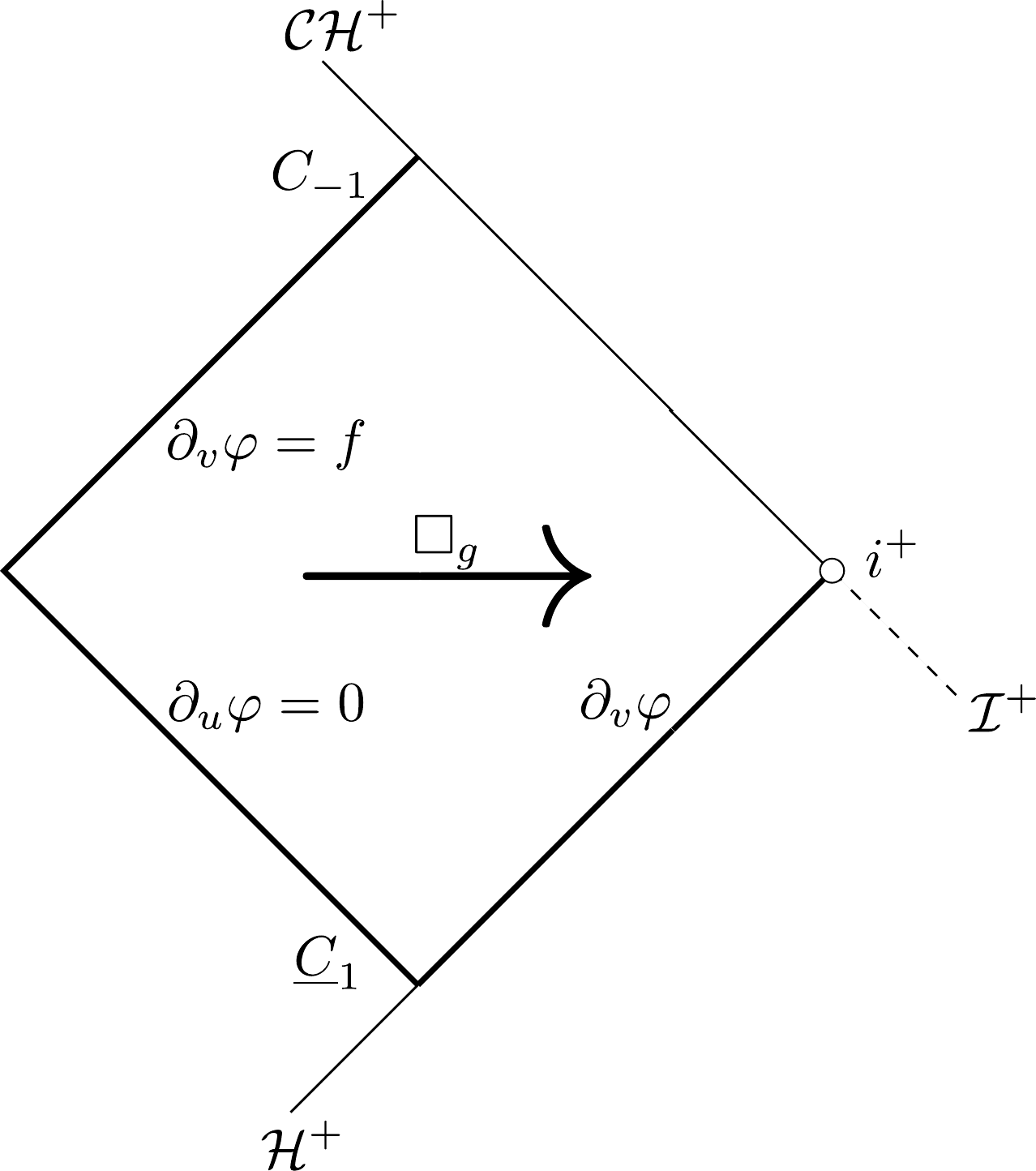}
\caption{Action of the operators $T_0$ and $T_1$ on smooth functions in their domains.}
\label{action}
\end{figure}

\noindent \textbf{Step 5b:} We similarly define the operator $T_1:A_1\to B_1$. Again, we first define $T_1$ for smooth functions $f\in A_1$ and then use a limiting process to extend $T_1$ to the entire space $A_1$. 

Given a smooth $f\in A_1$, we set $T_1f=\partial_v\varphi\vert_{\mathcal{H}^+\cap\{v\geq1\}}$, where $\varphi$ is a solution of the (spherically symmetric) wave equation with $\partial_v\varphi\vert_{C_{-1}\cap\{v\geq1\}}=f$ and $\partial_u\varphi\vert_{\underline{C}_1\cap\{u\leq-1\}}=0$. By the remarks at the start of Step 5 $T_1f$ is uniquely determined and by the $L^2$ estimates (Proposition \ref{intl2}) $T_1f\in B_1$. Indeed, applying Proposition \ref{intl2} to $\varphi$ yields
\begin{eqnarray*}
\|T_1f\|_{B_1}^2=\int_1^\infty v^{\alpha-\varepsilon} |T_1f|^2(v)\,dv &=&\int_{\mathcal{H}^+\{v\geq1\}} v^{\alpha-\varepsilon}(\partial_v\varphi)^2\\
&\leq& C(\alpha,\varepsilon)\left(\int_1^\infty v^\alpha(\partial_v\varphi)^2(-1,v)\,dv+\int_{\underline{C}_1^{int}\cap\{u\leq -1\}}  \big|\frac{\partial_u\varphi}{\Omega}\big|^2\right)\\
&=&C(\alpha,\varepsilon)\int_1^\infty v^\alpha |f(v)|^2\,dv =C(\alpha,\varepsilon)\|f\|_{A_1}^2<\infty.
\end{eqnarray*}
Moreover, by uniqueness and linearity of solutions to the wave equation, $T_1$ acts linearly on the smooth functions in $A_1$.

Now suppose $f\in A_1$ is not smooth. By density of smooth functions in $A_1=L^2_{v^\alpha}([1,\infty))$, there is a sequence of smooth functions $\{f_n\}$ in $A_1$ such that $f_n\to f$ in $A_1$. Thus $\{f_n\}$ is a Cauchy sequence in $A_1$, $T_1f_n\in B_1$ is defined for each $n$ by above and by the $L^2$ estimates (Proposition \ref{intl2}) $\{T_1f_n\}$ is a Cauchy sequence in $B_1$. But $B_1=L^2_{v^{\alpha-\varepsilon}}([1,\infty))$ is complete and hence $\{T_1f_n\}$ has a limit in $B_1$. We define 
\[T_1f:=\lim_{n\to\infty}T_1f_n \hspace{2mm}\mbox{ in } B_1.\]

As before $T_1f$ is well-defined. Indeed, if $\{f_n\}$ and $\{g_n\}$ are both sequences of smooth functions in $A_1$ which converge to $f$ in $A_1$, then by Proposition \ref{intl2}
\[
\|T_1f_n-T_1g_n\|_{B_1}=\|T_1(f_n-g_n)\|_{B_1}\leq C(\alpha,\varepsilon)^{1/2}\|f_n-g_n\|_{A_1}\to0,
\]
so it follows that $T_1f$ is uniquely determined. So $T_1$ is well defined. It is trivial to check that $T_1$ is linear. Furthermore, $T_1$ is bounded: if $f\in A_1$ is smooth, it follows immediately from Proposition \ref{intl2} that $\|T_1f\|_{B_1}\leq C(\alpha,\varepsilon)^{1/2}\|f\|_{A_1}$, and if $f\in A_1$ is not smooth, then for any sequence of smooth functions $\{f_n\}$ in $A_1$ converging to $f$ in $A_1$,
\[
\|T_1f\|_{B_1}=\lim_{n\to\infty}\|T_1f_n\|_{B_1}\leq \lim_{n\to\infty} C(\alpha,\varepsilon)^{1/2}\|f_n\|_{A_1} =C(\alpha,\varepsilon)^{1/2}\|f\|_{A_1}.\]
So we have constructed the desired bounded linear operator $T_1:A_1\to B_1$ and moreover $\|T_1\|\leq C(\alpha,\varepsilon)^{1/2}$, where $C(\alpha,\varepsilon)$ is the constant from Proposition \ref{intl2}.

\noindent \textbf{Step 5c:} We show that $T_0=T_1$ on $A_0\cap A_1$. To see this, we note that $T_0\vert_{A_0\cap A_1}:A_0\cap A_1 \to B_0+B_1$ is continuous. If $f\in A_0\cap A_1$, by continuity of $T_0:A_0\to B_0$ and by definition of $\|\cdot\|_{B_0+B_1}$ and $\|\cdot\|_{A_0\cap A_1}$ (see Appendix \ref{Interpolation Theory}), we have
\[\|T_0f\|_{B_0+B_1}\leq \|T_0f\|_{B_0}\leq C\|f\|_{A_0}\leq C\|f\|_{A_0\cap A_1},\]
so $T_0\vert_{A_0\cap A_1}:A_0\cap A_1 \to B_0+B_1$ is indeed continuous.
In exactly the same way,  $T_1\vert_{A_0\cap A_1}:A_0\cap A_1 \to B_0+B_1$ is continuous.

Now, the space of smooth and compactly supported functions $C^\infty_c$ satisfies $C^\infty_c\subset A_1$ and $(C_c^\infty,\|\cdot\|_{A_1})$ is dense in $(A_1,\|\cdot\|_{A_1})$. Also, because $\alpha>1$, given $f\in A_1$ by H\"{o}lder's inequality
\begin{eqnarray}\label{equivnorm}
\|f\|_{A_0} =\int_1^\infty |f|(v)\,dv\leq \left(\int_1^\infty v^\alpha |f|^2(v)\,dv\right)^{1/2}\left(\int_1^\infty v^{-\alpha}\,dv\right)^{1/2}=C\|f\|_{A_1}<\infty,
\end{eqnarray}
and hence $f\in A_0$. Thus $ A_1\subset A_0$. In particular, $A_0\cap A_1=A_1$. Moreover, (as $\|\cdot\|_{A_1}\leq \|\cdot\|_{A_0\cap A_1}$ and by \eqref{equivnorm}) $\|\cdot\|_{A_0\cap A_1}=\max(\|\cdot\|_{A_0},\|\cdot\|_{A_1})$ is equivalent to $\|\cdot\|_{A_1}$. It follows that $(C^\infty_c,\|\cdot\|_{A_0\cap A_1})$ is dense in $(A_0\cap A_1,\|\cdot\|_{A_0\cap A_1})$. Furthermore, it follows immediately from the definitions of $T_0$ and $T_1$ that $T_0=T_1$ on $C_c^\infty$ (as their actions on smooth functions are defined in the same way).

Now recall the classical result that if $X$, $Y$ are metric spaces, $S,T:X\to Y$ continuous maps and $A\subset X$ a dense subset of $X$ such that $S=T$ on $A$, then $S=T$ (on the entire space $X$). Applying this result with $X=A_0\cap A_1$, $Y=B_0+B_1$, $A=C^\infty_c$, $S=T_0\vert_{A_0\cap A_1}$ and $T=T_1\vert_{A_0\cap A_1}$, we deduce that $T_0=T_1$ on $A_0\cap A_1$.

\noindent \textbf{Step 5d:} We define the linear operator $T:A_0+A_1\to B_0+B_1$ as follows. For  $f\in A_0+A_1$ with $f=f_0+f_1$, $f_i\in A_i$, we set $Tf=T_0f_0+T_1f_1$. We need to check that $T$ is well-defined and linear.
\begin{itemize}
\item
To see that $T$ is well-defined, suppose $f\in A_0+A_1$ with $f=f_0+f_1=g_0+g_1$, where $f_0,g_0\in A_0$ and $f_1,g_1\in A_1$. Then $f_0-g_0=g_1-f_1\in A_0\cap A_1$ and hence (by Step 5c)
\begin{eqnarray*}
T_0(f_0-g_0)=T_0(g_1-f_1)=T_1(g_1-f_1)\hspace{3mm}&\Longrightarrow& \hspace{3mm} T_0f_0-T_0g_0=T_1g_1-T_1f_1\\
\hspace{3mm}&\Longrightarrow&\hspace{3mm}T_0f_0+T_1f_1=T_0g_0+T_1g_1,
\end{eqnarray*}
so $Tf$ is independent of the representation of $f$ and so $T$ is well-defined.

\item To see that $T$ is linear, suppose $\alpha,\beta\in\mathbb{R}$ and $f=f_0+f_1,g=g_0+g_1\in A_0+A_1$ (where $f_0,g_0\in A_0$, $f_1,g_1\in A_1$). Then by linearity of $T_0:A_0\to B_0$ and $T_1:A_1\to B_1$
\begin{eqnarray*}
T(\alpha f+ \beta g)=T((\alpha f_0+\beta g_0)+(\alpha f_1+\beta g_1)) &=& T_0(\alpha f_0+\beta g_0)+T_1(\alpha f_1+\beta g_1)\\
&=&\alpha (T_0f_0+T_1f_1)+\beta(T_0g_0+T_1g_1)\\
&=&\alpha Tf+\beta Tg, 
\end{eqnarray*}
so $T$ is indeed linear.
\end{itemize}

\noindent \textbf{Step 6:} Notice it follows immediately from the definition of the linear operator $T:A_0+A_1\to B_0+B_1$ that $T\vert_{A_0}=T_0:A_0\to B_0$ and $T\vert_{A_1}=T_1:A_1\to B_1$ and both of these maps are bounded. In other words, using the notation of Appendix \ref{Interpolation Theory}, $T:\overline{A}\to \overline{B}$. Hence, by Step 2, 
\[T:(A_0,A_1)_{\theta,p}\to(B_0,B_1)_{\theta,p}\hspace{3mm} \mbox{ with norm } \hspace{3mm}\|T\|_{\theta,p}\leq \|T_0\|^{1-\theta}\|T_1\|^\theta.\]
Recalling that we fixed $p_0=1$, $p_1=1$
$w_0(v)=1$, $w_1(v)=v^\alpha$, $\omega_0(v)=1$ and $\omega_1(v)=v^{\alpha-\varepsilon}$, where $\alpha>1$ is an integer and $0<\varepsilon<1$, Steps 3 and 4 allow us to compute $(A_0,A_1)_{\theta,p}$ and $(B_0,B_1)_{\theta,p}$. By Step 3
\[(A_0,A_1)_{\theta,p}=A_p(w)=L^p_w([1,\infty)),\]
where  
\[w(v)=w_0(v)^{p\frac{1-\theta}{p_0}}w_1(v)^{p\frac{\theta}{p_1}}=1^{p(1-\theta)/1}\cdot (v^\alpha)^{p\theta/2}=v^\frac{\alpha p\theta}{2},\]
so 
\[(A_0,A_1)_{\theta,p}=L^p_{v^\frac{\alpha p\theta}{2}}([1,\infty)).\]
Similarly, by Step 4
\[(B_0,B_1)_{\theta,p}=B_p(\omega)=L^p_\omega([1,\infty)),\]
where  
\[\omega(v)=\omega_0(v)^{p\frac{1-\theta}{p_0}}\omega_1(v)^{p\frac{\theta}{p_1}}=1^{p(1-\theta)/1}\cdot (v^{\alpha-\varepsilon})^{p\theta/2}=v^{(\alpha-\varepsilon)\frac{ p\theta}{2}},\]
so 
\[(B_0,B_1)_{\theta,p}=L^p_{v^{(\alpha -\varepsilon)\frac{p\theta}{2}}}([1,\infty)).\]
Thus, we conclude
\begin{equation}\label{range}
T:L^p_{v^\frac{\alpha p\theta}{2}}([1,\infty))\to L^p_{v^{(\alpha -\varepsilon)\frac{p\theta}{2}}}([1,\infty))
\end{equation}
is a bounded linear operator and by Steps 5a and 5b its norm $\|T\|_{\theta,p}$  is bounded by
\begin{equation}\label{bop}
\|T\|_{\theta,p}\leq \|T_0\|^{1-\theta}\|T_1\|^\theta \leq C^{1-\theta}\cdot C'(\alpha,\varepsilon)^{\theta/2}=\tilde{C}(\alpha,\varepsilon,p),
\end{equation}
where $C$ and $C'$ are the constants from Propositions \ref{intl1} and \ref{intl2} respectively.

\noindent \textbf{Step 7:} We complete the proof. Recall that given a solution $\phi$ of the wave equation with smooth, spherically symmetric data on $S$, we need to show \eqref{phi1}, where we denoted by $\phi_1$ a smooth, spherically symmetric solution of the wave equation such that 
\[\partial_u\phi_1\vert_{\underline{C}_1^{int}\cap\{u\leq-1\}}=0, \hspace{5mm} \partial_v\phi_1\vert_{C_{-1}^{int}\cap\{v\geq1\}}=\partial_v\phi\vert_{C_{-1}^{int}\cap\{v\geq1\}}.\]

Note that if 
\[\int_1^\infty v^\frac{\alpha p\theta}{2}|\partial_v\phi|^p(-1,v)\,dv=+\infty,\]
then \eqref{phi1} trivially holds. So we assume 
\[\int_1^\infty v^\frac{\alpha p\theta}{2}|\partial_v\phi|^p(-1,v)\,dv<+\infty,\]
namely we assume $f:=\partial_v\phi\vert_{C_{-1}^{int}\cap\{v\geq1\}}\in L^p_{v^\frac{\alpha p\theta}{2}}([1,\infty))=(A_0,A_1)_{\theta,p}$. But this implies $Tf\in (B_0,B_1)_{\theta,p}=L^p_{v^{(\alpha -\varepsilon)\frac{p\theta}{2}}}([1,\infty))$ (using \eqref{range}), and moreover by \eqref{bop}
\[\int_1^\infty v^{(\alpha-\varepsilon)\frac{p\theta}{2}}|Tf|^p(v)\,dv \leq \tilde{C}(\alpha,\varepsilon,p)^p\int_1^\infty v^\frac{\alpha p\theta}{2}|f|^p(v)\,dv=\tilde{C}(\alpha,\varepsilon,p)^p\int_1^\infty v^\frac{\alpha p\theta}{2}|\partial_v\phi|^p(-1,v)\,dv.\]
Comparing the last inequality to \eqref{phi1}, we see that in order to prove \eqref{phi1} it remains only to prove $Tf=\partial_v\phi_1\vert_{\mathcal{H}^+\cap\{v\geq1\}}$.

To see this, note that $f\in (A_0,A_1)_{\theta,p}\subseteq A_0+A_1\subseteq A_0+A_0$ since $A_1\subseteq A_0$ (as $\alpha>1$). But $A_0+A_0=A_0$ (with equal norms), so in particular $f\in A_0$. Moreover, as $\phi$ is a smooth solution of the wave equation, $f=\partial_v\phi\vert_{C_{-1}^{int}\cap\{v\geq1\}}$ is smooth. Hence, by construction of $T$ (Step 5d), $Tf=T_0f$ and by definition of $T_0$ (Step 5a), $T_0f=\partial_v\phi_1\vert_{\mathcal{H}^+\cap\{v\geq1\}}$. So $Tf=\partial_v\phi_1\vert_{\mathcal{H}^+\cap\{v\geq1\}}$ and the proof is complete.
\end{proof}

\subsection{Proof of Theorem \ref{condinstab2}}\label{condrel2}
Now that we have seen the proof of Theorem \ref{interior reduction}, and in particular the proof of the $L^1$ estimates, we give the proof of the conditional instability result, Theorem \ref{condinstab2}.

\begin{proof}[Proof of Theorem \ref{condinstab2}.]
For the parameter range $\frac{2\sqrt e}{e+1}<Q<1$, the result trivially follows from the instability result Theorem \ref{mtv2}. So we need only consider the parameter range $0<Q\leq \frac{2\sqrt e}{e+1}$.

Now, as noted in Section \ref{condrel}, Theorem \ref{mtv2}  holds for the  range $0<Q\leq \frac{2\sqrt e}{e+1}$ if Theorem \ref{mainthm} is valid for that range. Moreover, Theorem \ref{mainthm}  holds for $0<Q\leq \frac{2\sqrt e}{e+1}$ if Theorem \ref{interior reduction} holds in this range for all smooth, spherically symmetric solutions $\phi$ of \eqref{waveequation} satisfying the initial conditions \eqref{ic1} and \eqref{ic2} and the assumption \eqref{limit}. In particular, instability holds in $0<Q\leq \frac{2\sqrt e}{e+1}$ if Theorem \ref{interior reduction} holds in this range for all smooth, spherically symmetric solutions $\phi$ satisfying
\eqref{ic2}.
 However, arguing as in the proof of Theorem \ref{interior reduction} we see that the estimate \eqref{phi2} holds regardless of the parameter range, and therefore Theorem \ref{interior reduction} holds if we can establish the estimate \eqref{phi1} for the range $0<Q\leq \frac{2\sqrt e}{e+1}$. 
 
 By the interpolation argument in the proof of Theorem \ref{interior reduction}, \eqref{phi1} holds for $0<Q\leq \frac{2\sqrt e}{e+1}$ so long as the $L^1$ and $L^2$ estimates of Propositions \ref{intl1} and \ref{intl2} hold in this range for smooth, spherically symmetric solutions $\phi$ satisfying $\partial_u\phi=0$ on $\underline{C}^{int}_1\cap\{u\leq-1\}$. This is trivially true in the $L^2$ case by Proposition \ref{intl2}. Thus, the instability result Theorem \ref{mtv2} holds for $0<Q\leq \frac{2\sqrt e}{e+1}$, provided there exists some $C=C>0$ such that
\begin{equation*}
\int_{\mathcal{H}^+\cap\{v\geq1\}}|\partial_v\phi|\leq C\int_{C_{-1}^{int}\cap\{v\geq1\}}   |\partial_v\phi|
\end{equation*}
for all smooth, spherically symmetric solutions $\phi$ satisfying $\partial_u\phi=0$ on $\underline{C}_1\cap\{u\leq-1\}$.

It follows that the instability result Theorem \ref{mtv2} holds unless there is a solution $\phi$ as above such that 
\[\int_{\mathcal{H}^+\cap\{v\geq1\}}|\partial_v\phi|\] cannot be bounded by a constant multiple of \[\int_{C_{-1}^{int}\cap\{v\geq1\}}   |\partial_v\phi|,\]
or in other words unless there is a sequence $\{\phi_n\}$ of smooth, spherically symmetric solutions to \eqref{waveequation}  such that in the black hole interior
\begin{enumerate}
\item
$\partial_u\phi_n=0$ on $\underline{C}_1\cap\{u\leq-1\}$, 
\item
$\int_{\mathcal{H}^+\cap\{v\geq1\}}|\partial_v\phi_n|=1$, and
\item
$\int_{C_{-1}^{int}\cap\{v\geq1\}}   |\partial_v\phi_n|\searrow 0$, 
\end{enumerate}
as desired.
\end{proof}

\newpage
\section{Estimates in the Black Hole Exterior}\label{Exterior}

In this section, we prove the results relating to the black hole exterior needed for  the proof of Theorem \ref{reduction}, namely Theorem \ref{lp exterior estimate}, Proposition \ref{alt2} and Propositon \ref{alt1}. The proofs of the two propositions are straightforward and we defer them to the end of the section and focus first on the much more involved proof of Theorem \ref{lp exterior estimate}. Note that for the rest of this section we work in the coordinate system for the exterior region introduced in Section \ref{exterior coordinates}.

\subsection{Proof of Theorem \ref{lp exterior estimate}}

Recall from Section \ref{Outline of proof} that Theorem \ref{lp exterior estimate} asserts that if $\alpha-(1+2\varepsilon)>0$, then for a solution $\phi$ of the wave equation \eqref{waveequation} with smooth, spherically symmetric data on $S$ and  $R>r_+$ sufficiently large, there exists $C=C(R,\alpha,\varepsilon,p)>0$ such that
\[\int_{\gamma_R\cap\{v\geq1\}} v^{(\alpha-(1+2\varepsilon))\frac{p\theta}{2}}|\partial_v\phi|^p \leq C \left(\int_{\mathcal{H}^+\cap\{v\geq1\}} v^{\frac{\alpha p\theta}{2}}|\partial_v\phi|^p+\int_{\underline{C}_1^{ext}\cap\{u\geq u_R(1)\}} \Omega^{-p\theta}|\partial_u\phi|^p\right) \]
and
\[\int_{\gamma_R\cap\{v\geq1\}} v^{(\alpha-(1+2\varepsilon))\frac{p\theta}{2}}|\partial_u\phi|^p \leq C \left(\int_{\mathcal{H}^+\cap\{v\geq1\}} v^{\frac{\alpha p\theta}{2}}|\partial_v\phi|^p+\int_{\underline{C}_1^{ext}\cap\{u\geq u_R(1)\}} \Omega^{-p\theta}|\partial_u\phi|^p\right), \]
where
\[\frac{1}{p}=\frac{1-\theta}{1}+\frac{\theta}{2}, \hspace{10mm}0<\theta<1.\]
As in the case of the black hole interior, we prove these weighted $L^p$-type estimates by proving the analogous estimates for $p=1$ and $p=2$ and interpolating to deduce the estimates for $1<p<2$. We begin with the $L^1$-type estimates.

\subsubsection{$L^1$ Estimates}

In this section, we aim to show that we can control $\int_{\gamma_R\cap\{v\geq1\}}|\partial_v\phi|$ and $\int_{\gamma_R\cap\{v\geq1\}}|\partial_u\phi|$ by terms of the form $\int_{\mathcal{H}^+\cap\{v\geq1\}}|\partial_v\phi|$ and $\int_{\underline{C}_1^{ext}\cap\{u\geq u_R(1)\}}|\partial_u\phi|$ for any $R>r_+$. However, we cannot prove this directly. Rather, we must partition the region $\{r_+\leq r\leq R\}$ into subregions where $\sup_u\int|\lambda|\,dv$ and $\sup_v\int|\nu|\,du$ are small, obtain an estimate for each region separately and then patch these estimates together.

The next proposition establishes the desired estimates for a constant $r$-hypersurfaces sufficiently close to the event horizon. The proof is similar to the proof of Proposition \ref{intl1} (so again uses ideas from Proposition 13.1 of \cite{D}).

\begin{proposition}\label{1vprop}
Suppose $\phi$ is a solution of the wave equation \eqref{waveequation} with smooth, spherically symmetric data on $S$. Let $r_1>r_+$ be such that $r_1-r_+<\frac{r_+}{2}$. Then there exists $C>0$ (independent of $r_1$) such that
\begin{equation}\label{1v}
\int_{\gamma_{r_1}\cap\{v\geq 1\}}|\partial_v\phi|\,dv\leq C\left(\int_{\mathcal{H}^+\cap\{v\geq1\}}|\partial_v\phi|\,dv+\int_{u_{r_1}(1)}^\infty|\partial_u\phi|(u,1)\,du\right),
\end{equation}
and
\begin{equation}\label{1u}
\int_{\gamma_{r_1}\cap\{u\geq u_{r_1}(1)\}}|\partial_u\phi|\,du\leq C\left(\int_{\mathcal{H}^+\cap\{v\geq1\}}|\partial_v\phi|\,dv+\int_{u_{r_1}(1)}^\infty|\partial_u\phi|(u,1)\,du\right).
\end{equation}
\end{proposition}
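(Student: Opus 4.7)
The proof should follow closely the strategy of Proposition \ref{intl1}, now adapted to the exterior region where the signs are different: $\lambda = \partial_v r = \Omega^2 \geq 0$ and $\nu = \partial_u r = -\Omega^2 \leq 0$. The wave equation in spherical symmetry, rewritten as the half-transport identities $\partial_u(r\partial_v\phi) = -\lambda \partial_u\phi$ and $\partial_v(r\partial_u\phi) = -\nu \partial_v\phi$, provides two propagation equations coupling the $\partial_v$ and $\partial_u$ derivatives across null directions.

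Setting $u_1 := u_{r_1}(1)$, the plan is to derive two coupled $L^1$ inequalities. First, for each $v \geq 1$, integrate $\partial_u(r\partial_v\phi) = -\lambda \partial_u\phi$ from $u$ out to $u = +\infty$ (which is $\mathcal{H}^+$ in the exterior chart), take a supremum over $u \in [u_{r_1}(v),+\infty)$, integrate in $v \in [1,\infty)$ and swap the order of integration. The key computation --- exactly as in Proposition \ref{intl1} --- is that along a constant-$u$ slab inside $\{r_+ \leq r \leq r_1\} \cap \{v \geq 1\}$ we have $\int \lambda/r\, dv = \log(r_1/r(u,1)) \leq \log(r_1/r_+)$. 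This produces an estimate of the schematic form
\[\int_1^\infty \sup_{u \geq u_{r_1}(v)} r|\partial_v\phi|\, dv \leq r_+ \int_{\mathcal{H}^+\cap\{v\geq 1\}}|\partial_v\phi| + \log\tfrac{r_1}{r_+} \int_{u_1}^\infty \sup_{v \in [1, v_{r_1}(u)]} r|\partial_u\phi|\, du.\]
A symmetric argument, starting from $\partial_v(r\partial_u\phi) = -\nu \partial_v\phi$ and integrating in $v$ from $1$ to $v_{r_1}(u)$, together with the analogous identity $\int |\nu|/r \, du = \log(r_1/r_+)$ along a constant-$v$ slab, yields the mirror inequality
\[\int_{u_1}^\infty \sup_{v \in [1, v_{r_1}(u)]} r|\partial_u\phi|\, du \leq \int_{u_1}^\infty r(u,1)|\partial_u\phi|(u,1)\, du + \log\tfrac{r_1}{r_+} \int_1^\infty \sup_{u \geq u_{r_1}(v)} r|\partial_v\phi|\, dv.\]

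Substituting the second inequality into the first brings down a coefficient $\log^2(r_1/r_+)$ on the supremum integral, and the hypothesis $r_1 - r_+ < r_+/2$ guarantees $\log(r_1/r_+) < \log(3/2) < 1$, so $1 - \log^2(r_1/r_+)$ is bounded below \emph{uniformly} in $r_1$. Absorbing and dividing through produces a bound on $\int \sup_u r|\partial_v\phi|\, dv$ by the two data terms with a constant depending only on the spacetime parameters; the symmetric bound on $\int \sup_v r|\partial_u\phi|\, du$ follows by reinserting. Restricting the suprema to $u = u_{r_1}(v)$ (respectively $v = v_{r_1}(u)$) recovers the pointwise $r$-weighted integrals along $\gamma_{r_1}$, and dividing by $r_1 \geq r_+$ yields \eqref{1v} and \eqref{1u}.

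The only subtlety --- rather than a genuine obstacle --- lies in justifying the $u \to +\infty$ limits: one must verify that $r\partial_v\phi$ admits a well-defined trace on $\mathcal{H}^+$ equal to $r_+ \partial_v\phi|_{\mathcal{H}^+}$, and that the improper $u$-integrals of $\lambda |\partial_u\phi|$ converge. Both are standard consequences of the smoothness of $\phi$ (in regular coordinates adapted to $\mathcal{H}^+$) together with the $L^1$ finiteness assumptions on the right-hand side; as usual, if the right-hand sides of \eqref{1v} or \eqref{1u} are infinite there is nothing to prove, so one may assume finiteness throughout.
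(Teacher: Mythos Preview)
Your proposal is correct and follows essentially the same approach as the paper. The only cosmetic difference is bookkeeping: you keep the $r$-weights throughout (working with $\sup r|\partial_v\phi|$ and $\sup r|\partial_u\phi|$) and hence use $\int \lambda/r\,dv \leq \log(r_1/r_+)$ as the small constant, whereas the paper divides by $r$ immediately (using $r_+ \leq r \leq r_1$) to work with unweighted suprema and instead uses $\int \lambda\,dv \leq r_1 - r_+$, obtaining the factor $(r_1 - r_+)/r_+ < 1/2$ directly. Both choices close the bootstrap under the hypothesis $r_1 - r_+ < r_+/2$, and the remaining steps are identical.
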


\begin{proof}
We work in the shaded region shown in the figure below, namely $\{r_+\leq r\leq r_1\}\cap\{v\geq1\}$.
Recall from \eqref{ln1} that $\lambda=\partial_vr=\Omega^2>0$ and $\nu=\partial_ur=-\Omega^2<0$ in this region.
\begin{figure}[H]
\centering
\includegraphics[scale=.9]{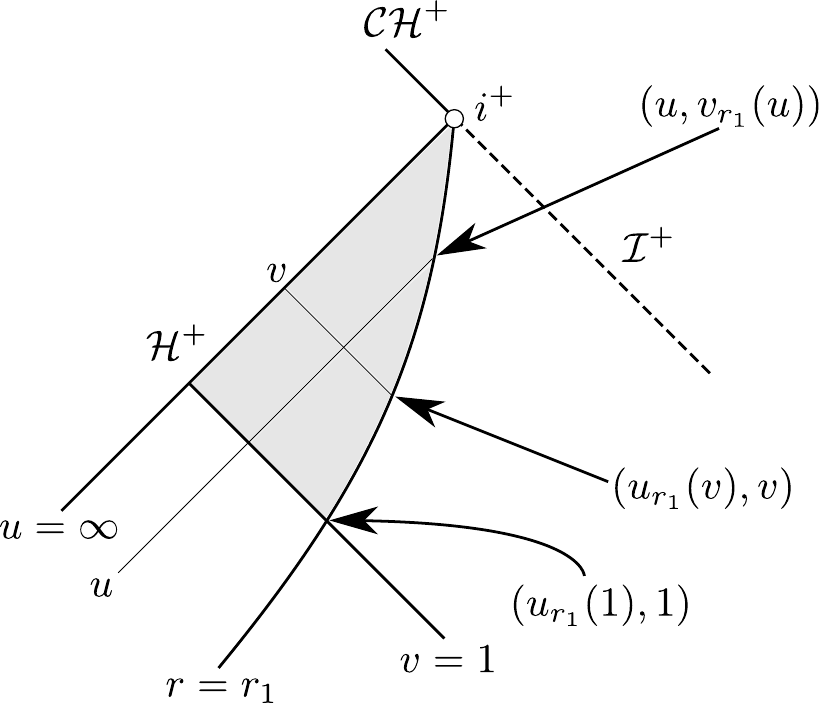}
\end{figure}

\noindent The wave equation \eqref{extwe} may be written as 
\[\partial_u(r\partial_v\phi)=-\lambda\partial_u\phi.\] 
For $v\geq 1$ and $u\geq u_{r_1}(1)$, integrating over $u'\in[u,\infty]$ yields
\[|r\partial_v\phi|(u,v)|\leq|r\partial_v\phi|(\infty,v)|+\int_u^\infty |\lambda\partial_u\phi|\,du',\]
and hence (as $\frac{r(\infty,v)}{r(u,v)}=\frac{r_+}{r(u,v)}\leq 1$ and $\frac{1}{r(u,v)}\leq \frac{1}{r_+}$),
\begin{equation}\label{weint'}
|\partial_v\phi|(u,v)|\leq|\partial_v\phi|(\infty,v)|+\frac{1}{r_+}\int_u^\infty |\lambda\partial_u\phi|\,du'.
\end{equation}
Thus, setting
\[\theta(u_1,v):=\sup_{u_1\leq u\leq \infty}|\partial_v\phi|(u,v),\]
and taking the supremum of \eqref{weint'} over $u\in[u_{r_1}(v),\infty]$ and then integrating over $v\in [1,\infty)$, we get
\begin{equation}\label{esst8}
\int_1^\infty\theta(u_{r_1}(v),v)\,dv\leq\int_1^\infty|\partial_v\phi|(\infty,v)\,dv+\frac{1}{r_+}\int_1^\infty\int_{u_{r_1}(v)}^\infty|\lambda\partial_u\phi|\,du\,dv.
\end{equation}
Now,  define 
\[Z(u,v):=\sup_{1\leq \overline{v}\leq v}|\partial_u\phi|(u,\overline{v}).\]
Then, since $|\lambda|=\partial_vr$,
\begin{eqnarray*}
\int_1^\infty\int_{u_{r_1}(v)}^\infty|\lambda\partial_u\phi|\,du\,dv&=& \int_{u_{r_1}(1)}^\infty\int_1^{v_{r_1}(u)}|\lambda\partial_u\phi|\,dv\,du\\
&\leq&\sup_{u\in[u_{r_1}(1),\infty]}\int_1^{v_{r_1}(u)}|\lambda|\,dv\int_{u_{r_1}(1)}^\infty\sup_{v\in[1,v_{r_1}(u)]}|\partial_u\phi|\,du\\
&=&\sup_{u\in[u_{r_1}(1),\infty]}(r_1-r(u,1))\int_{u_{r_1}(1)}^\infty Z(u,v_{r_1}(u))\,du\\
&=& (r_1-r_+)\int_{u_{r_1}(1)}^\infty Z(u,v_{r_1}(u))\,du.
\end{eqnarray*}
So \eqref{esst8} becomes 
\begin{equation}\label{esst99}
\int_1^\infty\theta(u_{r_1}(v),v)\,dv\leq\int_1^\infty|\partial_v\phi|(\infty,v)\,dv+\frac{r_1-r_+}{r_+}\int_{u_{r_1}(1)}^\infty Z(u,v_{r_1}(u))\,du.
\end{equation}

To estimate the rightmost term, we rewrite the wave equation \eqref{extwe} as
\[\partial_v(r\partial_u\phi)=-\nu\partial_v\phi\] 
and integrate  over $v\in[1,\overline{v}]$ to get
\[|r\partial_u\phi|(u,\overline{v})\leq |r\partial_u\phi|(u,1)+\int_1^{\overline{v}}|\nu\partial_v\phi|\,dv,\] and hence
\[|\partial_u\phi|(u,\overline{v})\leq |\partial_u\phi|(u,1)+\frac{1}{r_+}\int_1^{\overline{v}}|\nu\partial_v\phi|\,dv\]
because $\frac{r(u,1)}{r(u,\overline{v})}\leq 1$ since $r$ is increasing in $v$.
Taking the supremum over $\overline{v}\in [1,v_{r_1}(u)]$ and integrating over $u\in[u_{r_1}(1),\infty]$ yields
\[\int_{u_{r_1}(1)}^\infty Z(u,v_{r_1}(u))\,du\leq \int_{u_{r_1}(1)}^\infty |\partial_u\phi|(u,1)\,du+\frac{1}{r_+}\int_{u_{r_1}(1)}^\infty\int_1^{v_{r_1}(u)}|\nu\partial_v\phi|\,dv\,du.\]
But $|\nu|=-\partial_ur$, so
\begin{eqnarray*}
\int_{u_{r_1}(1)}^\infty\int_1^{v_{r_1}(u)}|\nu\partial_v\phi|\,dv\,du&=&\int_1^\infty\int_{u_{r_1}(v)}^\infty|\nu\partial_v\phi|\,du\,dv\\
&\leq&\sup_{v\in[1,\infty)}\int_{u_{r_1}(v)}^\infty|\nu|\,du\int_1^\infty\sup_{u\in[u_{r_1}(v),\infty]}|\partial_v\phi|\,dv\\
&=&(r_1-r_+)\int_1^\infty \theta(u_{r_1}(v),v)\,dv,
\end{eqnarray*}
and hence 
\[ \int_{u_{r_1}(1)}^\infty Z(u,v_{r_1}(u))\,du\leq \int_{u_{r_1}(1)}^\infty |\partial_u\phi|(u,1)\,du+\frac{r_1-r_+}{r_+}\int_1^\infty \theta(u_{r_1}(v),v)\,dv.\]

Substituting the last equation into \eqref{esst99} gives
\begin{eqnarray*}
\int_1^\infty\theta(u_{r_1}(v),v)\,dv&\leq&\int_1^\infty|\partial_v\phi|(\infty,v)\,dv+\frac{r_1-r_+}{r_+}\int_{u_{r_1}(1)}^\infty |\partial_u\phi|(u,1)\,du\\
& &+\left(\frac{r_1-r_+}{r_+}\right)^2\int_1^\infty \theta(u_{r_1}(v),v)\,dv,
\end{eqnarray*}
and since $\displaystyle\left(\frac{r_1-r_+}{r_+}\right)<1/2$, we deduce
\[\int_1^\infty\theta(u_{r_1}(v),v)\,dv\leq C\left(\int_1^\infty|\partial_v\phi|(\infty,v)\,dv+\int_{u_{r_1}(1)}^\infty |\partial_u\phi|(u,1)\,du\right).\]
Thus,
\begin{eqnarray*}
\int_1^\infty|\partial_v\phi|(u_{r_1}(v),v)\,dv &\leq&\int_1^\infty\theta(u_{r_1}(v),v)\,dv\\
&\leq&C\left(\int_1^\infty|\partial_v\phi|(\infty,v)\,dv+\int_{u_{r_1}(1)}^\infty |\partial_u\phi|(u,1)\,du\right),
\end{eqnarray*}
which proves \eqref{1v}. The proof of the second statement \eqref{1u} is similar.
\end{proof}

Next, we show that if two constant $r$-hypersurfaces, $\gamma_{r_j}$ and $\gamma_{r_{j+1}}$  with $r_j<r_{j+1}$ say, are sufficiently close, we can control
\[\int_{\gamma_{r_{j+1}}\cap\{v\geq 1\}}|\partial_v\phi|\,dv \hspace{5mm}\mbox{ and } \hspace{5mm} \int_{\gamma_{r_{j+1}}\cap\{u\geq u_{r_{j+1}}(1)\}}|\partial_u\phi|\,du \]
in terms of the corresponding integrals on $\gamma_{r_j}$ and $\int_{u_{r_{j+1}}(1)} ^{u_{r_j}(1)}|\partial_u\phi|(u,1)\,du$. The proof is very similar to the proof of the previous proposition.

\begin{proposition}\label{jvprop}
Suppose $\phi$ is a solution of the wave equation \eqref{waveequation} with smooth, spherically symmetric data on $S$. Let $r_{j+1}>r_j>r_+$ be such that $r_{j+1}-r_j <\frac{r_+}{2}$. Then
\begin{eqnarray}
\int_{\gamma_{r_{j+1}}\cap\{v\geq 1\}}|\partial_v\phi|\,dv &\leq& C\left(\int_{\gamma_{r_{j}}\cap\{v\geq 1\}}|\partial_v\phi|\,dv+\int_{\gamma_{r_{j}}\cap\{u\geq u_{r_j}(1)\}}|\partial_u\phi|\,du\right.\notag\\
& & \left.+\int_{u_{r_{j+1}}(1)} ^{u_{r_j}(1)}|\partial_u\phi|(u,1)\,du\right) \label{jv}
\end{eqnarray}
and
\begin{eqnarray}
\int_{\gamma_{r_{j+1}}\cap\{u\geq u_{r_{j+1}}(1)\}}|\partial_u\phi|\,du &\leq& C\left(\int_{\gamma_{r_{j}}\cap\{v\geq 1\}}|\partial_v\phi|\,dv+\int_{\gamma_{r_{j}}\cap\{u\geq u_{r_j}(1)\}}|\partial_u\phi|\,du\right.\notag\\
& & \left.+\int_{u_{r_{j+1}}(1)} ^{u_{r_j}(1)}|\partial_u\phi|(u,1)\,du\right), \label{ju}
\end{eqnarray}
where $C>0$ is independent of $r_j,r_{j+1}$. 
\end{proposition}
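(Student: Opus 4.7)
The argument mirrors the proof of Proposition \ref{1vprop}, with the event horizon replaced by $\gamma_{r_j}$ as the ``inner'' boundary. I would work in the trapezoidal region
\[\mathcal{R} := \{r_j \leq r \leq r_{j+1}\} \cap \{v \geq 1\},\]
whose null boundary components are $\gamma_{r_j} \cap \{v \geq 1\}$, $\gamma_{r_{j+1}} \cap \{v \geq 1\}$, and the segment $\underline{C}_1 \cap \{u_{r_{j+1}}(1) \leq u \leq u_{r_j}(1)\}$. Recall that in the exterior $\lambda = \Omega^2 > 0$ and $\nu = -\Omega^2 < 0$, and the smallness assumption $r_{j+1}-r_j < r_+/2$ gives
\[\sup_v \int_{u_{r_{j+1}}(v)}^{u_{r_j}(v)} |\nu|\,du = r_{j+1}-r_j < \tfrac{r_+}{2},\]
and similarly for $\int |\lambda|\,dv$ at fixed $u$.

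Rewriting the wave equation as $\partial_u(r\partial_v\phi) = -\lambda\partial_u\phi$ and integrating along constant $v \geq 1$ from $u$ to $u_{r_j}(v)$ for $u \in [u_{r_{j+1}}(v), u_{r_j}(v)]$ yields, after dividing by $r \geq r_+$,
\[|\partial_v\phi|(u,v) \leq |\partial_v\phi|(u_{r_j}(v),v) + \tfrac{1}{r_+}\int_u^{u_{r_j}(v)} |\lambda\partial_u\phi|\,du'.\]
Taking the supremum in $u$, integrating over $v \in [1,\infty)$, applying Fubini and using the bound on $\int|\nu|\,du$, I obtain
\[X := \int_1^\infty \sup_u|\partial_v\phi|(u,v)\,dv \leq \int_{\gamma_{r_j} \cap \{v\geq 1\}} |\partial_v\phi| + \tfrac{r_{j+1}-r_j}{r_+}\, Y,\]
where $Y := \int \sup_v|\partial_u\phi|(u,v)\,du$, with the suprema taken over the relevant $u,v$-ranges inside $\mathcal{R}$.

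The matching estimate for $Y$ comes from integrating $\partial_v(r\partial_u\phi) = -\nu\partial_v\phi$ in $v$: at fixed $u$, the integration starts at $v = v_{r_j}(u)$ (on $\gamma_{r_j}$) when $u \geq u_{r_j}(1)$, but at $v = 1$ (on $\underline{C}_1$) when $u \in [u_{r_{j+1}}(1), u_{r_j}(1)]$. This split is precisely what produces the two ``initial'' terms
\[\int_{\gamma_{r_j} \cap \{u \geq u_{r_j}(1)\}} |\partial_u\phi|\,du \quad\text{and}\quad \int_{u_{r_{j+1}}(1)}^{u_{r_j}(1)} |\partial_u\phi|(u,1)\,du\]
in the bound for $Y$, together with a coupling $\tfrac{r_{j+1}-r_j}{r_+} X$ back into $X$. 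Substituting back, I obtain
\[X \leq \mathcal{D} + \Big(\tfrac{r_{j+1}-r_j}{r_+}\Big)^{\!2} X,\]
where $\mathcal{D}$ denotes the right-hand side of \eqref{jv}. Since $(r_{j+1}-r_j)/r_+ < 1/2$, the last term is absorbed into the left-hand side, yielding \eqref{jv} with a constant depending only on $r_+$. Estimate \eqref{ju} follows by reading off the intermediate bound for $Y$. The only genuinely new element compared with Proposition \ref{1vprop} is the bookkeeping for the two-component ``inner'' boundary of $\mathcal{R}$; once this is handled, the bootstrap closes identically, and this is the step to be careful with rather than a true obstacle.
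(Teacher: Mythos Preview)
Your proposal is correct and follows essentially the same route as the paper's proof: the same trapezoidal region, the same pair of integrated wave-equation identities, the same $X$--$Y$ bootstrap with coupling constant $(r_{j+1}-r_j)/r_+$, and the same split of the inner boundary into $\gamma_{r_j}$ and the segment of $\underline{C}_1$ (the paper handles this via $M(u)=\max(1,v_{r_j}(u))$). One small slip: when you pass from the double integral $\iint|\lambda\partial_u\phi|$ to the $Y$-term after Fubini, the relevant smallness is $\sup_u\int|\lambda|\,dv\le r_{j+1}-r_j$, not $\int|\nu|\,du$; the latter is what you use in the companion step bounding $Y$ in terms of $X$.
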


\begin{proof}
The proof of this proposition runs along the same lines as the previous proposition but is a little more involved. This time the analysis takes place in the shaded region shown below.
\begin{figure}[h]
\centering 
\includegraphics[width=0.5\textwidth]{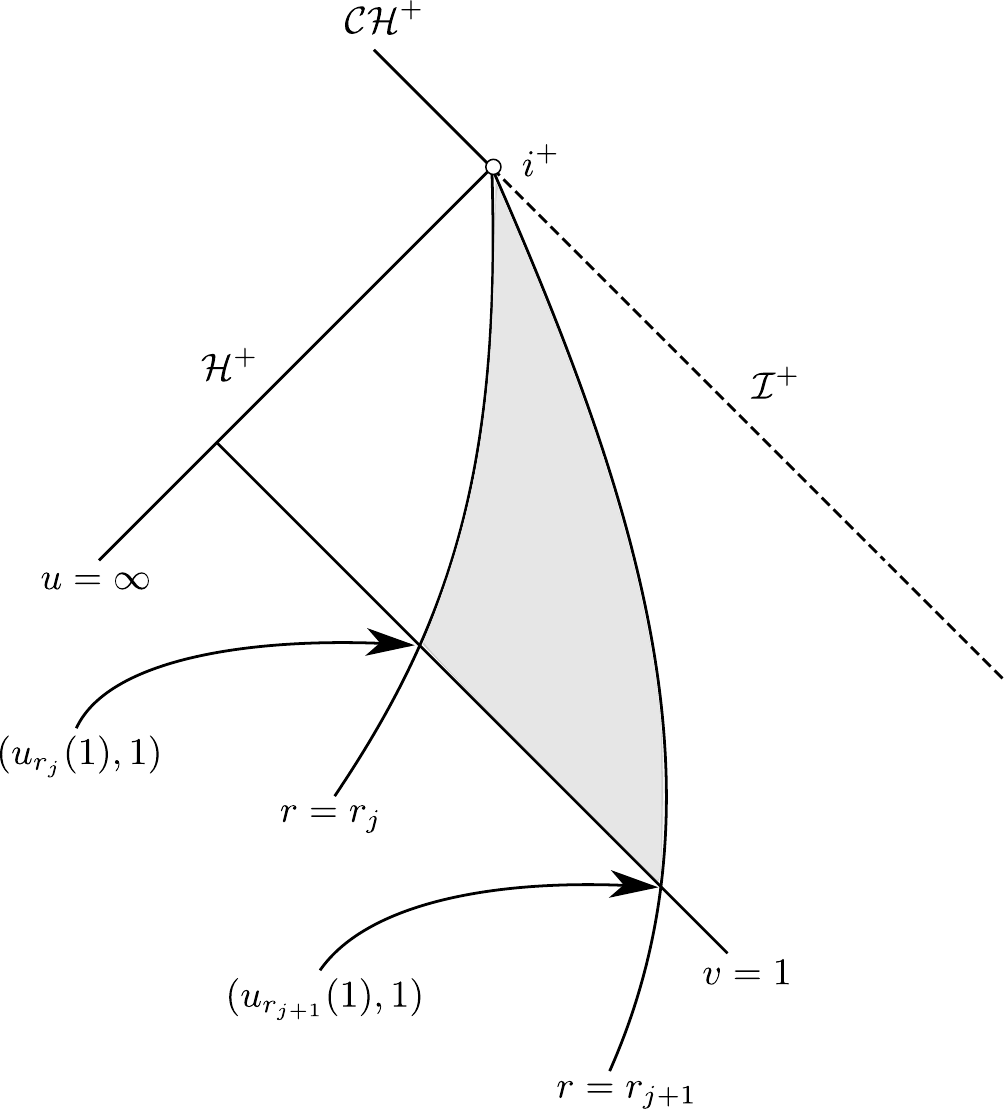}
\end{figure}
As noted previously, the wave equation \eqref{extwe} may be written as 
\[\partial_u(r\partial_v\phi)=-\lambda\partial_u\phi.\]
Thus for $v\geq 1$ and $u\in[u_{r_{j+1}}(v),u_{r_j}(v)]$, we have
\begin{eqnarray}
r|\partial_v\phi|(u,v)\leq r|\partial_v\phi|(u_{r_j}(v),v) +\int_{u}^{u_{r_j}(v)} |\lambda\partial_u\phi|(u',v)\,du'\notag\\
\Longrightarrow |\partial_v\phi|(u,v)\leq |\partial_v\phi|(u_{r_j}(v),v) +\frac{1}{r_+}\int_{u}^{u_{r_j}(v)} |\lambda\partial_u\phi|(u',v)\,du'\label{esst4}
\end{eqnarray}
since $r(u_{r_j}(v),v)\leq r(u,v)$ because $r$ is decreasing in $u$.
Define
\begin{equation*}
\theta(u,v) :=\sup_{u\leq u'\leq u_{r_j}(v)}|\partial_v\phi|(u',v) \hspace{3mm}\mbox{ for }u\leq u_{r_j}(v).
\end{equation*}
Take the supremum of \eqref{esst4} over $u\in [u_{r_{j+1}}(v),u_{r_j}(v)]$ and integrate over $v\in[1,\infty)$ to get
\begin{equation}\label{esst5}
\int_1^\infty\theta(u_{r_{j+1}}(v),v)\,dv\leq \int_1^\infty|\partial_v\phi|(u_{r_j}(v),v)\,dv+\frac{1}{r_+}\int_1^\infty \int_{u_{r_{j+1}}(v)}^{u_{r_j}(v)} |\lambda\partial_u\phi|(u,v)\,du\,dv.
\end{equation}

We shall estimate the rightmost term. Indeed, setting $M(u)=\max(1,v_{r_j}(u))$ for $u\in [u_{r_{j+1}}(1),\infty]$, we have
\begin{eqnarray}
\int_1^\infty\int_{u_{r_{j+1}}(v)}^{u_{r_j}(v)} |\lambda\partial_u\phi|(u,v)\,du\,dv &=&\int_{u_{r_{j+1}}(1)}^\infty\int_{M(u)}^{v_{r_{j+1}}(u)}|\lambda\partial_u\phi|\,dv\,du\notag\\
&\leq&\sup_{u\in[u_{r_{j+1}}(1),\infty]}\int_{M(u)}^{v_{r_{j+1}}(u)}|\lambda|\,dv \int_{u_{r_{j+1}}(1)}^\infty\sup_{v\in[M(u),v_{r_{j+1}}(u)]}|\partial_u\phi|\,du\notag\\
&\leq&(r_{j+1}-r_j)\int_{u_{r_{j+1}}(1)}^\infty\sup_{v\in[M(u),v_{r_{j+1}}(u)]}|\partial_u\phi|\,du\label{dagga}
\end{eqnarray}
because $|\lambda|=\partial_vr$, and so
\begin{eqnarray*}
\sup_{u\in[u_{r_{j+1}}(1),\infty]}\int_{M(u)}^{v_{r_{j+1}}(u)}|\lambda|\,dv&=&\sup_{u\in[u_{r_{j+1}}(1),\infty]} [r(u,v_{r_{j+1}}(u))-r(u,M(u))]\\
&=&\sup_{u\in[u_{r_{j+1}}(1),\infty]}\left\{\begin{array}{lc} r_{j+1}-r_j& \mbox{ if } v_{r_j}(u)\geq1 \Leftrightarrow u\geq u_{r_j}(1)\\ r_{j+1}-r(u,1)& \mbox{ if }v_{r_j}(u)<1 \Leftrightarrow u<u_{r_j}(1)\end{array}\right.\\
&=&\max\left(r_{j+1}-r_j,\sup_{u\in [u_{r_{j+1}}(1),u_{r_j}(1)]} (r_{j+1}-r(u,1))\right)\\
&=& r_{j+1}-r_{j} \hspace{3mm}\mbox{ since } r \mbox{ is decreasing in } u.
\end{eqnarray*}

Now define 
\begin{equation*}
Z(u,v):=\sup_{M(u)\leq \overline{v}\leq v}|\partial_u\phi|(u,\overline{v}) \hspace{4mm}\mbox{ for } v\geq M(u). 
\end{equation*}
Then by \eqref{esst5} and \eqref{dagga} we have
\begin{equation}\label{esst7}
\int_1^\infty\theta(u_{r_{j+1}}(v),v)\,dv\leq \int_1^\infty|\partial_v\phi|(u_{r_j}(v),v)\,dv+\left(\frac{r_{j+1}-r_j} {r_+}\right)\int_{u_{r_{j+1}}(1)}^\infty Z(u,v_{r_{j+1}}(u))\,du.
\end{equation}
Now rewriting the wave equation \eqref{extwe} as 
\[\partial_v(r\partial_u\phi)=-\nu\partial_v\phi\]
and integrating over $v\in[M(u),\overline{v}]$, we deduce
\begin{equation}\label{esst6}
|\partial_u\phi|(u,\overline{v})\leq  |\partial_u\phi|(u,M(u))+\frac{1}{r_+}\int_{M(u)}^{\overline{v}}|\nu\partial_v\phi|\,dv
\end{equation}
for $\overline{v}\geq M(u)=\max(1,v_{r_J}(u))$. Now, $r_{j+1}>r_j\Longrightarrow  v_{r_{j+1}}(u)\geq v_{r_j}(u)\, \forall u$. Moreover, for $u\geq u_{r_{j+1}}(1)$, $v_{r_{j+1}}(u)\geq 1$. Thus $u\in[u_{r_{j+1}}(1),\infty] \Longrightarrow v_{r_{j+1}}(u)\geq M(u)$.  Thus $[M(u),v_{r_{j+1}}(u)]$ is nonempty $\forall u\in[u_{r_{j+1}}(1),\infty]$, so for $u\geq u_{r_{j+1}}(1)$ we can thus take the supremum of \eqref{esst6} over $\overline{v}\in[M(u),v_{r_{j+1}}(u)]$  to deduce
\[Z(u,v_{r_{j+1}}(u))\leq |\partial_u\phi|(u,M(u))+\frac{1}{r_+}\int_{M(u)}^{v_{r_{j+1}}(u)}|\nu\partial_v\phi|\,dv,\]
and hence
\begin{eqnarray*}
\int_{u_{r_{j+1}}(1)}^\infty Z(u,v_{r_{j+1}}(u))\,du
&\leq&\int_{u_{r_{j+1}}(1)}^\infty |\partial_u\phi|(u,M(u))\,du+\frac{1}{r_+}\int_{u_{r_{j+1}}(1)}^\infty\int_{M(u)}^{v_{r_{j+1}}(u)}|\nu\partial_v\phi|\,dv\,du\\
&=&\int_{u_{r_{j+1}}(1)}^\infty |\partial_u\phi|(u,M(u))\,du+\frac{1}{r_+}\int_1^\infty\int_{u_{r_{j+1}}(v)}^{u_{r_j}(v)} |\nu\partial_v\phi|\,du\,dv.
\end{eqnarray*}
But
\begin{eqnarray*}
\int_1^\infty\int_{u_{r_{j+1}}(v)}^{u_{r_j}(v)} |\nu\partial_v\phi|\,du\,dv&\leq&
\sup_{v\in[1,\infty)} \int_{u_{r_{j+1}}(v)}^{u_{r_j}(v)}|\nu|\,du \int_1^\infty \sup_{u\in[u_{r_{j+1}}(v),u_{r_j}(v)]}|\partial_v\phi|\,dv\\
&=& \sup_{v\in[1,\infty)} \int_{u_{r_{j+1}}(v)}^{u_{r_j}(v)}(-\partial_ur)\,du \int_1^\infty \theta(u_{r_{j+1}}(v),v)\,dv\\
&=&(r_{j+1}-r_j) \int_1^\infty \theta(u_{r_{j+1}}(v),v)\,dv,
\end{eqnarray*}
so
\begin{eqnarray*}
\int_{u_{r_{j+1}}(1)}^\infty Z(u,v_{r_{j+1}}(u))\,du \leq
\int_{u_{r_{j+1}}(1)}^\infty |\partial_u\phi|(u,M(u))\,du+\left(\frac{r_{j+1}-r_j}{r_+}\right) \int_1^\infty \theta(u_{r_{j+1}}(v),v)\,dv.\hspace{12mm}
\end{eqnarray*}
Substituting this equation into \eqref{esst7} gives
\begin{eqnarray*}
\int_1^\infty\theta(u_{r_{j+1}}(v),v)\,dv&\leq& \int_1^\infty|\partial_v\phi|(u_{r_j}(v),v)\,dv+\left(\frac{r_{j+1}-r_j}{r_+}\right) \int_{u_{r_{j+1}}(1)}^\infty |\partial_u\phi|(u,M(u))\,du\\
& &+\left(\frac{r_{j+1}-r_j}{r_+}\right)^2 \int_1^\infty \theta(u_{r_{j+1}}(v),v)\,dv.
\end{eqnarray*}
Since $\displaystyle\left(\frac{r_{j+1}-r_j}{r_+}\right) <\displaystyle\frac{1}{2}$, we have
\[\int_1^\infty\theta(u_{r_{j+1}}(v),v)\,dv\leq C\left( \int_1^\infty|\partial_v\phi|(u_{r_j}(v),v)\,dv+ \int_{u_{r_{j+1}}(1)}^\infty |\partial_u\phi|(u,M(u))\,du\right),\]
or in other words
\begin{eqnarray}
\int_1^\infty\sup_{u_{r_{j+1}}(v)\leq u\leq u_{r_j}(v)}|\partial_v\phi|(u,v)\,dv\hspace{90mm}\notag \\
\leq C\left(\int_1^\infty|\partial_v\phi|(u_{r_j}(v),v)\,dv+\int_{u_{r_{j+1}}(1)}^\infty|\partial_u\phi|(u,M(u))\,du\right)\hspace{40mm}\notag\\
\leq C\left( \int_{\gamma_{r_j}\cap\{v\geq 1\}}|\partial_v\phi|\,dv +\int_{u_{r_{j+1}}(1)}^{u_{r_j}(1)}|\partial_u\phi|(u,1)\,du+\int_{u_{r_j}(1)}^\infty|\partial_u\phi|(u,v_{r_j}(u))\,du\right)\hspace{10mm}\notag\\
=C\left( \int_{\gamma_{r_j}\cap\{v\geq 1\}}|\partial_v\phi|\,dv +\int_{u_{r_{j+1}}(1)}^{u_{r_j}(1)}|\partial_u\phi|(u,1)\,du +\int_{\gamma_{r_j}\cap\{u\geq u_{r_j}(1)\}}|\partial_u\phi|\,du\right).\hspace{11mm}\label{starr}
\end{eqnarray}
In particular, 
\[\int_{\gamma_{r_{j+1}}\cap\{v\geq 1\}}|\partial_v\phi|\,dv=\int_1^\infty|\partial_v\phi|(u_{r_{j+1}}(v),v)\,dv\]
can be controlled by the right hand side of \eqref{starr}, which proves \eqref{jv}. The second statement \eqref{ju} is proved similarly.\end{proof}

\begin{remark}
Note that the same constant $C$ will work for both Propostion \ref{1vprop} and Proposition \ref{jvprop}.
\end{remark}

Combining the previous two propositions allows us to deduce estimates for 
\[\int_{\gamma_{R}\cap\{v\geq 1\}}|\partial_v\phi|\,dv \hspace{5mm}\mbox{ and }\hspace{5mm}\int_{\gamma_{R}\cap\{v\geq 1\}}|\partial_u\phi|\,dv=\int_{\gamma_{R}\cap\{u\geq u_{R}(1)\}}|\partial_u\phi|\,du\]
for any $R>r_+$. If $R$ is sufficiently close to $r_+$, this is merely Proposition \ref{1vprop}. If $R$ is large, however, we partition the region $\{r_+\leq r\leq R\}\cap\{v\geq1\}$ into subregions separated by constant $r$-hypersurfaces such that in each subregion we can apply either Proposition \ref{1vprop} or Proposition \ref{jvprop}. Combining these estimates will yield an estimate for the $\gamma_R$ integral.

\begin{corollary}\label{extl11}
Suppose $\phi$ is a solution of the wave equation \eqref{waveequation} with smooth, spherically symmetric data on $S$. Let $R>r_+$. Then there exists $C=C(R)>0$ such that
\begin{equation}\label{jjv}
\int_{\gamma_{R}\cap\{v\geq 1\}}|\partial_v\phi|\leq C\left(\int_{\mathcal{H}^+\cap\{v\geq1\}}|\partial_v\phi|+\int_{\underline{C}_1^{ext}\cap\{u\geq u_R(1)\}} |\partial_u\phi|\right)
\end{equation}
and
\begin{eqnarray}
\int_{\gamma_{R}\cap\{v\geq 1\}}|\partial_u\phi|&=&\int_{\gamma_{R}\cap\{u\geq u_{R}(1)\}}|\partial_u\phi|\,du\notag\\
&\leq& C\left(\int_{\mathcal{H}^+\cap\{v\geq1\}}|\partial_v\phi|+\int_{\underline{C}_1^{ext}\cap\{u\geq u_R(1)\}}|\partial_u\phi|\right).\label{indarg2}
\end{eqnarray}
\end{corollary}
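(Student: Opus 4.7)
The plan is to obtain \eqref{jjv} and \eqref{indarg2} by a straightforward iteration combining Proposition \ref{1vprop} and Proposition \ref{jvprop} along a chain of constant $r$-hypersurfaces connecting $\mathcal{H}^+$ to $\gamma_R$. Concretely, given $R>r_+$, I would choose a finite partition $r_+<r_0<r_1<\cdots<r_N=R$ with $r_0-r_+<r_+/2$ and $r_{j+1}-r_j<r_+/2$ for each $0\le j\le N-1$. Such a partition exists with $N$ depending only on $R$ and $r_+$.

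Set
\[
A_j:=\int_{\gamma_{r_j}\cap\{v\geq 1\}}|\partial_v\phi|\,dv+\int_{\gamma_{r_j}\cap\{u\geq u_{r_j}(1)\}}|\partial_u\phi|\,du,\qquad B_j:=\int_{u_{r_{j+1}}(1)}^{u_{r_j}(1)}|\partial_u\phi|(u,1)\,du.
\]
First, Proposition \ref{1vprop} applied at $r_1=r_0$ gives
\[
A_0\le 2C\left(\int_{\mathcal{H}^+\cap\{v\geq 1\}}|\partial_v\phi|\,dv+\int_{u_{r_0}(1)}^\infty|\partial_u\phi|(u,1)\,du\right).
\]
Then Proposition \ref{jvprop} applied at the pair $(r_j,r_{j+1})$ yields, by adding \eqref{jv} and \eqref{ju},
\[
A_{j+1}\le 2C\bigl(A_j+B_j\bigr),\qquad 0\le j\le N-1.
\]
Iterating this recursion and summing the telescoping collection of $\underline{C}_1$-pieces,
\[
\int_{u_{r_0}(1)}^\infty|\partial_u\phi|(u,1)\,du+\sum_{j=0}^{N-1}B_j=\int_{u_{r_N}(1)}^\infty|\partial_u\phi|(u,1)\,du=\int_{\underline{C}_1^{ext}\cap\{u\geq u_R(1)\}}|\partial_u\phi|,
\]
(using that $u_{r_j}(1)$ is strictly decreasing in $r_j$ since $\partial_u r<0$ on $\underline{C}_1^{ext}$) I obtain
\[
A_N\le (2C)^{N+1}\left(\int_{\mathcal{H}^+\cap\{v\geq1\}}|\partial_v\phi|+\int_{\underline{C}_1^{ext}\cap\{u\geq u_R(1)\}}|\partial_u\phi|\right).
\]
Since $A_N$ is the sum of the left-hand sides of \eqref{jjv} and \eqref{indarg2}, setting $C(R):=(2C)^{N(R)+1}$ (which is finite since $N=N(R)$) yields both inequalities simultaneously.

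The argument is essentially bookkeeping: the only technical points are (i) verifying that the finite collection of $\underline{C}_1$ segments produced at each step telescopes exactly to $\underline{C}_1^{ext}\cap\{u\geq u_R(1)\}$, which follows from the monotonicity of $u\mapsto r(u,1)$ on $\underline{C}_1^{ext}$; and (ii) ensuring the partition has $r_{j+1}-r_j<r_+/2$ and $r_0-r_+<r_+/2$ so that both Proposition \ref{1vprop} and Proposition \ref{jvprop} apply at each step. Neither is a genuine obstacle, and no new analytic input beyond the two propositions is required—the whole content is already carried by them. If one wanted a more refined $R$-dependence of the constant (e.g.\ polynomial rather than exponential in $N$) one would need a more delicate argument, but for the statement of the corollary, where $C$ is merely allowed to depend on $R$, the iteration above is sufficient.
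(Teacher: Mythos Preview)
Your proposal is correct and follows essentially the same approach as the paper: partition $[r_+,R]$ into finitely many pieces of width less than $r_+/2$, apply Proposition~\ref{1vprop} at the innermost step and Proposition~\ref{jvprop} at each subsequent step, and telescope the $\underline{C}_1$ segments. The only differences are cosmetic (your indexing starts at $r_0>r_+$ whereas the paper sets $r_0=r_+$, and you package the iteration into the quantities $A_j,B_j$ rather than writing it out longhand).
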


\begin{proof}
If $R-r_+<\frac{r_+}{2}$, we are done by Proposition \ref{1vprop} with $r_1:=R$. So assume $R-r_+\geq\frac{r_+}{2}$.
We may partition the region $\{r_+\leq r\leq R\}\cap\{v\geq1\}$ into $N=N(R)>1$ regions $\{r_i\leq r \leq r_{i+1}\}\cap\{v\geq1\}, i=0,\ldots,N-1$ such that $r_{i+1}-r_i<\frac{r_+}{2}$ for each $i$. So $r_0=r_+$ and $r_N=R$. This is illustrated in the Penrose diagram below.

\begin{figure}[h]
\centering
\includegraphics[width=0.6\textwidth]{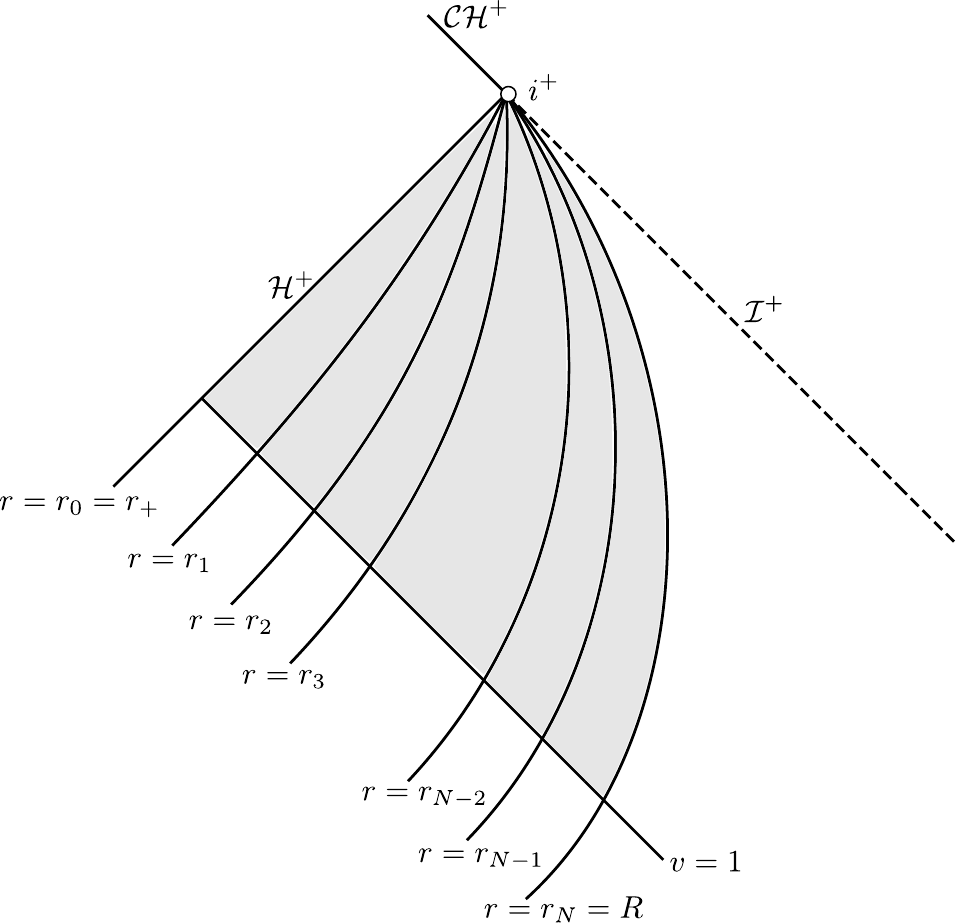}
\end{figure}

By Proposition \ref{jvprop} with $j=N-1$, we have
\begin{eqnarray}
\int_{\gamma_{R}\cap\{v\geq 1\}}|\partial_v\phi|\,dv &=&
\int_{\gamma_{r_{N}}\cap\{v\geq 1\}}|\partial_v\phi|\,dv\notag\\
 &\leq& C\left(\int_{\gamma_{r_{N-1}}\cap\{v\geq 1\}}|\partial_v\phi|\,dv+\int_{\gamma_{r_{N-1}}\cap\{u\geq u_{r_{N-1}}(1)\}}|\partial_u\phi|\,du\right.\notag\\
& & \left.+\int_{u_{r_{N}}(1)} ^{u_{r_{N-1}}(1)}|\partial_u\phi|(u,1)\,du\right).\label{unused}
\end{eqnarray}

If $N-1=1$, we use Proposition \ref{1vprop} to estimate the first two terms on the right hand side:
\begin{eqnarray*}
\int_{\gamma_{R}\cap\{v\geq 1\}}|\partial_v\phi|\,dv  &\leq& C\left(\int_{\gamma_{r_{1}}\cap\{v\geq 1\}}|\partial_v\phi|\,dv+\int_{\gamma_{r_{1}}\cap\{u\geq u_{r_{1}}(1)\}}|\partial_u\phi|\,du\right.\notag\\
& & \left.+\int_{u_{R}(1)} ^{u_{r_{1}}(1)}|\partial_u\phi|(u,1)\,du\right).\\
&\leq&2C^2\left(\int_{\mathcal{H}^+\cap\{v\geq1\}}|\partial_v\phi|\,dv+\int_{u_{r_1}(1)}^\infty|\partial_u\phi|(u,1)\,du\right)\\
& &+C\int_{u_{R}(1)} ^{u_{r_{1}}(1)}|\partial_u\phi|(u,1)\,du\\
&=&C_2\left(\int_{\mathcal{H}^+\cap\{v\geq1\}}|\partial_v\phi|\,dv+\int_{u_{R}(1)}^\infty|\partial_u\phi|(u,1)\,du\right),
\end{eqnarray*}
as required.

On the other hand, if $N-1>1$, we use \eqref{jv} and \eqref{ju} with $j=N-2$ to estimate the first two terms on the right hand side of \eqref{unused}:
\begin{eqnarray*}
\int_{\gamma_{R}\cap\{v\geq 1\}}|\partial_v\phi|\,dv&\leq& 2C^2\left(\int_{\gamma_{r_{N-2}}\cap\{v\geq 1\}}|\partial_v\phi|\,dv+\int_{\gamma_{r_{N-2}}\cap\{u\geq u_{r_{N-2}}(1)\}}|\partial_u\phi|\,du\right.\\
& & \left.+\int_{u_{r_{N-1}}(1)} ^{u_{r_{N-2}}(1)}|\partial_u\phi|(u,1)\,du\right)+C\int_{u_{r_{N}}(1)} ^{u_{r_{N-1}}(1)}|\partial_u\phi|(u,1)\,du\\
&\leq&C_2\left(\int_{\gamma_{r_{N-2}}\cap\{v\geq 1\}}|\partial_v\phi|\,dv+\int_{\gamma_{r_{N-2}}\cap\{u\geq u_{r_{N-2}}(1)\}}|\partial_u\phi|\,du\right.\\
& & \left.+\int_{u_{r_{N}}(1)} ^{u_{r_{N-2}}(1)}|\partial_u\phi|(u,1)\,du\right).
\end{eqnarray*}
Proceeding inductively yields
\begin{eqnarray*}
\int_{\gamma_{R}\cap\{v\geq 1\}}|\partial_v\phi|\,dv&\leq&C_{N-1}\left(\int_{\gamma_{r_{1}}\cap\{v\geq 1\}}|\partial_v\phi|\,dv+\int_{\gamma_{r_{1}}\cap\{u\geq u_{r_{1}}(1)\}}|\partial_u\phi|\,du\right.\\
& & \left.+\int_{u_{r_{N}}(1)} ^{u_{r_{1}}(1)}|\partial_u\phi|(u,1)\,du\right).
\end{eqnarray*}
And hence, by \eqref{1v} and \eqref{1u}, we have
\begin{eqnarray*}
\int_{\gamma_{R}\cap\{v\geq 1\}}|\partial_v\phi|\,dv&\leq& C_N\left(\int_{\mathcal{H}^+\cap\{v\geq1\}}|\partial_v\phi|\,dv+\int_{u_{r_1}(1)}^\infty|\partial_u\phi|(u,1)\,du\right.\\
& &\left.+\int_{u_{r_{N}}(1)} ^{u_{r_{1}}(1)}|\partial_u\phi|(u,1)\,du\right)\\
&=&C_N\left(\int_{\mathcal{H}^+\cap\{v\geq1\}}|\partial_v\phi|\,dv+\int_{u_{R}(1)}^\infty|\partial_u\phi|(u,1)\,du\right).
\end{eqnarray*}
Note that the constant $C_N$ depends on $N$, and hence on $R$. So we have shown \eqref{jjv}. The proof of the second statement \eqref{indarg2} is similar.
\end{proof}

\subsubsection{$L^2$ Estimates}
We now turn our attention to proving the $L^2$-type estimates for the derivatives of a smooth, spherically symmetric solution $\phi$ that we need in order to prove Theorem \ref{lp exterior estimate}.  Note that estimates for such quantities are obtained in Section 3 of \cite{LO}. However, we stress that the estimates of \cite{LO} are not suitable for our purposes due to our need to interpolate the $L^2$-type estimates with their $L^1$-type counterparts. As we wish to interpret the $L^2$ estimates as a statement about the boundedness of a certain operator between normed spaces, the structure of the estimate is vitally important to us: the left hand side of our estimate must correspond to the norm on the target space while the the right hand side must be a constant multiple of the norm on the domain of the operator. The estimates of \cite{LO} do not take this form (there it was only necessary to show that the left hand side could be bounded given certain assumptions on the initial data) and for this reason we must derive new (albeit similar) estimates to the $L^2$-type estimates established in \cite{LO}. 

As in the $L^1$ case, to obtain an estimate for an integral over $\gamma_R$ for $R>r_+$ large, we must again first deduce the estimate for curves $\gamma_r$ sufficiently close to the event horizon and then propagate this estimate to the curve $\gamma_R$.
We begin by estimating 
\[\int_{\gamma_{R_2}\cap\{v\geq1\}} v^\alpha(\partial_v\phi)^2\,dv,\]
where $R_2$ is sufficiently close to $r_+$.

\begin{proposition}\label{L21}
Suppose $\phi$ is a solution of the wave equation \eqref{waveequation} with smooth, spherically symmetric data on $S$ and let $\alpha>0$. If $R_2>r_+$ is  sufficiently close to $r_+$ then, for any $1\leq v_1<v_2$ and $u_2= u_{R_2}(v_2)$, we have
\[\int_{v_1}^{v_2}\sup_{u\in [u_{R_2}(v),u_2]}v^\alpha(\partial_v\phi)^2\,dv\leq C\left(\int_{v_1}^{v_2}v^\alpha(\partial_v\phi)^2(u_2,v)\,dv+\int_{u_{R_2}(v_1)}^{u_2}v_1^\alpha\left(\frac{\partial_u\phi}{\Omega}\right)^2 (u,v_1)\,du\right),  \]
where $C=C(R_2,\alpha)>0$. In particular,
\begin{eqnarray}
\int_{\gamma_{R_2}\cap\{v\geq1\}} v^\alpha(\partial_v\phi)^2\, &=&\int_1^\infty v^\alpha(\partial_v\phi)^2(u_{R_2}(v),v)\,dv \notag\\
&\leq&C\left(\int_{\mathcal{H}^+\cap\{v\geq1\}}v^\alpha(\partial_v\phi)^2+\int_{u_{R_2}(1)}^{\infty}\left(\frac{\partial_u\phi}{\Omega}\right)^2 (u,1)\,du\right),   \label{unlabelled}
\end{eqnarray}
where $C=C(R_2,\alpha)>0$.
\end{proposition}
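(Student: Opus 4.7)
My plan is to adapt the structure of the $L^1$ proof (Proposition \ref{1vprop}) to the $L^2$ setting: where the triangle inequality was used there, I would use Cauchy--Schwarz, exploiting the smallness $\Omega^2 \le C(r - r_+)$ throughout $\{r_+ \le r \le R_2\}$ to close the estimates via absorption. The main delicacy will be handling the $v^\alpha$ weight so that the final constant depends only on $R_2$ and $\alpha$, and not on the endpoints $v_1$ and $v_2$.

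First, I would integrate the spherically symmetric wave equation $\partial_u(r\partial_v\phi) = -\lambda\partial_u\phi$ along constant-$v$ segments from $u \in [u_{R_2}(v), u_2]$ out to $u_2$, then square and apply Cauchy--Schwarz. The crucial choice is the splitting $\lambda\partial_u\phi = \Omega^3\cdot(\partial_u\phi/\Omega)$, which produces precisely the weight required on the right-hand side:
\[\left(\int_u^{u_2}\lambda\partial_u\phi\,du'\right)^2 \le \left(\int_u^{u_2}\Omega^6\,du'\right)\left(\int_u^{u_2}\left(\frac{\partial_u\phi}{\Omega}\right)^2\,du'\right).\]
The change of variables $du' = -dr/\Omega^2$ at fixed $v$, combined with $\Omega^2 \le C(r-r_+)$ on the range in question, gives $\int_u^{u_2}\Omega^6\,du' \le C(R_2 - r_+)^3$. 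Dividing by $r^2 \ge r_+^2$, taking the supremum in $u \in [u_{R_2}(v), u_2]$, multiplying by $v^\alpha$, and integrating in $v$ then yields the first term on the right-hand side of the desired estimate plus a double integral of $v^\alpha(\partial_u\phi/\Omega)^2$ over the triangular region $D := \{(u,v): v_1 \le v \le v_2,\ u_{R_2}(v) \le u \le u_2\}$.

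Next I would bound this double integral by the $\underline{C}_{v_1}$ data using the companion equation $\partial_v(r\partial_u\phi) = \Omega^2\partial_v\phi$ in an entirely analogous Cauchy--Schwarz fashion; the monotonicity of $r^2\Omega^2 = (r - r_+)(r - r_-)$ in $r$ on the exterior allows the $\Omega^{-2}$ weight to propagate favourably from $v_1$ upwards. This produces an extra bulk contribution $C(R_2 - r_+)^4\int_{v_1}^{v_2} v^\alpha\sup_u(\partial_v\phi)^2\,dv$, which is absorbed into the left-hand side provided $R_2 - r_+$ is taken small enough. The anticipated main obstacle is arranging that the initial data term on $\underline{C}_{v_1}$ comes out with the coefficient $v_1^\alpha$, rather than growing in $v_2$; a naive pointwise propagation of $(\partial_u\phi/\Omega)^2$ yields an unwanted factor $v_2^{\alpha+1}$ instead. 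To rectify this I would use a weighted $T$-energy identity on $D$: multiplying the divergence identity $\partial_u(r(\partial_v\phi)^2) + \partial_v(r(\partial_u\phi)^2) = \Omega^2((\partial_v\phi)^2 - (\partial_u\phi)^2)$ by $v^\alpha$ introduces only the lower-order bulk term $\alpha v^{\alpha-1}r(\partial_u\phi)^2$, and the boundary flux on $\underline{C}_{v_1}$ then carries the correct $v_1^\alpha$ weight, with the remaining lower-order bulk handled by a Gr\"onwall inequality in $v$. The second claim, the estimate over all of $\gamma_{R_2} \cap \{v \ge 1\}$, follows by taking $v_1 = 1$ and letting $v_2 \to \infty$: then $u_2 = u_{R_2}(v_2) \to \infty$, the constant-$u_2$ hypersurface limits to $\mathcal{H}^+$, and monotone convergence justifies the passage to the limit.
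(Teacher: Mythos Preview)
Your first step---integrating $\partial_u(r\partial_v\phi)=-\lambda\partial_u\phi$, squaring, and using Cauchy--Schwarz with the splitting $\lambda\partial_u\phi=\Omega^3\cdot(\partial_u\phi/\Omega)$---is fine and in fact matches the paper's second half of the argument (the paper multiplies by $v^\alpha r\partial_v\phi$ rather than squaring, but the effect is the same). The difficulty, as you correctly anticipate, is controlling the bulk term $\iint_D v^\alpha(\partial_u\phi/\Omega)^2$ with the \emph{right} data term $\int_{u_{R_2}(v_1)}^{u_2} v_1^\alpha(\partial_u\phi/\Omega)^2(u,v_1)\,du$.

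Your proposed fix via the weighted $T$-energy identity does not close. The identity $\partial_u(r(\partial_v\phi)^2)+\partial_v(r(\partial_u\phi)^2)=\Omega^2((\partial_v\phi)^2-(\partial_u\phi)^2)$, after multiplication by $v^\alpha$, yields a good bulk term $\iint v^\alpha\Omega^2(\partial_u\phi)^2$ and a bad lower-order bulk $\alpha\iint v^{\alpha-1}r(\partial_u\phi)^2$. There are two problems. First, the good bulk carries the weight $\Omega^2$, not $\Omega^{-2}$, so it does not control the quantity $\iint v^\alpha(\partial_u\phi/\Omega)^2$ you actually need from the first step. Second, the bad bulk has coefficient $\alpha r\ge \alpha r_+$ with no smallness in $R_2-r_+$, so near the horizon it dominates the good bulk (whose coefficient $\Omega^2$ is small there); a Gr\"onwall argument in $v$ then produces a constant $e^{C(v_2-v_1)}$, which diverges as $v_2\to\infty$ and destroys the uniformity you need for the second statement.

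The missing mechanism is the \emph{redshift}. The paper instead computes
\[
\partial_v\!\left(\frac{\partial_u\phi}{\Omega}\right)=-\frac{\partial_u\phi}{\Omega}\left(\frac{\Omega^2}{r}+\frac{\partial_v\Omega}{\Omega}\right)+\frac{\Omega}{r}\partial_v\phi,
\]
and uses that $\frac{\partial_v\Omega}{\Omega}\ge c>0$ near $\mathcal{H}^+$ (this is the surface gravity). Multiplying by $(v+V)^\alpha\,\partial_u\phi/\Omega$ with $V=V(\alpha)$ large enough, the weight contributes only $\frac{\alpha}{2(v+V)}\le\frac{c}{2}$, which is absorbed by the redshift term. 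This gives a genuinely coercive estimate
\[
\int v_2^\alpha\Big(\frac{\partial_u\phi}{\Omega}\Big)^2(u,v_2)+c\iint_D v^\alpha\Big(\frac{\partial_u\phi}{\Omega}\Big)^2 \le C\int v_1^\alpha\Big(\frac{\partial_u\phi}{\Omega}\Big)^2(u,v_1)+C(R_2-r_+)\int\sup_u v^\alpha(\partial_v\phi)^2,
\]
with constants independent of $v_1,v_2$. Feeding this back into your first step then closes exactly as you outlined, by taking $R_2-r_+$ small. The shift $v\mapsto v+V$ is what handles the weight without Gr\"onwall, and the positivity of $\partial_v\Omega/\Omega$ is what supplies the correct $\Omega^{-2}$ weight in both the bulk and the data term---neither of which the $T$-energy can provide.
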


 The Penorse diagram below depicts the scenario of interest. This proposition and its proof are reminiscent of Proposition 3.2 of \cite{LO}, though we emphasise that our estimate differs from the estimate obtained there. Indeed, the final term on the right hand side of \eqref{unlabelled} does not appear in the estimate in \cite{LO}, where a $\sup_u \left(\frac{\partial_u\phi}{\Omega^2}\right)^2$ term takes its place. Because we intend to interpolate, replacing the supremum term with the $L^2$-type term is essential. As mentioned above, it allows us to show boundedness of an appropriate operator in the $L^2$ endpoint case.

\begin{figure}[h]
\centering
\includegraphics[width=0.5\textwidth]{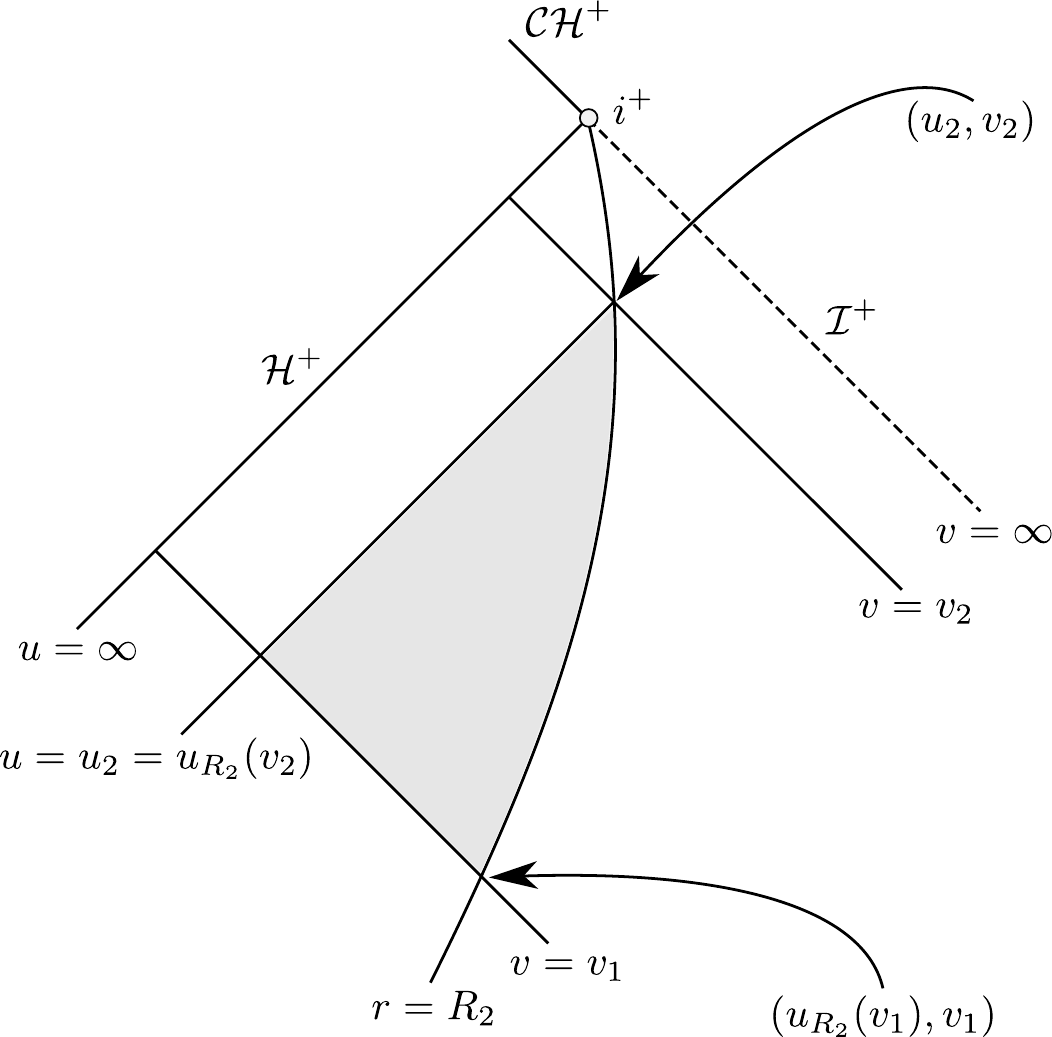}
\end{figure}

\begin{proof}
By the wave equation \eqref{exteriorwave}, we have
\begin{eqnarray*}
\partial_v\left(\frac{\partial_u\phi}{\Omega}\right) = \frac{\Omega\partial_v\partial_u\phi-\partial_u\phi\partial_v\Omega}{\Omega^2}&=& -\frac{\partial_u\phi}{\Omega} \left(\frac{\Omega^2}{r}+\frac{\partial_v\Omega}{\Omega}\right)+\frac{\Omega}{r}\partial_v\phi.
\end{eqnarray*}
Let $V>0$ be a large constant which we will choose later on, and multiply the previous equation by $(v+V)^\alpha \frac{\partial_u\phi}{\Omega}$ to get
\begin{eqnarray}
\frac{1}{2}\partial_v\left((v+V)^\alpha \left(\frac{\partial_u\phi}{\Omega}\right)^2\right) = (v+V)^\alpha\left(\frac{\partial_u\phi}{\Omega}\right)\partial_v\left(\frac{\partial_u\phi}{\Omega}\right) +\frac{\alpha}{2}(v+V)^{\alpha-1}\left(\frac{\partial_u\phi}{\Omega}\right)^2\notag\\
=
-\left(\frac{\partial_u\phi}{\Omega}\right)^2\left(\frac{\Omega^2}{r}+\frac{\partial_v\Omega}{\Omega}-\frac{\alpha}{2(v+V)}\right)(v+V)^\alpha +\partial_v\phi \partial_u\phi\frac{(v+V)^\alpha}{r}.\hspace{2mm}\label{l2i1'}
\end{eqnarray}
Now, $\frac{\partial_v\Omega}{\Omega}=\frac{\partial_v\Omega^2}{2\Omega^2}=\frac{1}{r_+^2}(M-\frac{e^2}{r_+})>0$ on $\mathcal{H}^+$. Thus, if $R_3>r_+$ is sufficiently close to $r_+$, there is a constant $c>0$ such that 
\[\frac{\Omega^2}{r}+\frac{\partial_v\Omega}{\Omega}\geq c \hspace{3mm}\mbox{ for } \hspace{3mm}r_+\leq r \leq R_3,\]
and choosing a smaller $R_3$ if necessary, we can ensure $R_3<R_1$, where $R_1$ is as in Theorem \ref{contrat}.
Thus by fixing $V=V(\alpha)>0$ large enough, we can ensure
\[\frac{\Omega^2}{r}+\frac{\partial_v\Omega}{\Omega}-\frac{\alpha}{2(v+V)}\geq \frac{c}{2} \hspace{3mm}\mbox{for }\hspace{3mm} r_+\leq r\leq R_3.\] 

Now let $r_+<R_2<R_3$ to be chosen later and suppose  $1\leq v_1<v_2$ and $u_2= u_{R_2}(v_2)$, as in the figure. Then integrating \eqref{l2i1'} over the spacetime region $\{(u,v): u_{R_2}(v)\leq u \leq u_2, v_1\leq v\leq v_2\}$ with respect to the measure $dv\,du$ and using the above lower bound together with the estimates $v\leq v+V\leq (V+1)v$ gives
\begin{eqnarray}
\int_{u_{R_2}(v_1)}^{u_2}\int_{v_1}^{v_{R_2}(u)} \partial_v((v+V)^\alpha \left(\frac{\partial_u\phi}{\Omega}\right)^2)\,dv\,du+c\int_{v_1}^{v_2}\int_{u_{R_2}(v)}^{u_2}v^\alpha\left(\frac{\partial_u\phi}{\Omega}\right)^2\,du\,dv\notag\\
\leq \frac{2(V+1)^\alpha}{r_+}\int_{v_1}^{v_2}\int_{u_{R_2}(v)}^{u_2}v^\alpha |\partial_v\phi\partial_u\phi|\,du\,dv.\hspace{5mm}\label{estim4}
\end{eqnarray}
By the Cauchy--Schwarz inequality,
\begin{eqnarray}
\int_{v_1}^{v_2}\int_{u_{R_2}(v)}^{u_2}v^\alpha |\partial_v\phi\partial_u\phi|\,du\,dv\hspace{90mm}\notag\\
\leq \int_{v_1}^{v_2}\left(\int_{u_{R_2}(v)}^{u_2}v^\alpha(\partial_v\phi)^2\Omega^2\,du\right)^{1/2} \left(\int_{u_{R_2}(v)}^{u_2}v^\alpha\left(\frac{\partial_u\phi}{\Omega}\right)^2\,du\right)^{1/2} \,dv\hspace{12mm}\notag\\
\leq \frac{1}{2}\left[\varepsilon^{-1}\int_{v_1}^{v_2}\int_{u_{R_2}(v)}^{u_2}v^\alpha(\partial_v\phi)^2\Omega^2\,du\,dv +\varepsilon\int_{v_1}^{v_2} \int_{u_{R_2}(v)}^{u_2}v^\alpha\left(\frac{\partial_u\phi}{\Omega}\right)^2\,du\,dv\right],\hspace{1mm}\notag
\end{eqnarray}
where $\varepsilon>0$ is chosen small enough that $\frac{\varepsilon(V+1)^\alpha}{r_+}< c$. Thus, by \eqref{estim4}
\begin{eqnarray}
\int_{u_{R_2}(v_1)}^{u_2}  (v_{R_2}(u))^\alpha \left(\frac{\partial_u\phi}{\Omega}\right)^2 (u,v_{R_2}(u))\,du+c\int_{v_1}^{v_2}\int_{u_{R_2}(v)}^{u_2}v^\alpha\left(\frac{\partial_u\phi}{\Omega}\right)^2\,du\,dv\hspace{15mm}\hspace{20mm}\notag\\
\leq (V+1)^\alpha\int_{u_{R_2}(v_1)}^{u_2} v_1^\alpha \left(\frac{\partial_u\phi}{\Omega}\right)^2(u,v_1)\,du + \frac{(V+1)^\alpha}{r_+}\left[\varepsilon^{-1}\int_{v_1}^{v_2}\int_{u_{R_2}(v)}^{u_2}v^\alpha(\partial_v\phi)^2\Omega^2\,du\,dv\right. \notag\\
\left.+\varepsilon\int_{v_1}^{v_2} \int_{u_{R_2}(v)}^{u_2}v^\alpha\left(\frac{\partial_u\phi}{\Omega}\right)^2\,du\,dv\right].\hspace{80mm}\notag
\end{eqnarray}
As $\varepsilon>0$ was chosen so that $\frac{\varepsilon(V+1)^\alpha}{r_+}< c$, we deduce
\begin{eqnarray}
\int_{u_{R_2}(v_1)}^{u_2}  (v_{R_2}(u))^\alpha \left(\frac{\partial_u\phi}{\Omega}\right)^2 (u,v_{R_2}(u))\,du+\int_{v_1}^{v_2}\int_{u_{R_2}(v)}^{u_2}v^\alpha\left(\frac{\partial_u\phi}{\Omega}\right)^2\,du\,dv\hspace{40mm}\notag\\
\leq C\left(\int_{u_{R_2}(v_1)}^{u_2} v_1^\alpha \left(\frac{\partial_u\phi}{\Omega}\right)^2(u,v_1)\,du+\int_{v_1}^{v_2}\int_{u_{R_2}(v)}^{u_2}v^\alpha(\partial_v\phi)^2\Omega^2\,du\,dv\right)\hspace{31mm}\notag\\
\leq C\left(\int_{u_{R_2}(v_1)}^{u_2} v_1^\alpha \left(\frac{\partial_u\phi}{\Omega}\right)^2(u,v_1)\,du+ \int_{v_1}^{v_2}\sup_{u\in[u_{R_2}(v),u_2]}v^\alpha(\partial_v\phi)^2\,dv\sup_{v\in[v_1,v_2]}\int_{u_{R_2}(v)}^{u_2}\Omega^2\,du\right)\notag\\
=C\left(\int_{u_{R_2}(v_1)}^{u_2} v_1^\alpha \left(\frac{\partial_u\phi}{\Omega}\right)^2(u,v_1)\,du+ (R_2-r_+)\int_{v_1}^{v_2}\sup_{u\in[u_{R_2}(v),u_2]}v^\alpha(\partial_v\phi)^2\,dv\right),\hspace{16mm} \label{estim5}
\end{eqnarray}
where $C=C(\alpha)>0$.

On the other hand, from the wave equation \eqref{extwe}, it follows that
\[\partial_u(r\partial_v\phi)=-\partial_vr\partial_u\phi.\]
Multiplying by $v^\alpha r\partial_v\phi$ and integrating with respect to $u\in[u',u_2]$ for any $u_2\geq u'\geq u_{R_2}(v)$ gives
\begin{eqnarray*}
\sup_{u'\in[u_{R_2}(v),u_2]}r^2v^\alpha(\partial_v\phi)^2(u',v)
\leq r^2v^\alpha(\partial_v\phi)^2(u_2,v)+2R_2\int_{u_{R_2}(v)}^{u_2}|v^\alpha\partial_v\phi\partial_u\phi\partial_vr|\,du.
\end{eqnarray*}
Integrating with respect to $v$ gives
\begin{eqnarray}
\int_{v_1}^{v_2}\sup_{u'\in[u_{R_2}(v),u_2]}v^\alpha(\partial_v\phi)^2(u',v)\,dv\hspace{68mm}\notag\\
\leq C\left(\int_{v_1}^{v_2}v^\alpha(\partial_v\phi)^2(u_2,v)\,dv+\int_{v_1}^{v_2}\int_{u_{R_2}(v)}^{u_2}|v^\alpha\partial_v\phi\partial_u\phi\partial_vr|\,du\,dv\right),\label{estim6}
\end{eqnarray}
where $C=C(R_2)>0$ increases in $R_2$.

But using the Cauchy--Schwarz inequality and the fact that $(\partial_vr)^2=\Omega^4\leq 1$ in the black hole exterior, we deduce
\begin{eqnarray*}
\int_{v_1}^{v_2}\int_{u_{R_2}(v)}^{u_2}|v^\alpha\partial_v\phi\partial_u\phi\partial_vr|\,du\,dv\hspace{100mm}\\
\leq \int_{v_1}^{v_2}\sup_{u\in[u_{R_2}(v),u_2]}v^{\alpha/2}|\partial_v\phi|\int_{u_{R_2}(v)}^{u_2}v^{\alpha/2}\left(\frac{|\partial_u\phi|}{\Omega}\right)\Omega|\partial_vr|\,du\,dv\hspace{52mm}\\
\leq \left(\int_{v_1}^{v_2}\sup_{u\in[u_{R_2}(v),u_2]}v^\alpha(\partial_v\phi)^2\,dv\right)^{1/2}\left(\int_{v_1}^{v_2}\left(\int_{u_{R_2}(v)}^{u_2}v^{\alpha/2}\left(\frac{|\partial_u\phi|}{\Omega}\right)\Omega|\partial_vr|\,du\right)^2\,dv\right)^{1/2}\hspace{12mm}\\
\leq \frac{\varepsilon_0}{2}\int_{v_1}^{v_2}\sup_{u\in[u_{R_2}(v),u_2]}v^\alpha(\partial_v\phi)^2\,dv+\frac{\varepsilon_0^{-1}}{2}\int_{v_1}^{v_2}\left(\int_{u_{R_2}(v)}^{u_2}v^{\alpha/2}\left(\frac{|\partial_u\phi|}{\Omega}\right)\Omega|\partial_vr|\,du\right)^2\,dv\hspace{18mm}\\
\leq \frac{\varepsilon_0}{2}\int_{v_1}^{v_2}\sup_{u\in[u_{R_2}(v),u_2]}v^\alpha(\partial_v\phi)^2\,dv +\frac{\varepsilon_0^{-1}}{2}\int_{v_1}^{v_2}\left(\int_{u_{R_2}(v)}^{u_2}v^\alpha\left(\frac{\partial_u\phi}{\Omega}\right)^2\,du\cdot\int_{u_{R_2}(v)}^{u_2}\Omega^2(\partial_vr)^2\,du\right)\,dv\\
\leq \frac{\varepsilon_0}{2}\int_{v_1}^{v_2}\sup_{u\in[u_{R_2}(v),u_2]}v^\alpha(\partial_v\phi)^2\,dv +\frac{\varepsilon_0^{-1}}{2}\int_{v_1}^{v_2}\left(\int_{u_{R_2}(v)}^{u_2}v^\alpha\left(\frac{\partial_u\phi}{\Omega}\right)^2\,du\cdot\int_{u_{R_2}(v)}^{u_2}\Omega^2\,du\right)\,dv\hspace{10mm}\\
\leq \frac{\varepsilon_0}{2}\int_{v_1}^{v_2}\sup_{u\in[u_{R_2}(v),u_2]}v^\alpha(\partial_v\phi)^2\,dv +\frac{\varepsilon_0^{-1}(R_2-r_+)}{2}\int_{v_1}^{v_2}\int_{u_{R_2}(v)}^{u_2}v^\alpha\left(\frac{\partial_u\phi}{\Omega}\right)^2\,du\,dv\hspace{23mm}
\end{eqnarray*}
for any $\varepsilon_0>0$.
So using \eqref{estim5} to estimate the second term on the right hand side, we have
\begin{eqnarray*}
\int_{v_1}^{v_2}\int_{u_{R_2}(v)}^{u_2}|v^\alpha\partial_v\phi\partial_u\phi\partial_vr|\,du\,dv\hspace{90mm}\\
\leq \frac{\varepsilon_0}{2}\int_{v_1}^{v_2}\sup_{u\in[u_{R_2}(v),u_2]}v^\alpha(\partial_v\phi)^2\,dv
+\, C_{R_2}\int_{u_{R_2}(v_1)}^{u_2} v_1^\alpha \left(\frac{\partial_u\phi}{\Omega}\right)^2(u,v_1)\,du\\
+ \,\tilde{C}\varepsilon_0^{-1}(R_2-r_+)^2\int_{v_1}^{v_2}\sup_{u\in[u_{R_2}(v),u_2]}v^\alpha(\partial_v\phi)^2\,dv,\hspace{31mm}
\end{eqnarray*}
for some universal constant $\tilde{C}$.
Thus, choosing $\varepsilon_0$ sufficiently small and $R_2$ sufficiently close to $r_+$ that
\[C(R_2)\left(\frac{\varepsilon_0}{2}+\tilde{C}\varepsilon_0^{-1}(R_2-r_+)^2\right)<1,\]
 it follows from \eqref{estim6} that
\begin{eqnarray*}
\int_{v_1}^{v_2}\sup_{u'\in[u_{R_2}(v),u_2]}v^\alpha(\partial_v\phi)^2(u',v)\,dv
\leq C\left(\int_{v_1}^{v_2}v^\alpha(\partial_v\phi)^2(u_2,v)\,dv+ \int_{u_{R_2}(v_1)}^{u_2} v_1^\alpha \left(\frac{\partial_u\phi}{\Omega}\right)^2(u,v_1)\,du\right),
\end{eqnarray*}
where $C=C(R_2,\alpha)>0$, as desired.

For the second statement \eqref{unlabelled}, set $v_1=1$ and $v_2=\infty$ so that $u_2=u_{R_2}(v_2)=\infty$. This gives
\begin{eqnarray*}
\int_{1}^{\infty}\sup_{u'\in[u_{R_2}(v),\infty]}v^\alpha(\partial_v\phi)^2(u',v)\,dv
\leq C\left(\int_{1}^{\infty}v^\alpha(\partial_v\phi)^2(\infty,v)\,dv+ \int_{u_{R_2}(1)}^{\infty}  \left(\frac{\partial_u\phi}{\Omega}\right)^2(u,1)\,du\right). 
\end{eqnarray*}
So 
\begin{eqnarray}
\int_{\gamma_{R_2}\cap\{v\geq1\}} v^\alpha(\partial_v\phi)^2\,dv &=&\int_1^\infty v^\alpha(\partial_v\phi)^2(u_{R_2}(v),v)\,dv \notag\\
&\leq&C\left(\int_{1}^{\infty}v^\alpha(\partial_v\phi)^2(\infty,v)\,dv+\int_{u_{R_2}(1)}^{\infty}\left(\frac{\partial_u\phi}{\Omega}\right)^2 (u,1)\,du\right),\notag  \hspace{7mm}
\end{eqnarray}
where $C=C(R_2,\alpha)>0$.
\end{proof}

In order to prove a corresponding estimate to Proposition \ref{L21} for $\partial_u\phi$, we first need to prove a similar estimate for $\partial_v\phi$ on a constant $u$-hypersurface instead of a constant $r$-curve. This is done in the next proposition. 

\begin{proposition}\label{similarprop}
Suppose $\phi$ is a solution of the wave equation \eqref{waveequation} with smooth, spherically symmetric data on $S$. Let $\alpha>0$. Then there exists $R_2>r_+$ sufficiently close to $r_+$ such that if $[u_1,u_2]\times[v_1,v_2]\subseteq \{r_+\leq r\leq R_2\}\cap\{v\geq1\}$, then there is a constant $C=C(R_2, \alpha)>0$ such that
\[\int_{v_1}^{v_2}v^\alpha(\partial_v\phi)^2(u_1,v)\,dv\leq C\left(\int_{v_1}^{v_2}v^\alpha(\partial_v\phi)^2(u_2,v)\,dv+\int_{u_1}^{u_2}v_1^\alpha\left(\frac{\partial_u\phi}{\Omega}\right)^2(u,v_1)\,du\right).\]
In particular, if $u'>u_{R_2}(1)$, then
\begin{equation}\label{vlih2}
\int_1^{v_{{R_2}(u')}}v^\alpha(\partial_v\phi)^2(u',v)\,dv\leq C\left(\int_{1}^{\infty}v^\alpha(\partial_v\phi)^2(\infty,v)\,dv+\int_{u_{R_2}(1)}^{\infty}\left(\frac{\partial_u\phi}{\Omega}\right)^2(u,1)\,du\right),
\end{equation}
where $C=C(R_2,\alpha)>0$.
\end{proposition}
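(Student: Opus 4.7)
The plan is to mirror the proof of Proposition \ref{L21}, replacing the region bounded below by $\gamma_{R_2}$ with a coordinate rectangle $[u_1,u_2]\times[v_1,v_2]\subseteq\{r_+\le r\le R_2\}\cap\{v\ge 1\}$. Two energy-type identities will be coupled, and the crucial point is that the uniform smallness $\Omega^2\le C(R_2-r_+)$ throughout the rectangle supplies the same absorption structure that in Proposition \ref{L21} came from the geometric bound $\int|\lambda|\,dv\le R_2-r_+$ along $\gamma_{R_2}$.

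First, using the wave equation \eqref{exteriorwave} in the form $\partial_u\partial_v\phi=(\Omega^2/r)(\partial_v\phi-\partial_u\phi)$, I obtain
\[
\partial_u\bigl((\partial_v\phi)^2\bigr)=\frac{2\Omega^2}{r}(\partial_v\phi)^2-\frac{2\Omega^2}{r}\partial_v\phi\,\partial_u\phi.
\]
Multiplying by $v^\alpha$, integrating over the rectangle with respect to $du\,dv$, and splitting the cross term $(2\Omega^2/r)\partial_v\phi\partial_u\phi$ via Cauchy--Schwarz so as to absorb a fraction of the coercive bulk $\iint v^\alpha(\Omega^2/r)(\partial_v\phi)^2\,du\,dv$ into the $u_1$-boundary side, I obtain the pair of estimates
\[
\int_{v_1}^{v_2}v^{\alpha}(\partial_v\phi)^2(u_1,v)\,dv \le \int_{v_1}^{v_2}v^{\alpha}(\partial_v\phi)^2(u_2,v)\,dv + C(R_2-r_+)^2\iint v^\alpha\Bigl(\frac{\partial_u\phi}{\Omega}\Bigr)^{\!2}du\,dv
\]
and, keeping the coercive bulk,
\[
\iint v^\alpha\Omega^2(\partial_v\phi)^2\,du\,dv \le C\int_{v_1}^{v_2}v^\alpha(\partial_v\phi)^2(u_2,v)\,dv + C(R_2-r_+)^2\iint v^\alpha\Bigl(\frac{\partial_u\phi}{\Omega}\Bigr)^{\!2}du\,dv,
\]
where the factor $(R_2-r_+)^2$ comes from rewriting $\Omega^2(\partial_u\phi)^2=\Omega^4(\partial_u\phi/\Omega)^2$ and using $\Omega^4\le C(R_2-r_+)^2$ on the rectangle.

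Next I run in parallel the $\partial_v$-energy identity for $\partial_u\phi/\Omega$ used in the proof of Proposition \ref{L21}: I multiply
\[
\partial_v\!\Bigl(\frac{\partial_u\phi}{\Omega}\Bigr)=-\frac{\partial_u\phi}{\Omega}\Bigl(\frac{\Omega^2}{r}+\frac{\partial_v\Omega}{\Omega}\Bigr)+\frac{\Omega}{r}\partial_v\phi
\]
by $(v+V)^\alpha(\partial_u\phi/\Omega)$ and integrate over the same rectangle. For $V=V(\alpha)$ large enough and $R_2$ close enough to $r_+$, the coefficient $\Omega^2/r+\partial_v\Omega/\Omega-\alpha/(2(v+V))$ is bounded below by a positive constant $c/2$; applying Cauchy--Schwarz to the product term and dropping the nonnegative $v=v_2$ boundary integral then yields
\[
\iint v^\alpha\Bigl(\frac{\partial_u\phi}{\Omega}\Bigr)^{\!2}du\,dv \le C\int_{u_1}^{u_2}v_1^\alpha\Bigl(\frac{\partial_u\phi}{\Omega}\Bigr)^{\!2}(u,v_1)\,du+C\iint v^\alpha\Omega^2(\partial_v\phi)^2\,du\,dv.
\]
To close the bootstrap, I substitute the bound on $\iint v^\alpha\Omega^2(\partial_v\phi)^2$ from the previous step into this inequality. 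After substitution the bulk $\iint v^\alpha(\partial_u\phi/\Omega)^2$ appears on both sides, with coefficient $C(R_2-r_+)^2$ on the right, so for $R_2$ chosen sufficiently close to $r_+$ it can be absorbed. This produces a bound for $\iint v^\alpha(\partial_u\phi/\Omega)^2\,du\,dv$ in terms of $\int v^\alpha(\partial_v\phi)^2(u_2,v)\,dv$ and $\int v_1^\alpha(\partial_u\phi/\Omega)^2(u,v_1)\,du$; feeding this back into the first inequality above gives the claimed estimate with constant $C=C(R_2,\alpha)$. The main obstacle is verifying that this two-sided absorption goes through with constants independent of the rectangle's dimensions---the smallness $(R_2-r_+)^2$ must beat the fixed constants arising from the coercivity threshold $c$ and the choice of $V(\alpha)$, which forces a careful choice of $R_2=R_2(\alpha)$ independent of $(u_1,u_2,v_1,v_2)$.

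The particular consequence \eqref{vlih2} then follows by applying the main estimate with $u_1=u'$, $v_1=1$, $v_2=v_{R_2}(u')=u'+R_2^*$, so that $v_2-u_1=R_2^*$ and the rectangle lies in $\{r_+\le r\le R_2\}\cap\{v\ge 1\}$ for every $u_2\ge u'$, and then letting $u_2\to\infty$. By dominated convergence the $u_2$-boundary integral converges to $\int_1^{v_{R_2}(u')}v^\alpha(\partial_v\phi)^2(\infty,v)\,dv\le\int_1^\infty v^\alpha(\partial_v\phi)^2(\infty,v)\,dv$ along $\mathcal{H}^+$, and the $v_1$-boundary integral extends, using the hypothesis $u'\ge u_{R_2}(1)$, to $\int_{u_{R_2}(1)}^\infty(\partial_u\phi/\Omega)^2(u,1)\,du$, producing \eqref{vlih2}.
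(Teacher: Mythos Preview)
Your argument is correct and follows the same two-identity coupling strategy as the paper: one energy identity for $\partial_u\phi/\Omega$ exploiting the redshift coercivity $\Omega^2/r+\partial_v\Omega/\Omega\ge c$, one for $\partial_v\phi$, closed by the smallness of $R_2-r_+$. The one substantive difference is in the $\partial_v\phi$ ingredient. The paper uses the form $\partial_u(r\partial_v\phi)=-\partial_vr\,\partial_u\phi$, multiplies by $v^\alpha r\partial_v\phi$, and controls the stronger quantity $\int_{v_1}^{v_2}\sup_{u\in[u_1,u_2]}v^\alpha(\partial_v\phi)^2\,dv$; this produces no favourable bulk, so an $\varepsilon_0$-weighted Cauchy--Schwarz is used to re-absorb into the sup itself. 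You instead expand $\partial_u\bigl((\partial_v\phi)^2\bigr)$ directly and gain the coercive bulk $\iint v^\alpha(2\Omega^2/r)(\partial_v\phi)^2$ on the left, which lets you absorb the cross term without any sup or auxiliary $\varepsilon_0$. Your route is a little cleaner for the stated conclusion; the paper's route yields a slightly stronger intermediate bound (the sup-in-$u$ control), though that extra strength is not needed here. Both closures rely on the same mechanism: writing $\Omega^2(\partial_u\phi)^2=\Omega^4(\partial_u\phi/\Omega)^2$ and using $\Omega^4\le C(R_2-r_+)^2$ to beat the fixed constants $C(\alpha)$ coming from the redshift step, so your remark that $R_2$ must be chosen depending on $\alpha$ but independently of the rectangle is exactly right. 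The deduction of \eqref{vlih2} by taking $u_1=u'$, $v_1=1$, $v_2=v_{R_2}(u')$ and sending $u_2\to\infty$ matches the paper's choice.
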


\noindent The proof of this result is very similar to the proof of Proposition \ref{L21}, the only difference is the region of integration.

\begin{proof}
Arguing exactly as in the proof of Proposition \ref{L21}, we have that if $V>0$, then 
\begin{eqnarray}
\frac{1}{2}\partial_v\left((v+V)^\alpha \left(\frac{\partial_u\phi}{\Omega}\right)^2\right) = (v+V)^\alpha\left(\frac{\partial_u\phi}{\Omega}\right)\partial_v\left(\frac{\partial_u\phi}{\Omega}\right) +\frac{\alpha}{2}(v+V)^{\alpha-1}\left(\frac{\partial_u\phi}{\Omega}\right)^2\notag\\
=
-\left(\frac{\partial_u\phi}{\Omega}\right)^2\left(\frac{\Omega^2}{r}+\frac{\partial_v\Omega}{\Omega}-\frac{\alpha}{2(v+V)}\right)(v+V)^\alpha +\partial_v\phi \partial_u\phi\frac{(v+V)^\alpha}{r},\label{l2i1}
\end{eqnarray}
and fixing $V=V(\alpha)>0$ large enough, we can ensure
\[\frac{\Omega^2}{r}+\frac{\partial_v\Omega}{\Omega}-\frac{\alpha}{2(v+V)}\geq \frac{c}{2} \hspace{3mm}\mbox{for }\hspace{3mm} r_+\leq r\leq R_3,\] 
for $R_3<R_1$ sufficiently close to $r_+$.

Now let $r_+<R_2<R_3$ to be chosen later and let $[u_1,u_2]\times[v_1,v_2]\subseteq \{r_+\leq  r\leq R_2\}\cap\{v\geq1\}$, as shown in the diagram.
\begin{figure}[H]
\centering
\includegraphics[width=0.45\textwidth]{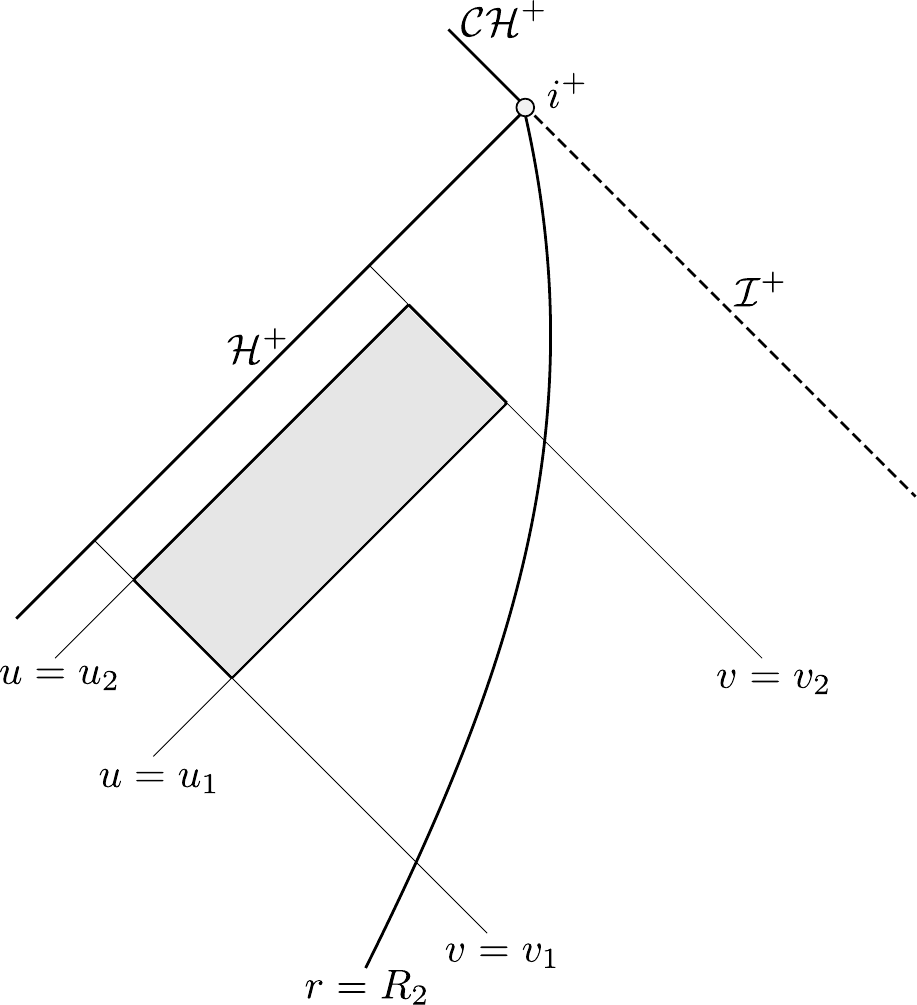}
\end{figure}
\noindent For $u\in[u_1,u_2]$, we integrate \eqref{l2i1} over $\{u\}\times[v_1,v_2]$ and use the inequality $v\leq v+V\leq (V+1)v$ for $v\in[v_1,v_2]$, together with the Cauchy--Schwarz inequality to deduce
\begin{eqnarray*}
v_2^\alpha \left(\frac{\partial_u\phi}{\Omega}\right)^2(u,v_2)+c\int_{v_1}^{v_2}v^\alpha \left(\frac{\partial_u\phi}{\Omega}\right)^2\,dv\hspace{85mm}\\
\leq 2(V+1)^\alpha\left[v_1^\alpha \left(\frac{\partial_u\phi}{\Omega}\right)^2(u,v_1)+\int_{v_1}^{v_2}\frac{v^\alpha\Omega}{r}\partial_v\phi\frac{\partial_u\phi}{\Omega}\,dv\right]\hspace{52mm}\\
\leq 2 (V+1)^\alpha\left[v_1^\alpha \left(\frac{\partial_u\phi}{\Omega}\right)^2(u,v_1) +\left(\int_{v_1}^{v_2}v^\alpha (\partial_v\phi)^2\frac{\Omega^2}{r^2}\,dv\right)^{1/2}   \left(\int_{v_1}^{v_2}v^\alpha\left(\frac{ \partial_u\phi}{\Omega}\right)^2\,dv\right)^{1/2}  \right]\hspace{2mm}\\
\leq 2 (V+1)^\alpha\left[v_1^\alpha \left(\frac{\partial_u\phi}{\Omega}\right)^2(u,v_1) + \frac{\varepsilon^{-1}}{2r_+^2}\int_{v_1}^{v_2}v^\alpha(\partial_v\phi)^2\Omega^2\,dv+ \frac{\varepsilon}{2}\int_{v_1}^{v_2}v^\alpha\left(\frac{\partial_u\phi}{\Omega}\right)^2\,dv \right],\hspace{8mm}
\end{eqnarray*}
where $\varepsilon>0$. Thus, choosing $\varepsilon$ small enough so that 
\[c>(V+1)^\alpha\varepsilon,\]
we have that
\begin{equation}\label{step}
v_2^\alpha \left(\frac{\partial_u\phi}{\Omega}\right)^2(u,v_2)+\int_{v_1}^{v_2}v^\alpha \left(\frac{\partial_u\phi}{\Omega}\right)^2\,dv \leq C\left(v_1^\alpha \left(\frac{\partial_u\phi}{\Omega}\right)^2(u,v_1) + \int_{v_1}^{v_2}v^\alpha(\partial_v\phi)^2\Omega^2(u,v)\,dv\right)
\end{equation}
for $u\in [u_1,u_2]$, where $C=C(\alpha)$.\newpage Integrating over $u\in[u_1,u_2]$ gives
\begin{eqnarray}
\int_{u_1}^{u_2} v_2^\alpha \left(\frac{\partial_u\phi}{\Omega}\right)^2(u,v_2)\,du+\int_{u_1}^{u_2}\int_{v_1}^{v_2}v^\alpha \left(\frac{\partial_u\phi}{\Omega}\right)^2\,dv\,du\hspace{57mm}\notag\\
 \leq C\left(\int_{u_1}^{u_2}v_1^\alpha \left(\frac{\partial_u\phi}{\Omega}\right)^2(u,v_1)\,du + \int_{u_1}^{u_2}\int_{v_1}^{v_2}v^\alpha(\partial_v\phi)^2\Omega^2\,dv\,du\right)\hspace{20mm}\notag\\
  \leq C\left(\int_{u_1}^{u_2}v_1^\alpha \left(\frac{\partial_u\phi}{\Omega}\right)^2(u,v_1)\,du + \int_{v_1}^{v_2}\sup_{u\in[u_1,u_2]} v^\alpha (\partial_v\phi)^2\int_{u_1}^{u_2}(-\partial_ur)\,du\,dv\right)\notag\\
  \leq C\left(\int_{u_1}^{u_2}v_1^\alpha \left(\frac{\partial_u\phi}{\Omega}\right)^2(u,v_1)\,du +(R_2-r_+) \int_{v_1}^{v_2}\sup_{u\in[u_1,u_2]} v^\alpha (\partial_v\phi)^2\,dv\right),\hspace{6mm}\label{star1}
\end{eqnarray}
where $C=C(\alpha)$.

As before, from the wave equation \eqref{extwe}, it follows that
\[\partial_u(r\partial_v\phi)=-\partial_vr\partial_u\phi.\]
Multiplying by $v^\alpha r\partial_v\phi$ and integrating with respect to $u\in[u',u_2]$ for any $ u'\geq u_1$, and then taking the supremum over $u'\in[u_1,u_2]$ gives 
\begin{eqnarray*}
\sup_{u'\in[u_1,u_2]}r^2v^\alpha(\partial_v\phi)^2(u',v)
\leq r^2v^\alpha(\partial_v\phi)^2(u_2,v)+2R_2\int_{u_1}^{u_2}|v^\alpha\partial_v\phi\partial_u\phi\partial_vr|\,du.
\end{eqnarray*}
Integrating with respect to $v$ then gives
\begin{eqnarray}
\int_{v_1}^{v_2}\sup_{u'\in[u_1,u_2]}v^\alpha(\partial_v\phi)^2(u',v)\,dv\hspace{68mm}\notag\\
\leq C'\left(\int_{v_1}^{v_2}v^\alpha(\partial_v\phi)^2(u_2,v)\,dv+\int_{v_1}^{v_2}\int_{u_1}^{u_2}|v^\alpha\partial_v\phi\partial_u\phi\partial_vr|\,du\,dv\right),\label{star2}
\end{eqnarray}
where $C'=C'(R_2)>0$ increases in $R_2$.
But arguing exactly as in Proposition \ref{L21}, we deduce
\begin{eqnarray*}
\int_{v_1}^{v_2}\int_{u_1}^{u_2}|v^\alpha\partial_v\phi\partial_u\phi\partial_vr|\,du\,dv\hspace{100mm}\\
\leq \frac{\varepsilon_0}{2}\int_{v_1}^{v_2}\sup_{u\in[u_1,u_2]}v^\alpha(\partial_v\phi)^2\,dv +\frac{\varepsilon_0^{-1}(R_2-r_+)}{2}\int_{v_1}^{v_2}\int_{u_1}^{u_2}v^\alpha\left(\frac{\partial_u\phi}{\Omega}\right)^2\,du\,dv\hspace{23mm}
\end{eqnarray*}
for any $\varepsilon_0>0$.
So using \eqref{star1} to estimate the second term on the right hand side of the previous equation, we have
\begin{eqnarray*}
\int_{v_1}^{v_2}\int_{u_1}^{u_2}|v^\alpha\partial_v\phi\partial_u\phi\partial_vr|\,du\,dv
&\leq& \frac{\varepsilon_0}{2}\int_{v_1}^{v_2}\sup_{u\in[u_1,u_2]}v^\alpha(\partial_v\phi)^2\,dv
+\, C''\int_{u_1}^{u_2} v_1^\alpha \left(\frac{\partial_u\phi}{\Omega}\right)^2(u,v_1)\,du\\
& &+ \,\tilde{C}\varepsilon_0^{-1}(R_2-r_+)^2\int_{v_1}^{v_2}\sup_{u\in[u_1,u_2]}v^\alpha(\partial_v\phi)^2\,dv,
\end{eqnarray*}
where $C''=C''(R_2,\alpha)$ and $\tilde{C}=\tilde{C}(\alpha)$.
Thus, choosing $\varepsilon_0$ sufficiently small and $R_2$ sufficiently close to $r_+$ that
\[C'(R_2)\left(\frac{\varepsilon_0}{2}+\tilde{C}\varepsilon_0^{-1}(R_2-r_+)^2\right)<1,\]
 it follows from \eqref{star2} that
\begin{eqnarray*}
\int_{v_1}^{v_2}\sup_{u'\in[u_1,u_2]}v^\alpha(\partial_v\phi)^2(u',v)\,dv
\leq C\left(\int_{v_1}^{v_2}v^\alpha(\partial_v\phi)^2(u_2,v)\,dv+ \int_{u_1}^{u_2} v_1^\alpha \left(\frac{\partial_u\phi}{\Omega}\right)^2(u,v_1)\,du\right),
\end{eqnarray*}
where $C=C(R_2,\alpha)>0$, as desired.

The second statement \eqref{vlih2} follows from the first by taking  $u_1=u'>u_{R_2}(1)$, $u_2=\infty$, $v_1=1$ and $v_2=v_{R_2}(u')$ (so $[u_1,u_2]\times[v_1,v_2]\subseteq \{r_+\leq r\leq R_2\}\cap\{v\geq1\}$), as shown in the Figure below.
\begin{figure}[H]
\centering
\includegraphics[width=0.45\textwidth]{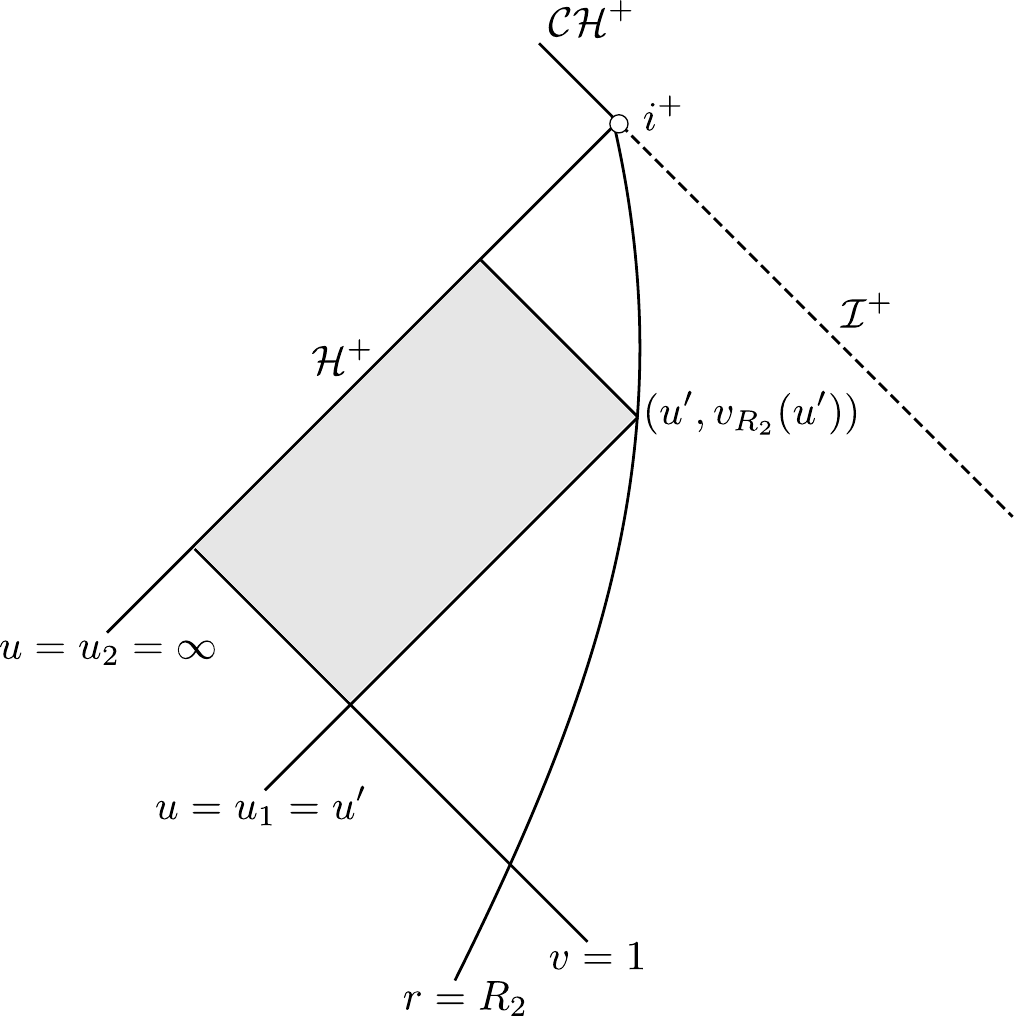}
\end{figure}
\end{proof}

We now use the previous proposition to deduce a  $L^2$-type estimate analogous to Proposition \ref{L21} for $\partial_u\phi$ on a curve $\gamma_{R_2}$ with $R_2$ sufficiently close to $r_+$.   (We remark that no corresponding estimate was deduced in \cite{LO} because there it was sufficient to estimate  $\sup_{r\leq R_2}|\partial_u\phi|$ instead.)

\begin{corollary}\label{l2u}
Suppose $\phi$ is a solution of the wave equation \eqref{waveequation} with smooth, spherically symmetric data on $S$. Let  $\alpha,\varepsilon>0$ be such that $\alpha-(1+\varepsilon)>0$. Then if $R_2>r_+$ is sufficiently close to $r_+$, there is a constant $C=C(R_2,\alpha, \varepsilon)>0$ such that
\[ \int_{\gamma_{R_2}\cap\{v\geq 1\}}v^{\alpha-(1+\varepsilon)}(\partial_u\phi)^2\leq C\left(\int_{\mathcal{H}^+\cap\{v\geq1\}}v^\alpha(\partial_v\phi)^2+\int_{u_{R_2}(1)}^{\infty}\left(\frac{\partial_u\phi}{\Omega}\right)^2(u,1)\,du\right).
\]
\end{corollary}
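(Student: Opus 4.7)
The strategy is to combine the pointwise form of the key inequality~\eqref{step} (derived inside the proof of Proposition~\ref{similarprop}) with the integrated conclusion of Proposition~\ref{similarprop} itself. Three simple observations drive the argument: on the constant-$r$ hypersurface $\gamma_{R_2}$, the relation $\partial_u r = -\Omega^2 = -\partial_v r$ forces $du = dv$; since $\Omega^2$ is increasing in $r$ near $r_+$ in the exterior, $\Omega^2 \leq \Omega_{R_2}^2 := \Omega^2(R_2)$ throughout the region $\{r_+ \leq r \leq R_2\}$; and on $\gamma_{R_2}$, $\Omega = \Omega_{R_2}$ is a positive constant depending only on $R_2$.

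First I would fix $R_2$ close enough to $r_+$ that both Proposition~\ref{similarprop} and its underlying pointwise inequality~\eqref{step} are available. Applying~\eqref{step} at $u = u_{R_2}(v)$ with $v_1 = 1$ and $v_2 = v$ (taking the ambient $u_2 \to \infty$, permissible since the box $[u_{R_2}(v),u_2]\times[1,v]$ sits in $\{r_+\leq r\leq R_2\}\cap\{v\geq 1\}$) gives, for every $v \geq 1$,
\[
v^\alpha \left(\frac{\partial_u\phi}{\Omega}\right)^{\!2}(u_{R_2}(v), v) \leq C\!\left(\left(\frac{\partial_u\phi}{\Omega}\right)^{\!2}(u_{R_2}(v), 1) + \int_1^v \tilde v^{\alpha} (\partial_v\phi)^2\,\Omega^2(u_{R_2}(v), \tilde v)\,d\tilde v\right).
\]
Pulling the constant $\Omega_{R_2}^2$ out of the second integral and then applying Proposition~\ref{similarprop} (with $u_1 = u_{R_2}(v)$, $u_2\to\infty$, $v_1 = 1$, $v_2 = v$), together with $u_{R_2}(v) \geq u_{R_2}(1)$, I obtain
\[
v^\alpha \left(\frac{\partial_u\phi}{\Omega}\right)^{\!2}(u_{R_2}(v), v) \leq C(R_2,\alpha)\!\left(\left(\frac{\partial_u\phi}{\Omega}\right)^{\!2}(u_{R_2}(v), 1) + \mathcal{D}\right),
\]
where $\mathcal{D}$ denotes the right-hand side of the inequality I wish to prove. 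The crucial feature is that $\mathcal{D}$ is independent of $v$.

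To conclude, I would multiply by $v^{-(1+\varepsilon)}$ and integrate over $v \in [1,\infty)$. The $\mathcal{D}$ contribution is finite precisely because $\varepsilon > 0$ and $\int_1^\infty v^{-(1+\varepsilon)}\,dv = 1/\varepsilon$; this is exactly where the loss of $1+\varepsilon$ in the weight becomes essential. For the remaining term, the change of variables $u = u_{R_2}(v)$, $du = dv$ produces $\int_{u_{R_2}(1)}^\infty v_{R_2}(u)^{-(1+\varepsilon)}(\partial_u\phi/\Omega)^2(u,1)\,du$, which is bounded by $\int_{u_{R_2}(1)}^\infty (\partial_u\phi/\Omega)^2(u,1)\,du \leq \mathcal{D}$ since $v_{R_2}(u) \geq 1$. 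On the left, the same change of variables combined with $\Omega = \Omega_{R_2}$ on $\gamma_{R_2}$ yields $\Omega_{R_2}^{-2}\int_{\gamma_{R_2}\cap\{v\geq 1\}}v^{\alpha-(1+\varepsilon)}(\partial_u\phi)^2$, which is what we want to bound. The only real subtlety — rather than a true obstacle — is using the pointwise inequality~\eqref{step} rather than just the statement of Proposition~\ref{similarprop}; this is what produces the $v^\alpha$ growth on the left that the $v^{-(1+\varepsilon)}$ weight subsequently absorbs.
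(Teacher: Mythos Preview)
Your proof is correct and follows essentially the same route as the paper: apply the pointwise inequality~\eqref{step} along $\gamma_{R_2}$, feed the resulting $\int \tilde v^\alpha(\partial_v\phi)^2$ term into the second statement of Proposition~\ref{similarprop}, then multiply by $v^{-(1+\varepsilon)}$ and integrate. The only cosmetic difference is that the paper parametrises $\gamma_{R_2}$ by $u$ (writing $v_{R_2}(u)$) and multiplies by $(v_{R_2}(u))^{-(1+\varepsilon)}\Omega^2(u,v_{R_2}(u))$ before integrating in $u$, whereas you parametrise by $v$ and multiply by $v^{-(1+\varepsilon)}$; since $du=dv$ on $\gamma_{R_2}$ and $\Omega^2=\Omega_{R_2}^2$ is constant there, the two are identical up to harmless constants.
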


\begin{proof}
Let $R_2>r_+$ be sufficiently close to $r_+$ such that Proposition \ref{similarprop} (and its proof) hold. Let $u_1=u>u_{R_2}(1)$, $u_2=\infty$, $v_1=1$ and $v_2=v_{R_2}(u)$. Then $[u_1,u_2]\times[v_1,v_2]\subseteq \{r_+\leq r\leq R_2\}\cap\{v\geq1\}$  and so from the proof of Proposition \ref{similarprop} (equation \eqref{step})
\[(v_{R_2}(u))^\alpha \left(\frac{\partial_u\phi}{\Omega}\right)^2(u',v_{R_2}(u)) \leq C(\alpha)\left( \left(\frac{\partial_u\phi}{\Omega}\right)^2(u',1) + \int_{1}^{v_{R_2}(u)}v^\alpha(\partial_v\phi)^2\Omega^2(u',v)\,dv\right)\]
for all $u'\in[u_1,u_2]=[u,\infty]$. In particular, since $\Omega^2\leq1$ in the exterior region, with $u'=u$ we have
\begin{equation}\label{mistake}
(v_{R_2}(u))^\alpha \left(\frac{\partial_u\phi}{\Omega}\right)^2(u,v_{R_2}(u)) \leq C(\alpha)\left( \left(\frac{\partial_u\phi}{\Omega}\right)^2(u,1) + \int_{1}^{v_{R_2}(u)}v^\alpha(\partial_v\phi)^2(u,v)\,dv\right).
\end{equation}
But by the second statement of Proposition \ref{similarprop}, the rightmost integral may be estimated by
\[C(R_2,\alpha)\left(\int_{1}^{\infty}v^\alpha(\partial_v\phi)^2(\infty,v)\,dv+\int_{u_{R_2}(1)}^{\infty}\left(\frac{\partial_u\phi}{\Omega}\right)^2(\overline{u},1)\,d\overline{u}\right).\]
Multiplying \eqref{mistake}  by $(v_{R_2}(u))^{-(1+\varepsilon)}\Omega^2(u,v_{R_2}(u))$ and integrating over $u\in [u_{R_2}(1),\infty]$ yields
\begin{eqnarray*}
\int_{u_{R_2}(1)}^\infty (v_{R_2}(u))^{\alpha-(1+\varepsilon)}(\partial_u\phi)^2(u,v_{R_2}(u))\,du
 \leq C(R_2,\alpha)\left[\int_{u_{R_2}(1)}^\infty \left(\frac{\partial_u\phi}{\Omega}\right)^2(u,1)\frac{\Omega^2(u,v_{R_2}(u))}{(v_{R_2}(u))^{1+\varepsilon}}\,du \right.\hspace{15mm} \\
\left.+\left(\int_1^{v_{R_2}(u)}v^\alpha(\partial_v\phi)^2(\infty,v)\,dv  +\int_{u_{R_2}(1)}^\infty \left(\frac{\partial_u\phi}{\Omega}\right)^2(u',1)\,du'\right)\left(\int_{u_{R_2}(1)}^\infty \frac{\Omega^2(u,v_{R_2}(u))}{(v_{R_2}(u))^{1+\varepsilon}}\,du\right)\right].
\end{eqnarray*}
Since $\displaystyle{\frac{\Omega^2(u,v_{R_2}(u))}{(v_{R_2}(u))^{1+\varepsilon}}}\leq 1$ and since 
\begin{eqnarray*}
 \int_{u_{R_2}(1)}^\infty \frac{\Omega^2(u,v_{R_2}(u))}{(v_{R_2}(u))^{1+\varepsilon}}\,du=\Omega^2_{R_2}\int_1^\infty \frac{1}{{v}^{1+\varepsilon}}\,dv=C(R_2, \varepsilon)<+\infty,
\end{eqnarray*}
we deduce
\begin{eqnarray*}
\int_{u_{R_2}(1)}^\infty (v_{R_2}(u))^{\alpha-(1+\varepsilon)}(\partial_u\phi)^2(u,v_{R_2}(u))\,du
 &\leq &C\left(\int_{u_{R_2}(1)}^\infty \left(\frac{\partial_u\phi}{\Omega}\right)^2(u,1)\,du+\int_1^\infty v^\alpha(\partial_v\phi)^2(\infty,v)\,dv \right),
 \end{eqnarray*}
with $C=C(R_2,\alpha,\varepsilon)$. So
 \begin{eqnarray*}
 \int_{\gamma_{R_2}\cap\{v\geq 1\}}v^{\alpha-(1+\varepsilon)}(\partial_u\phi)^2\,dv &=&  \int_{\gamma_{R_2}\cap\{u\geq u_{R_2}(1)\}}v^{\alpha-(1+\varepsilon)}(\partial_u\phi)^2\,du\\
  &\leq &C\left(\int_{u_{R_2}(1)}^\infty \left(\frac{\partial_u\phi}{\Omega}\right)^2(u,1)\,du+\int_1^\infty v^\alpha(\partial_v\phi)^2(\infty,v)\,dv \right)
 \end{eqnarray*}
with $C=C(R_2,\alpha,\varepsilon)$, as desired.
\end{proof}

We now fix a value of $R_2$ small enough that $R_2<R_1$ and that both Proposition \ref{L21} and Corollary \ref{l2u} hold. What we actually need is to propagate the bounds we have established for $\gamma_{R_2}$ to the surface $\gamma_R$ for any $R>R_2$. This is done in the next proposition, albeit at the expense of some polynomial power. 

\begin{proposition}\label{l2ext22}
Suppose $\phi$ is a solution of the wave equation \eqref{waveequation} with smooth, spherically symmetric data on $S$. Let $\alpha,\varepsilon>0$ be such that $\alpha-(1+2\varepsilon)>0$. Then, for any $R>R_2$, there exist $C=C(R_2,R, \alpha,\varepsilon)>0$ such that
\begin{eqnarray*}
\int_{\gamma_R\cap\{v\geq 1\}}v^{\alpha-(1+2\varepsilon)}((\partial_u\phi)^2+(\partial_v\phi)^2)
\leq C\left(\int_{\mathcal{H}^+\cap\{v\geq1\}}v^\alpha(\partial_v\phi)^2+\int_{\underline{C}_1^{ext}\cap\{u\geq u_R(1)\}}\left(\frac{\partial_u\phi}{\Omega}\right)^2\right).
\end{eqnarray*}
\end{proposition}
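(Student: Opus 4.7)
The plan is to perform an energy-type estimate on the bounded spacetime region $\mathcal{R} = \{R_2 \leq r \leq R\}\cap\{v \geq 1\}$ and to propagate the $\gamma_{R_2}$ bounds furnished by Proposition \ref{L21} and Corollary \ref{l2u} outward to $\gamma_R$ via Gronwall's inequality. The key simplification is that on $\mathcal{R}$, $\Omega^2$ is bounded uniformly away from zero, so weights of the form $\Omega^{-2}$ are controllable.

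Setting $\beta = \alpha - (1+\varepsilon)$ and using the wave equation $\partial_u\partial_v\phi = \frac{\Omega^2}{r}(\partial_v\phi-\partial_u\phi)$, a direct computation yields the divergence identity
\[\partial_u\bigl(v^\beta(\partial_v\phi)^2\bigr) - \partial_v\bigl(v^\beta(\partial_u\phi)^2\bigr) = \frac{2v^\beta\Omega^2}{r}(\partial_v\phi - \partial_u\phi)^2 - \beta v^{\beta-1}(\partial_u\phi)^2.\]
Integrating over $\mathcal{R}$ with respect to $du\,dv$ and applying Fubini converts the left-hand side into boundary contributions on $\gamma_R$, $\gamma_{R_2}$ and $\underline{C}_1^{ext}\cap\{u_R(1) \leq u \leq u_{R_2}(1)\}$. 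Discarding the nonnegative $(\partial_v\phi-\partial_u\phi)^2$ spacetime term and writing $E(r) := \int_{\gamma_r\cap\{v\geq 1\}}v^\beta[(\partial_v\phi)^2 + (\partial_u\phi)^2]\,dv$, one obtains
\[E(R) \leq E(R_2) + I_1 + \beta\iint_{\mathcal{R}} v^{\beta-1}(\partial_u\phi)^2\,du\,dv,\]
where $I_1$ denotes the data contribution on $\underline{C}_1^{ext}\cap\{u_R(1)\leq u \leq u_{R_2}(1)\}$. For the spacetime error I change variable from $u$ to $r$ on constant-$v$ slices via $du = -dr/\Omega^2$; using $\Omega^2 \geq \Omega^2(R_2) > 0$ on $\mathcal{R}$ and $v^{\beta-1}\leq v^\beta$ (as $v\geq 1$), the error is bounded by $C(R_2)\int_{R_2}^{R}E(r)\,dr$.

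Running the same argument with $R$ replaced by any $R' \in [R_2, R]$ gives the integral inequality $E(R') \leq E(R_2) + I_1 + C(R_2,\beta)\int_{R_2}^{R'}E(r)\,dr$, so Gronwall's inequality yields $E(R) \leq C(R_2,R,\alpha,\varepsilon)[E(R_2)+I_1]$. I then bound $E(R_2)$ by combining Proposition \ref{L21} (using $v^{\alpha-(1+\varepsilon)} \leq v^\alpha$ for the $(\partial_v\phi)^2$ piece) with Corollary \ref{l2u}, and I bound $I_1 \leq C(R_2)\int_{\underline{C}_1^{ext}\cap\{u\geq u_R(1)\}}(\partial_u\phi/\Omega)^2$ since $\Omega\geq\Omega(R_2)>0$ on the relevant portion of $\underline{C}_1^{ext}$. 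Since $v\geq 1$ forces $v^{\alpha-(1+2\varepsilon)} \leq v^{\alpha-(1+\varepsilon)}$, the asserted estimate on $\gamma_R$ follows.

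The main obstacles are bookkeeping: (i) correctly accounting for the boundary contributions of the divergence identity, noting that $\mathcal{R}$ is not a coordinate rectangle near $\underline{C}_1$ (its lower boundary is a piece of $\underline{C}_1$ rather than a single $v$-slice, so the Fubini computation splits into two $u$-ranges as in the proof of Proposition \ref{jvprop}); and (ii) making Gronwall rigorous, which presupposes a priori finiteness of $E(r)$ for each $r\in[R_2,R]$. The latter can be arranged by first restricting to the truncation $\{1\leq v \leq V\}$, where smoothness of $\phi$ renders all quantities finite, and then passing to the limit $V\to\infty$ by monotone convergence.
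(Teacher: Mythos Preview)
Your argument is correct and takes a genuinely different route from the paper. The paper splits $\gamma_R\cap\{v\geq 1\}$ into a bounded piece $\{1\leq v\leq V\}$ and a tail $\{v\geq V\}$: on the bounded piece it runs an \emph{unweighted} Gronwall and then multiplies by $V^{\alpha-(1+2\varepsilon)}$, while on the tail it invokes a dyadic energy inequality imported from \cite{LO} (equation \eqref{dyadic}) and sums over dyadic $v$-intervals using the comparison $v^{\varepsilon}\geq\log^2(1+v)$ for large $v$. By contrast, you insert the weight $v^\beta$ directly into the divergence identity; the price is a single lower-order error $\beta v^{\beta-1}(\partial_u\phi)^2$ which, after converting $du$ to $dr$ (or $dr^*$) using $\Omega^2\geq\Omega^2(R_2)$, is absorbed by Gronwall in one shot on the entire region $\{v\geq 1\}$. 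Your approach is more self-contained (it does not rely on the external dyadic estimate \eqref{dyadic}) and in fact yields the bound with exponent $\alpha-(1+\varepsilon)$ rather than $\alpha-(1+2\varepsilon)$, so the final step $v^{\alpha-(1+2\varepsilon)}\leq v^{\alpha-(1+\varepsilon)}$ is throwing away a little information. One bookkeeping remark on your truncation: integrating over $\{1\leq v\leq V\}$ produces an additional boundary flux $\int_{u_R(V)}^{u_{R_2}(V)}V^\beta(\partial_u\phi)^2(u,V)\,du$ on the future cap $\underline{C}_V$, but it enters the inequality with a favourable sign (it sits on the same side as $E_V(R)$), so it can simply be dropped and the Gronwall/monotone-convergence step goes through as you indicate.
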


\noindent The proof below follows the proof of Proposition 3.3 of \cite{LO}, though uses the estimates we have proved instead of the similar estimates obtained in \cite{LO} in order to yield an estimate with which we can interpolate.

\begin{proof} Fix $R>R_2$ and for $r_+<r<R$ define
\[\gamma_{r^*}^{(v_0)}=\gamma_{r^*}\cap\{v_0-(R^*-r^*)\leq v\leq 2v_0\}.\]
We begin by listing four facts we shall need during the course of the proof.
\begin{enumerate}
\item
During the proof of Proposition 3.3 of \cite{LO}, it is shown that for any $v_0\geq 1$,
\begin{eqnarray}
\sup_{r^*\in[R_2^*,R^*]}\left(\int_{\gamma_{r^*}^{(v_0)}}(\partial_v\phi)^2+(\partial_u\phi)^2\right) + \sup_{v\in [v_0,2v_0]}\int_{\underline{C}_v(R_2^*,R^*)}(\partial_u\phi)^2 
\leq C \left(\int_{\gamma_{R_2^*}^{(v_0)}}(\partial_v\phi)^2+(\partial_u\phi)^2\right),\label{dyadic}
\end{eqnarray}
where
\[\underline{C}_v(R_2^*,R^*)=\{(u,v):u_{R^*}(v)\leq u\leq u_{R_2^*}(v)\}=\underline{C}_v^{ext}\cap\{R_2\leq r \leq R\}\]
and $C=C(R_2,R)>0$. 
\item
For $v_0\geq V_1:=2(R^*-R_2^*)$, we have
\[\gamma_{R_2^*}^{(v_0)}=\gamma_{R_2^*}\cap\{v_0-(R^*-R_2^*)\leq v\leq 2v_0\}\subseteq \gamma_{R_2^*}\cap\{\frac{1}{2}v_0\leq v\leq 2v_0\}.\]
\item
Given $\varepsilon>0$, there exists $V_2=V_2(\varepsilon)>1$ such that 
\[
v\geq V_2 \hspace{3mm}\Longrightarrow \hspace{3mm}v^\varepsilon \geq \log^2(1+v).
\]
\item
Combining Proposition \ref{L21} and Corollary \ref{l2u} gives
\begin{eqnarray}
\int_{\gamma_{R_2^*}\cap\{v\geq 1\}}v^{\alpha-(1+\varepsilon)}((\partial_u\phi)^2+(\partial_v\phi)^2)\hspace{50mm}\notag\\
\leq C\left(\int_{\mathcal{H}^+\cap\{v\geq1\}}v^\alpha(\partial_v\phi)^2+\int_{u_{R_2}(1)}^{\infty}\left(\frac{\partial_u\phi}{\Omega}\right)^2(u,1)\,du\right)\hspace{1mm}\notag\\
\leq C\left(\int_{\mathcal{H}^+\cap\{v\geq1\}}v^\alpha(\partial_v\phi)^2+\int_{\underline{C}_1^{ext}\cap\{u\geq u_R(1)\}}\left(\frac{\partial_u\phi}{\Omega}\right)^2\right),\label{comb}
\end{eqnarray}
for some $C=C(R_2,\alpha,\varepsilon)>0$.
\end{enumerate}
With these facts in mind we set $V=\max(V_1,2V_2)$ and note $V/2>1$. Write
\begin{eqnarray}\label{split''}
\int_{\gamma_R\cap\{v\geq 1\}}v^{\alpha-(1+2\varepsilon)}((\partial_u\phi)^2+(\partial_v\phi)^2)
&=& \int_{\gamma_R\cap\{ 1\leq v\leq V\}}v^{\alpha-(1+2\varepsilon)}((\partial_u\phi)^2+(\partial_v\phi)^2)\notag\\
& & + \int_{\gamma_R\cap\{v\geq V\}}v^{\alpha-(1+2\varepsilon)}((\partial_u\phi)^2+(\partial_v\phi)^2).\hspace{8mm}\label{split}
\end{eqnarray}
We shall estimate the two terms on the right hand side separately.

To estimate  the second term, we use \eqref{dyadic} and the second fact above to deduce 
\begin{eqnarray}
\int_{\gamma_R\cap\{v\geq V\}}v^{\alpha-(1+2\varepsilon)}((\partial_u\phi)^2+(\partial_v\phi)^2)\hspace{58mm} \notag\\
= \sum_{k=0}^\infty\int_{\gamma_R\cap\{2^kV\leq v\leq 2^{k+1} V\}}v^{\alpha-(1+2\varepsilon)}((\partial_u\phi)^2+(\partial_v\phi)^2)\hspace{18mm}\notag\\
\leq \sum_{k=0}^\infty (2^{k+1}V)^{\alpha-(1+2\varepsilon)}\int_{\gamma_{R^*}^{(2^kV)}}((\partial_u\phi)^2+(\partial_v\phi)^2)\hspace{27mm}\notag\\
\leq C\sum_{k=0}^\infty (2^{k+1}V)^{\alpha-(1+2\varepsilon)}\int_{\gamma_{R_2^*}^{(2^kV)}}((\partial_v\phi)^2+(\partial_u\phi)^2)\hspace{24mm}\notag\\
\leq C\sum_{k=0}^\infty (2^{k+1}V)^{\alpha-(1+2\varepsilon)}\int_{\gamma_{R_2^*}\cap\{2^{k-1}V\leq v\leq 2^{k+1}V\}}((\partial_v\phi)^2+(\partial_u\phi)^2)\notag
\end{eqnarray}
since $2^kV\geq V\geq V_1$ for each $k\geq 0$. Thus, by the third fact above
\begin{eqnarray}
\int_{\gamma_R\cap\{v\geq V\}}v^{\alpha-(1+2\varepsilon)}((\partial_u\phi)^2+(\partial_v\phi)^2)\hspace{80mm} \notag\\
\leq  C4^{\alpha-(1+2\varepsilon)} \sum_{k=0}^\infty \int_{\gamma_{R_2^*}\cap\{2^{k-1}V\leq v\leq 2^{k+1}V\}}v^{\alpha-(1+2\varepsilon)}((\partial_v\phi)^2+(\partial_u\phi)^2)\hspace{5mm}\notag\\
\leq C \sum_{k=0}^\infty \int_{\gamma_{R_2^*}\cap\{2^{k-1}V\leq v\leq 2^{k+1}V\}}\frac{v^{\alpha-(1+\varepsilon)}}{\log^2(1+v)}((\partial_v\phi)^2+(\partial_u\phi)^2)\hspace{16mm}\notag\\
=C_V \int_{\gamma_{R_2^*}\cap\{V/2\leq v\leq 4V\}} v^{\alpha-(1+\varepsilon)}  ((\partial_v\phi)^2+(\partial_u\phi)^2) \hspace{32mm}\notag\\
+\,C \sum_{k=2}^\infty \int_{\gamma_{R_2^*}\cap\{2^{k-1}V\leq v\leq 2^{k+1}V\}}\frac{v^{\alpha-(1+\varepsilon)}}{\log^2(1+v)}((\partial_v\phi)^2+(\partial_u\phi)^2)\hspace{12mm},\label{logsum}
\end{eqnarray}
since $2^{k-1}V\geq V/2\geq V_2$. Furthermore, given $v$ such that $2^{k-1}V\leq v\leq 2^{k+1}V$, since $V>1$, we have
\[\log(1+v)\geq \log(1+2^{k-1}V)>\log(2^{k-1})=\frac{1}{\log_2e}(k-1).\]
Substituting this result into \eqref{logsum}, together with the fact $2^{k-1}V\geq V/2>1$ yields
\begin{eqnarray}
\int_{\gamma_R\cap\{v\geq V\}}v^{\alpha-(1+2\varepsilon)}((\partial_u\phi)^2+(\partial_v\phi)^2) \hspace{85mm}\notag\\
\leq C_V \int_{\gamma_{R_2^*}\cap\{V/2\leq v\leq 4V\}} v^{\alpha-(1+\varepsilon)}  ((\partial_v\phi)^2+(\partial_u\phi)^2)\hspace{35mm}\notag\\
+ C \sum_{k=2}^\infty \frac{1}{(k-1)^2} \int_{\gamma_{R_2^*}\cap\{2^{k-1}V\leq v\leq 2^{k+1}V\}}v^{\alpha-(1+\varepsilon)}((\partial_v\phi)^2+(\partial_u\phi)^2),\hspace{4mm}\notag\\
\leq C_V \int_{\gamma_{R_2^*}\cap\{v\geq1\}} v^{\alpha-(1+\varepsilon)}  ((\partial_v\phi)^2+(\partial_u\phi)^2)\hspace{44mm}\notag\\
+   C \sum_{k=2}^\infty \frac{1}{(k-1)^2} \int_{\gamma_{R_2^*}\cap\{v\geq 1\}}v^{\alpha-(1+\varepsilon)}((\partial_v\phi)^2+(\partial_u\phi)^2)\hspace{23mm}\notag\\
= C \int_{\gamma_{R_2^*}\cap\{v\geq 1\}}v^{\alpha-(1+\varepsilon)}((\partial_v\phi)^2+(\partial_u\phi)^2)\hspace{46mm}\notag\\
\leq C\left(\int_{\mathcal{H}^+\cap\{v\geq1\}}v^\alpha(\partial_v\phi)^2\,dv+\int_{u_{R}(1)}^{\infty}\left(\frac{\partial_u\phi}{\Omega}\right)^2(u,1)\,du\right),\hspace{21mm}\label{part2}
\end{eqnarray}
where we have used \eqref{comb} to deduce the last inequality. 

To estimate the first term on the right hand side of \eqref{split''}, note that by the wave equation \eqref{exteriorwave}
\begin{eqnarray}
\frac{1}{2}\partial_v(\partial_u\phi)^2-\frac{1}{2}\partial_u(\partial_v\phi)^2= \partial_u\phi\partial_v\partial_u \phi- \partial_v\phi\partial_u\partial_v\phi 
=(\partial_u\phi-\partial_v\phi)\partial_u\partial_v\phi
&=&-\frac{\Omega^2}{r}(\partial_v\phi-\partial_u\phi)^2\notag\\
&\leq&\frac{\Omega^2}{r}(\partial_v\phi-\partial_u\phi)^2\notag\\
&\leq&C(\partial_v\phi-\partial_u\phi)^2,\notag
\end{eqnarray}
for some $C>0$ since $\Omega^2$ is bounded and $r$ is bounded below in the exterior region.
Hence, in fact
 \begin{equation}\label{preintegration}
 \frac{1}{2}\partial_v(\partial_u\phi)^2-\frac{1}{2}\partial_u(\partial_v\phi)^2 \leq \tilde{C}( (\partial_v\phi)^2+(\partial_u\phi)^2).
 \end{equation}
Given $R>R_2$, we integrate \eqref{preintegration} over the region 
\[Y=\{R_2\leq r\leq R\}\cap\{v\geq 1\}.\]
We have
\begin{eqnarray}
\int_Y \partial_u(\partial_v\phi)^2\,du\,dv&=&\int_1^\infty \int_{u_{R}(v)}^{u_{R_2}(v)}\partial_u(\partial_v\phi)^2\,du\,dv\notag\\
&=&\int_1^\infty(\partial_v\phi)^2(u_{R_2}(v),v)\,dv-\int_1^\infty (\partial_v\phi)^2(u_R(v),v)\,dv\notag\\
&=&\int_{\gamma_{R_2^*}\cap\{v\geq 1\}} (\partial_v\phi)^2- \int_{\gamma_{R^*}\cap\{v\geq 1\}} (\partial_v\phi)^2\notag
\end{eqnarray}
and
\begin{eqnarray*}
\int_Y \partial_v(\partial_u\phi)^2\,dv\,du&=&\int_{u_R(1)}^\infty \int_{\max(1,v_{R_2}(u))}^{v_R(u)}\partial_v(\partial_u\phi)^2\,dv\,du\\
&=& \int_{u_R(1)}^\infty(\partial_u\phi)^2(u,v_R(u))\,du-\int_{u_R(1)}^\infty (\partial_u\phi)^2(u,\max(1,v_{R_2}(u))\,du\\
&=&\int_{u_R(1)}^\infty(\partial_u\phi)^2(u,v_R(u))\,du- \int_{u_{R}(1)}^{u_{R_2}(1)}(\partial_u\phi)^2(u,1)\,du\\
& & -\int_{u_{R_2}(1)}^\infty (\partial_u\phi)^2(u,v_{R_2}(u))\,du\\
&=&\int_{\gamma_{R^*}\cap\{v\geq 1\}} (\partial_u\phi)^2- \int_{u_{R}(1)}^{u_{R_2}(1)}(\partial_u\phi)^2(u,1)\,du-\int_{\gamma_{R_2^*}\cap\{v\geq 1\}} (\partial_u\phi)^2
\end{eqnarray*}
Thus, integrating \eqref{preintegration} over $Y$ yields
\begin{eqnarray*}
\int_{\gamma_{R^*}\cap\{v\geq 1\}} (\partial_u\phi)^2+(\partial_v\phi)^2\,dv - \int_{\gamma_{R_2^*}\cap\{v\geq 1\}} (\partial_u\phi)^2+(\partial_v\phi)^2\,dv-\int_{u_{R}(1)}^{u_{R_2}(1)}(\partial_u\phi)^2(u,1)\,du\\
\leq \tilde{C} \int_Y (\partial_u\phi)^2+(\partial_v\phi)^2\,du\,dv\hspace{28mm}\\
=C \int_{R_2^*}^{R^*}\left( \int_{\gamma_r*\cap\{v\geq1\}}(\partial_u\phi)^2+(\partial_v\phi)^2\,dv\right)\,dr^*,
\end{eqnarray*}
and hence
\begin{eqnarray*}
\int_{\gamma_{R^*}\cap\{v\geq 1\}}(\partial_u\phi)^2+(\partial_v\phi)^2\,dv&\leq &\int_{\gamma_{R_2^*}\cap\{v\geq 1\}}(\partial_u\phi)^2+(\partial_v\phi)^2\,dv + \int_{u_{R}(1)}^{u_{R_2}(1)}(\partial_u\phi)^2(u,1)\,du \\
& & +C \int_{R_2^*}^{R^*}\left( \int_{\gamma_r*\cap\{v\geq1\}}(\partial_u\phi)^2+(\partial_v\phi)^2\,dv\right)\,dr^*\\
&\leq&\int_{\gamma_{R_2^*}\cap\{v\geq 1\}}(\partial_u\phi)^2+(\partial_v\phi)^2\,dv + \int_{u_{R}(1)}^{u_{R_2}(1)}(\frac{\partial_u\phi}{\Omega})^2(u,1)\,du \\
& & +C \int_{R_2^*}^{R^*}\left( \int_{\gamma_r*\cap\{v\geq1\}}(\partial_u\phi)^2+(\partial_v\phi)^2\,dv\right)\,dr^*,
\end{eqnarray*}
since $\Omega^2\leq1$ in the black hole exterior. Hence, by Gr\"{o}nwall's inequality we have
\begin{eqnarray*}
\int_{\gamma_{R^*}\cap\{v\geq 1\}}(\partial_u\phi)^2+(\partial_v\phi)^2\,dv&\leq& C(R,R_2)\left(\int_{\gamma_{R_2^*}\cap\{v\geq 1\}}(\partial_u\phi)^2+(\partial_v\phi)^2\,dv +  \int_{u_{R}(1)}^{u_{R_2}(1)}(\frac{\partial_u\phi}{\Omega})^2(u,1)\,du  \right).
\end{eqnarray*}

In particular, it follows that
\begin{eqnarray*}
\int_{\gamma_{R^*}\cap\{1\leq v\leq V\}}v^{\alpha-(1+2\varepsilon)}((\partial_u\phi)^2+(\partial_v\phi)^2)\hspace{70mm}\\
\leq V^{\alpha-(1+2\varepsilon)}\int_{\gamma_{R^*}\cap\{v\geq 1\}}(\partial_u\phi)^2+(\partial_v\phi)^2\hspace{57mm}\\
\hspace{20mm}\leq C(R,R_2,V,\alpha) \left(\int_{\gamma_{R_2^*}\cap\{v\geq 1\}}(\partial_u\phi)^2+(\partial_v\phi)^2\,dv +  \int_{u_{R}(1)}^{u_{R_2}(1)}(\frac{\partial_u\phi}{\Omega})^2(u,1)\,du  \right)\\
\leq C\left(\int_{\gamma_{R_2^*}\cap\{v\geq 1\}}v^{\alpha-(1+\varepsilon)}((\partial_u\phi)^2+(\partial_v\phi)^2)\,dv +  \int_{u_{R}(1)}^\infty(\frac{\partial_u\phi}{\Omega})^2(u,1)\,du  \right),\hspace{4mm}
\end{eqnarray*}
and hence by \eqref{comb}, we have
\begin{eqnarray}
\int_{\gamma_{R^*}\cap\{1\leq v\leq V\}}v^{\alpha-(1+2\varepsilon)}((\partial_u\phi)^2+(\partial_v\phi)^2)\,dv\hspace{70mm}\notag\\
\leq C\left(\int_{1}^{\infty}v^\alpha(\partial_v\phi)^2(\infty,v)\,dv+\int_{u_{R}(1)}^{\infty}\left(\frac{\partial_u\phi}{\Omega}\right)^2(u,1)\,du\right),\label{first term}\hspace{15mm}
\end{eqnarray}
for some $C=C(R_2,R,\alpha,\varepsilon)>0$. So summing \eqref{part2} and \eqref{first term} gives the desired result.
\end{proof}

\newpage
\subsubsection{$L^p$ Estimates}\label{remlab}
 We are now ready to prove Theorem \ref{lp exterior estimate}, which we restate below for convenience.

\begin{theorem*}
Suppose $\phi$ is a solution of the wave equation \eqref{waveequation} with smooth, spherically symmetric data on $S$. Let $\alpha,\varepsilon>0$ be such that $\alpha-(1+2\varepsilon)>0$. Assume $1<p<2$ and $0<\theta<1$ with
\[\frac{1}{p}=\frac{1-\theta}{1}+\frac{\theta}{2}.\]
If $R>R_1$, then there exists $C=C(R,\alpha,\varepsilon,p)>0$ such that
\begin{equation}\label{vcase}\int_{\gamma_R\cap\{v\geq1\}} v^{(\alpha-(1+2\varepsilon))\frac{p\theta}{2}}|\partial_v\phi|^p \leq C \left(\int_{\mathcal{H}^+\cap\{v\geq1\}} v^{\frac{\alpha p\theta}{2}}|\partial_v\phi|^p+\int_{\underline{C}_1^{ext}\cap\{u\geq u_R(1)\}} \Omega^{-p\theta}|\partial_u\phi|^p\right) 
\end{equation}
and
\begin{equation}\label{ucase}\int_{\gamma_R\cap\{v\geq1\}} v^{(\alpha-(1+2\varepsilon))\frac{p\theta}{2}}|\partial_u\phi|^p \leq C \left(\int_{\mathcal{H}^+\cap\{v\geq1\}} v^{\frac{\alpha p\theta}{2}}|\partial_v\phi|^p+\int_{\underline{C}_1^{ext}\cap\{u\geq u_R(1)\}} \Omega^{-p\theta}|\partial_u\phi|^p\right). 
\end{equation}
\end{theorem*}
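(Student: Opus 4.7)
The proof will follow the same K-method interpolation scheme used in the proof of Theorem \ref{interior reduction}, now applied directly to a single \emph{two-input} linear operator sending characteristic data on $\mathcal{H}^+\cap\{v\geq1\}$ and $\underline{C}_1^{ext}\cap\{u\geq u_R(1)\}$ to its value on $\gamma_R\cap\{v\geq1\}$. The endpoint estimates are already at hand: Corollary \ref{extl11} supplies the $L^1$ endpoint, and Proposition \ref{l2ext22} supplies the weighted $L^2$ endpoint. Unlike in Theorem \ref{interior reduction}, where the $\underline{C}_1^{int}$-data piece had to be peeled off first by H\"older (because the $L^1$ estimate there held only for solutions vanishing on $\underline{C}_1^{int}$), here both data components admit genuine $L^1$ and weighted $L^2$ bounds, so a direct two-input interpolation can be carried out---and the resulting constant will depend only on $R,\alpha,\varepsilon,p$, as the statement requires.

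To set this up, I would build compatible couples by direct sum. On the data side, set
\begin{equation*}
A_0=L^1(\mathcal{H}^+\cap\{v\geq1\})\oplus L^1(\underline{C}_1^{ext}\cap\{u\geq u_R(1)\}),
\end{equation*}
\begin{equation*}
A_1=L^2_{v^\alpha}(\mathcal{H}^+\cap\{v\geq1\})\oplus L^2_{\Omega^{-2}}(\underline{C}_1^{ext}\cap\{u\geq u_R(1)\}),
\end{equation*}
each equipped with the sum of the component norms, and on the target side set $B_0=L^1(\gamma_R\cap\{v\geq1\})$ and $B_1=L^2_{v^{\alpha-(1+2\varepsilon)}}(\gamma_R\cap\{v\geq1\})$. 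The $K$-functional of a direct sum of compatible couples is additive over the two summands, and combining this observation with the weight computation from Steps 3--4 of the proof of Theorem \ref{interior reduction}---together with the elementary equivalence $x^p+y^p\leq(x+y)^p\leq 2^{p-1}(x^p+y^p)$ valid for $p\geq 1$---identifies
\begin{equation*}
(A_0,A_1)_{\theta,p}=L^p_{v^{\alpha p\theta/2}}(\mathcal{H}^+\cap\{v\geq1\})\oplus L^p_{\Omega^{-p\theta}}(\underline{C}_1^{ext}\cap\{u\geq u_R(1)\})
\end{equation*}
and $(B_0,B_1)_{\theta,p}=L^p_{v^{(\alpha-(1+2\varepsilon))p\theta/2}}(\gamma_R\cap\{v\geq1\})$, up to equivalence of norms. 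These are precisely the spaces appearing in \eqref{vcase}.

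I would then define $T:A_0+A_1\to B_0+B_1$ on smooth, spherically symmetric data $(f,g)$ by $T(f,g):=\partial_v\phi|_{\gamma_R\cap\{v\geq1\}}$, where $\phi$ is the spherically symmetric smooth solution of \eqref{waveequation}, determined up to an additive constant by $\partial_v\phi|_{\mathcal{H}^+\cap\{v\geq1\}}=f$ and $\partial_u\phi|_{\underline{C}_1^{ext}\cap\{u\geq u_R(1)\}}=g$; the derivative $\partial_v\phi|_{\gamma_R}$ is clearly independent of that constant. I would extend $T$ to all of $A_0+A_1$ by density and Cauchy-completion exactly as in Steps 5a--5d of the proof of Theorem \ref{interior reduction}. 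Corollary \ref{extl11} gives $\|T\|_{A_0\to B_0}<\infty$ and Proposition \ref{l2ext22} gives $\|T\|_{A_1\to B_1}<\infty$, so Theorem \ref{kmethod} yields $T:(A_0,A_1)_{\theta,p}\to(B_0,B_1)_{\theta,p}$ bounded with norm controlled by $\|T\|_{A_0\to B_0}^{1-\theta}\|T\|_{A_1\to B_1}^{\theta}$, which is precisely \eqref{vcase}. The estimate \eqref{ucase} then follows by rerunning the argument with the operator $\widetilde T:(f,g)\mapsto\partial_u\phi|_{\gamma_R\cap\{v\geq1\}}$, whose endpoint boundedness is supplied by the second statements of Corollary \ref{extl11} and Proposition \ref{l2ext22} respectively.

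The main technical obstacle, as in the interior case, is bookkeeping: verifying carefully that the interpolation of a direct sum of weighted $L^{p_i}$ spaces coincides, with equivalent norms, with the direct sum of the interpolated weighted $L^p$ spaces (so that both the weights $v^{\alpha p\theta/2}$, $\Omega^{-p\theta}$, $v^{(\alpha-(1+2\varepsilon))p\theta/2}$ appear with the correct exponents), and ensuring that $T$ is genuinely well-defined and linear on the appropriate dense class of smooth data. Once these points are settled, the proof assembles cleanly, and crucially, because the endpoint constants in Corollary \ref{extl11} and Proposition \ref{l2ext22} depend only on $R,\alpha,\varepsilon$ (with $R_2$ fixed once and for all by the spacetime parameters), the resulting constant in \eqref{vcase}--\eqref{ucase} depends only on $R,\alpha,\varepsilon,p$, as demanded by the statement.
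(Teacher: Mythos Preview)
Your proposal is correct and follows essentially the same approach as the paper's proof: a direct two-input interpolation via the K-method, with the data couple built as a product (direct sum) of weighted $L^{p_i}$ spaces on $\mathcal{H}^+\cap\{v\geq1\}$ and $\underline{C}_1^{ext}\cap\{u\geq u_R(1)\}$, endpoints supplied by Corollary \ref{extl11} and Proposition \ref{l2ext22}, and the operator extended from smooth data by density exactly as in Steps 5a--5d of Theorem \ref{interior reduction}. The paper even makes the same remark you do about why, unlike in the interior case, no splitting of the solution is needed here (so the constant is independent of $D$ and $\phi$).
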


\begin{remark}\label{rem}
The above theorem could be proved in exactly the same way as Theorem \ref{interior reduction} provided we made the additional assumption that the solution $\phi$ satisfies \eqref{ic2}. Recall that including this assumption in Theorem \ref{interior reduction} allowed  us to split the solution into two parts, $\phi_1$ and $\phi_2$, and deduce the desired estimate for each of them separately (and in particular without using interpolation in the case of $\phi_2$ because by \eqref{ic2} it is well behaved). For Theorem \ref{interior reduction}, this approach had the advantage of allowing us to deduce the condition on solutions of the form $\phi_1$ (namely smooth, spherically symmetric solutions with $\partial_u\phi_1=0$ on $\underline{C}_1\cap\{u\leq-1\}$)  asserted in Theorem \ref{condinstab2} which may prevent instability. However, the assumption \eqref{ic2} is not actually needed to perform interpolation and deduce the estimate. Indeed, removing the assumption and not splitting the solution, we may deduce the same estimate but with the constant independent of $D$ and $\phi$. As we do not assume \eqref{ic2} in Theorem \ref{lp exterior estimate}, we follow this approach to prove it.  We note that an almost identical argument to the one given below could have been used for Theorem \ref{interior reduction}, though with the disadvantage that the condition of Theorem \ref{condinstab2} would not follow from the proof in this case.
\end{remark}

\begin{proof}
We prove the only the first estimate \eqref{vcase} as the proof of the second estimate \eqref{ucase} is analogous.

We have $R>R_1>R_2$ and so Proposition \ref{l2ext22} applies. The idea is to use the K-method of real interpolation to interpolate between the bounds achieved in Corollary \ref{extl11} and Proposition \ref{l2ext22} to deduce the desired estimate.  Again, for clarity, set $p_0=1$ and $p_1=2$, so that
\[\frac{1}{p}=\frac{1-\theta}{p_0}+\frac{\theta}{p_1}.\] 
We split the argument into a number of steps.

\noindent\textbf{Step 1:} First of all, we need to define the compatible couples we wish to interpolate between. Given  postitve, measurable functions $w:[1,\infty)\to(0,\infty)$, $\tilde{w}:[u_R(1),\infty)\to(0,\infty)$ and $q\geq1$, we set 
\[A_q(w,\tilde{w}):=L^q_w([1,\infty))\times L^q_{\tilde{w}}([u_R(1),\infty))\]
with norm
\[\|(f,g)\|_{A_q(w,\tilde{w})}=\left(\int_1^\infty w(v)|f|^{q}(v)\,dv+\int_{u_R(1)}^\infty \tilde{w}(u)|g|^{q}(u)\,du\right)^{1/q}.\]
Then, given  positive, measurable functions $w_0,w_1:[1,\infty)\to(0,\infty)$ and $\tilde{w}_0,\tilde{w}_1:[u_R(1),\infty)$,  define
\begin{eqnarray*}
A_0=A_{p_0}(w_0,\tilde{w}_0)\hspace{7mm}  \mbox{ and }\hspace{7mm}  A_1=A_{p_1}(w_1,\tilde{w}_1)
\end{eqnarray*}
so
\begin{eqnarray*}
\|(f,g)\|_{A_j}&=&\left(\int_1^\infty w_j(v)|f|^{p_j}(v)\,dv+\int_{u_R(1)}^\infty \tilde{w}_j(u)|g|^{p_j}(u)\,du\right)^{1/p_j}, \hspace{3mm} j=0,1.
\end{eqnarray*}

Similarly for a postitve, measurable function $\omega:[1,\infty)\to(0,\infty)$ and $q\geq1$, we set 
\[B_q(\omega):=L^q_\omega([1,\infty)),\]
so
\[\|f\|_{B_q(\omega)}=\left(\int_1^\infty \omega(v)|f|^{q}(v)\,dv\right)^{1/q}.\]
Then, given  positive, measurable functions $\omega_0,\omega_1:[1,\infty)\to(0,\infty)$,  define
\begin{eqnarray*}
B_0=B_{p_0}(\omega_0) \hspace{7mm}\mbox{ and } \hspace{7mm} B_1=B_{p_1}(\omega_1)
\end{eqnarray*}
so
\begin{eqnarray*}
\|f\|_{B_j}&=&\left(\int_1^\infty \omega_j(v)|f|^{p_j}(v)\,dv\right)^{1/p_j}, \hspace{3mm} j=0,1.
\end{eqnarray*}
Finally, let $\overline{A}$ and $\overline{B}$ denote the compatible couples $\overline{A}=(A_0,A_1)$ and $\overline{B}=(B_0,B_1)$.

\noindent \textbf{Step 2:} 
Suppose $S:\overline{A}\to\overline{B}$ is a linear operator. Recall from Section \ref{INS} this means (with slight abuse of notation) that, $S:A_0+A_1\to B_0+B_1$ and moreover, $S:A_j\to B_j$ and is bounded, with norm $\|S\|_j$ say, for $j=0,1$.
Recalling that
\[\frac{1}{p}=\frac{1-\theta}{p_0}+\frac{\theta}{p_1}=\frac{1-\theta}{1}+\frac{\theta}{2},\]
it follows from Theorem \ref{kmethod} (with $q=p$ and $\theta=\theta$) that
\[S:\overline{A}_{\theta,p}\to\overline{B}_{\theta,p} \hspace{3mm}\mbox{ is bounded with norm }\hspace{3mm}\|S\|_{\theta,p}\leq\|S\|_0^{1-\theta}\|S\|_1^{\theta}.\]
In order to make use of this result, we will need to identify appropriate weights $w_0,w_1,\tilde{w}_0,\tilde{w}_1,\omega_0$ and $\omega_1$ and a bounded operator $T: A_j\to B_j$. This is where the $L^1$ and $L^2$ estimates proved in the previous sections will come in. We will also need to understand the spaces $\overline{A}_{\theta,p}$ and $\overline{B}_{\theta,p}$ and this is our next task.

\noindent\textbf{Step 3:} 
We show that
\[\overline{A}_{\theta,p}  :=(A_0,A_1)_{\theta,p}=(A_{p_0}(w_0,\tilde{w}_0), A_{p_1}(w_1,\tilde{w}_1))_{\theta,p}= A_p(w,\tilde{w}),\]
with equivalent norms, where
\[w=w_0^{p\frac{1-\theta}{p_0}}w_1^{p\frac{\theta}{p_1}} \hspace{6mm}\mbox{ and }\hspace{6mm} \tilde{w}=\tilde{w}_0^{p\frac{1-\theta}{p_0}}\tilde{w}_1^{p\frac{\theta}{p_1}}.\]

The proof of this fact is almost identical to the proof of the corresponding statement in Theorem \ref{interior reduction} (again following the proof of Theorem 5.5.1 of \cite{BL}). As before, set $\rho_0=p_0$, $\rho_1=p_1$, $\eta=\frac{\theta p}{p_1}\in(0,1)$ and $r=1$. Now $1-\eta=1-\frac{\theta p}{p_1}=(\frac{1}{p}-\frac{\theta}{p_1})p =\frac{1-\theta}{p_0}p$, and it follows that $\rho:=(1-\eta)\rho_0+\eta\rho_1=p$ and $q:=\rho r=p$. Then by the Power Theorem (Theorem \ref{power}) we have
\begin{equation}\label{power2ext}
\left((A_0,A_1)_{\theta,p}\right)^{p}=(A_0^{p_0},A_1^{p_1})_{\eta,1}
\end{equation}
with equivalent quasinorms.
For brevity we set $X=(A_0^{p_0},A_1^{p_1})_{\eta,1}$. For $(f,g)\in A_0^{p_0}+A_1^{p_1}$, we have

\begin{eqnarray}
\|(f,g)\|_X &=&\int_0^\infty\left(t^{-\eta}K(t,(f,g);A_0^{p_0},A_1^{p_1}   )\right)^1\,\frac{dt}{t}\notag\\
&=&\int_0^\infty t^{-\eta} (\inf_{\substack{(f,g)=(f_0,g_0)+(f_1,g_1)\\f_i\in L^{p_i}_{w_i}([1,\infty))\\g_i\in L^{p_i}_{\tilde{w}_i}([u_R(1),\infty))}}\left[\|(f_0,g_0)\|_{A_0^{p_0}}+t\|(f_1,g_1)\|_{A_1^{p_1}}\right])\frac{dt}{t} \notag\\
&=&\int_0^\infty t^{-\eta} \inf_{\substack{(f,g)=(f_0,g_0)+(f_1,g_1)\\f_i\in L^{p_i}_{w_i}([1,\infty))\\g_i\in L^{p_i}_{\tilde{w}_i}([u_R(1),\infty))}}\left[\int_1^\infty w_0|f_0|^{p_0}(v)\,dv +\int_{u_R(1)}^{\infty}\tilde{w}_0|g_0|^{p_0}(u)\,du \right.\notag\\
& &\left.+t\left(  \int_1^\infty w_1|f_1|^{p_1}(v)\,dv +\int_{u_R(1)}^{\infty}\tilde{w}_1|g_1|^{p_1}(u)\,du\right)\right]\,\frac{dt}{t}\notag\\
&=&\int_0^\infty t^{-\eta}\left[\inf_{\substack{f=f_0+f_1\\f_i\in L^{p_i}_{w_i}}}
\left(\int_1^\infty w_0|f_0|^{p_0}(v) +t w_1|f_1|^{p_1}(v)\,dv\right) \right.\notag\\
& &\left.+\inf_{\substack{g=g_0+g_1\\g_i\in L^{p_i}_{\tilde{w}_i}}}\left( \int_{u_R(1)}^{\infty}\tilde{w}_0|g_0|^{p_0}(u)+t\tilde{w}_1|g_1|^{p_1}(u)\,du\right)\right] \,\frac{dt}{t}\notag\\
&=&\int_1^\infty\left(\int_0^\infty t^{-\eta}\inf_{\substack{f=f_0+f_1\\f_i\in L^{p_i}_{w_i}}}(w_0|f_0|^{p_0}(v)+tw_1|f_1|^{p_1}(v))\,\frac{dt}{t}\right)\,dv\notag\\
& &+\int_{u_R(1)}^{\infty}\left(\int_{0}^\infty t^{-\eta} \inf_{\substack{g=g_0+g_1\\g_i\in L^{p_i}_{\tilde{w}_i}}} \left(\tilde{w}_0|g_0|^{p_0}(u)+t\tilde{w}_1|g_1|^{p_1}(u)\right)\,\frac{dt}{t}\right)\,du.\label{switchext}
\end{eqnarray}
Note that \eqref{switchext} follows from an analogous argument to  Lemma \ref{interplem} of Appendix \ref{Switch}.

Now arguing exactly as in the proof of Theorem \ref{interior reduction} we see that
\[\int_1^\infty\left(\int_0^\infty t^{-\eta}\inf_{\substack{f=f_0+f_1\\f_i\in L^{p_i}_{w_i}}}(w_0|f_0|^{p_0}(v)+tw_1|f_1|^{p_1}(v))\,\frac{dt}{t}\right)\,dv=C\int_1^\infty w|f|^{p}(v)\,dv\]
and
\begin{eqnarray*}
\int_{u_R(1)}^{\infty}\left(\int_{0}^\infty t^{-\eta} \inf_{\substack{g=g_0+g_1\\g_i\in L^{p_i}_{\tilde{w}_i}}} \left(\tilde{w}_0|g_0|^{p_0}(u)+t\tilde{w}_1|g_1|^{p_1}(u)\right)\,\frac{dt}{t}\right)\,du=C\int_{u_R(1)}^{\infty}\tilde{w}|g|^{p}(u)\,du,
\end{eqnarray*}
with $C=\int_0^\infty s^{-\eta}F(s)\,\frac{ds}{s}<+\infty$ (where $F$ is defined by \eqref{F}).
Thus by \eqref{switchext}
\[\|(f,g)\|_X=C\left(\int_1^\infty w|f|^{p}(v)\,dv+\int_{u_R(1)}^{\infty}\tilde{w}|\partial_ug|^{p}(u)\,du.\right)=C\|(f,g)\|_{A_p(w,\tilde{w})}^{p}.\]
In particular, $\|(f,g)\|_X<+\infty \Longleftrightarrow \|(f,g)\|_{A_p(w,\tilde{w})^{p}}<+\infty$, so $X=A_p(w,\tilde{w})^{p}$ and they have equivalent quasinorms. Thus, by \eqref{power2ext} 
\[\left((A_0,A_1)_{\theta,p}\right)^{p}=A_p(w,\tilde{w})^{p}\]
with equivalent quasinorms, and so
\[(A_0,A_1)_{\theta,p}=A_p(w,\tilde{w})\]
with equivalent norms, as desired.

\noindent\textbf{Step 4:} 
Using an argument analogous to that in the previous step, we deduce
\[\overline{B}_{\theta,p}:=(B_0,B_1)_{\theta,p}=(B_{p_0}(\omega_0),B_{p_1}(\omega_1))=B_p(\omega)\]
with equivalent norms, where 
\[\omega=\omega_0^{p\frac{1-\theta}{p_0}}\omega_1^{p\frac{\theta}{p_1}}.\]
To avoid repetition, we omit the details.

\noindent \textbf{Step 5:} We now fix the weights by setting $w_0(v)=1$, $w_1(v)=v^\alpha$, $\tilde{w}_0(u)=1$, $\tilde{w}_1(u)=\Omega^{-2}(u,1)$, $\omega_0(v)=1$ and $\omega_1(v)=v^{\alpha-(1+2\varepsilon)}$, where $0<\varepsilon<1$ and $\alpha-(1+2\varepsilon)>0$. So 
\[A_0=L^1([1,\infty))\times L^1([u_R(1),\infty)), \hspace{7mm} A_1=L^2_{v^\alpha}([1,\infty))\times L^2_{\Omega^{-2}(u,1)}([u_R(1),\infty))\hspace{3mm}\mbox{ (as sets)}\]
and
\[B_0=L^1([1,\infty)), \hspace{7mm} B_1=L^2_{v^{\alpha-(1+2\varepsilon)}}([1,\infty)).\]

 In order to construct a linear operator $T:\overline{A}\to\overline{B}$, we first construct two bounded linear operators
\[T_0:A_0\to B_0 \hspace{5mm}\mbox{ and } \hspace{5mm} T_1:A_1\to B_1.\]
Before defining these operators however, we note that given smooth and  spherically symmetric data\\ $(\partial_v\phi\vert_{\mathcal{H}^+\cap\{v\geq1\}},\partial_u\phi\vert_{\underline{C}_1^{ext}\cap\{u\geq u_R(1)\}})$ for the wave equation \eqref{waveequation}, the solution $\phi$ is determined up to a constant in $\{r\leq R\}\cap\{v\geq1\}$ (as it is determined up to a constant on $\mathcal{H}^+\cap\{v\geq1\}$ and $\underline{C}_1^{ext}\cap\{u\geq u_R(1)\}$), and hence $\partial_v\phi$ is uniquely determined on $\gamma_R\cap\{v\geq1\}$.

\noindent \textbf{Step 5a:} We define the operator $T_0:A_0\to B_0$. We first define $T_0$ for smooth  $(f,g)\in A_0$ and then use a limiting process to extend $T_0$ to the entire space $A_0$. 

Indeed, given a smooth $(f,g)\in A_0$ we set $T_0(f,g)=\partial_v\varphi\vert_{\gamma_R\cap\{v\geq1\}}$ where $\varphi$ is a solution of the (spherically symmetric) wave equation with $\partial_v\varphi\vert_{\mathcal{H}^+\cap\{v\geq1\}}=f$ and $\partial_u\varphi\vert_{\underline{C}_1\cap\{u\geq u_R(1)\}}=g$. Then, by the remarks directly above $T_0(f,g)$ is uniquely determined and by the $L^1$ estimates (Proposition \ref{extl11}) $T_0(f,g)\in B_0$.
Indeed, 
\begin{eqnarray*}
\|T_0(f,g)\|_{B_0}=\int_1^\infty |T_0(f,g)(v)|\,dv &=&\int_1^\infty |\partial_v\varphi|(u_R(v),v)\,dv\\
&\leq& C(R)\left(\int_{\mathcal{H}^+\cap\{v\geq1\}}|\partial_v\varphi|+\int_{\underline{C}_1^{ext}\cap\{u\geq u_R(1)\}} |\partial_u\varphi|\right)\\
&=& C(R)\left(\int_1^\infty |f(v)|\,dv+\int_{u_R(1)}^\infty |g(u)|\,du\right)\\
&=&C(R) \|(f,g)\|_{A_0}<\infty.
\end{eqnarray*}
Moreover, by uniqueness and linearity of solutions to the wave equation, $T_0$ acts linearly on the smooth elements of $A_0$. See Figure \ref{action2} below for a diagramatic representation of the action of $T_0$ on smooth elements of $A_0$.

Now suppose $(f,g)\in A_0$ is not smooth. By density of smooth functions $C^\infty([1,\infty))\times C^\infty([u_R(1),\infty))$ in $A_0$, there is a sequence of smooth functions $\{(f_n,g_n)\}$ in $A_0$ such that $(f_n,g_n)\to (f,g)$ in $A_0$. Thus $\{(f_n,g_n)\}$ is a Cauchy sequence in $A_0$, $T_0(f_n,g_n)\in B_0$ is defined for each $n$ by above and by the $L^1$ estimates (Proposition \ref{extl11}) $\{T_0(f_n,g_n)\}$ is a Cauchy sequence in $B_0$. But $B_0=L^1$ is complete and hence $\{T_0(f_n,g_n)\}$ has a limit in $B_0$. We define 
\[T_0(f,g):=\lim_{n\to\infty}T_0(f_n,g_n) \hspace{2mm}\mbox{ in } B_0.\]

It is easy to check that $T_0(f,g)$ is well-defined. Indeed, if $\{(f_n,g_n)\}$ and $\{(f_n',g_n')\}$ are both sequences of smooth functions in $A_0$ which converge to $(f,g)$ in $A_0$, then by Proposition \ref{extl11}
\[
\|T_0(f_n,g_n)-T_0(f_n',g_n')\|_{B_0}=\|T_0(f_n-f_n',g_n-g_n')\|_{B_0}\leq C(R)\|f_n-f_n',g_n-g_n')\|_{A_0}\to0,
\]
so it follows that $T_0(f,g)$ is uniquely determined. So $T_0$ is well-defined. It is trivial to check that $T_0$ is linear. Furthermore, $T_0$ is bounded: if $(f,g)\in A_0$ is smooth, it follows immediately from Proposition \ref{extl11} that $\|T_0(f,g)\|_{B_0}\leq C(R)\|(f,g)\|_{A_0}$, and if $(f,g)\in A_0$ is not smooth, then for any sequence $\{(f_n,g_n)\}$ of smooth 
functions in $A_0$ converging to $(f,g)$ in $A_0$,
\[
\|T_0(f,g)\|_{B_0}=\lim_{n\to\infty}\|T_0(f_n,g_n)\|_{B_0}\leq \lim_{n\to\infty} C(R)\|(f_n,g_n)\|_{A_0} =C(R)\|(f,g)\|_{A_0}.\]
So we have constructed the desired bounded linear operator $T_0:A_0\to B_0$ and moreover $\|T_0\|\leq C(R)$, where $C(R)$ is the constant from Proposition \ref{extl11}.

\begin{figure}[h]
\centering
\includegraphics[scale=0.5]{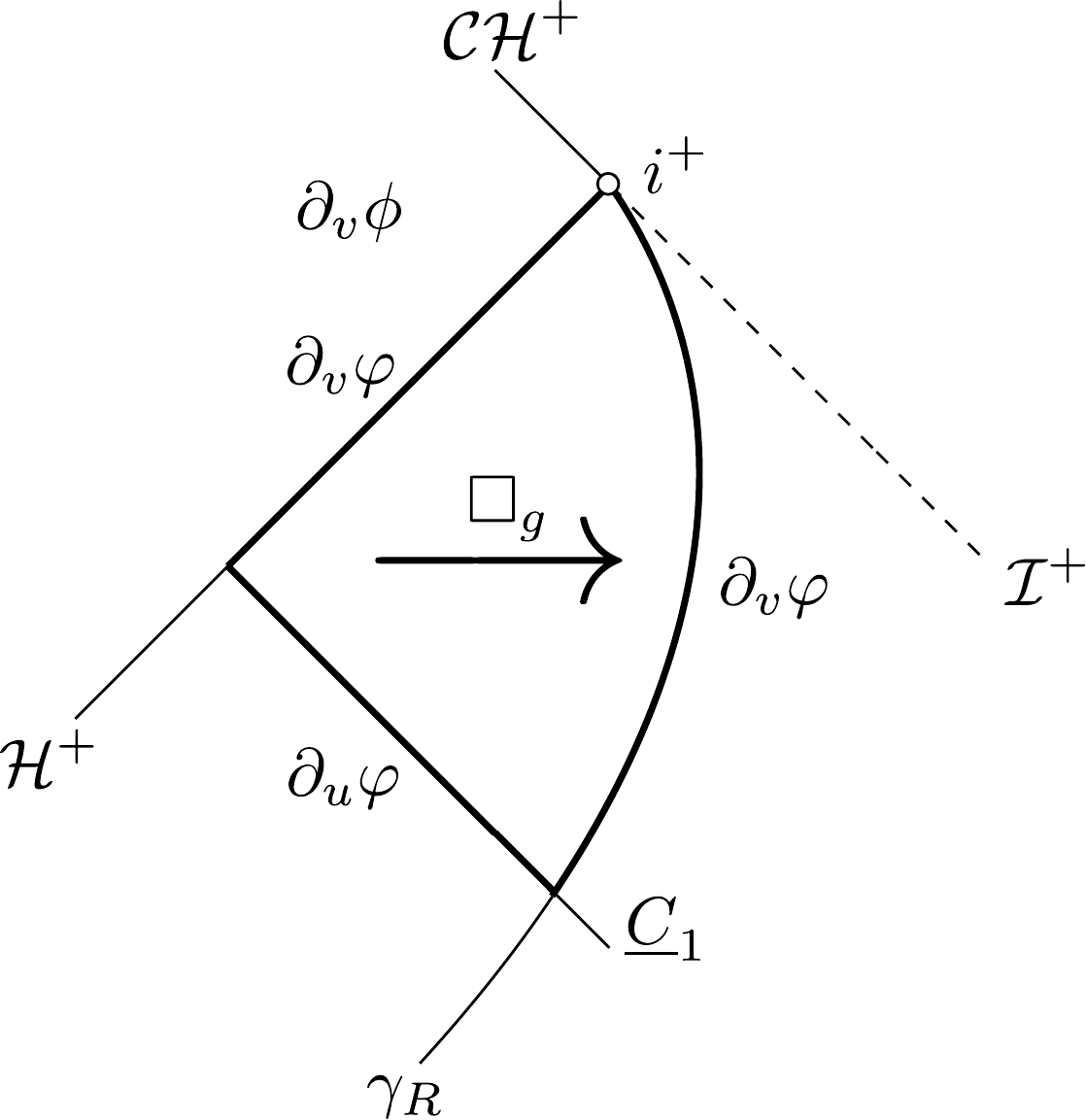}
\caption{Action of the operators $T_0$ and $T_1$ on smooth functions in their domains.}
\label{action2}
\end{figure}

\noindent \textbf{Step 5b:} We similarly define the operator $T_1:A_1\to B_1$. Again, we first define $T_1$ for smooth  $(f,g)\in A_1$ and then use a limiting process to extend $T_1$ to the entire space $A_1$. 

Given a smooth $(f,g)\in A_1$, we set $T_1(f,g)=\partial_v\varphi\vert_{\gamma_R\cap\{v\geq1\}}$, where $\varphi$ is a solution of the (spherically symmetric) wave equation with $\partial_v\varphi\vert_{\mathcal{H}^+\cap\{v\geq1\}}=f$ and $\partial_u\varphi\vert_{\underline{C}_1\cap\{u\geq u_R(1)\}}=g$. By the remarks at the start of Step 5 $T_1(f,g)$ is uniquely determined and by the $L^2$ estimates (Proposition \ref{l2ext22}) $T_1(f,g)\in B_1$.
Indeed,
\begin{eqnarray*}
\|T_1(f,g)\|_{B_1}^2&=&\int_1^\infty v^{\alpha-(1+2\varepsilon)}  |T_1(f,g)|^2(v)\,dv\\
 &=&\int_1^\infty v^{\alpha-(1+2\varepsilon)}|\partial_v\varphi|^2(u_R(v),v)\,dv\\
&\leq& C(R,\alpha,\varepsilon) \left(\int_{\mathcal{H}^+\cap\{v\geq1\}}v^\alpha(\partial_v\varphi)^2+\int_{\underline{C}_1^{ext}\cap\{u\geq u_R(1)\}}\left(\frac{\partial_u\varphi}{\Omega}\right)^2\right)\\
&=&C(R,\alpha,\varepsilon) \left(\int_1 ^\infty v^\alpha|f(v)|^2\,dv+\int_{u_R(1)}^\infty \frac{|g(u)|^2}{\Omega^2(u,1)}\,du\right)\\
&=& C(R,\alpha,\varepsilon)\|(f,g)\|_{A_1}^2<\infty.
\end{eqnarray*}
Moreover, by uniqueness and linearity of solutions to the wave equation, $T_1$ acts linearly on the smooth elements of $A_1$.

Now suppose $(f,g)\in A_1$ is not smooth. By density of smooth functions $C^\infty([1,\infty))\times C^\infty([u_R(1),\infty))$ in $A_1$, there is a sequence of smooth functions $\{(f_n,g_n)\}$ in $A_1$ such that $(f_n,g_n)\to (f,g)$ in $A_1$. Thus $\{(f_n,g_n)\}$ is a Cauchy sequence in $A_1$, $T_1(f_n,g_n)\in B_1$ is defined for each $n$ by above and by the $L^2$ estimates (Proposition \ref{l2ext22}) $\{T_1(f_n,g_n)\}$ is a Cauchy sequence in $B_1$. But $B_1=L^2_{v^{\alpha-(1+2\varepsilon)}}([1,\infty))$ is complete and hence $\{T_1(f_n,g_n)\}$ has a limit in $B_1$. We define 
\[T_1(f,g):=\lim_{n\to\infty}T_1(f_n,g_n) \hspace{2mm}\mbox{ in } B_1.\]

As before $T_1f$ is well-defined. Indeed, if $\{(f_n,g_n)\}$ and $\{(f_n',g_n')\}$ are both sequences of smooth functions in $A_1$ which converge to $(f,g)$ in $A_1$, then by Proposition \ref{l2ext22}
\[
\|T_1(f_n,g_n)-T_1(f_n',g_n')\|_{B_1}=\|T_1(f_n-f_n',g_n-g_n')\|_{B_1}\leq C(R,\alpha,\varepsilon)^{1/2}\|(f_n-f_n',g_n-g_n')\|_{A_1}\to0,
\]
so it follows that $T_1(f,g)$ is uniquely determined. So $T_1$ is well defined. It is trivial to check that $T_1$ is linear. Furthermore, $T_1$ is bounded: if $(f,g)\in A_1$ is smooth, it follows immediately from Proposition \ref{l2ext22} that $\|T_1(f,g)\|_{B_1}\leq C(R,\alpha,\varepsilon)^{1/2}\|(f,g)\|_{A_1}$, and if $(f,g)\in A_1$ is not smooth, then for any sequence of smooth functions $\{(f_n,g_n)\}$ in $A_1$ converging to $(f,g)$ in $A_1$,
\[
\|T_1(f,g)\|_{B_1}=\lim_{n\to\infty}\|T_1(f_n,g_n)\|_{B_1}\leq \lim_{n\to\infty} C(R,\alpha,\varepsilon)^{1/2}\|(f_n,g_n)\|_{A_1} =C(R,\alpha,\varepsilon)^{1/2}\|(f,g)\|_{A_1}.\]
So we have constructed the desired bounded linear operator $T_1:A_1\to B_1$ and moreover $\|T_1\|\leq C(R,\alpha,\varepsilon)^{1/2}$, where $C(R,\alpha,\varepsilon)$ is the constant from Proposition \ref{l2ext22}.

\noindent \textbf{Step 5c:} We show that $T_0=T_1$ on $A_0\cap A_1$. To see this, we note that $T_0\vert_{A_0\cap A_1}:A_0\cap A_1 \to B_0+B_1$ is continuous. Indeed if $(f,g)\in A_0\cap A_1$, by continuity of $T_0:A_0\to B_0$ and by definition of $\|\cdot\|_{B_0+B_1}$ and $\|\cdot\|_{A_0\cap A_1}$ (see Appendix \ref{Interpolation Theory}), we have
\[\|T_0(f,g)\|_{B_0+B_1}\leq \|T_0(f,g)\|_{B_0}\leq C(R)\|(f,g)\|_{A_0}\leq C(R)\|(f,g)\|_{A_0\cap A_1},\]
so $T_0\vert_{A_0\cap A_1}:A_0\cap A_1 \to B_0+B_1$ is indeed continuous.
In exactly the same way,  $T_1\vert_{A_0\cap A_1}:A_0\cap A_1 \to B_0+B_1$ is continuous.

Now, the space of smooth and compactly supported functions $\mathcal{C}:=C^\infty_c([1,\infty))\times C^\infty_c([u_R(1),\infty))$ satisfies $\mathcal{C}\subset A_1$ and $(\mathcal{C},\|\cdot\|_{A_1})$ is dense in $(A_1,\|\cdot\|_{A_1})$. Also  $\alpha>1+2\varepsilon$, so given $(f,g)\in A_1$, by H\"{o}lder's inequality we have
\begin{eqnarray}
\|(f,g)\|_{A_0} &=&\int_1^\infty |f|(v)\,dv+\int_{u_R(1)}^\infty |g|(u)\,du\notag\hspace{90mm}\\
&\leq& \left(\int_1^\infty v^\alpha |f|^2(v)\,dv\cdot \int_1^\infty v^{-\alpha}\,dv\right)^{1/2}+\left( \int_{u_R(1)}^\infty \frac{|g|^2(u)}{\Omega^2(u,1)}\,du\cdot\int_{u_R(1)}^\infty \Omega^2(u,1)\,du\right)^{1/2}\notag\\
&=&C_1(\alpha)\left(\int_1^\infty v^\alpha |f|^2(v)\,dv\right)^{1/2}+(R-r_+)^{1/2}\left( \int_{u_R(1)}^\infty \frac{|g|^2(u)}{\Omega^2(u,1)}\,du\right)^{1/2}\notag\\
&=&C(R,\alpha)\|(f,g)\|_{A_1}<\infty,\notag
\end{eqnarray}
and hence $(f,g)\in A_0$. So $A_1\subset A_0$. In particular, $A_0\cap A_1=A_1$ and (as $\|\cdot\|_{A_1}\leq \|\cdot\|_{A_0\cap A_1}$ and by the previous equation) $\|\cdot\|_{A_0\cap A_1}=\max(\|\cdot\|_{A_0},\|\cdot\|_{A_1})$ is equivalent to $\|\cdot\|_{A_1}$. It follows that $(\mathcal{C},\|\cdot\|_{A_0\cap A_1})$ is dense in $(A_0\cap A_1,\|\cdot\|_{A_0\cap A_1})$. Furthermore, it follows immediately from the definitions of $T_0$ and $T_1$ that $T_0=T_1$ on $\mathcal{C}$ (as their actions on smooth functions are defined in the same way).

Now recall the classical result that if $X$, $Y$ are metric spaces, $S,T:X\to Y$ continuous maps and $A\subset X$ a dense subset of $X$ such that $S=T$ on $A$, then $S=T$ (on the entire space $X$). Applying this result with $X=A_0\cap A_1$, $Y=B_0+B_1$, $A=\mathcal{C}$, $S=T_0\vert_{A_0\cap A_1}$ and $T=T_1\vert_{A_0\cap A_1}$, we deduce that $T_0=T_1$ on $A_0\cap A_1$.

\noindent \textbf{Step 5d:} We define the linear operator $T:A_0+A_1\to B_0+B_1$ as follows. For  $(f,g)\in A_0+A_1$ with $(f,g)=(f_0,g_0)+(f_1,g_1)$, $(f_i,g_i)\in A_i$, we set $T(f,g)=T_0(f_0,g_0)+T_1(f_1,g_1)$. We need to check that $T$ is well-defined and linear.
\begin{itemize}
\item
To see that $T$ is well-defined, suppose $(f,g)\in A_0+A_1$ with $(f,g)=(f_0,g_0)+(f_1,g_1)=(f_0',g_0')+(f_1',g_1')$, where $(f_i,g_i), (f_i',g_i')\in A_i$. Then $(f_0,g_0)-(f_0',g_0')=(f_1',g_1')-(f_1,g_1)\in A_0\cap A_1$ and hence (by Step 5c)
\begin{eqnarray*}
T_0((f_0,g_0)-(f_0',g_0'))=T_0((f_1',g_1')-(f_1,g_1))&=&T_1((f_1',g_1')-(f_1,g_1))\\
\Longrightarrow \hspace{40mm}T_0(f_0,g_0)-T_0(f_0',g_0')&=&T_1(f_1',g_1')-T_1(f_1,g_1)\\
\Longrightarrow \hspace{40mm}T_0(f_0,g_0)+T_1(f_1,g_1)&=&T_0(f_0',g_0')+T_1(f_1',g_1'),
\end{eqnarray*}
so $T(f,g)$ is independent of the representation of $(f,g)$ and so $T$ is well-defined.

\item To see that $T$ is linear, suppose $\alpha,\beta\in\mathbb{R}$ and $(f,g)=(f_0,g_0)+(f_1,g_1), (f',g')=(f_0',g_0')+(f_1',g_1')\in A_0+A_1$ (where $(f_i,g_i), (f_i',g_i')\in A_i$). Then by linearity of $T_0:A_0\to B_0$ and $T_1:A_1\to B_1$
\begin{eqnarray*}
T(\alpha (f,g)+ \beta (f',g'))&=&T[(\alpha (f_0,g_0)+\beta(f_0', g_0'))+(\alpha (f_1,g_1)+\beta (f_1',g_1'))]\\ &=& T_0( \alpha (f_0,g_0)+\beta(f_0', g_0')  )+T_1(\alpha (f_1,g_1)+\beta (f_1',g_1'))\\
&=&\alpha (T_0(f_0,g_0)+T_1(f_1,g_1))+\beta(T_0(f_0',g_0')+T_1(f_1',g_1'))\\
&=&\alpha T(f,g)+\beta T(f',g'), 
\end{eqnarray*}
so $T$ is indeed linear.
\end{itemize}

\noindent \textbf{Step 6:} Notice it follows immediately from the definition of the linear operator $T:A_0+A_1\to B_0+B_1$ that $T\vert_{A_0}=T_0:A_0\to B_0$ and $T\vert_{A_1}=T_1:A_1\to B_1$ and both of these maps are bounded. In other words, using the notation of Appendix \ref{Interpolation Theory}, $T:\overline{A}\to \overline{B}$. Hence, by Step 2, 
\[T:(A_0,A_1)_{\theta,p}\to(B_0,B_1)_{\theta,p}\hspace{3mm} \mbox{ with norm } \hspace{3mm}\|T\|_{\theta,p}\leq \|T_0\|^{1-\theta}\|T_1\|^\theta.\]
Recalling that we fixed $p_0=1$, $p_1=1$
$w_0(v)=1$, $w_1(v)=v^\alpha$, $\tilde{w}_0(u)=1$, $\tilde{w}_1(u)=\Omega^{-2}(u,1)$, $\omega_0(v)=1$ and $\omega_1(v)=v^{\alpha-(1+2\varepsilon)}$, where  $0<\varepsilon<1$ and $\alpha-(1+2\varepsilon)>0$, Steps 3 and 4 allow us to compute $(A_0,A_1)_{\theta,p}$ and $(B_0,B_1)_{\theta,p}$. By Step 3
\[(A_0,A_1)_{\theta,p}=A_p(w,\tilde{w})=L^p_w([1,\infty))\times L^p_{\tilde{w}}([u_R(1),\infty))\]
with norm
\[\|(f,g)\|_{(A_0,A_1)_{\theta,p}}=\left(\int_1^\infty w(v)|f|^{p}(v)\,dv+\int_{u_R(1)}^\infty \tilde{w}(u)|g|^{p}(u)\,du\right)^{1/p},\]
where  
\[w(v)=w_0(v)^{p\frac{1-\theta}{p_0}}w_1(v)^{p\frac{\theta}{p_1}}=1^{p(1-\theta)/1}\cdot (v^\alpha)^{p\theta/2}=v^\frac{\alpha p\theta}{2}\]
and
\[\tilde{w}(u)=\tilde{w}_0(u)^{p\frac{1-\theta}{p_0}}\tilde{w}_1(u)^{p\frac{\theta}{p_1}}=1^{p(1-\theta)/1}\cdot(\Omega^{-2}(u,1))^{p\theta}=\Omega^{-p\theta}(u,1).\]
So 
\[(A_0,A_1)_{\theta,p}=L^p_{v^\frac{\alpha p\theta}{2}}([1,\infty))\times L^p_{\Omega^{-p\theta}(u,1)}([u_R(1),\infty))\]
with norm
\[\|(f,g)\|_{(A_0,A_1)_{\theta,p}}=\left(\int_1^\infty v^{\frac{\alpha p\theta}{2}}|f|^{p}(v)\,dv+\int_{u_R(1)}^\infty \Omega^{-p\theta}(u,1) |g|^{p}(u)\,du\right)^{1/p}.\]
Similarly, by Step 4
\[(B_0,B_1)_{\theta,p}=B_p(\omega)=L^p_\omega([1,\infty)),\]
where  
\[\omega(v)=\omega_0(v)^{p\frac{1-\theta}{p_0}}\omega_1(v)^{p\frac{\theta}{p_1}}=1^{p(1-\theta)/1}\cdot (v^{\alpha-(1+2\varepsilon)})^{p\theta/2}=v^{(\alpha-(1+2\varepsilon))\frac{ p\theta}{2}},\]
so 
\[(B_0,B_1)_{\theta,p}=L^p_{v^{(\alpha -(1+2\varepsilon))\frac{p\theta}{2}}}([1,\infty))\]
with norm
\[\|f\|_{(B_0,B_1)_{\theta,p}}=\left(\int_1^\infty v^{(\alpha -(1+2\varepsilon))\frac{p\theta}{2}}|f|^p(v)\,dv\right)^{1/p}.\]
Thus, we conclude
\begin{equation}\label{range2}
T:L^p_{v^\frac{\alpha p\theta}{2}}([1,\infty))\times L^p_{\Omega^{-p\theta}(u,1)}([u_R(1),\infty))\to L^p_{v^{(\alpha -(1+2\varepsilon))\frac{p\theta}{2}}}([1,\infty))
\end{equation}
is a bounded linear operator and by Steps 5a and 5b its norm $\|T\|_{\theta,p}$  is bounded by
\begin{equation}\label{bop2}
\|T\|_{\theta,p}\leq \|T_0\|^{1-\theta}\|T_1\|^\theta \leq C(R)^{1-\theta}\cdot C(R,\alpha,\varepsilon)^{\theta/2}=\tilde{C}(R,\alpha,\varepsilon,p),
\end{equation}
where $C(R)$ and $C(R,\alpha,\varepsilon)$ are the constants from Propositions \ref{extl11} and \ref{l2ext22} respectively.

\noindent \textbf{Step 7:} We complete the proof. Let $\phi$ be a solution of the wave equation with smooth, spherically symmetric data on $S$.
Note that if 
\[\int_1^\infty v^\frac{\alpha p\theta}{2}|\partial_v\phi|^p(\infty,v)\,dv+\int_{u_R(1)}^\infty \Omega^{-p\theta}|\partial_u\phi|^p(u,1)\,du=+\infty,\]
then \eqref{vcase} trivially holds. So we assume 
\[\int_1^\infty v^\frac{\alpha p\theta}{2}|\partial_v\phi|^p(\infty,v)\,dv+\int_{u_R(1)}^\infty \Omega^{-p\theta}|\partial_u\phi|^p(u,1)\,du<+\infty,\]
namely we assume $(f,g):=(\partial_v\phi\vert_{\mathcal{H}^+\cap\{v\geq1\}}, \partial_u\phi\vert_{\underline{C}_1\cap\{u\geq u_R(1)\}})\in (A_0,A_1)_{\theta,p}$. But this implies $T(f,g)\in (B_0,B_1)_{\theta,p}=L^p_{v^{(\alpha -(1+2\varepsilon))\frac{p\theta}{2}}}([1,\infty))$ (using \eqref{range2}), and moreover by \eqref{bop2}
\begin{eqnarray*}
\int_1^\infty v^{(\alpha-(1+2\varepsilon))\frac{p\theta}{2}}|T(f,g)|^p(v)\,dv &\leq& \tilde{C}(R,\alpha,\varepsilon,p)^p\left(\int_1^\infty v^\frac{\alpha p\theta}{2}|f|^p(v)\,dv+\int_{u_R(1)}^\infty \Omega^{-p\theta}(u,1)|g|^p(u)\,du\right)\\
&=&\tilde{C}\left(\int_1^\infty v^\frac{\alpha p\theta}{2}|\partial_v\phi|^p(\infty,v)\,dv+\int_{u_R(1)}^\infty \Omega^{-p\theta}|\partial_u\phi|^p(u,1)|\,du\right).
\end{eqnarray*}
Comparing the previous inequality to \eqref{vcase}, we see that our proof of \eqref{vcase} will be complete if we show that $T(f,g)=\partial_v\phi\vert_{\gamma_R\cap\{v\geq1\}}$.

To see this, note that $(f,g)\in (A_0,A_1)_{\theta,p}\subseteq A_0+A_1\subseteq A_0+A_0$ since $A_1\subseteq A_0$ (as $\alpha>1$ and $\Omega^2(u,1)$ is integrable). But $A_0+A_0=A_0$ (with equal norms), so in particular $(f,g)\in A_0$. Moreover, as $\phi$ is a smooth solution of the wave equation, $(f,g):=(\partial_v\phi\vert_{\mathcal{H}^+\cap\{v\geq1\}}, \partial_u\phi\vert_{\underline{C}_1\cap\{u\geq u_R(1)\}})$ is smooth. Hence, by construction of $T$ (Step 5d), $T(f,g)=T_0(f,g)$ and by definition of $T_0$ (Step 5a), $T_0(f,g)=\partial_v\phi_1\vert_{\gamma_R\cap\{v\geq1\}}$. So $T(f,g)=\partial_v\phi\vert_{\gamma_R\cap\{v\geq1\}}$ and the proof is complete.
\end{proof}

\subsection{Proof of Propositions \ref{alt2} and \ref{alt1}}\label{easyres}

The remainder of this section is devoted to proving Propositions \ref{alt2} and \ref{alt1}, namely showing that for $R>r_+$ with $R^*\geq1$ and a smooth, spherically symmetric solution $\phi$ of the wave equation, we can control 
\[\int_{\gamma_{R}\cap\{v\geq 1\}}v^{3p-1}|\phi|^p\]
and 
\[\sup_{\gamma_{R}\cap\{v\geq R^*\}}v^3|\phi|\]
in terms of
\begin{equation}\label{vlih3}
I_{p,R}(\phi):=\left(\int_{\gamma_{R}\cap\{v\geq1\}}v^{4p}|\partial_v\phi|^p\right)^{1/p}+\left(\int_{\gamma_{R}\cap\{v\geq1\}}v^{4p}|\partial_u\phi|^p\right)^{1/p}.
\end{equation}
We first prove a preliminary lemma from which these estimates easily follow.

\begin{lemma}\label{Cauchy}Assume $p>1$.
Suppose $\phi$ is a solution of the wave equation \eqref{waveequation} with smooth, spherically symmetric data on $S$ satisfying \eqref{ic1} and \eqref{ic2}. Let $R\geq r_+$ be such that $R^*\geq 1$ and suppose $a>1-1/p$. Then for $v_0\geq R^*$, there exists $C=C(a,p)>0$ such that
\[|\phi|(u_{R^*}(v_0),v_0)\leq C   \left(\left(\int_{\gamma_R\cap\{v\geq1\}} v^{ap}|\partial_u\phi|^p\right)^{1/p}+\left(\int_{\gamma_R\cap\{v\geq1\}} v^{ap}|\partial_v\phi|^p\right)^{1/p}\right)  v_0^{1-a-1/p}.\]
\end{lemma}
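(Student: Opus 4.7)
The plan is to bound $|\phi|$ at the point $(u_{R^*}(v_0),v_0) \in \gamma_R$ by integrating the tangential derivative of $\phi$ along $\gamma_R$ from that point out to future timelike infinity $i^+$, where $\phi$ vanishes, and then applying H\"older's inequality with a carefully chosen weight.

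The first step is to parametrize $\gamma_R$ by $v$. Since $v-u=r^*=R^*$ along $\gamma_R$, we have $u_{R^*}(v)=v-R^*$, so $du/dv = 1$ and therefore
\[
\frac{d}{dv}\bigl[\phi(u_{R^*}(v),v)\bigr] = (\partial_u\phi + \partial_v\phi)(u_{R^*}(v),v).
\]
The next step is to establish that $\phi(u_{R^*}(v),v)\to 0$ as $v\to\infty$. This is the pointwise decay of $\phi$ at $i^+$ along $\gamma_R$ and follows from the decay assumptions \eqref{ic1}--\eqref{ic2} on the characteristic data combined with standard decay estimates in the subextremal Reissner--Nordstr\"om exterior (see e.g.\ \cite{DR1} and the discussion preceding \eqref{limit}).

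Given this decay, I would write
\[
\phi(u_{R^*}(v_0),v_0) = -\int_{v_0}^{\infty}(\partial_u\phi + \partial_v\phi)(u_{R^*}(v),v)\,dv,
\]
and then apply H\"older's inequality with conjugate exponents $p$ and $p'=p/(p-1)$, inserting the weight $v^a\cdot v^{-a}=1$:
\[
|\phi(u_{R^*}(v_0),v_0)| \leq \left(\int_{v_0}^{\infty}v^{ap}|\partial_u\phi+\partial_v\phi|^p\,dv\right)^{1/p}\left(\int_{v_0}^{\infty}v^{-ap'}\,dv\right)^{1/p'}.
\]
The hypothesis $a>1-1/p$ is precisely equivalent to $ap'>1$, which ensures convergence of the second factor and gives
\[
\left(\int_{v_0}^{\infty}v^{-ap'}\,dv\right)^{1/p'} = (ap'-1)^{-1/p'}\,v_0^{(1-ap')/p'} = C(a,p)\,v_0^{\,1-1/p-a}.
\]

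For the first factor, I would use $|x+y|^p\leq 2^{p-1}(|x|^p+|y|^p)$, extend the integration domain from $[v_0,\infty)$ to all of $\gamma_R\cap\{v\geq 1\}$ (since $v_0\geq R^*\geq 1$), and interpret the resulting integrals as $\int_{\gamma_R\cap\{v\geq 1\}}v^{ap}|\partial_u\phi|^p$ and $\int_{\gamma_R\cap\{v\geq 1\}}v^{ap}|\partial_v\phi|^p$ respectively (recalling the convention that constant-$r^*$ hypersurfaces are parametrized by $v$ with measure $dv$). Combining everything yields the desired inequality with $C$ depending only on $a$ and $p$.

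The main obstacle is the second step, namely justifying $\phi(u_{R^*}(v),v)\to 0$ as $v\to\infty$. Once that is in hand, the rest is a straightforward computation: the H\"older inequality extracts a weighted $L^p$ norm of the derivative and leaves a simple power-law weight integral whose exponent matches the target $v_0^{1-a-1/p}$ exactly because of the hypothesis $a>1-1/p$.
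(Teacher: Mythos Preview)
Your proposal is correct and follows essentially the same approach as the paper's proof: integrate the tangential derivative along $\gamma_R$ from the point out to $i^+$ (using $\phi\to 0$ there, which the paper also cites \cite{DR1} for), apply H\"older with the weight $v^a\cdot v^{-a}$, and use $a>1-1/p$ to evaluate the weight integral. The only cosmetic difference is that the paper parametrizes $\gamma_R$ by $t$ and then changes variables to $v$ (picking up an extra step $t_0\geq v_0\Rightarrow t_0^{1-a-1/p}\leq v_0^{1-a-1/p}$), whereas you work in $v$ from the start, which is slightly more direct.
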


\begin{proof}
Let $R$ be as above. It follows from the fundamental theorem of calculus–, together with the fact $\phi(t,R^*)\to0$ as $t\to \infty$ (see \cite{DR1}) that
\begin{eqnarray}
|\phi|(t_0,R^*)&\leq& \int_{t_0}^\infty|\partial_t\phi|(t,R^*)\,dt\notag\\
&=&\int_{t_0}^\infty  t^a|\partial_t\phi|(t,R^*)t^{-a}\,dt\notag\\
&\leq& \left(\int_{t_0}^\infty t^{ap}|\partial_t\phi|^p(t,R^*)\,dt\right)^{1/p}\left(\int_{t_0}^\infty t^{-ap'}\,dt\right)^{1/p'},\label{ecs}
\end{eqnarray}
where $1/p+1/p'=1$. Now, since $a>1-1/p$
\begin{eqnarray}
\left(\int_{t_0}^\infty t^{-ap'}\,dt\right)^{1/p'}=\left(\int_{t_0}^\infty t^{-\frac{ap}{p-1}}\,dt\right)^\frac{p-1}{p}
&=&C(t_0^{1-\frac{ap}{p-1}})^{\frac{p-1}{p}}
=Ct_0^{1-a-1/p},\hspace{8mm}\label{2ndcomp}
\end{eqnarray}
where $C=C(a,p)>0$.

On the other hand, since $v=\frac{t+R^*}{2}$ on $\gamma_R$ and $\partial_t=\frac{1}{2}(\partial_u+\partial_v)$, we have
\begin{eqnarray*}
\left(\int_{t_0}^\infty t^{ap}|\partial_t\phi|^p(t,R^*)\,dt\right)^{1/p}\hspace{130mm}\\
\leq\frac{1}{2}\left(\left(\int_{t_0}^\infty t^{ap}|\partial_u\phi|^p(t,R^*)\,dt\right)^{1/p}+\left(\int_{t_0}^\infty t^{ap}|\partial_v\phi|^p(t,R^*)\,dt\right)^{1/p}\right)\hspace{69mm}\\
= \frac{2^{1/p}}{2}\left(\left(\int_{\frac{t_0+R^*}{2}}^\infty (2v-R^*)^{ap}|\partial_u\phi|^p(u_{R^*}(v),v)\,dv\right)^{1/p}+\left(\int_{\frac{t_0+R^*}{2}}^\infty (2v-R^*)^{ap}|\partial_v\phi|^p(u_{R^*}(v),v)\,dv\right)^{1/p}\right)\hspace{20mm}\\
\leq C \left(\left(\int_{\frac{t_0+R^*}{2}}^\infty v^{ap}|\partial_u\phi|^p(u_{R^*}(v),v)\,dv\right)^{1/p}+\left(\int_{\frac{t_0+R^*}{2}}^\infty v^{ap}|\partial_v\phi|^p(u_{R^*}(v),v)\,dv\right)^{1/p}\right),\hspace{44mm}
\end{eqnarray*}
where $C=C(a,p)$.
Thus, if $v_0=\frac{t_0+R^*}{2}\geq R^*\geq1$, then 
\begin{eqnarray}
\left(\int_{t_0}^\infty t^{ap}|\partial_t\phi|^p(t,R^*)\,dt\right)^{1/p} \hspace{90mm}\notag\\
\leq C \left(\left(\int_{v_0}^\infty v^{ap}|\partial_u\phi|^p(u_{R^*}(v),v)\,dv\right)^{1/p}+\left(\int_{v_0}^\infty v^{ap}|\partial_v\phi|^p(u_{R^*}(v),v)\,dv\right)^{1/p}\right)\notag\\
\leq C\left(\left(\int_{\gamma_R\cap\{v\geq1\}} v^{ap}|\partial_u\phi|^p\right)^{1/p}+\left(\int_{\gamma_R\cap\{v\geq1\}} v^{ap}|\partial_v\phi|^p\right)^{1/p}\right),\hspace{21mm}\label{1stcomp}
\end{eqnarray}
where $C=C(a,p)$.
Thus for $v_0=\frac{1}{2}(t_0+R^*)\geq R^*$, substituting \eqref{2ndcomp} and \eqref{1stcomp} into \eqref{ecs} gives
\begin{eqnarray*}
|\phi|(u_{R^*}(v_0),v_0)&=&|\phi|(t_0,R^*)\\
&\leq&C \left(\left(\int_{\gamma_R\cap\{v\geq1\}} v^{ap}|\partial_u\phi|^p\right)^{1/p}+\left(\int_{\gamma_R\cap\{v\geq1\}} v^{ap}|\partial_v\phi|^p\right)^{1/p}\right)\cdot t_0^{1-a-1/p}\\
&\leq&C \left(\left(\int_{\gamma_R\cap\{v\geq1\}} v^{ap}|\partial_u\phi|^p\right)^{1/p}+\left(\int_{\gamma_R\cap\{v\geq1\}} v^{ap}|\partial_v\phi|^p\right)^{1/p}\right)\cdot v_0^{1-a-1/p},
\end{eqnarray*}
with $C=C(a,p)$, where we have used the fact that $t_0=2v_0-R^*=v_0+(v_0-R^*)\geq v_0\geq 0$ and $1-a-1/p<0$, and hence $t_0^{1-a-1/p}\leq v_0^{1-a-1/p}$.
\end{proof}

\subsubsection{Proof of Proposition \ref{alt2}}
From the above Lemma, we can conclude Proposition \ref{alt2}, which is restated below for convenience.
\begin{corollary*}Assume $p>1$.
Suppose $\phi$ is a solution of the wave equation \eqref{waveequation} with smooth, spherically symmetric data on $S$ satisfying \eqref{ic1} and \eqref{ic2} and let $R>r_+$ be such that $R^*\geq1$. Then there exist $C=C(R,\phi,p)>0$ and $\tilde{C}=\tilde{C}(R,p)>0$ such that
\begin{eqnarray*}
\left(\int_1^\infty v^{3p-1}|\phi|^p(u_{R^*}(v),v)\,dv\right)^{1/p}\leq C+\tilde{C}I_{p,R}(\phi),
\end{eqnarray*}
where $I_{p,R}(\phi)$ is as in \eqref{vlih3}.
\end{corollary*}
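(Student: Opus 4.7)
The plan is to split the integral at $v=R^*$, handle the compact piece $v\in[1,R^*]$ by brute force using smoothness of $\phi$, and handle the tail $v\in[R^*,\infty)$ by applying Lemma \ref{Cauchy} with the specific choice $a=4$, chosen so that the $ap=4p$ weight matches $I_{p,R}(\phi)$ and the resulting pointwise bound on $|\phi|^p$ combines with the $v^{3p-1}$ weight to produce an integrable tail.

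More concretely, first I would write
\[
\int_1^\infty v^{3p-1}|\phi|^p(u_{R^*}(v),v)\,dv
= \int_1^{R^*} v^{3p-1}|\phi|^p(u_{R^*}(v),v)\,dv + \int_{R^*}^\infty v^{3p-1}|\phi|^p(u_{R^*}(v),v)\,dv.
\]
On the compact $v$-interval $[1,R^*]$ the set $\gamma_R\cap\{1\le v\le R^*\}$ is compact and $\phi$ is smooth there, so $|\phi|$ is bounded by some constant depending only on $R$ and $\phi$; hence the first integral is bounded by $C(R,\phi,p)$. (Alternatively one can invoke the uniform bound on $\phi$ along $\gamma_R$ coming from \eqref{ic1}, \eqref{ic2} and the results of \cite{DR1}, but compactness plus smoothness suffices.)

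For the tail, I would apply Lemma \ref{Cauchy} with $a=4$, which satisfies $a>1-1/p$ for every $p>1$. This yields, for $v\ge R^*$,
\[
|\phi|(u_{R^*}(v),v) \le C(p)\, I_{p,R}(\phi)\cdot v^{1-4-1/p} = C(p)\, I_{p,R}(\phi)\cdot v^{-3-1/p},
\]
so that $|\phi|^p(u_{R^*}(v),v)\le C(p)\,I_{p,R}(\phi)^p\, v^{-3p-1}$. Multiplying by the $v^{3p-1}$ weight gives an integrand bounded by $C(p)\,I_{p,R}(\phi)^p\,v^{-2}$, whence
\[
\int_{R^*}^\infty v^{3p-1}|\phi|^p(u_{R^*}(v),v)\,dv \le C(p)\, I_{p,R}(\phi)^p \int_{R^*}^\infty v^{-2}\,dv \le C(R,p)\, I_{p,R}(\phi)^p.
\]

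Combining the two pieces and taking $p$-th roots (using the elementary inequality $(a+b)^{1/p}\le a^{1/p}+b^{1/p}$ for $a,b\ge 0$ and $p\ge 1$) yields
\[
\left(\int_1^\infty v^{3p-1}|\phi|^p(u_{R^*}(v),v)\,dv\right)^{1/p} \le C(R,\phi,p) + \tilde{C}(R,p)\, I_{p,R}(\phi),
\]
as required. There is no real obstacle here; the only point to verify is that the exponent $a=4$ is simultaneously admissible in Lemma \ref{Cauchy} (it is, since $4>1-1/p$ for all $p>1$) and produces an integrable tail after combining with the $v^{3p-1}$ weight (it does, giving exactly $v^{-2}$ independent of $p$).
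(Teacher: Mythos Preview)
Your proof is correct and follows essentially the same route as the paper: split the integral at $v=R^*$, bound the compact piece by smoothness of $\phi$, and control the tail by applying Lemma \ref{Cauchy} with $a=4$ so that the resulting $v^{-(3+1/p)}$ decay combines with the $v^{3p-1}$ weight to produce the integrable $v^{-2}$ tail. The paper's proof is identical in strategy and in the choice $a=4$.
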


\begin{proof}
By Lemma \ref{Cauchy} with $a=4$, we have
\[|\phi|(u_{R^*}(v),v)\leq C_p\cdot I_{p,R}(\phi) v^{-(3+1/p)} \hspace{5mm}\mbox{for }v\geq R^*,\]
so
\begin{eqnarray*}
\left(\int_1^\infty v^{3p-1}|\phi|^p(u_{R^*}(v),v)\,dv\right)^{1/p}\hspace{130mm}\\
\leq 2\left(\int_1^{R^*} v^{3p-1}|\phi|^p(u_{R^*}(v),v)\,dv\right)^{1/p} +\left(\int_{R^*}^\infty v^{3p-1}|\phi|^p(u_{R^*}(v),v)\,dv\right)^{1/p}\hspace{54mm}\\
\leq C(R,\phi,p)+C_p\left[\int_{R^*}^\infty v^{3p-1}\left(I_{p,R}(\phi)\cdot v^{-(3+1/p)}\right)^p\,dv\right]^{1/p}\hspace{78mm}\\
\leq C(R,\phi,p)+C_p\cdot I_{p,R}(\phi)\left(\int_{R^*}^\infty v^{-2}\,dv\right)^{1/p}\hspace{101mm}\\
\leq C(R,\phi,p)+\tilde{C}(R,p)\cdot I_{p,R}(\phi),\hspace{120mm}
\end{eqnarray*}
as desired.
\end{proof}

\subsubsection{Proof of Proposition \ref{alt1}}
We now close the proof of Theorem \ref{reduction} by giving the proof of Proposition \ref{alt1}, which is stated below.
\begin{corollary*}Assume $p>1$.
Suppose $R>r_+$ is such that $R^*\geq1$. Then, given a solution $\phi$ of the wave equation \eqref{waveequation} with smooth, spherically symmetric data on $S$ satisfying \eqref{ic1} and \eqref{ic2}, there exists $C=C(R,p)>0$ such that
\[\sup_{\gamma_R\cap\{v\geq R^*\}}v^3|\phi|\leq C\cdot I_{p,R}(\phi).\]
\end{corollary*}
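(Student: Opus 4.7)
The plan is to deduce this essentially as an immediate corollary of Lemma \ref{Cauchy}, which we already have at our disposal. Let $R>r_+$ with $R^*\geq 1$, let $\phi$ be a smooth, spherically symmetric solution of \eqref{waveequation} on $S$ satisfying \eqref{ic1} and \eqref{ic2}, and let $v_0\geq R^*$ be arbitrary. First I would apply Lemma \ref{Cauchy} with the choice $a=4$. Since $p>1$, we have $4>1-1/p$, so the hypothesis of the lemma is satisfied, and the lemma yields
\[
|\phi|(u_{R^*}(v_0),v_0)\;\leq\; C(p)\left(\left(\int_{\gamma_R\cap\{v\geq1\}} v^{4p}|\partial_u\phi|^p\right)^{1/p}+\left(\int_{\gamma_R\cap\{v\geq1\}} v^{4p}|\partial_v\phi|^p\right)^{1/p}\right)v_0^{\,1-4-1/p}
\]
for some constant $C(p)>0$. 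In the notation \eqref{vlih3}, the bracketed expression is precisely $I_{p,R}(\phi)$, so we have obtained
\[
|\phi|(u_{R^*}(v_0),v_0)\;\leq\; C(p)\,I_{p,R}(\phi)\,v_0^{-3-1/p}.
\]

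Next I would multiply both sides by $v_0^3$ to obtain
\[
v_0^3|\phi|(u_{R^*}(v_0),v_0)\;\leq\; C(p)\,I_{p,R}(\phi)\,v_0^{-1/p}.
\]
Since $v_0\geq R^*\geq 1$ and $1/p>0$, we have $v_0^{-1/p}\leq 1$. Therefore
\[
v_0^3|\phi|(u_{R^*}(v_0),v_0)\;\leq\; C(p)\,I_{p,R}(\phi)
\]
for every $v_0\geq R^*$. Taking the supremum over $v_0\geq R^*$ on the left gives the desired bound
\[
\sup_{\gamma_R\cap\{v\geq R^*\}}v^3|\phi|\;\leq\; C(p)\,I_{p,R}(\phi),
\]
with $C=C(p)$ (which is in particular of the form $C(R,p)$). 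There is no significant obstacle here: all of the analytic content is already contained in Lemma \ref{Cauchy}, and the result is simply its restatement in the form of a pointwise sup-bound on $\gamma_R\cap\{v\geq R^*\}$ with the specific polynomial weight $v^3$, obtained by the one-line observation that the extra factor $v_0^{-1/p}$ produced by $a=4$ is bounded by $1$ throughout the region of interest.
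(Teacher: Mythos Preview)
Your proof is correct and follows essentially the same approach as the paper: apply Lemma \ref{Cauchy} with $a=4$, multiply by $v_0^3$, and bound the leftover factor $v_0^{-1/p}$. The only cosmetic difference is that the paper bounds $v_0^{-1/p}\leq (R^*)^{-1/p}$ using $v_0\geq R^*$, whereas you bound $v_0^{-1/p}\leq 1$ directly using $v_0\geq R^*\geq 1$; both are fine.
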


\begin{proof}
From Lemma \ref{Cauchy} with $a=4$, for $v_0\geq R^*$ we have
\begin{eqnarray*}
|\phi|(u_{R^*}(v_0),v_0)\leq C\cdot I_{p,R}(\phi)\cdot v_0^{-(3+1/p)}
\end{eqnarray*}
with $C=C(p)$,
and hence
\begin{eqnarray*}
v_0^3|\phi|(u_{R^*}(v_0),v_0)\leq C\cdot I_{p,R}(\phi)\cdot v_0^{-1/p}\leq C\cdot (R^*)^{-1/p}\cdot I_{p,R}(\phi),
\end{eqnarray*}
and the result follows by taking the supremum.
\end{proof}

\newpage
\appendix

\section{Interpolation Theory}\label{Interpolation Theory}

\subsection{Interpolation of Normed Spaces}\label{INS}

This section provides a review of some definitions, terminology and results from interpolation theory. This material presented here is covered in \cite{BL}.

Let $(A_0,\|\cdot\|_{A_0})$, $(A_1,\|\cdot\|_{A_1})$ be normed spaces. 

\begin{definition}[Compatible Couple]
We say that $\overline{A}=(A_0,A_1)$ is a \emph{compatible couple} of normed vector spaces if there is a Hausdorff topological vector space $\mathcal{U}$ such that $A_0$ and $A_1$ are subspaces of $\mathcal{U}$.
\end{definition}

\noindent If $(A_0,A_1)$ is a compatible couple, we define
\[A_0+A_1=\{a\in \mathcal{U}: a=a_0+a_1, a_i\in A_i\}.\]
\noindent Then  $(A_0\cap A_1, \|\cdot\|_{A_0\cap A_1})$ and $(A_0+A_1,\|\cdot\|_{A_0+A_1})$ are normed spaces, where
\[\|a\|_{A_0\cap A_1}=\max(\|a\|_{A_0},\|a\|_{A_1})\]
and
\[\|a\|_{A_0+A_1}=\inf\{\|a_0\|_{A_0}+\|a_1\|_{A_1}: a=a_0+a_1, a_i\in A_i\}.\]

\begin{definition}[Intermediate Space]
Let $\overline{A}=(A_0,A_1)$ be a a compatible couple of normed spaces. A normed space $A$ is said to be an \emph{intermediate space with respect to $\overline{A}$} if 
\[A_0\cap A_1\subset A\subset A_0+A_1\]
and these inclusions are continuous.
\end{definition}

Given a pair of compatible couples $\overline{A}=(A_0,A_1)$ and $\overline{B}=(B_0,B_1)$ and a linear operator $T:A_0+A_1\to B_0+B_1$, we write $T:\overline{A}\to\overline{B}$ if
\[T\vert_{A_0}:A_0\to B_0, \hspace{3mm} T\vert_{A_1}:A_1\to B_1\]
and these maps are bounded.

\begin{definition}[Interpolation Spaces]
Given compatible couples $\overline{A}$ and $\overline{B}$, we say intermediate spaces $A$ and $B$ with respect to $\overline{A}$ and $\overline{B}$ are \emph{interpolation spaces with respect to $\overline{A}$ and $\overline{B}$} if 
\[T:\overline{A}\to \overline{B} \hspace{5mm}\mbox{ implies } \hspace{5mm}T\vert_A:A\to B\]
and moreover this map is bounded. Furthermore, we say the interpolation spaces $A$ and $B$ are exact of exponent $\theta$ if 
\[\|T\vert_A\|\leq \|T\vert_{A_0}\|^{1-\theta}\cdot\|T\vert_{A_1}\|^\theta.\]
\end{definition}

\subsection{The K-Method of Real Interpolation}

Let $\overline{A}$ be a compatible couple of normed spaces. For $t>0$ and $a\in A_0+A_1$ we define
\[K(t,a):=K(t,a;\overline{A})=\inf\{\|a_0\|_{A_0}+t\|a_1\|_{A_1}:a=a_0+a_1, a_i\in A_i\}.\]
For each fixed $t>0$, $\|\cdot\|_{A_1}$ and $t\|\cdot\|_{A_1}$ are equivalent norms on $A_1$, and hence $\|\cdot\|_{A_0+A_1}$ and $K(t,\cdot)$ are equivalent norms on $A_0+A_1$. In fact,  
\begin{equation}\label{equiv}
K(t,a)\leq \max(1,t/s)K(s,a)\hspace{5mm}\forall\,s,t>0.
\end{equation}

Let $1\leq q \leq+\infty$ and $0<\theta<1$. We define a functional $\Phi_{\theta,q}$ by
\begin{equation}\label{Phi}
\Phi_{\theta,q}(\varphi):=\left\{\begin{array}{lc}\left ( \int_0^\infty(t^{-\theta}\varphi(t))^q\,\frac{dt}{t}\right)^{1/q},&\mbox{if } q<\infty\\ \sup_{t>0}|t^{-\theta}\varphi(t)|,&\mbox{if } q=\infty\end{array}\right.
\end{equation}
for non-negative functions $\varphi:(0,\infty)\to [0,\infty)$.
\begin{definition}
We define
\[\overline{A}_{\theta,q}:=\{a\in A_0+A_1:\Phi_{\theta,q}(K(\cdot,a))<+\infty\}\]
and, for $a\in A_0+A_1$, we set $\|a\|_{\overline{A}_{\theta,q}}=\Phi_{\theta,q}(K(\cdot,a))$.
\end{definition}

\noindent
Since $K(t,a)$ is a norm on $A_0+A_1$ for each $t>0$ and since $\Phi_{\theta,q}$ satisfies all the properties of a norm, it follows that $\|\cdot\|_{\overline{A}_{\theta,q}}$ is a norm on $\overline{A}_{\theta,q}$.

\begin{theorem}\label{kmethod}
Let $\overline{A}$ and $\overline{B}$ be two compatible couples of normed spaces and let $0<\theta<1$ and $1\leq q \leq+\infty$. Then $\overline{A}_{\theta,q}$ and $\overline{B}_{\theta,q}$ are interpolation spaces with respect to $\overline{A}$ and $\overline{B}$, namely
\[T:\overline{A}\to\overline{B}\hspace{5mm} \Longrightarrow\hspace{5mm} T:\overline{A}_{\theta,q}\to\overline{B}_{\theta,q}\]
and the map $T\vert_{\overline{A}_{\theta,q}}$ is bounded. Furthermore, they are exact interpolation spaces of exponent $\theta$, namely
\[\|T\vert_{\overline{A}_{\theta,q}}\|\leq \|T\vert_{A_0}\|^{1-\theta}\cdot\|T\vert_{A_1}\|^\theta.\]
\end{theorem}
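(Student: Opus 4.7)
The plan is to reduce the theorem to a single scaling identity for the $K$-functional under bounded linear maps, and then exploit the homogeneity of the weight $t^{-\theta}$ in $\Phi_{\theta,q}$ via a change of variable. Before doing so, I would quickly dispatch the intermediate-space requirement, as this is a prerequisite for even speaking of $T\vert_{\overline{A}_{\theta,q}}$. Namely, for $a\in A_0\cap A_1$ the bound $K(t,a;\overline{A})\leq \min(\|a\|_{A_0},t\|a\|_{A_1})$ together with a direct estimate of $\int_0^\infty(t^{-\theta}\min(1,t))^q\,dt/t$ (which converges for $0<\theta<1$ and $1\leq q\leq\infty$) shows that $A_0\cap A_1\hookrightarrow \overline{A}_{\theta,q}$ continuously; conversely, \eqref{equiv} gives $\|a\|_{A_0+A_1}=K(1,a;\overline{A})\leq C_{\theta,q}\,\Phi_{\theta,q}(K(\cdot,a;\overline{A}))=C_{\theta,q}\|a\|_{\overline{A}_{\theta,q}}$, so $\overline{A}_{\theta,q}\hookrightarrow A_0+A_1$. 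The same argument applied to $\overline{B}$ identifies $\overline{B}_{\theta,q}$ as an intermediate space.

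Next I would establish the central estimate. Let $T:\overline{A}\to\overline{B}$ and set $M_j:=\|T\vert_{A_j}\|$ for $j=0,1$. Given $a\in A_0+A_1$ and any decomposition $a=a_0+a_1$ with $a_j\in A_j$, we have $Ta=Ta_0+Ta_1$ with $Ta_j\in B_j$, and
\[
\|Ta_0\|_{B_0}+s\|Ta_1\|_{B_1}\leq M_0\|a_0\|_{A_0}+sM_1\|a_1\|_{A_1}=M_0\Bigl(\|a_0\|_{A_0}+\tfrac{sM_1}{M_0}\|a_1\|_{A_1}\Bigr).
\]
Taking the infimum over admissible decompositions yields the key scaling bound
\[
K(s,Ta;\overline{B})\leq M_0\,K\!\left(\tfrac{sM_1}{M_0},a;\overline{A}\right), \qquad s>0.
\]
(If $M_0=0$ or $M_1=0$ the inequality is still valid, interpreting the ratio by a trivial limiting argument, and the final exact bound degenerates correctly.)

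The rest is a change of variable. Assuming $1\leq q<\infty$, substitute $t=sM_1/M_0$ (so $dt/t=ds/s$ and $t^{-\theta}=(M_1/M_0)^{-\theta}s^{-\theta}$) in the definition \eqref{Phi}:
\begin{align*}
\|Ta\|_{\overline{B}_{\theta,q}}^q
&=\int_0^\infty\bigl(s^{-\theta}K(s,Ta;\overline{B})\bigr)^q\,\frac{ds}{s}\\
&\leq M_0^{\,q}\int_0^\infty\!\Bigl(s^{-\theta}K\!\bigl(\tfrac{sM_1}{M_0},a;\overline{A}\bigr)\Bigr)^q\,\frac{ds}{s}\\
&= M_0^{\,q}\Bigl(\tfrac{M_1}{M_0}\Bigr)^{\theta q}\int_0^\infty\bigl(t^{-\theta}K(t,a;\overline{A})\bigr)^q\,\frac{dt}{t}
= M_0^{(1-\theta)q}M_1^{\theta q}\|a\|_{\overline{A}_{\theta,q}}^{\,q}.
\end{align*}
Taking $q$-th roots gives the sought bound $\|T\vert_{\overline{A}_{\theta,q}}\|\leq M_0^{1-\theta}M_1^\theta$; the case $q=\infty$ is handled identically by taking a supremum instead of an integral, with the same substitution. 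This simultaneously shows that $T$ maps $\overline{A}_{\theta,q}$ continuously into $\overline{B}_{\theta,q}$, so both the interpolation property and the exact exponent claim follow at once.

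The only real subtlety — and hence the main obstacle I would pay attention to — is the degenerate cases $M_0=0$ or $M_1=0$: one must check that the scaling inequality and the change of variable remain meaningful, which is easily done by replacing the zero norm by $\varepsilon$ and letting $\varepsilon\to 0^+$. Beyond that, the routine verification is the integrability of $(t^{-\theta}\min(1,t))^q\,dt/t$ needed for the intermediate-space step, which requires precisely the assumption $0<\theta<1$.
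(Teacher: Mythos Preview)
Your proof is correct and follows essentially the same approach as the paper: both establish the intermediate-space property via the bounds $K(t,a)\leq\min(1,t)\|a\|_{A_0\cap A_1}$ and $K(1,a)\leq C_{\theta,q}\|a\|_{\overline{A}_{\theta,q}}$ (the latter from \eqref{equiv}), then prove the key scaling inequality $K(s,Ta;\overline{B})\leq M_0\,K(sM_1/M_0,a;\overline{A})$ and finish by exploiting the homogeneity of $\Phi_{\theta,q}$ under dilation. The only cosmetic difference is that the paper packages the dilation step as an abstract identity $\Phi_{\theta,q}(\varphi(t/s))=s^{-\theta}\Phi_{\theta,q}(\varphi)$, whereas you perform the change of variable explicitly; you also address the degenerate cases $M_0=0$ or $M_1=0$, which the paper omits.
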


\begin{proof}
First we prove
\begin{equation}\label{int}
K(s,a;\overline{A})\leq\gamma_{\theta,q}s^\theta \|a\|_{\overline{A}_{\theta,q}}
\end{equation}
for some constant $\gamma_{\theta,q}$ depending only on $\theta$ and $q$.
Indeed, by \eqref{equiv} we have
\[\min(1,t/s)K(s,a)\leq K(t,a),\]
and applying $\Phi_{\theta,q}$ to this inequality (where $t$ is the variable of integration) yields
\begin{equation}\label{inequ}\Phi_{\theta,q}(\min(1,t/s)K(s,a))= \Phi_{\theta,q}(\min(1,t/s))K(s,a) \leq \|a\|_{\overline{A}_{\theta,q}}.
\end{equation}
But for any non-negative function $\varphi$, if $q<\infty$
\begin{eqnarray}
\Phi_{\theta,q}(\varphi(t/s))&=&\left( \int_0^\infty (t^{-\theta}\varphi(t/s))^q\,\frac{dt}{t}\right)^{1/q}\notag\\
&=&s^{-\theta}\left(\int_0^\infty((t/s)^{-\theta}\varphi(t/s))^q\,\frac{d(t/s)}{(t/s)}\right)^{1/q}\notag\\
&=&s^{-\theta}\Phi_{\theta,q}(\varphi(t)),\label{st}
\end{eqnarray}
 and similarly if $q=\infty$.
Hence, by \eqref{inequ}
\[s^{-\theta}\Phi_{\theta,q}(\min(1,t))K(s,a)\leq \|a\|_{\overline{A}_{\theta,q}},\]
which is \eqref{int} with $\gamma_{\theta,q}=\Phi_{\theta,q}(min(1,t))^{-1}<\infty$.

Now that we have established \eqref{int}, the theorem follows in a straightforward manner. We first show that $\overline{A}_{\theta,q}$ and $\overline{B}_{\theta,q}$ are intermediate spaces with respect to $\overline{A}$ and $\overline{B}$ respectively. Indeed, taking $a\in \overline{A}_{\theta,q}$ and $s=1$ in \eqref{int} yields
\[\|a\|_{A_0+A_1}=K(1,a;\overline{A})\leq\gamma_{\theta,q}\|a\|_{\overline{A}_{\theta,q}},\]
and in particular the inclusion $\overline{A}_{\theta,q}\subset A_0+A_1$ is continuous. Furthermore, the inclusion $A_0\cap A_1\subset \overline{A}_{\theta,q}$ is continuous, since for $a\in A_0\cap A_1$ by definition
\[K(t,a)\leq \min(1,t)\|a\|_{A_0\cap A_1},\]
so applying the functional $\Phi_{\theta,q}$ gives
\[\|a\|_{\overline{A}_{\theta,q}}=\Phi_{\theta,q}(K(t,a))\leq \Phi_{\theta,q}(\min(1,t))\|a\|_{A_0+A_1}.\]
In exactly the same way the inclusions $B_0\cap B_1\subset \overline{B}_{\theta,q}\subset B_0+B_1$ are continuous. Thus $\overline{A}_{\theta,q}$ and $\overline{B}_{\theta,q}$ are intermediate spaces with respect to $\overline{A}$ and $\overline{B}$ respectively.

It remains now to show that $\overline{A}_{\theta,q}$ and $\overline{B}_{\theta,q}$ are exact interpolation spaces of exponent $\theta$ with respect to $\overline{A}$ and $\overline{B}$. Thus assume $T:\overline{A}\to \overline{B}$. Then
\[T\vert_{A_j}:A_j\to B_j \hspace{3mm}\mbox{ for }j=0,1\]
are bounded and for brevity we set
\[M_j:=\|T\vert_{A_j}\|=\|T\|_{A_j,B_j}, \hspace{3mm}j=0,1.\]
Then, for $a\in\overline{A}_{\theta,q}$, we have
\begin{eqnarray*}
K(t,Ta;\overline{B})&:=&\inf\{\|b_0\|_{B_0}+t\|b_1\|_{B_1}:Ta=b_0+b_1, b_j\in B_j\}\\
&\leq&\inf\{\|Ta_0\|_{B_0}+t\|Ta_1\|_{B_1}:a=a_0+a_1, a_j\in A_j\}\\
&\leq&\inf\{M_0\|a_0\|_{A_0}+tM_1\|a_1\|_{A_1}:a=a_0+a_1, a_j\in A_j\}\\
&=&M_0 \inf\{\|a_0\|_{A_0}+\frac{tM_1}{M_0}\|a_1\|_{A_1}:a=a_0+a_1, a_j\in A_j\}\\
&=&M_0 K\left(\frac{tM_1}{M_0},a;\overline{A}\right).
\end{eqnarray*}
Hence, applying $\Phi_{\theta,q}$ and using \eqref{st} with $s=M_0/M_1$, we see
\begin{equation}\label{exact}\|Ta\|_{\overline{B}_{\theta,q}}\leq M_0^{1-\theta}M_1^\theta\|a\|_{\overline{A}_{\theta,q}}.
\end{equation}
So $a\in \overline{A}_{\theta,q}\Longrightarrow \|a\|_{\overline{A}_{\theta,q}}<+\infty\Longrightarrow \|Ta\|_{\overline{B}_{\theta,q}}< +\infty\Longrightarrow Ta\in \overline{B}_{\theta,q}$, so $T:\overline{A}_{\theta,q}\to\overline{B}_{\theta,q}$, and moreover by \eqref{exact} this map is bounded. Thus $\overline{A}_{\theta,q}$ and $\overline{B}_{\theta,q}$ are interpolation spaces with respect to $\overline{A}$ and $\overline{B}$, and by \eqref{exact} they are exact of exponent $\theta$.

\end{proof}

\subsection{The Power Theorem}

So far, all of our interpolation results have been formulated for normed spaces, but in fact strict triangle inequalities are not necessary. The theory works just as well if we relax the assumptions on the various norms so that they satisfy quasi-triangle inequalities
\[\|a+b\|\leq C( \|a\|+\|b\|) \hspace{3mm}\mbox{ for some } C>0,\]
(where $C$ is independent of $a,b$) instead of strict triangle inequalities, and we refer to them as \emph{quasinorms}. Indeed, it is shown in \cite{BL} that, given a compatible couple $\overline{A}$ of quasinormed spaces, $\overline{A}_{\theta,q}$ is a quasinormed space. Moreover, by the same proof as for Theorem \ref{kmethod}, if $\overline{A}$ and $\overline{B}$ are compatible couples of quasinormed spaces, then $\overline{A}_{\theta,q}$ and $\overline{B}_{\theta,q}$ are exact interpolation spaces of exponent $\theta$ with respect to $\overline{A}$ and $\overline{B}$.

Given a quasinorm $\|\cdot\|$ on a vector space $A$, we note that for any $\rho>0$, $\|\cdot\|^\rho$ is also a quasinorm. We denote the quasinormed space $(A,\|\cdot\|^\rho)$ by $A^\rho$.

One of our main tools will be the Power Theorem, which, under certain assumptions on the exponents and indices, relates  interpolation spaces of powers of a compatible couple  to powers of interpolation spaces of that compatible couple.

\begin{theorem}[Power Theorem]\label{power}
Let $\rho_0,\rho_1>0$ be given. For $0<\eta<1$ and $0<r\leq+\infty$, set
\begin{eqnarray*}
\rho&=&(1-\eta)\rho_0+\eta \rho_1,\\
\theta&=&\frac{\eta\rho_1}{\rho},\\
q&=&\rho r.
\end{eqnarray*}
Then given a compatible couple $\overline{A}=(A_0,A_1)$ of normed spaces, we have
\[(A_0^{\rho_0},A_1^{\rho_1})_{\eta,r}=((A_0,A_1)_{\theta,q})^\rho=(\overline{A}_{\theta,q})^\rho\]
with equivalent quasinorms.
\end{theorem}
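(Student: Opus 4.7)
The plan is to reduce the claimed identity of interpolation spaces to a pointwise comparison between the two relevant $K$-functionals, and then evaluate the resulting expression for $\|\cdot\|_{(A_0^{\rho_0}, A_1^{\rho_1})_{\eta, r}}$ via a change of variables that converts it into the expression for $\|\cdot\|_{(A_0, A_1)_{\theta, q}}^\rho$. For $a \in A_0 + A_1$, I will write $K(s) := K(s, a; A_0, A_1)$ and $\widetilde{K}(t) := K(t, a; A_0^{\rho_0}, A_1^{\rho_1})$ throughout.

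The first and main step is to establish the pointwise equivalence
\[
\widetilde{K}(t) \sim \inf_{s > 0} \bigl( K(s)^{\rho_0} + t\, s^{-\rho_1} K(s)^{\rho_1}\bigr),
\]
with implicit constants depending only on $\rho_0$ and $\rho_1$. For the upper bound, I fix $s > 0$, pick a near-optimal decomposition $a = a_0 + a_1$ with $\|a_0\|_{A_0} + s\|a_1\|_{A_1} \leq 2 K(s)$, note that individually $\|a_0\|_{A_0} \leq 2K(s)$ and $\|a_1\|_{A_1} \leq 2 K(s)/s$, and insert into the defining infimum for $\widetilde{K}(t)$. For the lower bound, I take a near-optimal decomposition $a = a_0 + a_1$ for $\widetilde{K}(t)$, set $s := \|a_0\|_{A_0}/\|a_1\|_{A_1}$ (with the degenerate cases $\|a_0\|_{A_0} = 0$ or $\|a_1\|_{A_1} = 0$ handled by a limiting argument), and use $K(s) \leq \|a_0\|_{A_0} + s\|a_1\|_{A_1} = 2\|a_0\|_{A_0}$ together with $s^{-1}K(s) \leq 2\|a_1\|_{A_1}$ to conclude.

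The second step is the change of variables. Since $s \mapsto K(s)$ is concave nondecreasing with $K(0) = 0$, the quantity $K(s)/s$ is nonincreasing in $s$, so the two summands inside the infimum above are respectively nondecreasing and nonincreasing in $s$; the infimum is thus characterized, up to a factor of $2$, by the balance condition $t \sim s^{\rho_1} K(s)^{\rho_0 - \rho_1}$, at which both summands are $\sim K(s)^{\rho_0}$. Taking $\log t = \rho_1 \log s + (\rho_0 - \rho_1) \log K(s)$ and using the identities $\rho = (1-\eta)\rho_0 + \eta \rho_1$, $\theta \rho = \eta \rho_1$, and $q = \rho r$, the integrand $(t^{-\eta} \widetilde{K}(t))^r$ transforms into $(s^{-\theta} K(s))^{\rho r} = (s^{-\theta} K(s))^q$, while the Jacobian factor satisfies
\[
\frac{d \log t}{d \log s} = \rho_1 + (\rho_0 - \rho_1) \frac{d \log K(s)}{d \log s} \in [\min(\rho_0, \rho_1),\, \max(\rho_0, \rho_1)],
\]
because $d \log K/d \log s \in [0,1]$ almost everywhere. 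Hence $dt/t$ and $ds/s$ differ by a bounded multiplicative factor depending only on $\rho_0, \rho_1$, and one obtains
\[
\|a\|_{(A_0^{\rho_0}, A_1^{\rho_1})_{\eta, r}}^r \sim \int_0^\infty (s^{-\theta} K(s))^q \, \frac{ds}{s} = \|a\|_{(A_0, A_1)_{\theta, q}}^q = \|a\|_{(A_0, A_1)_{\theta, q}}^{\rho r},
\]
which on taking $r$-th roots is precisely the claimed quasinorm equivalence.

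The main obstacle will be rigorously justifying the change of variables in the last step, since $K(\cdot, a)$ is only guaranteed to be concave and may fail to be strictly increasing or differentiable, so $s \mapsto s^{\rho_1} K(s)^{\rho_0 - \rho_1}$ need not be a diffeomorphism. I expect to handle this by passing to Peetre's equivalent discrete characterization
\[
\|a\|_{\overline{A}_{\theta, q}} \sim \Bigl( \sum_{k \in \mathbb{Z}} (2^{-k\theta} K(2^k, a))^q \Bigr)^{1/q},
\]
which replaces the continuous change of variables by a comparison of dyadic sums. In the discrete setting, one selects $s_k$ realizing the infimum at $t = 2^k$, observes that the logarithms of the $s_k$ have bounded gaps (again by the concavity of $K$, which implies the bounded-Jacobian estimate above), and concludes that the sum $\sum_k (2^{-k\eta} \widetilde{K}(2^k))^r$ is comparable to the Peetre sum for $\|a\|_{\overline{A}_{\theta, q}}^q$, bypassing the need for a smooth substitution.
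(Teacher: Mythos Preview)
Your approach is correct and reaches the same conclusion, but the route differs from the paper's in two substantive ways.

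First, the paper replaces $K$ by the equivalent functional $K_\infty(t,a)=\inf_{a=a_0+a_1}\max(\|a_0\|_{A_0},t\|a_1\|_{A_1})$ and proves the \emph{exact} identity
\[
K_\infty(s,a;A_0^{\rho_0},A_1^{\rho_1})=K_\infty(t,a;A_0,A_1)^{\rho_0},\qquad s=t^{\rho_1}K_\infty(t)^{\rho_0-\rho_1},
\]
rather than your two-sided equivalence obtained via a balancing argument. This buys cleanliness: no constants need to be tracked through the pointwise step, and the change-of-variable relation is exact rather than approximate. Second, instead of bounding the Jacobian $d\log t/d\log s$ (or passing to Peetre's discrete sums) to justify the substitution, the paper writes both integrals as Stieltjes integrals and passes between them via integration by parts, using the exact identity above to match integrands and showing the boundary terms vanish whenever either quasinorm is finite. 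This sidesteps entirely the regularity issues you flag.

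Your alternative is perfectly viable: the balancing lemma combined with the discrete characterization is a standard and robust way to handle non-smooth $K$-functionals, and your Jacobian bound $d\log K/d\log s\in[0,1]$ is correct (it follows from concavity and nonnegativity of $K$; the claim $K(0)=0$ is not needed and not always true). The paper's $K_\infty$/integration-by-parts argument is arguably slicker, while your method is more elementary and transparent about where the exponent relations $\rho=(1-\eta)\rho_0+\eta\rho_1$, $\theta\rho=\eta\rho_1$, $q=\rho r$ are actually used. You should also note that the case $r=\infty$ is not covered by your discrete-sum plan as stated; it requires a separate (and easier) sup computation, which the paper does handle directly.
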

\noindent The proof below is sketched in \cite{BL}.

\begin{proof}
Given any compatible couple $\overline{A}=(A_0,A_1)$ of normed spaces, consider the functional
\[K_\infty(t,a)=K_\infty(t,a;\overline{A})=\inf_{\substack{a=a_0+a_1\\a_i\in A_i}}\max(\|a_0\|_{A_0},t\|a_1\|_{A_1}).\]
Since 
\begin{equation}\label{last5}K_\infty(t,a)\leq K(t,a)\leq 2K_\infty(t,a),
\end{equation}
the norm $\|a\|_{\theta,q}$ on $\overline{A}_{\theta,q}$ is equivalent to $\Phi_{\theta,q}(K_\infty(t,a))$, so it is sufficient to prove the Power Theorem using the functional $K_\infty$ instead of $K$.

First, we will show 
\begin{equation}\label{first}
K_\infty(s,a;A_0^{\rho_0},A_1^{\rho_1})=K_\infty(t,a;A_0,A_1)^{\rho_0},
\end{equation}
where
\begin{equation}\label{second}
s=t^{\rho_1}K_\infty(t,a;A_0,A_1)^{\rho_0-\rho_1}.
\end{equation}
For ease of notation, we shall write 
\[K_\infty(t)=K_\infty(t,a;A_0,A_1), \hspace{7mm}K_\infty(s)=K_\infty(s,a;A_0^{\rho_0},A_1^{\rho_1}).\]
Given $a\in A_0+A_1$ and $\varepsilon >0$, we may choose $a_0$ and $a_1$ such that
\[K_\infty(t)\leq \max(\|a_0\|_{A_0},t\|a_1\|_{A_1})\leq (1+\varepsilon)K_\infty(t).\]
Thus, since at least one of 
\[\frac{\|a_0\|_{A_0}}{K_\infty(t)}, \hspace{5mm} \frac{t\|a_1\|_{A_1}}{K_\infty(t)}\]
is larger than $1$, it follows that
\[1\leq \max\left(\left(\frac{\|a_0\|_{A_0}}{K_\infty(t)}\right)^{\rho_0},\left(\frac{t\|a_1\|_{A_1}}{K_\infty(t)}\right)^{\rho_1}\right)\leq 1+\varepsilon',\]
where $\varepsilon'\to0$ as $\varepsilon\to0$. Hence,
\begin{eqnarray*}
1&=&\inf_{\substack{a=a_0+a_1\\a_i\in A_i}} \max\left(\left(\frac{\|a_0\|_{A_0}}{K_\infty(t)}\right)^{\rho_0},\left(\frac{t\|a_1\|_{A_1}}{K_\infty(t)}\right)^{\rho_1}\right)  \\
&=&\inf_{\substack{a=a_0+a_1\\a_i\in A_i^{\rho_i}}} \max\left(\left(\frac{\|a_0\|_{A_0}}{K_\infty(t)}\right)^{\rho_0},\left(\frac{t\|a_1\|_{A_1}}{K_\infty(t)}\right)^{\rho_1}\right).
\end{eqnarray*}
and multiplying by $K_\infty(t)^{\rho_0}$ gives \eqref{first}.

Now, if $q=\infty$, then $r=\infty$ and so, by \eqref{first} and \eqref{second}
\begin{eqnarray}
\|a\|_{(A_0^{\rho_0},A_1^{\rho_1})_{\eta,r}} &\sim& \sup_{s>0}s^{-\eta}K_\infty(s)\notag\\
&=&\sup_{t>0}(t^{\rho_1}K_\infty(t,a;A_0,A_1)^{\rho_0-\rho_1})^{-\eta}K_\infty(t,a;A_0,A_1)^{\rho_0})\notag\\
&=& \sup_{t>0}t^{-\eta\rho_1}K_\infty(t)^{\rho_0(1-\eta)+\rho_1\eta}\notag\\
&=& \sup_{t>0}(t^{-\theta}K_\infty(t))^{\rho}\notag\\
&\sim& \|a\|_{(A_0,A_1)_{\theta,q=\infty}}^\rho,\notag
\end{eqnarray}
as required.

On the other hand, if $q<\infty$, then $r<\infty$ and by \eqref{last5}
\begin{equation}\label{basic4}
\int_0^\infty (s^{-\eta}K_\infty(s))^r\,\frac{ds}{s} \leq \|a\|_{(A_0^{\rho_0},A_1^{\rho_1})_{\eta,r}}^r= \int_0^\infty (s^{-\eta}K(s))^r\,\frac{ds}{s}\leq 2^r \int_0^\infty (s^{-\eta}K_\infty(s))^r\,\frac{ds}{s}.
\end{equation}
Now,
\begin{eqnarray}
 \int_0^\infty (s^{-\eta}K_\infty(s))^r\,\frac{ds}{s}
&=&\int_0^\infty s^{-\eta r-1}K_\infty(s)^r\,ds\notag\\
&=&\frac{-1}{\eta r}\int_0^\infty K_\infty(s)^r\,d(s^{-\eta r})\notag\\
&=& \frac{-1}{\eta r}\left[\frac{K_\infty(s)^r}{s^{\eta r}}\right]_0^\infty +\frac{1}{\eta r}\int_0^\infty s^{-\eta r}\,d(K_\infty(s)^r).\hspace{3mm}\label{third}
\end{eqnarray}
But by \eqref{first} and \eqref{second} we have
\begin{eqnarray}
\int_0^\infty s^{-\eta r}\,d(K_\infty(s)^r)&=&\int_0^\infty (t^{\rho_1}K_\infty(t)^{\rho_0-\rho_1})^{-\eta r}\,d(K_\infty(t)^{\rho_0 r})\notag\\
&=&\int_0^\infty t^{-\rho_1\eta r}K_\infty(t)^{\rho_1\eta r-\rho_0\eta r}\,d(K_\infty(t)^{\rho_0 r})\notag\\
&=&\rho_0 r\int_0^\infty t^{-\rho_1\eta r}K_\infty(t)^{\rho_1\eta r+\rho_0r(1-\eta) -1}\,dK_\infty(t)\notag\\
&=&\rho_0 r\int_0^\infty t^{-\theta\rho r} K_\infty(t)^{\rho r-1}\,dK_\infty(t)\notag\\
&=&\rho_0 r\int_0^\infty t^{-\theta q}K_\infty(t)^{q-1}\,dK_\infty(t)\notag\\
&=&\frac{\rho_0 r}{q}\int_0^\infty t^{-\theta q}\,d(K_\infty(t)^q).\notag
\end{eqnarray}
So, integration by parts gives
\begin{eqnarray}
\int_0^\infty s^{-\eta r}\,d(K_\infty(s)^r)&=&\frac{\rho_0 r}{q}\left[\frac{K_\infty(t)^q}{t^{\theta q}}\right]_0^\infty -\frac{\rho_0 r}{q}\int_0^\infty K_\infty(t)^q \,d(t^{-\theta q})\notag\\
&=&\frac{\rho_0 r}{q} \left[\frac{K_\infty(t)^q}{t^{\theta q}}\right]_0^\infty +\frac{\rho_0 r\theta q}{q}\int_0^\infty t^{-\theta q-1}K_\infty(t)^q\,dt\notag\\
&=&\frac{\rho_0 r}{q}\left[\frac{K_\infty(t)^q}{t^{\theta q}}\right]_0^\infty +\rho_0 r\theta \int_0^\infty t^{-\theta q-1}K_\infty(t)^q\,dt.\notag
\end{eqnarray}
Combining the last equation with \eqref{third} yields
\begin{eqnarray}
 \int_0^\infty (s^{-\eta}K_\infty(s))^r\,\frac{ds}{s}&=&\frac{-1}{\eta r}\left[\frac{K_\infty(s)^r}{s^{\eta r}}\right]_0^\infty +\frac{\rho_0}{\eta q}\left[\frac{K_\infty(t)^q}{t^{\theta q}}\right]_0^\infty +\frac{\rho_0\theta}{\eta }\int_0^\infty t^{-\theta q-1}K_\infty(t)^q\,dt.\hspace{10mm}\label{forth}
\end{eqnarray}
But by \eqref{first} and \eqref{second}
\begin{equation}\label{last}
\frac{K_\infty(s)^r}{s^{\eta r}}=\left(\frac{K_\infty(t)^{\rho_0}}{(t^{\rho_1}K_\infty(t)^{\rho_0-\rho_1})^\eta}\right)^r=\frac{K_\infty(t)^{\rho r}}{t^{\rho_1\eta r}} =\frac{K_\infty(t)^q}{t^{\theta q}}.
\end{equation}

Suppose $a\in(A_0,A_1)_{\theta,q}^\rho$. Then by \eqref{last5}
\begin{equation}\label{last2}
\int_0^\infty (t^{-\theta}K_\infty(t))^q\,\frac{dt}{t}\leq \|a\|_{(A_0,A_1)_{\theta,q}}^q<+\infty,
\end{equation}
and it follows that 
\begin{equation}\label{star3}
\lim_{t\to0}(t^{-\theta}K_\infty(t))^q=0 \hspace{5mm}\mbox{ and } \hspace{5mm}\lim_{t\to\infty} (t^{-\theta}K_\infty(t))^q=0.
\end{equation}
Indeed, these limits must exist in $[0,\infty]$ because $K_\infty$ is an increasing function of $t$, and if either is positve then $\|a\|_{(A_0,A_1)_{\theta,q}}$ would diverge. For instance, if
\[L:= \lim_{t\to\infty}( t^{-\theta}K_\infty(t))^q>0,\]
then for $\varepsilon=L/2>0$, there exists $N>0$ sufficiently large such that
\[|(t^{-\theta}K_\infty(t))^q-L|< L/2 \hspace{3mm}\mbox{ for } t>N,\]
and in particular 
\[L/2<(t^{-\theta}K_\infty(t))^q \hspace{3mm}\mbox{ for }t>N.\]
Thus
\[\int_N^\infty (t^{-\theta}K_\infty(t))^q\,\frac{dt}{t}>\int_N^\infty \frac{L}{2t}\,dt=\infty,\]
a contradiction with \eqref{last2}. So combining \eqref{star3} and \eqref{last}, we see that if  $a\in(A_0,A_1)_{\theta,q}^\rho$, then the boundary terms in \eqref{forth} vanish and so by \eqref{forth} and \eqref{last5}
\[ \int_0^\infty (s^{-\eta}K_\infty(s))^r\,\frac{ds}{s}=C\int_0^\infty t^{-\theta q-1}K_\infty(t)^q\,dt\leq C\int_0^\infty (t^{-\theta }K(t))^q\,\frac{dt}{t}=C\|a\|_{(A_0,A_1)_{\theta,q}}^q.\]
So from \eqref{basic4}
\[\|a\|_{(A_0^{\rho_0},A_1^{\rho_1})_{\eta,r}}^r \leq 2^rC\|a\|_{(A_0,A_1)_{\theta,q}}^q\]
for some $C>0$.  Hence 
\begin{equation*}
\|a\|_{(A_0^{\rho_0},A_1^{\rho_1})_{\eta,r}}\leq C' \|a\|_{(A_0,A_1)_{\theta,q}}^{q/r}=C'\|a\|_{(A_0,A_1)_{\theta,q}}^\rho.
\end{equation*}
It follows that $a\in(A_0,A_1)_{\theta,q}^\rho \Longrightarrow a\in(A_0^{\rho_0},A_1^{\rho_1})_{\eta,r}$.

On the other hand, if $a\in (A_0^{\rho_0},A_1^{\rho_1})_{\eta,r}$, then
\[\int_0^\infty (s^{-\eta}K_\infty(s))^r\,\frac{ds}{s}\leq \|a\|_{(A_0^{\rho_0},A_1^{\rho_1})_{\eta,r}}^r<\infty.\]
Arguing exactly as before, we find that
\[\lim_{s\to0}(s^{-\eta}K_\infty(s))^r=0 \hspace{5mm}\mbox{ and } \hspace{5mm}\lim_{s\to\infty} (s^{-\eta}K_\infty(s))^r=0,\]
 so   it follows from \eqref{last} that all the boundary terms in \eqref{forth} must vanish. 
So by \eqref{forth} and \eqref{last5} we get 
\[\int_0^\infty (s^{-\eta}K_\infty(s))^r\,\frac{ds}{s}=C\int_0^\infty t^{-\theta q-1}K_\infty(t)^q\,dt\geq 2^{-q}C\int_0^\infty (t^{-\theta }K(t))^q\,\frac{dt}{t}=c\|a\|_{(A_0,A_1)_{\theta,q}}^q.\]
So by \eqref{basic4}
\[\|a\|_{(A_0^{\rho_0},A_1^{\rho_1})_{\eta,r}}^r\geq c\|a\|_{(A_0,A_1)_{\theta,q}}^q.\]
Thus
\begin{equation*}
\|a\|_{(A_0^{\rho_0},A_1^{\rho_1})_{\eta,r}}\geq c' \|a\|_{(A_0,A_1)_{\theta,q}}^\rho.
\end{equation*}
It follows that $a\in(A_0^{\rho_0},A_1^{\rho_1})_{\eta,r} \Longrightarrow a\in(A_0,A_1)_{\theta,q}^\rho$. 

Thus, we have shown that $a\in(A_0^{\rho_0},A_1^{\rho_1})_{\eta,r} \Longleftrightarrow a\in(A_0,A_1)_{\theta,q}^\rho$, and that the quasinorms on the spaces are equivalent, namely 
\[(A_0,A_1)_{\theta,q}^\rho=(A_0^{\rho_0},A_1^{\rho_1})_{\eta,r}\]
with equivalent quasinorms, as required.
\end{proof}

\newpage
\section{A Technical Lemma}\label{Switch}

The proof of Theorem \ref{interior reduction} will be complete once we prove the following lemma.

\begin{lemma}\label{interplem}
Let $f\in A_0^{p_0}+A_1^{p_1}$, where $A_0=A_{p_0}(w_0)$, $A_1=A_{p_1}(w_1)$, $p_0$ and $p_1$ are as in the proof of Theorem \ref{interior reduction}. (In particular, $w_0,w_1:[1,\infty)\to (0,\infty)$ are assumed to be positive, measurable functions.) Then
\begin{equation}\label{switchf}
\inf_{\substack{f=f_0+f_1\\f_i\in L^{p_i}_{w_i}([1,\infty))}} \int_1^\infty w_0|f_0|^{p_0}(v)+tw_1|f_1|^{p_1}(v)\,dv = \int_1^\infty\inf_{\substack{f=f_0+f_1\\f_i\in L^{p_i}_{w_i}([1,\infty))}} w_0|f_0|^{p_0}(v)+tw_1|f_1|^{p_1}(v)\,dv
\end{equation}
for every $t>0$.
\end{lemma}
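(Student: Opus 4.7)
The plan is to prove the two inequalities in Lemma~\ref{interplem} separately, with $t>0$ fixed throughout. The ``$\geq$'' direction follows at once from pointwise dominance, while ``$\leq$'' requires constructing a measurable pointwise-optimal decomposition of $f$.

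For ``$\geq$'', note that given any admissible decomposition $f=f_0+f_1$ with $f_i\in L^{p_i}_{w_i}([1,\infty))$, the pointwise values $(f_0(v),f_1(v))$ are competitors in the pointwise minimisation at $v$, so
\[ w_0(v)|f_0(v)|^{p_0}+tw_1(v)|f_1(v)|^{p_1}\geq \inf_{a+b=f(v)}\bigl[w_0(v)|a|^{p_0}+tw_1(v)|b|^{p_1}\bigr]\quad\text{for a.e.~}v. \]
Integrating over $v$ and then taking the infimum over admissible global decompositions produces the ``$\geq$'' direction.

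For ``$\leq$'', the strategy is to exhibit an explicit measurable decomposition $f=f_0^*+f_1^*$ with $f_i^*\in L^{p_i}_{w_i}$ whose integrand realises the pointwise infimum at almost every $v$. For $v$ with $f(v)\neq 0$, dividing the pointwise cost by $w_0(v)|f(v)|^{p_0}$ reduces the pointwise minimisation to the one-parameter problem defining $F$ in \eqref{F}, with parameter $x(v):=tw_1(v)w_0(v)^{-1}|f(v)|^{p_1-p_0}$. A direct computation using $p_0=1$ and $p_1=2$ shows that for $x>0$ the minimum is attained uniquely at $y_1^*(x)=\min(1,1/(2x))$, which is continuous in $x$. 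Setting $f_1^*(v)=y_1^*(x(v))f(v)$ and $f_0^*(v)=f(v)-f_1^*(v)$ on $\{f\neq 0\}$, and $f_0^*=f_1^*=0$ on $\{f=0\}$, produces a measurable decomposition since $x(v)$ is measurable (using measurability of $f,w_0,w_1$ together with positivity of $w_0,w_1$) and $y_1^*$ is continuous in $x$.

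It remains to verify $f_i^*\in L^{p_i}_{w_i}$ and to conclude. Because $f\in A_0^{p_0}+A_1^{p_1}$, there exists at least one admissible decomposition $f=g_0+g_1$ with $g_i\in L^{p_i}_{w_i}$; pointwise optimality of $(f_0^*,f_1^*)$ then gives
\[ \int_1^\infty\bigl[w_0|f_0^*|^{p_0}+tw_1|f_1^*|^{p_1}\bigr]\,dv\leq \int_1^\infty\bigl[w_0|g_0|^{p_0}+tw_1|g_1|^{p_1}\bigr]\,dv<\infty, \]
and since both summands on the left are nonnegative, each is individually integrable, yielding $f_0^*\in L^{p_0}_{w_0}$ and $f_1^*\in L^{p_1}_{w_1}$. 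The decomposition $(f_0^*,f_1^*)$ is thus an admissible competitor on the left side of Lemma~\ref{interplem} whose integrand equals the pointwise infimum at a.e.~$v$, delivering the ``$\leq$'' direction. The principal difficulty in this argument is the measurable selection of the pointwise minimiser; in the present situation the explicit formula for $y_1^*$ makes the selection transparent, bypassing any need for abstract selection machinery such as the Kuratowski--Ryll-Nardzewski theorem.
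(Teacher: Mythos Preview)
Your proof is correct and follows essentially the same approach as the paper: both arguments establish the nontrivial ``$\leq$'' direction by explicitly computing the pointwise minimiser of $w_0|f_0|^{p_0}+tw_1|f_1|^{p_1}$ using the specific exponents $p_0=1$, $p_1=2$, verifying its measurability, and then checking that the resulting optimal decomposition lies in the correct spaces. Your normalisation via the function $F$ from \eqref{F} (parametrising by $y_1=f_1/f$) is somewhat cleaner than the paper's direct parametrisation by $x=f_0(v)$, since it collapses the paper's three-case analysis into the single formula $y_1^*(x)=\min(1,1/(2x))$ and makes measurability immediate from continuity; but the underlying idea is the same.
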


\begin{proof}
It is clear that the LHS of \eqref{switchf} $\geq$ the RHS for every $t>0$, so we need only show
\begin{equation}\label{star7}
\inf_{\substack{f=f_0+f_1\\f_i\in L^{p_i}_{w_i}}} \int_1^\infty w_0|f_0|^{p_0}(v)+tw_1|f_1|^{p_1}(v)\,dv \leq \int_1^\infty\inf_{\substack{f=f_0+f_1\\f_i\in L^{p_i}_{w_i}}} w_0|f_0|^{p_0}(v)+tw_1|f_1|^{p_1}(v)\,dv
\end{equation}
for  every $t>0$.

Let $t>0$ and set
\begin{eqnarray*}
F_t(v)&:=&\inf_{\substack{f=f_0+f_1\\f_i\in L^{p_i}_{w_i}}} w_0|f_0|^{p_0}(v)+tw_1|f_1|^{p_1}(v).
\end{eqnarray*}
If
\[\int_1^\infty F_t(v)\,dv=+\infty\]
then \eqref{star7} certainly holds, so we assume
\begin{equation}\label{star4}
\int_1^\infty F_t(v)\,dv<+\infty.
\end{equation}

Suppose  there are fucntions $h_0\in L^{p_0}_{w_0}$ and $h_1\in L^{p_1}_{w_1}$ (which may depend on $t$) with $h_0+h_1=f$ such that at each point $v\geq 1$, the pointwise infimum is attained at $h_0, h_1$, namely for each $v\geq 1$
\[\inf_{\substack{f=f_0+f_1\\f_i\in L^{p_i}_{w_i}}} w_0|f_0|^{p_0}(v)+tw_1|f_1|^{p_1}(v)= w_0|h_0|^{p_0}(v)+tw_1|h_1|^{p_1}(v).\]
Then 
\begin{eqnarray*}
\int_1^\infty\inf_{\substack{f=f_0+f_1\\f_i\in L^{p_i}_{w_i}}} w_0|f_0|^{p_0}(v)+tw_1|f_1|^{p_1}(v)\,dv&=&\int_1^\infty w_0|h_0|^{p_0}(v)+tw_1|h_1|^{p_1}(v)\,dv \\
&\geq& \inf_{\substack{f=f_0+f_1\\f_i\in L^{p_i}_{w_i}}} \int_1^\infty w_0|f_0|^{p_0}(v)+tw_1|f_1|^{p_1}(v)\,dv,
\end{eqnarray*}
 as desired. So we are done once we prove the existence of the functions $h_0$ and $h_1$.

If a measurable function $\alpha(v)=\alpha_t(v)$ satisfies 
\begin{equation}\label{choosea}
F_t(v)=w_0|\alpha(v)|^{p_0}+tw_1(v)|f(v)-\alpha(v)|^{p_1},
\end{equation}
then integrating gives 
\begin{eqnarray*}
\int_1^\infty w_0|\alpha(v)|^{p_0}+tw_1(v)|f(v)-\alpha(v)|^{p_1}\,dv&=&\int_1^\infty F_t(v)\,dv<\infty,
\end{eqnarray*}
by \eqref{star4}.
 Thus $\alpha\in L^{p_0}_{w_0}$ and $f-\alpha\in L^{p_1}_{w_1}$ (note $f-\alpha$ is measurable as $f\in L^{p_0}_{w_0}+L^{p_1}_{w_1}$ so $f$ and $\alpha$ are both measurable) and we may choose $h_0=\alpha$ and $h_1=f-\alpha$. So it remains only to show that for each $v\geq1$ we can choose a measurable function $\alpha(v)$ satisfying \eqref{choosea}. 
 
Note that for each $v\geq1$
 \begin{eqnarray*}
 \inf_{\substack{f=f_0+f_1\\f_i\in L^{p_i}_{w_i}}} w_0|f_0|^{p_0}(v)+tw_1|f_1|^{p_1}(v)&=& tw_1(v)  \inf_{\substack{f_0\in L^{p_0}_{w_0}\\f-f_0\in L^{p_1}_{w_1}}}  \frac{w_0}{tw_1}(v) |f_0(v)|^{p_0}+|f(v)-f_0(v)|^{p_1}\\
 &=&tw_1(v)  \inf_{x\in\mathbb{R}}  \frac{w_0}{tw_1}(v) |x|^{p_0}+|f(v)-x|^{p_1}\\
 &=&tw_1(v) \inf_{x\in\mathbb{R}} G_v(x),
 \end{eqnarray*}
where
\[G_v(x)=\frac{w_0}{tw_1}(v)|x|^{p_0}+|f(v)-x|^{p_1}.\]
We claim that for each $v\geq1$, $G_v$ has a (unique) minimiser, $x_0=x_0(v)$ say. Furthermore, we claim that $x_0:[1,\infty)\to \mathbb{R}$ is a measurable function. It follows that 
\begin{eqnarray*}
F_t(v)= \inf_{\substack{f=f_0+f_1\\f_i\in L^{p_i}_{w_i}}} w_0|f_0|^{p_0}(v)+tw_1|f_1|^{p_1}(v)=tw_1(v)G_v(x_0(v))
=w_0|x_0(v)|^{p_0}+tw_1(v)|f(v)-x_0(v)|^{p_1},
\end{eqnarray*}
namely \eqref{choosea} with $\alpha$ replaced by $x_0$.
Thus, as $x_0$ is measurable,  we may choose $\alpha(v)=x_0(v)$ and so the proof will be complete once we prove the existence and measurability of the minimising function $x_0$.

Set $a=a(v)=f(v)$ and $b=b(v)=\frac{w_0(v)}{tw_1(v)}>0$ so that 
\[G_v(x)=b|x|^{p_0}+|a-x|^{p_1}=b|x|+(a-x)^2.\]
We have that
\[\inf_{x\in\mathbb{R}}G_v(x)=\min\{\inf_{x\geq 0}G_v(x),\inf_{x\leq0}G_v(x)\}\]
and 
\begin{equation}\label{plusminus}G_v(x)=\left\{\begin{array}{ll}G_v^+(x) & \mbox{if } x\geq 0 \\ G_v^-(x)&\mbox{if }x\leq0,\end{array}\right.
\end{equation}
where
\[G_v^+(x)=x^2+(b-2a)x+a^2 \hspace{5mm}\mbox{and}\hspace{5mm}G_v^-(x)=x^2-(2a+b)x+a^2.\]
Hence
\begin{equation}\label{infeqn}
\inf_{x\in\mathbb{R}}G_v(x)=\min\{\inf_{x\geq 0}G_v^+(x),\inf_{x\leq0}G_v^-(x)\},
\end{equation}
and we need to determine the minimisers of $G_v^+$ and $G_v^-$ in the indicated domains.

As $G_v^{\pm}$  are monic quadratic polynomials, it follows that (over the whole real line) $G_v^{\pm}$ are minimised at the unque points where $(G_v^{\pm})'=0$. Now 
\[(G_v^+)'(x)=2x+(b-2a)\hspace{5mm} \mbox{ and }\hspace{5mm} (G_v^-)'(x)=2x-(2a+b),\]
so
\[\inf_{x\in\mathbb{R}}G_v^+(x)=G_v^+(\frac{2a-b}{2})  \hspace{5mm} \mbox{ and }\hspace{5mm}\inf_{x\in\mathbb{R}}G_v^-(x)=G_v^-(\frac{2a+b}{2}).\]
Hence (as $G_v^\pm$ are monotonic on either side of their minimiser)
\begin{equation}\label{options} \inf_{x\geq0}G_v^+(x)=\left\{\begin{array}{ll}G_v^+(\frac{2a-b}{2}) &\mbox{if }2a-b\geq0\\G_v^+(0)&\mbox{if } 2a-b<0\end{array}\right., \hspace{5mm}  \inf_{x\leq0}G_v^-(x)=\left\{\begin{array}{ll}G_v^-(\frac{2a+b}{2}) &\mbox{if }2a+b\leq0\\G_v^-(0)&\mbox{if } 2a+b>0\end{array}\right.. 
\end{equation}

We are now ready to determine the minimiser of $G_v$. Recalling that $b>0$, we distinguish three possible cases.
\begin{description}
\item[Case 1:] Assume $2a\geq b$. Then since $2a-b\geq0$ and $b>0$, we have $2a+b=(2a-b)+2b>0$. So by \eqref{options} we have
\[\inf_{x\geq0}G_v^+(x)=G_v^+(\frac{2a-b}{2})=ab-\frac{b^2}{4} \hspace{5mm}\mbox{ and }\hspace{5mm}  \inf_{x\leq0}G_v^-(x) = G_v^-(0)=a^2.   \]
In particular, $\inf_{x\leq0}G_v^-(x)-\inf_{x\geq0}G_v^+(x)=a^2-ab+\frac{b^2}{4}=(a-\frac{b}{2})^2\geq0$, so by \eqref{infeqn} and \eqref{plusminus}
\[\inf_{x\in\mathbb{R}}G_v(x)=\inf_{x\geq0}G_v^+(x)=G_v^+(\frac{2a-b}{2})=G_v(\frac{2a-b}{2})\]
and so  the infimum is attained at $x=x_0(v)=\frac{2a-b}{2}$.

\item[Case 2:] Assume $-b<2a<b$. Then $2a-b<0$ and $2a+b>0$. So using \eqref{options}, we have
\[\inf_{x\geq0}G_v^+(x)=G_v^+(0)=a^2 \hspace{5mm} \mbox{ and }\hspace{5mm} \inf_{x\leq0}G_v^-(x)=G_v^-(0)=a^2.\]
So by \eqref{infeqn} and \eqref{plusminus},
\[\inf_{x\in\mathbb{R}}G_v(x)=\inf_{x\geq0}G_v^+(x)=G_v^+(0)=G_v(0),\]
and so the infimum is attained at $x=x_0(v)=0$.

\item[Case 3:] Assume $2a\leq-b$. Then $2a+b\leq 0$ and (as $2a\leq -b<0$) $2a-b<0$. Thus by \eqref{options},
\[\inf_{x\geq0}G_v^+(x)=G_v^+(0)=a^2 \hspace{5mm} \mbox{ and }\hspace{5mm} \inf_{x\leq0}G_v^-(x)=G_v^-(\frac{2a+b}{2})=-(ab+\frac{b^2}{4}).\]
In particular, $\inf_{x\geq0}G_v^+(x)-\inf_{x\leq0}G_v^-(x)=a^2+ab+\frac{b^2}{4}=(a+\frac{b}{2})^2\geq0$, so by \eqref{infeqn} and \eqref{plusminus}
\[\inf_{x\in\mathbb{R}}G_v(x)=\inf_{x\leq0}G_v^-(x)=G_v^-(\frac{2a+b}{2})=G_v(\frac{2a+b}{2}),\]
and so the infimum is attained at $x=x_0(v)=\frac{2a+b}{2}$.
\end{description}

Combining the three cases above (and recalling that $a=f(v)$, $b=\frac{w_0}{tw_1}(v)$) yields that $\inf_{x\in\mathbb{R}}G_v(x)=G_v(x_0(v))$, where
\[x_0(v)=\left\{\begin{array}{ccc} \frac{2a-b}{2}, & \mbox{if} & 2a\geq b\\0,&\mbox{if}& -b<2a<b\\\frac{2a+b}{2},&\mbox{if}& 2a\leq -b\end{array}\right.  =\left\{\begin{array}{ccc} \frac{1}{2}[2f(v)-\frac{w_0}{tw_1}(v)], &\mbox{if}& \frac{1}{2}[2f(v)-\frac{w_0}{tw_1}(v)]\geq0\\ 0,&\mbox{if} & -\frac{w_0}{tw_1}(v)< 2f(v) < \frac{w_0}{tw_1}(v)   \\ \frac{1}{2}[2f(v)+\frac{w_0}{tw_1}(v)], &\mbox{if}&\frac{1}{2}[2f(v)+\frac{w_0}{tw_1}(v)]\leq0 \end{array}\right..  \]
We have thus shown the existence of a minimiser $x_0(v)$ for $G_v$ for each $v\geq1$. It remains only to show that $x_0$ is a measurable function of $v$.

Define $g_i:[1,\infty)\to\mathbb{R}$, $i=1,\ldots,4$ by
\[g_1=\frac{1}{2}[2f+\frac{w_0}{tw_1}],\hspace{6mm} g_2=  -\frac{w_0}{tw_1}-2f,\hspace{6mm} g_3=2f-\frac{w_0}{tw_1},\hspace{6mm} g_4=\frac{1}{2}[2f-\frac{w_0}{tw_1}].\]
By virtue of the fact that $f, w_0$ and $w_1$ are measurable, $g_i, i=1,\ldots,4$ are measurable functions. Given $y\in\mathbb{R}$, we show $x_0^{-1}((-\infty,y])$ is measurable.
\begin{itemize}
\item
Suppose $y<0$. Then as $x_0(v)<0$ if and only if $2f+\frac{w_0}{tw_1}<0$ (in which case $x_0=g_1$),
\[x_0^{-1}((-\infty,y])= g_1^{-1}((-\infty,y]),\]
which is measurable because $g_1$ is measurable.
\item
Suppose $y=0$. Then 
\begin{equation}\label{cunion}
x_0^{-1}((-\infty,y])=x_0^{-1}((-\infty,0])=\left(\bigcup_{n=1}^\infty x_0^{-1}((-\infty,-1/n])\right)\cup x_0^{-1}(\{0\}).
\end{equation}
Each of the sets $x_0^{-1}((-\infty,-1/n])$ is measurable by the previous case ($y<0$). Moreover,
\begin{eqnarray*}
x_0^{-1}(\{0\})&=&\{v\in[1,\infty):-\frac{w_0}{tw_1}(v)\leq 2f(v) \leq \frac{w_0}{tw_1}(v)\} \\
&=&\{v\in[1,\infty):g_2(v)<0\mbox{ and }g_3(v)<0\}\cup\{v\in[1,\infty):g_1(v)=0 \mbox{ or }g_4(v)=0\}\\
&=&g_2^{-1}((-\infty,0)) \cap g_3^{-1}((-\infty,0))\cup g_1^{-1}(\{0\}) \cup g_4^{-1}(\{0\}),
\end{eqnarray*}
 so $x_0^{-1}(\{0\})$ is measurable as each $g_i$ is measurable. It then follows from \eqref{cunion} that $x_0^{-1}((-\infty,y])$ is a countable union of measurable sets, and hence is measurable.
\item
Suppose $y>0$. Then
\[x_0^{-1}((-\infty,y])  =x_0^{-1}((-\infty,0])\cup x_0^{-1}((0,y]). \]
By the previous case ($y=0$), $x_0^{-1}((-\infty,0])$ is measurable. On the other hand, as $x_0(v)>0$ if and only if $2f-\frac{w_0}{tw_1}>0$ (in which case $x_0=g_4$)
\[x_0^{-1}((0,y])=g_4^{-1}((0,y])\]
so $x_0^{-1}((0,y])$ is measurable (because $g_4$ is measurable).  Thus $x_0^{-1}((-\infty,y]) $ is the union of two measurable sets and hence is measurable.
\end{itemize}
Thus $x_0^{-1}((-\infty,y]) $ is measurable for each $y\in\mathbb{R}$, and thus $x_0:[1,\infty)\to \mathbb{R}$ is measurable, as desired. This completes the proof.
\end{proof}

\newpage


\begin{thebibliography}{9}
\bibitem{BL}
\textsc{J. Bergh and J. L\"{o}fstr\"{o}m}, \emph{Interpolation Spaces, An Introduction}, Springer-Verlag, 1976.

\bibitem{BS}
\textsc{P. Blue and A. Soffer}, \emph{Phase Space Analysis on some Black Hole Manifolds}, 2009, arXiv0511281.

\bibitem{BS2}
\textsc{P. Blue and J. Sterbenz}, \emph{Uniform decay of local energy and the semi-linear wave equation on Schwarzschild space}, Comm. Math. Phys., 268(2):481--504, 2006, arXiv:math.AP/0510315.

\bibitem{Brady}
\textsc{P. Brady, I. Moss and R. Myers}, \emph{Cosmic Censorship: As Strong As Ever}, 1998, arXiv:gr-qc/9801032.

\bibitem{D}
\textsc{M. Dafermos}, \emph{The Interior of Charged Black Holes and the Problem of Uniqueness in General Relativity}, 2003, arXiv:gr-qc/0307013.

\bibitem{DR1}
\textsc{M. Dafermos and I Rodnianski}, \emph{A proof of Price's law for the collapse of a self-gravitating scalar field}, Invent. Math., 162(2):381--457, 2005.

\bibitem{DR07}
\textsc{M. Dafermos and I. Rodnianski}, \emph{The wave equation on Schwarzschild--de Sitter spacetimes}, 2007, arxiv:0709.2766.

\bibitem{DR13}
\textsc{M. Dafermos and I. Rodnianski}, \emph{Lectures on black holes and linear waves}, Clay Mathematics Proceedings, Amer. Math. Soc. \textbf{17}, 97--205, 2013.

\bibitem{DRSR1}
\textsc{M. Dafermos, I. Rodnianski and Y. Shlapentokh-Rothman}, \emph{Decay for solutions to the wave equation on Kerr exterior spacetimes III: The full extremal case $|a|<M$}, preprint, 2014, arXiv:1402.7034.

\bibitem{DRSR2}
\textsc{M. Dafermos, I. Rodnianski and Y. Shlapentokh-Rothman}, \emph{A scattering theory for the wave equation on Kerr black hole exteriors}, preprint, 2014, arXiv:1412.8379.

\bibitem{Dya}
\textsc{S. Dyatlov}, \emph{Asymptotics of linear waves and resonances with applications to black holes}, Comm. Math. Phys., 335:1445--1485, 2015. 

\bibitem{Franzen}
\textsc{A. Franzen}, \emph{Boundedness of massless scalar waves on Reissner--Nordstr\"{o}m interior backgrounds}, preprint, 2014, arXiv:1407.7093.

\bibitem{G}
\textsc{D. Gajic}, \emph{Linear waves in the interior of extremal black holes I}, preprint, 2015, arXiv:1509.06568.

\bibitem{H}
\textsc{P. Hintz}, \emph{Boundedness and decay of scalar waves at the Cauchy horizon of the Kerr spacetime}, 2015, arXiv:151208003.

\bibitem{HV}
\textsc{P. Hintz and A. Vasy}, \emph{Analysis of linear waves near the Cauchy horizon of cosmological black holes}, 2015, arXiv:1512.08004.

\bibitem{LO}
\textsc{J. Luk and S.-J. Oh}, \emph{Proof of Linear Instability of the Reissner--Nordstr\"{o}m Cauchy Horizon Under Scalar Perturbations}, 2015, arXiv:1501.04598.

\bibitem{LS}
\textsc{J. Luk and J. Sbierski}, \emph{Instability results for the wave equation in the interior of Kerr black holes}, 2015, arxiv:1512.08259.

\bibitem{MSBV}
\textsc{R. Melrose, A. S\'{a} Barreto and A. Vasy}, \emph{Asymptotics of solutions of the wave equation on de Sitter--Schwarzschild space},  arXiv:0811.2229v1, 2008.

\bibitem{T}
\textsc{D. Tataru}, \emph{Local decay of waves on asymptotically flat stationary space-times}, Amer. J. Math. 135(2):361--401, 2013.
\end{thebibliography}
\end{document}